\documentclass{theoretics}

\title{Unifying the Three Algebraic Approaches to the CSP via Minimal Taylor Algebras}

\ThCSauthor[Prague]{Libor Barto}{libor.barto@gmail.com}[https://orcid.org/0000-0002-8481-6458]
\ThCSauthor[none]{Zarathustra Brady}{notzeb@gmail.com} 
\ThCSauthor[SimonFraser]{Andrei Bulatov}{abulatov@sfu.ca}[https://orcid.org/0000-0002-5516-1704]
\ThCSauthor[Krakow]{Marcin Kozik}{marcin.kozik@uj.edu.pl}[https://orcid.org/0000-0002-1839-4824]
\ThCSauthor[Prague]{Dmitriy Zhuk}{zhuk@karlin.mff.cuni.cz}[https://orcid.org/0000-0003-0047-5076]

\ThCSaffil[Prague]{Department of Algebra, Faculty of Mathematics and Physics, Charles University, Prague, Czechia}
\ThCSaffil[none]{No affiliation}
\ThCSaffil[SimonFraser]{School of Computing Science, Simon Fraser University, Burnaby BC, Canada}
\ThCSaffil[Krakow]{Theoretical Computer Science Department,
Faculty of Mathematics and Computer Science,
Jagiellonian University, Krakow, Poland}

\ThCSthanks{
Part of this work appeared in LICS'21 paper
\emph{Minimal Taylor Algebras as a Common Framework for the Three Algebraic Approaches to the CSP}.

 Libor Barto and Dmitriy Zhuk were funded by the European Union (ERC, CoCoSym, 771005) and (ERC, POCOCOP,
101071674). Views and opinions expressed are however those of the authors only and do not necessarily reflect those of the
European Union or the European Research Council Executive Agency. Neither the European Union nor the granting authority
can be held responsible for them.
Dmitriy Zhuk was also supported by HSE University Basic Research Program.
Zarathustra Brady has received funding from the NSF Mathematical Sciences Postdoctoral Research Fellowship under Grant No. (DMS-1705177).
Andrei Bulatov has received funding from an NSERC Discovery grant.
Marcin Kozik was partially funded by the National Science Centre, Poland, under the  Weave-UNISONO  call  in  the  Weave  programme  2021/03/Y/ST6/00171.
}



\ThCSshortnames{L.\ Barto, Z.\ Brady, A.\ Bulatov, M.\ Kozik and S.\ Zhuk }
\ThCSshorttitle{Unifying the Three Algebraic Approaches to the CSP via Minimal Taylor Algebras}

\ThCSyear{2024}
\ThCSarticlenum{14}
\ThCSreceived{May 23, 2023}
\ThCSrevised{Feb 21, 2024}
\ThCSaccepted{Mar 25, 2024}
\ThCSpublished{May 15, 2024}
\ThCSdoicreatedtrue
\ThCSkeywords{Constraint Satisfaction Problem, Universal Algebra, Minimal Taylor Algebra, Polymorphism}

\addbibresource{unification.bib}

\usepackage{graphicx}

\usepackage{comment}

\usepackage{mathtools}
\usepackage{mathrsfs}
\usepackage{thm-restate}


\newcommand{\zee}[1]{\mathbb{Z}/{#1}}
\newcommand{\relstr}[1]{\mathbb #1}

\newcommand{\algA}{\mathbf A}
\newcommand{\algB}{\mathbf B}
\newcommand{\alg}[1]{\mathbf #1}
\newcommand{\tuple}[1]{{\mathbf #1}}
\newcommand{\abs}{\trianglelefteq}
\newcommand{\st}{\mathrel{:}}
\newcommand{\sd}{\leq_{sd}}
\newcommand{\setsd}{\subseteq_{sd}}
\newcommand{\types}[1]{\mathrm{#1}}

\newcommand{\clone}[1]{\mathscr{#1}}
\newcommand{\relclone}[1]{\mathscr{#1}}

\DeclareMathOperator{\Sg}{Sg}
\DeclareMathOperator{\Inv}{Inv}
\DeclareMathOperator{\maj}{maj}

\DeclareMathOperator{\winner}{winner}
\DeclareMathOperator{\paper}{paper}
\DeclareMathOperator{\rock}{rock}
\DeclareMathOperator{\scissors}{scissors}
\DeclareMathOperator{\proj}{proj}

\DeclareMathOperator{\Clo}{Clo}

\newcommand{\ooo}[1]{#1}

\newcommand{\dnote}[1]{}
\newcommand{\lnote}[1]{}
\newcommand{\mnote}[1]{}
\newcommand{\anote}[1]{}
\newcommand{\znote}[1]{}

\newcommand{\sss}[1]{\rightsquigarrow_{#1}}
\newcommand{\sssB}{\sss{B}}
\newcommand{\sssC}{\sss{C}}
\newcommand{\sssD}{\sss{D}}

\newcommand{\rrr}{}
 \newenvironment{ThCSrestatable}[4]
 {\restatable{#1}{#2}\ooo{#3} \rrr\label{#4}}
 {\endrestatable}

\newenvironment{ThCSrestatablen}[4] 
{\restatable{#1}{#2}(#3) \rrr\label{#4}}
{\endrestatable}

\begin{document}

\maketitle

\begin{abstract}
This paper focuses on the algebraic theory underlying the study of the complexity and the algorithms for the Constraint Satisfaction Problem (CSP). We unify, simplify, and extend parts of the three approaches that have been developed to study the CSP over finite templates --  absorption theory that was used to characterize CSPs solvable by local consistency methods (JACM'14), and Bulatov's and Zhuk's theories that were used for two independent proofs of the CSP Dichotomy Theorem  (FOCS'17, JACM'20). 

As the first contribution we present an elementary theorem about primitive positive definability and use it to obtain the starting points of Bulatov's and Zhuk's proofs as  corollaries. 
As the second contribution we propose and initiate a systematic study of minimal Taylor algebras. 
This class of algebras is broad enough that it suffices to verify the CSP Dichotomy Theorem on this class only,  but still is unusually well behaved. In particular, many concepts from the three approaches coincide in this class, which is in striking contrast with the general setting.

We believe that the theory initiated in this paper will eventually result in a simple and more natural proof of the Dichotomy Theorem that employs a simpler and more efficient algorithm, and will help in attacking complexity questions in other CSP-related problems.

\end{abstract}

\section{Introduction}%
\label{sec:introduction}

The Constraint Satisfaction Problem (CSP) has attracted much attention from researchers 
in various disciplines. One direction of the CSP research has been greatly motivated by the so-called Dichotomy Conjecture of Feder and Vardi \cite{Feder93:monotone,Feder98:monotone} that concerns the computational complexity of CSPs over finite relational structures. 
The \emph{Constraint Satisfaction Problem over a finite relational structure} $\relstr{A}$ of finite signature (also called a \emph{template}), in its logical formulation, is the problem of deciding the validity of a given \emph{primitive positive sentence} (\emph{pp-sentence}), i.e., a sentence that is an existentially quantified conjunction of atomic formulas over $\relstr{A}$ -- the \emph{constraints}. Examples of problems in this class include satisfiability problems, graph coloring problems, and solving systems of equations over finite algebraic structures (see \cite{Hell04:homomorphism,Jeavons97:closure,Barto17:polymorphisms,Larose06:taylor}). The CSP is also ubiquitous in artificial intelligence \cite{Dechter03:processing}. 

A classic result in the field is a theorem by Schaefer~\cite{Schaefer} that completely classifies the complexity of CSPs over relational structures with a two-element domain, so-called \emph{Boolean structures}, by providing a dichotomy theorem: each such a CSP is either  solvable in polynomial time or is NP-complete. The Dichotomy Conjecture of Feder and Vardi states that Schaefer's result extends to arbitrary finite domains.  This conjecture inspired a very active research program
in the last 20 years, culminating in a positive resolution independently obtained by Bulatov~\cite{Bulatov17:dichotomy} and Zhuk~\cite{Zhuk17:proof,Zhuk20:dichotomy}. The exact borderline between tractability and hardness can be formulated as follows~\cite{Bulatov05:classifying,Barto18:wonderland,Barto17:polymorphisms}. 

  \begin{theorem}[Dichotomy Theorem] \label{dichotomy}
    Let $\relstr{A}$ be a finite relational structure over a finite signature.
    \begin{itemize}
        \item If every finite structure is homomorphically equivalent to a finite structure pp-interpretable in $\relstr{A}$, then the CSP over $\relstr{A}$ is NP-hard,
        \item otherwise it is solvable in polynomial time.
    \end{itemize}
  \end{theorem}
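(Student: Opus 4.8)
The plan is to prove the two bullets separately; the first (hardness) is routine, while the second (tractability) is the real content of the dichotomy theorem, and I would derive it from the algorithmic results of Bulatov and Zhuk after passing to an algebraic reformulation. For the hardness direction I would first recall two standard facts: a pp-interpretation of a finite structure $\mathbb{C}$ in $\relstr{A}$ induces a log-space reduction from the CSP over $\mathbb{C}$ to the CSP over $\relstr{A}$, and homomorphically equivalent finite structures have log-space equivalent CSPs, since a homomorphism in each direction translates satisfiable instances both ways. Granting these, fix any finite structure whose CSP is NP-hard -- say the triangle $K_3$, whose CSP is graph $3$-colourability. The hypothesis of the first bullet supplies a finite structure pp-interpretable in $\relstr{A}$ and homomorphically equivalent to $K_3$, so composing the two reductions shows that $3$-colourability reduces to the CSP over $\relstr{A}$, which is therefore NP-hard (membership in NP holds throughout).

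\textbf{Reduction to an algebraic statement.} For the second bullet I would normalise first: one may assume $\relstr{A}$ is a core and that all singleton unary relations lie in its signature, since neither change affects the complexity up to log-space, and together they force every polymorphism of $\relstr{A}$ to be idempotent. By the Galois correspondence between pp-interpretability up to homomorphic equivalence and homomorphisms of polymorphism clones, the failure of the first bullet's hypothesis is equivalent to the polymorphism clone of $\relstr{A}$ \emph{not} mapping homomorphically onto the clone of projections; for a finite idempotent algebra this is equivalent to possessing a Taylor term, and hence -- via the Mar\'oti--McKenzie and Barto--Kozik reductions -- a weak near-unanimity term and even a cyclic term of every prime arity exceeding $|A|$. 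Thus it remains to prove: if the finite idempotent algebra $\algA$ whose operations generate the polymorphisms of $\relstr{A}$ has a Taylor term, then the CSP over $\relstr{A}$ is solvable in polynomial time.

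\textbf{The algorithm.} This implication is the heart of the theorem; I would follow Zhuk's proof, with Bulatov's as an alternative blueprint. The common skeleton is: given an instance, first enforce a strong local-consistency condition (cycle-consistency together with a Prague-type, or $(2,3)$-minimality-type, propagation); if a contradiction appears, reject. Otherwise, if every variable domain is an algebra that behaves ``affinely'' -- the surviving constraints being cosets of subgroups of products of modules -- solve the instance by linear algebra. If not, use the structure theory of Taylor algebras to locate, in the domain of some variable, a proper \emph{strong subuniverse} -- a binary absorbing subuniverse, a centre, or (in Zhuk's terminology) one arising from a congruence with polynomially complete quotient -- which can be imposed on that variable without losing solvability, or else collapse that domain onto a block of a nontrivial definable congruence; then recurse on the smaller instance. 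A suitable potential function shows that the recursion has polynomial depth and that each reduction is sound and complete, so the whole procedure runs in polynomial time.

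\textbf{Main obstacle.} The genuine difficulty lies entirely in the tractability direction, and within it in justifying the reduction step: proving that a non-affine Taylor algebra that appears as a variable domain in a locally consistent, non-collapsible instance must contain a proper strong subuniverse, and that restricting the instance to it neither creates nor destroys solutions. This rests on a sizeable body of structure theory for Taylor algebras -- absorption theorems, the existence of special binary and ternary terms, and the analysis of subdirectly irreducible algebras and of linked and critical relations -- and it is precisely this machinery, currently dispersed across the three approaches, that the present paper sets out to unify and streamline.
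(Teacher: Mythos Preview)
The paper does not give its own proof of this theorem. Theorem~\ref{dichotomy} is stated in the introduction as background, attributed to Bulatov and Zhuk (references \cite{Bulatov17:dichotomy,Zhuk17:proof,Zhuk20:dichotomy} together with the earlier reductions in \cite{Bulatov05:classifying,Barto18:wonderland}), and the paper's actual contributions lie elsewhere: unifying and simplifying the algebraic machinery behind those proofs, not reproving the dichotomy itself.

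Your outline is a faithful high-level summary of how the theorem is established in those references---the hardness direction via pp-interpretations and homomorphic equivalence, the reduction to idempotent Taylor algebras, and the tractability direction via Zhuk's four-types reduction step---and you correctly identify in your final paragraph that streamlining this structure theory is precisely the present paper's aim. So there is no discrepancy to report: your proposal matches the cited proofs, and the paper offers no alternative proof to compare against.
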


  It was already recognized in Schaefer's work (in fact, it was the basis of his approach) that the complexity of a CSP depends only on the set of relations that are \emph{pp-definable} (i.e., definable by a primitive positive formula) from the template. Such sets of relations are now usually referred to as \emph{relational clones}. 
  The impetus of rapid development in the area after Feder and Vardi's seminal work~\cite{Feder98:monotone} was a series of papers \cite{Jeavons97:closure,Jeavons98:algebraic} that brought to attention and applied a Galois connection between operations and relations studied in the sixties~\cite{Geiger68:pol-inv,Bodnarcuk69:pol-inv}, which gives a bijective correspondence between relational clones and \emph{clones} -- sets of term operations of algebras. 
  
  One way to phrase this core fact is as follows: for any finite algebra $\alg A$, its set of invariant relations (\emph{subuniverses of powers} or \emph{subpowers} in algebraic terminology) is always a relational clone; every relational clone is of this form; and two algebras have the same relational clone of subpowers if and only if they have the same set of term operations (see Subsection~\ref{subsec:alg_approach}). For instance, a Boolean CSP, say over the domain $\{0,1\}$,  is solvable in polynomial time if and only if the relations of the template are subpowers of one of four types of algebras -- an algebra with a single constant operation, a semilattice, the majority algebra, or the affine Mal'cev algebra of $\zee{2}$ (see Subsection~\ref{subsec:boolean}).

  This connection between relations and operations allowed researchers to apply techniques from Universal Algebra. Application of these techniques became known as the \emph{algebraic approach to the CSP}, although one may argue that the name misses the point a little -- the success of the approach lies mostly in combining and moving back and forth between the relational and algebraic side, and this is the case for this paper as well.
  The general theory of the CSP was further refined in subsequent papers~\cite{Bulatov05:classifying,Barto18:wonderland} 
  and turned out to be an efficient tool in other types of constraint problems including the
Quantified CSP \cite{Borner09:complexity,Carvalho17:complexity,Zhuk22:qcsp}, 
the Counting CSP \cite{Bulatov07:towards,Bulatov13:complexity}, some optimization problems, e.g.\ the Valued 
CSP~\cite{Krokhin17:complexity} and robust approximability \cite{Barto16:robustly},
infinite-domain CSPs \cite{Bodirsky18:discrete-temporal,bodirskyBook},
related promise problems such as 
``approximate coloring'' and the Promise CSP~\cite{Brakensiek18:promise,Bulin19:algebraic}, and many others.

  One useful technical finding of~\cite{Bulatov05:classifying} is that every CSP is equivalent to a CSP over an \emph{idempotent template}, i.e. a template that contains all the singleton unary relations. This allows us to use parameters in pp-definitions 
  and omit homomorphic equivalence in the first item of Theorem~\ref{dichotomy}. On the algebraic side, this allows us to concentrate on so-called idempotent algebras (see Subsection~\ref{subsec:prelim_algebra}). Another important contribution of that paper was a conjecture postulating, for idempotent structures, the exact borderline between polynomial solvability and NP-hardness, which coincides with the borderline stated in Theorem~\ref{dichotomy}. The hardness part was already dealt with in the same paper  and what was left was the tractability part. Within the realm of idempotent structures, the algebras corresponding to the second item of Theorem~\ref{dichotomy} are so-called Taylor algebras (see Subsection~\ref{subsec:prelim_taylor}). The following theorem is therefore the core of the two  proofs of the Dichotomy Conjecture.

  \begin{theorem}[\cite{Bulatov17:dichotomy,Zhuk17:proof,Zhuk20:dichotomy}] \label{dichotomy_real}
   Let $\relstr{A}$ be an idempotent structure. If there exists an idempotent Taylor algebra $\alg A$ such that all relations in $\relstr{A}$ are subpowers of $\alg A$, then the CSP over $\relstr{A}$ is solvable in polynomial time.
  \end{theorem}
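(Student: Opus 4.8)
The plan is to build a recursive algorithm that, given a CSP instance over $\relstr A$ — a finite set of variables $V$ with domains $D_v\subseteq A$ and a list of constraints, each a subpower of $\alg A$ restricted to the current domains — decides in polynomial time whether a solution exists, with the recursion decreasing the quantity $\sum_{v\in V}|D_v|$ (legitimate because $\relstr A$ is idempotent, so any domain may be shrunk to a subuniverse). Each recursive call begins with a polynomial-time \emph{propagation} phase enforcing strong $(2,3)$-consistency together with a cycle-consistency (``Prague instance'') condition; this tightens the relations and the domains $D_v$ and either exposes an obvious obstruction (some $D_v=\emptyset$, in which case we report that there is no solution) or yields a nonempty, consistent instance with the same solution set. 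Consistency is the hypothesis that makes the structural lemmas below applicable, so it must be re-established after every modification.

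For a consistent instance one argues a trichotomy. First, if the subalgebras that actually occur generate a variety of bounded width — equivalently, have no affine subfactor, i.e.\ omit the affine local type $\mathbf 2$ (type $\mathbf 1$ being automatically omitted since $\alg A$ is Taylor) — then local consistency already guarantees a solution, which is extracted greedily. Otherwise, one searches for a \emph{domain reduction}: a variable $v$ and a proper nonempty $D_v'\subsetneq D_v$ such that the instance is solvable iff the instance with $D_v$ replaced by $D_v'$ is; applying such a reduction and re-propagating strictly decreases $\sum_v|D_v|$, so we recurse. The candidate sets $D_v'$ are supplied by the structure theory of idempotent Taylor algebras: a binary absorbing subuniverse of $D_v$, a central subuniverse, or a block of a congruence of $D_v$ whose quotient is polynomially complete (PC). Finally, if no reduction applies anywhere, the consistent instance must — after propagation — be essentially a system of linear equations over a product of finite fields, solved by Gaussian elimination. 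The algorithm therefore loops: reduce if possible; else solve linearly; else (bounded width) output a solution.

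The bounded-width step rests on the absorption toolkit (weak near-unanimity terms, chains of J\'onsson-type terms, the ``loop lemma'', and the fact that Prague instances have solutions). The linear step requires proving that a consistent, reduction-free instance has every constraint affine ``over its center'' and that these affine pieces assemble into one global linear system, which uses Mal'cev/edge terms on the relevant factors and a careful analysis of centralizers. The real work, however, is the \emph{correctness of the reductions}: for each of the three reduction types one must prove a ``one-step'' lemma asserting that in a consistent instance, if a solution exists then one exists whose $v$-coordinate lies in $D_v'$. Typically one shows any solution can be driven into a binary absorbing subuniverse by iterating the absorbing term along a connectivity structure derived from the constraints, and analogously for central subuniverses and for PC quotients; one then has to verify that these local improvements can be made simultaneously at all variables without recreating an inconsistency.

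I expect the main obstacle to be precisely this coordination problem, and in particular the interface between the linear part and the rest: after stripping away all absorbing, central, and PC phenomena one must certify that what remains is genuinely linear, and the delicate situation is a constraint that is ``linear modulo'' a non-linear congruence, where the centralizer of the affine factor has to be understood in detail. Keeping every recursive call strictly smaller while bounding the number of reduction rounds so that the overall running time stays polynomial is where the bulk of the technical effort in \cite{Bulatov17:dichotomy,Zhuk17:proof,Zhuk20:dichotomy} goes, and even a streamlined proof built on the minimal-Taylor-algebra framework of this paper will have to confront it.
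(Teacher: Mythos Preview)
The paper does not prove this theorem. Theorem~\ref{dichotomy_real} is stated with citations to \cite{Bulatov17:dichotomy,Zhuk17:proof,Zhuk20:dichotomy} and is used only as background motivating the paper's program; the authors explicitly say in the conclusion that they hope the theory initiated here ``will eventually result in a simple and more natural proof of the Dichotomy Theorem'', so no proof is offered here to compare against.

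Your write-up is a fair high-level summary of Zhuk's algorithm (propagation, then reduction via binary-absorbing/central/PC subuniverses, else the residual instance is linear), but it is a proof \emph{plan}, not a proof, and you say as much yourself in the last paragraph. The steps you flag as ``the real work'' --- the one-step reduction lemmas, the coordination of reductions across variables, and the certification that a reduction-free consistent instance is linear --- are exactly where the hundreds of pages in \cite{Bulatov17:dichotomy,Zhuk20:dichotomy} go, and nothing in your outline discharges them. In particular, the claim that ``any solution can be driven into a binary absorbing subuniverse by iterating the absorbing term along a connectivity structure'' is far from automatic: in Zhuk's proof the correctness of each reduction type requires a separate, substantial argument (for PC quotients it is especially intricate), and the linear endgame depends on a detailed theory of ``bridges'' and centralizers that is not subsumed by cycle-consistency alone. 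So as a sketch of what a proof must contain this is reasonable, but it is not itself a proof, and the paper you are writing in does not supply the missing pieces either.
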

  
  Partial results toward Theorem~\ref{dichotomy_real} include dichotomies for various classes of relational structures and algebras (e.g. the class of 3-element algebras~\cite{Bulatov06:three-element} and the class of structures containing all unary relations~\cite{Bulatov11:conservative}), understanding of the limits of algorithmic techniques (e.g. local consistency methods~\cite{Barto14:local} and finding generators for the set of solutions~\cite{Idziak10:few}), and finding potentially useful characterizations of Taylor algebras (e.g. by means of weak near-unanimity operations~\cite{Maroti08:weak-nu} and by means of cyclic operations~\cite{Barto12:cyclic}). The papers~\cite{Barto14:local} and~\cite{Barto12:cyclic} initiated a technique which is now referred to as the \emph{absorption theory}~\cite{Barto17:absorption}. Absorption theory is one of the fruits of CSP-motivated research which also impacted other CSP-related problems as well as universal algebra~(e.g. \cite{Barto18:CM})  and it is one of the three theories this paper is concerned with.

  Bulatov and Zhuk in their resolution of the Dichotomy Conjecture (and their prior and subsequent work) developed novel techniques, which we refer to as \emph{Bulatov's theory} and \emph{Zhuk's theory} in this paper.  
  These theories are (understandably) mostly focused on the task at hand, to prove Theorem~\ref{dichotomy_real}, and  as such have several shortcomings. First, some of the new concepts are still evolving as the need arises, and they do not yet feel elegant and settled. Moreover, the theories are technically complex which makes it difficult to master them and to apply them in different contexts. This is best witnessed by the absence of results which employ these theories from different authors (needless to say they have already witnessed their potential). 
  Second, they both employ the following trick. Instead of studying a general, possibly wild Taylor algebra, one can first tame it by taking a certain \emph{Taylor reduct} -- an algebra whose operations are only some of the term operations but which is still Taylor.
  Taking reducts does not result in any loss of generality in Theorem~\ref{dichotomy_real}, since reducts keep all the original invariant relations, so proving tractability for a reduct is sufficient for tractability for the original problem. However, taking reducts does result in loss of generality of the theory, and it is not yet clear to which  natural classes of algebras the theories apply. Moreover, these reducts are different in the two approaches.
  Third, connections between Bulatov's and Zhuk's theories were not understood at all. While Zhuk's theory and absorption theory at least had some concepts in common, Bulatov's theory seemed quite orthogonal to the rest.

  The contributions of this paper unify, simplify, and extend parts of these three theories, making them, we hope, more accessible and reducing the prerequisites for the dichotomy proofs.
  In particular, we initiate a systematic study of \emph{minimal Taylor algebras}, i.e., those algebras that are Taylor but such that none of their proper reducts is Taylor. Thus, we employ the above trick to the extreme and study, in a sense, the tamest algebras or, in other words,  ``hardest'' tractable CSPs. This restriction, on the one hand, limits the scope of the theory but, on the other hand, gives us a framework in which the three theories do not look separate at all anymore. Indeed, the authors find the extent to which the notions of the three theories simplify and unify in minimal Taylor algebras to be truly striking.
  Even though our results do not cover some advanced parts of the three theories, 
  we believe that they have the potential to evolve into one coherent theory of finite algebras that would make the CSP Dichotomy Theorem an exercise (albeit hard) and that would have applications well beyond constraint problems. 
  
  The contributions can be divided into two groups, results for (all finite) Taylor algebras stated in Section~\ref{sec:taylor} and results for minimal Taylor algebras in~Sections~\ref{sec:mtalgebras} and \ref{sec:omitting_types}. We now describe them in more detail together with more background.

\subsection{Taylor algebras}

The central concept in absorption theory is that of absorbing subuniverses introduced formally in Subsection~\ref{subsec:prelim_absorption}. These are invariant subsets of algebras with an additional property resembling ideals in rings. A fundamental theorem, the \emph{absorption theorem}~\cite{Barto12:cyclic}, shows that nontrivial absorbing subuniverses in Taylor algebras exist under rather mild conditions and this fact makes the theory applicable in many situations. For instance, the strategy in~\cite{Barto14:local} to provide a global solution to a locally consistent instance is to propagate local consistency into proper absorbing subuniverses. The abundance of absorption provided by the absorption theorem makes this propagation often possible, and if it is not, gives us sufficient structural and algebraic information about the instance which makes the propagation possible nevertheless, until the instance becomes trivially solvable.

Zhuk's starting point is a theorem stating that every Taylor algebra has a proper subuniverse of one of four special types (see Subsections~\ref{subsec:four_types}). Zhuk derives the \emph{four types theorem}~\cite{Zhuk20:dichotomy} from a complicated result in clone theory, Rosenberg's classification of maximal clones~\cite{Rosenberg70} (the dependence of this approach on Rosenberg's result is removed in \cite{zhuk21:strong-subalgs}).
Given the four types theorem, the overall strategy for the polynomial algorithm for Theorem~\ref{dichotomy_real} is natural and similar in spirit to the absorption technique -- to keep reducing to one of such subuniverses until the problem becomes trivial. 
Although Zhuk's theory has a nontrivial intersection with the absorption theory,  these connections were not properly explored and verbalized.

Bulatov's algorithm in his proof of Theorem~\ref{dichotomy_real} employs a similar general idea, he reduces the instance to certain subuniverses. However, these special subuniverses are defined, as opposed to absorption and Zhuk's theories, in a very local way. They are sets that are, in a sense, closed under edges (e.g. strongly connected components) of a labeled directed graph whose vertices are the elements of the algebra. Bulatov introduces three basic kinds of edges (see Subsection~\ref{subsec:prelim_edges}), whose presence indicates that the local structure around the adjacent vertices, namely the subuniverse generated by the two vertices, somewhat resembles the three interesting tractable cases in Schaefer's Boolean dichotomy. What makes this approach work is a fundamental theorem (Theorem~1  \cite{Bulatov16:graph}, see also \cite{Bulatov04:graph}), the \emph{connectivity theorem}, which says that the edges sufficiently approximate the algebra in the sense that the directed graph is connected.
The proof uses rather technically challenging constructions involving  operations in the algebra.

In Section~\ref{sec:taylor} we first describe some of the connections between absorption theory and Zhuk's theory, and explain simplifications and refinements that were scattered across literature, including a refinement of the absorption theorem that follows from~\cite{Zhuk17:proof,Zhuk20:dichotomy}. We also give two new results improving pieces of the two theories. 
The major novel contribution of Section~\ref{sec:taylor} is Theorem~\ref{ff:subdirect}, a purely relational fact which roughly states that each ``interesting'' relation that uses all the domain elements in every coordinate pp-defines a binary relation with the same properties or a ternary relation of a very particular shape. Although the proof is elementary and not very long, it enables us to derive both Zhuk's four types theorem and Bulatov's connectivity theorems as corollaries. It may be also of interest for some readers to note that  theorems in this section often even do not require the algebra to be Taylor -- they concern all finite idempotent algebras.

\subsection{Minimal Taylor algebras}

The advantage of studying minimal reducts within a class of interest was clearly demonstrated in the work of Brady~\cite{Brady20:bw}. He concentrated on so-called bounded width algebras -- these are algebras that play the same role in solvability of CSPs by local consistency methods~\cite{Barto14:local} as Taylor algebras do for polynomial time solvability. The theory he developed enabled him to 
classify all the minimal bounded width algebras on small domains.
 Our first contributions in Section~\ref{sec:mtalgebras} show that the basic facts for minimal bounded width algebras have their counterparts for minimal Taylor algebras. For instance, Proposition~\ref{prop:TMexist} shows that every Taylor algebra does have a minimal Taylor reduct, and so minimal Taylor algebras are indeed sufficiently general, e.g., in the CSP context. 

The main results in Section~\ref{sec:mtalgebras} show that in minimal Taylor algebras
 concepts of the three approaches are simpler and have stronger properties (Subsection~\ref{ssec:mtalgebras_absorption} for absorption and the four types, Subsection~\ref{ssec:mtalgebras_edges} for edges) and
there are  deep connections among them (Subsection~\ref{subsec:abs_vs_edges}). Additional connections are given in Section~\ref{sec:omitting_types}, where various  classes of algebras are characterized in terms of algorithmic properties and types of edges, types of operations, and types of absorption present in the algebras. We now  discuss a sample of the obtained results. 

Edges, as we already mentioned, are pairs of elements for which the local structure around the pair resembles one of the three interesting polynomially solvable cases in Schaefer's Boolean dichotomy \cite{Schaefer}. More precisely, and specializing to one kind of edges, we say that $(a,b)$ is a \emph{majority edge} if the subalgebra $\alg E$ generated by $a$ and $b$ has a proper congruence (i.e., invariant equivalence relation) $\theta$ and a term operation $t$ that acts as the majority operation on the equivalence classes $a/\theta$ and $b/\theta$. The resemblance of the two-element majority algebra is in general quite loose -- the equivalence $\theta$  can have many more classes and there may be many more operations in $\alg E$ other than $t$. However, in minimal Taylor algebras, $\alg E$ modulo $\theta$ is always term equivalent to the two element majority algebra (Theorem~\ref{thm:minimaledgesquotients}).

The second sample concerns the simplest absorbing subuniverses, the 2-absorbing ones, which also constitute one of the four types of Zhuk's fundamental theorem. The 2-absorption of a subuniverse $B$ is a relatively strong property that requires the existence of \emph{some} binary term operation $t$ whose result is always in $B$ provided at least one of the arguments is in $B$. An extreme further strengthening is as follows: the result of applying \emph{any} operation $f$ to an argument that contains an element in $B$ in \emph{any} essential coordinate is in $B$. It turns out that these notions actually coincide for minimal Taylor algebras (Theorem~\ref{thm:bin_abs}). What is perhaps even more surprising is the connection to Bulatov's theory: 2-absorbing sets are exactly subsets stable (in a certain sense) under all the three kinds of edges (Theorem~\ref{thm:bin_abs_stable}). 

Finally, we mention that the clone of any minimal Taylor algebra is generated by a single ternary operation (Theorem~\ref{thm:single_operation_generates}). This, together with other structural results in this paper, may help in enumerating Taylor algebras -- at the very least we know that  there are at most $n^{n^3}$ of them over a domain of size $n$. Such a catalogue could be a valuable source of examples for CSP-related problems as well as universal algebra. Additionally, having a complete catalogue of minimal Taylor algebras for a given domain allows us to write down an explicit, concrete generalization of Schaefer's Dichotomy Theorem 
for a domain of that size.

\subsection{Follow up work} \label{subsec:follow-up}

Brady has already initiated the project of enumerating minimal Taylor algebras~\cite{brady:notes}. In particular, he has proved that,  up to term-equivalence and permutations of the domain, there are exactly $24$ minimal Taylor algebras on a domain of size $3$. This gives us a concrete list of the hardest tractable CSPs on the 3-element domain, refining the main result of~\cite{Bulatov06:three-element}. In an unpublished work of a subset of the authors and Albert Vucaj, the number of cases required to state the 3-element dichotomy was further reduced by means of the pp-constructibility technique from~\cite{Barto18:wonderland}. 

Another novel contribution in Brady's CSP notes~\cite{brady:notes} is the theory of stable sets which provide a common generalization of three out of the four types in Zhuk's approach. This theory is applied to simplify and improve results on local consistency~\cite{Barto14:local,Kozik16:weak-consistency} and  robust approximation~\cite{Barto16:robustly} of CSPs.

\subsection{Organization of the paper}
  
The paper is split into two parts. In Part I we state and discuss the main concepts and results. Part II provides the technical details, that is, additional definitions, results, and proofs of the claims made in Part I.

\newpage

\part{Main results}

\section{Basics}

In this section we expand on the explanation of the algebraic approach given in the previous section. The aim is to provide a short but self-contained introduction to the subject which, however, cannot replace more comprehensive material such as~\cite{Barto17:polymorphisms}. Subsection~\ref{subsec:alg_approach} explains the translation of the CSP to the algebraic language, Subsection~\ref{subsec:boolean} details Schaefer's dichotomy theorem, and Subsection~\ref{subsec:prelim_taylor} introduces  the central concept of the algebraic theory of the CSP, the Taylor algebra.
Throughout the paper, notations and concepts are introduced as the need arises, occasionally in a somewhat informal way; please see Section~\ref{app:prelim} for a more systematic list of definitions.

\subsection{From structures to algebras}
\label{subsec:alg_approach}

To every CSP template $\relstr{A}$ we assign a relational clone $\relclone R$, a clone $\clone C$, and an algebra $\algA$.
The connections between these concepts can be depicted as follows.

$$
\mbox{relational structures} \rightarrow
\mbox{relational clones} \leftrightarrow
\mbox{clones} \leftarrow
\mbox{algebras}
$$

\noindent
We restrict our attention to idempotent structures, i.e., structures containing all the singleton unary relations. This is, as already discussed, not a severe restriction.
Let us fix an idempotent relational structure $\relstr{A}$ with finite universe $A$.

Let $\relclone R$ be the set of all relations that are \emph{pp-definable} from $\relstr{A}$, i.e., definable by a first order formula using only relations in $\relstr{A}$, the equality relation, conjunction, and existential quantification. 
This set is a \emph{relational clone}, that is, it is closed under pp-definitions; and it still captures the complexity of the CSP over $\relstr A$ in the following sense: if a structure $\relstr B$ with the same universe as $\relstr A$ contains only relations in $\relclone R$, then the CSP over $\relstr B$ can be reduced to the CSP over~$\relstr A$ by simply replacing each constraint by its pp-definition. Moreover, if $\relstr B$ is rich enough, e.g., contains the original relations in $\relstr A$, then the CSPs over $\relstr A$ and $\relstr B$ are polynomial-time (even log-space) equivalent.

Next, we define $\clone C$ as the set of all \emph{operations}, i.e., mappings $f:A^n \to A$ for some $n$, that preserve all relations in $\relstr A$ (equivalently, all relations in $\relclone R$), i.e., $f$ applied coordinate-wise to tuples in a relation of $\relstr A$ gives a tuple which is again in that relation. Such operations are also called \emph{compatible} with $\relstr A$  or \emph{polymorphisms} of $\relstr{A}$, and we also say that relations in $\relstr A$ are \emph{invariant} under $f$. The set $\clone C$ is a \emph{clone}, that is, it contains all the projections $\proj^n_i$ (the $n$-ary projection to the $i$-th coordinate) and is closed under composition. Since each $f \in \clone C$ in particular preserves the singleton unary relations in $\relstr A$, it is \emph{idempotent}, i.e., $f(a,\dots, a)=a$ for any $a \in A$. 
Importantly~\cite{Geiger68:pol-inv,Bodnarcuk69:pol-inv}, a  relation is in $\relclone R$ if and only if it is invariant under every $f \in \clone C$, so our clone still captures the complexity of the CSP over $\relstr A$. 

The final step from clones to algebras is not essential and is taken mostly for convenience: algebras are more classical objects than clones, with a better established terminology. In particular, we can apply the standard constructions of taking subalgebras or subuniverses (universes of subalgebras), products, and quotients over congruences (invariant equivalence relations). 
This final step is done by selecting some ``generating  operations'' in $\clone C$ and giving them names. Formally, we take an algebra $\alg{A}$ (with universe $A$) of some signature  such that the smallest clone containing the operations in $\alg{A}$, called the \emph{clone of $\alg A$} and denoted $\Clo(\algA)$, is equal to $\clone C$.  Note that $\Clo(\algA)$ is precisely the set of \emph{term operations} of $\alg A$, i.e., operations that can be defined from the operations in $\alg A$ by a term. The algebra $\alg A$ is not uniquely determined by $\relstr A$, any algebra $\alg A'$ which is \emph{term-equivalent} to $\alg A$, i.e., $\Clo(\alg A') = \Clo(\algA)$, can be taken instead. Also observe that $\alg A$ is idempotent (consists of idempotent operations) and that the connection to $\relclone{R}$ remains valid: a relation $R$ is in $\relclone R$ if and only if it is invariant under every operation of~$\algA$; in other words, if $R$ is a \emph{subpower} (subuniverse of a finite power) of~$\alg A$. We use $\leq$ to denote the subuniverse or subalgebra relation, e.g., $R \leq \alg{A}^n$ means that $R$ is a subuniverse of the $n$-th power of $\alg A$. The set of all subpowers of $\alg A$ is denoted $\Inv(\algA)$.

In summary, to every idempotent relational structure $\relstr{A}$ we assigned an idempotent algebra $\alg A$ so that the relational clone  of all pp-definable relations is equal to $\Inv(\alg A)$, the set of subpowers of $\alg A$; and  the clone of all compatible operations is equal to $\Clo(\alg A)$, the set of all term operations of $\alg A$.  The CSP over $\relstr A$ is polynomial-time equivalent to the CSP over any sufficiently rich structure whose relations are in $\Inv(\algA)$. It also follows  that the CSP over some structure $\relstr B$ is polynomial-time reducible to the CSP over $\relstr A$ whenever $\alg A$ is a \emph{reduct} of an algebra $\alg B$ assigned to $\relstr{B}$, i.e., $\Clo(\alg A) \subseteq \Clo(\alg B)$.

As we concentrate on algebras coming from finite idempotent relational structures, we make the following \textbf{running assumption in definitions and theorems: all the involved algebras are finite and idempotent. We do not usually explicitly mention this assumption in our theorem statements. }

\subsection{Boolean CSPs} \label{subsec:boolean}

We now discuss the dichotomy theorem for Boolean CSPs from the algebraic perspective. 
The following three types of two-element algebras play a special role.

\begin{itemize}
    \item \emph{Two-element semilattice}: a two-element set together with the binary maximum operation with respect to one of the two possible orderings of the domain, e.g., $\alg A = (\{0,1\}; \vee)$ where $\vee$ is the maximum operation. Here $\Clo(\algA)$ is the set of operations of the form $x_{i_1} \vee x_{i_2} \vee \dots \vee x_{i_k}$ and $\Inv(\algA)$ is the set of relations pp-definable from the singleton unary relations and the relation $x \vee y \vee \neg z$ (note that we abuse the notation and use $\vee$ both for the maximum operation and logical disjunction). Therefore, if $\alg A$ is a reduct of an algebra associated to a template $\relstr B$, then the CSP over $\relstr B$ can be reduced to Dual-Horn-3SAT (and thus to Horn-3SAT).  
    
    \item \emph{Two-element majority algebra}: a two-element set together with the unique ternary operation that returns the majority of its arguments, e.g., $\alg A = (\{0,1\}; \maj)$, $\maj(a_1, a_2, a_3) = 1$ iff there are at least two $i$s with $a_i=1$.
    Here $\Clo(\algA)$ consists of all idempotent, \emph{monotone} (compatible with the inequality relation $\leq$), and \emph{self-dual} (i.e., compatible with the disequality relation $\neq$) operations and $\Inv(\algA)$ is the set of relations pp-definable from the binary relations $x \vee y$, $\neg x \vee y$, and $\neg x \vee \neg y$. Therefore, if $\alg A$ is a reduct of an algebra associated to a template $\relstr B$, then the CSP over $\relstr B$ can be reduced to 2SAT.   
    \item \emph{Affine Mal'cev algebra of a two-element group}: a two-element set together with the ternary addition with respect to the unique group structure on the universe, e.g., $\alg A = (\{0,1\}; x+y+z \pmod 2)$. Here $\Clo(\alg A)$ is the set of operations of the form $x_{i_1} + x_{i_2} + \dots + x_{i_k} \pmod{2}$ with $k$ odd and $\Inv(\algA)$ is the set of relations pp-definable from the relations $x+y+z\equiv 1\pmod{2}$ and $x+y+z\equiv 0\pmod{2}$. Therefore, if $\alg A$ is a reduct of an algebra associated to a template~$\relstr B$, then the CSP over~$\relstr B$ can be solved by Gaussian elimination.
\end{itemize}

It follows from the full description of clones on a two-element set by Post~\cite{post41:clones} (and is not hard to verify) that every idempotent clone, except the clone consisting only of projections, contains one of the two semilattice operations or the majority operation or the ternary addition. Therefore, we have the following dichotomy for the CSP over $\relstr A$ and an assigned algebra $\alg A$. Either 
\begin{itemize}
    \item $\alg A$ contains only projections, in which case $\Inv(\alg A)$ contains all relations, and then the CSP over $\relstr{A}$ is NP-complete since any CSP, e.g. 3SAT, can be reduced to it; or
    \item $\alg A$ contains some nonprojection, in which case one of three types of algebras above is a reduct of $\alg A$, and then the CSP over $\relstr{A}$ is solvable in polynomial time.
\end{itemize}

\subsection{Taylor algebras} \label{subsec:prelim_taylor}

For two-element idempotent structures, the necessary and sufficient condition for polynomial-time solvability is that $\relstr{A}$ does not pp-define every finite structure or, in algebraic terms, $\alg A$ contains a nonprojection. For larger universes, this condition is no longer sufficient and pp-definability needs to be weakened to so called pp-interpretability. The algebraic counterpart goes as follows.

\begin{definition} \label{def:taylor}
An (idempotent, finite) algebra $\algA$ is a \emph{Taylor algebra} if no quotient of a subalgebra of a power  of $\algA$ is a two-element algebra whose every operation is a projection. 
\end{definition}

\noindent
The following proposition shows that powers can be dropped from this definition.  In particular, a two element algebra is Taylor iff it has a nonprojection operation.

\begin{proposition}[\cite{Bulatov01:varieties}] \label{prop:taylor-without-powers}
An algebra $\algA$ is Taylor if and only if  no quotient of a subalgebra of $\algA$ is a two-element algebra whose every operation is a projection. 
\end{proposition}

Birkhoff's HSP Theorem~\cite{Birkhoff} connects the three main algebraic constructions (subalgebras, products, quotients) with \emph{identities}, i.e., universally quantified equations. In particular, $\alg A$ is Taylor iff its operations satisfy some set of identities which are not satisfiable by projections. Several types of operations satisfying such identities generalize the three types in the previous subsections.

\begin{itemize}
    \item A \emph{semilattice operation} is a binary operation $\vee$ which is commutative, idempotent, and associative. An algebra $(A;\vee)$, where $\vee$ is a semilattice operation, is called a \emph{semilattice}.
    \item A \emph{majority operation} is a ternary operation $m$ which satisfies $m(x,x,y)=m(x,y,x)=m(y,x,x)=x$ (for any $x,y$ in the universe). More generally, an $n$-ary \emph{near unanimity operation} is an operation $f$ satisfying $f(x,x, \dots, x, y, x, x, \dots, x)=x$ for any position of $y$. An example for an odd $n$ is the $n$-ary majority operation $\maj_n$ on $\{0,1\}$, that is, $\maj_n(a_1, \dots, a_n)=1$ iff the majority of the $a_i$ is 1.  
    \item A \emph{Mal'cev operation} is a ternary operation $p$ satisfying $p(y,x,x) = p(x,x,y)=y$. An example is the operation $x-y+z$ on $A$, where $+$ and $-$ is computed with respect to an  abelian group structure  on $A$. In this case the  algebra $(A; x-y+z)$ is called the \emph{affine Mal'cev algebra} of that abelian group.
\end{itemize}

The two basic algorithmic ideas to efficiently solve a CSP are  local propagation algorithms~\cite{Bulatov08:dualities,Barto14:local} and finding a generating set of all solutions~\cite{Berman10:varieties,Idziak10:few}. 

We say that a CSP, or its assigned algebra $\alg A$, has \emph{bounded width} if a local propagation algorithm correctly decides it. Examples include algebras with a semilattice or a majority operation, and a nonexample is an affine Mal'cev algebra of a nontrivial abelian group. The bounded width theorem~\cite{Barto14:local} characterizes bounded width algebras as algebras $\alg A$ such that no quotient of a subalgebra of a power $\algA$ is a nontrivial abelian algebra.

\begin{definition} \label{def:abelian}
An algebra $\alg{A}$ is \emph{abelian} if the diagonal $\Delta_A = \{(a,a) \st a \in A\}$ is an equivalence class of a congruence of $\alg{A}^2$.
\end{definition}

\noindent
An example of an abelian algebra is an algebra whose every operation is a projection. A more interesting example is an affine Mal'cev algebra, where a  congruence satisfying the definition is the congruence $\alpha$ defined by
$
((x_1,x_2), (y_1,y_2)) \in \alpha \mbox{ iff }
x_1-x_2 = y_1-y_2.
$
More generally, each \emph{affine module}, i.e., an algebra whose term operations are exactly the idempotent term operations of a module over a unital ring, is abelian.
In fact, these are the only examples of abelian \emph{Taylor} algebras. 

\begin{theorem}[\cite{hobby88:tct}] \label{thm:FTA}
A Taylor algebra is abelian if and only if it is an affine module. 
\end{theorem}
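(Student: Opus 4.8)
The plan is to prove the statement in two stages: first produce a \emph{Mal'cev term operation} $m$ of $\algA$ (with $m(y,x,x)=m(x,x,y)=y$), and then deduce from its existence that $\algA$ is an affine module. The second stage is the classical commutator-theoretic argument (see \cite{hobby88:tct,Bergman12:UAbook}): fixing an arbitrary $0\in A$, setting $a+b:=m(a,0,b)$ and $-a:=m(0,a,0)$, and using the term condition --- which is exactly the abelianness of $\algA$ --- one checks that $(A;+,-,0)$ is an abelian group, that $m(x,y,z)=x-y+z$, that every unary term operation fixing $0$ is an endomorphism of this group, and, invoking idempotency, that every $n$-ary term operation of $\algA$ has the form $(x_1,\dots,x_n)\mapsto\sum_i r_i x_i$ with $\sum_i r_i=1$ in the ring $R$ of those endomorphisms; this is precisely the assertion that $\algA$ is an affine module. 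I would write this out, but regard it as routine.

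The heart of the matter is therefore to produce a Mal'cev term in a finite, idempotent, abelian, Taylor algebra, equivalently (by Mal'cev's theorem, see \cite{Bergman12:UAbook}) to show that the variety generated by $\algA$ is congruence permutable. Following the strategy behind the absorption-based proof \cite{Barto12:cyclic,Barto17:absorption}, I would combine two facts: \textbf{(a)} the absorption theorem, stating that if $\alg D$ and $\alg E$ are finite Taylor algebras and $R$ is a linked compatible subdirect relation on $D\times E$, then $R=D\times E$ unless $\alg D$ or $\alg E$ has a proper nonempty absorbing subuniverse; and \textbf{(b)} the lemma that a finite, idempotent, abelian algebra has no proper nonempty absorbing subuniverse. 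These suffice, because all subpowers of $\algA$, and all block-subalgebras of those, are again finite, idempotent, abelian and Taylor: for congruences $\theta,\psi$ of a subpower $\alg B$ and a block $C$ of $\theta\vee\psi$, the relation $R:=(\theta\circ\psi)\cap(C\times C)$ is a compatible subdirect relation on $C\times C$ which is linked (it contains $\theta|_{C}\cup\psi|_{C}$, and these connect $C$ since $C$ is a single $(\theta\vee\psi)$-block), hence $R=C\times C$ by \textbf{(a)} and \textbf{(b)}; as $C$ was arbitrary, $\theta\circ\psi=\theta\vee\psi=\psi\circ\theta$, i.e.\ the variety is congruence permutable.

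The step I expect to be the main obstacle is the no-absorption lemma \textbf{(b)}. A singleton absorbing subuniverse $B=\{b\}$ is handled quickly: if $\{b\}$ absorbs $\algA$ via an $n$-ary term $t$, then the trivial identity $t(a,b,\dots,b)=t(b,b,\dots,b)$ (both sides equal $b$, having an argument in $B$) feeds into the term condition with the first coordinate distinguished to give $t(a,v_2,\dots,v_n)=t(b,v_2,\dots,v_n)$ for all $v_i\in A$, and setting $v_2=\dots=v_n=a$ yields $a=t(a,\dots,a)=t(b,a,\dots,a)=b$; so $A=\{b\}$. For a general proper absorbing subuniverse $B$ I would reduce toward this: with $\gamma:=\mathrm{Cg}_{\algA}(B^{2})$ and $B'$ the $\gamma$-block containing $B$, either $B=B'$, whence $B$ collapses to a single element of the abelian idempotent quotient $\algA/\gamma$, forcing $\algA/\gamma$ to be trivial and so $B=A$, a contradiction; or $B\subsetneq B'$, whence $B$ is a proper absorbing subuniverse of the subalgebra $\algA|_{B'}$, and induction on $|A|$ applies as long as $B'\neq A$. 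The genuinely delicate residual case is the one where $\mathrm{Cg}_{\algA}(B^{2})$ is already the full congruence while $B\neq A$; ruling it out is exactly where abelianness and finiteness (and, through \textbf{(a)}, the cyclic term of Theorem~\ref{thm:cyclic}) must be used together, as in \cite{Barto12:cyclic} --- and this whole lemma is in any case available in the absorption literature. Finally, a route closer to the original proof \cite{hobby88:tct} sidesteps \textbf{(a)} and \textbf{(b)} entirely: Definition~\ref{def:taylor} says precisely that no subpower of $\algA$ has a two-element projection quotient, i.e.\ $\algA$ omits the unary type, and a finite abelian algebra omitting the unary type has type set $\{\mathbf 2\}$, hence is affine --- but this uses the tame congruence theory that the rest of this paper avoids.
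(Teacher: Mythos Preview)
The paper does not prove Theorem~\ref{thm:FTA}; it quotes the result from \cite{hobby88:tct} and notes that an absorption-based alternative is available in \cite{Barto12:cyclic}. So there is no in-paper proof to compare against. Your outline faithfully reconstructs the architecture of the \cite{Barto12:cyclic} route: reduce ``affine module'' to ``abelian with a Mal'cev term'' (classical commutator theory), obtain the Mal'cev term from congruence permutability, and derive permutability by applying the absorption theorem to the linked subdirect relation $(\theta\circ\psi)\cap C^{2}$ on each $(\theta\vee\psi)$-block $C$, using the no-absorption lemma \textbf{(b)} for finite idempotent abelian (Taylor) algebras. Your closing remark about the tame-congruence route is also accurate.

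Two points on the execution. First, in your singleton case of \textbf{(b)} the final equality in the chain $a=t(a,\dots,a)=t(b,a,\dots,a)=b$ is unjustified: from $t(a,b,\dots,b)=t(b,b,\dots,b)$ the term condition yields only $t(a,\mathbf v)=t(b,\mathbf v)$, hence $a=t(b,a,\dots,a)$, not $t(b,a,\dots,a)=b$. You must iterate the same step over every coordinate (each $t(b,\dots,b,a,b,\dots,b)=b$ shows $t$ is independent of that coordinate), after which $a=t(b,\dots,b)=b$ follows. Second, and more substantively, the general case of \textbf{(b)} is the entire content of that lemma, and you explicitly leave the residual case ($\mathrm{Cg}_{\algA}(B^{2})$ full while $B\neq A$) to \cite{Barto12:cyclic}. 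That case is precisely where the Taylor hypothesis---through a cyclic term---is actually exercised (note that \cite{Barto12:cyclic} states the lemma \emph{with} the Taylor assumption, slightly weaker than your formulation), and handling it is the real work. What you have is therefore a correct high-level plan with its central lemma outsourced rather than proved.
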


\noindent
 The original proof of this result makes use of tame congruence theory, a well-developed theory of finite algebras that we have not mentioned yet. An alternative proof using absorption is available~\cite{Barto15:malcev}.

We now turn to the other algorithmic idea, finding a generating set of all solutions. It turns out~\cite{Berman10:varieties} that $\alg A$ has ``small'' generating sets of powers (polynomial in the exponent) if and only if it has \emph{few subpowers}, i.e., the number of subuniverses of $\alg A^n$ is $2^{O(n)}$. In such a case, the few subpowers algorithm~\cite{Idziak10:few} finds a generating set of solutions in polynomial time. Examples of algebras with few subpowers include algebras with a near unanimity or Mal'cev operation, and a nonexample is any nontrivial semilattice. 

Algebras with few subpowers were characterized in~\cite{markovic12:cube-terms} as those that do not have certain special subuniverses, called cube term blockers. It became clear that the concept is significant beyond this context~\cite{Barto15:malcev,zhuk21:strong-subalgs}. For this reason we prefer the terminology from the latter paper and call them projective subuniverses. In the  definition, $\Clo_n(\alg A)$ is the set of $n$-ary operations in $\Clo(\alg A)$, and we use $a_i$ to denote the $i$-th component of a tuple $\tuple{a}$. 

\begin{definition}
Let $\alg A$ be an algebra and $B \subseteq A$. We say that $B$ is a \emph{projective subuniverse} if 
for every $f \in \Clo_n(\alg A)$ there exists a coordinate $i$ of $f$ such that $f(\tuple{a}) \in B$ whenever $\tuple{a} \in A^n$ is such that $a_i \in B$.
\end{definition}

\noindent
Observe that a projective subuniverse of $\alg A$ is, indeed, a subuniverse. 

Many of the algebraic concepts that we introduce (such as absorbing subuniverses or strongly projective subuniverses from Section~\ref{sec:mtalgebras}) have useful equivalent characterizations in terms of relations. Such a characterization for projective subuniverses is especially elegant, and we state it here for comparison with characterizations of absorption in Subsection~\ref{subsec:mt_abs}.

\begin{proposition}[Lemma 3.2 in~\cite{markovic12:cube-terms}] \label{prop:cubetermblocker}
    Let $\alg A$ be an algebra and $B \subseteq A$. Then $B$ is a projective subuniverse of $\alg A$ if and only if, for every $n$, the relation $B(x_1) \vee B(x_2) \vee \dots \vee B(x_n)$ is a subpower of $\alg A$.
\end{proposition}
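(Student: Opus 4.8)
The plan is to unwind both directions directly from the definitions, using only that a subpower is a set of tuples closed under coordinatewise application of term operations. Write $R_n$ for the relation $B(x_1)\vee\dots\vee B(x_n)$; concretely $R_n = A^n \setminus (A\setminus B)^n$, so a tuple lies in $R_n$ exactly when at least one of its coordinates lies in $B$. Since $R_1 = B$, in either direction the hypothesis immediately forces $B$ to be a subuniverse of $\alg A$, so that part needs no separate argument.

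First I would prove that if $B$ is a projective subuniverse then each $R_n$ is a subpower. Fix $n$, take $f\in\Clo_m(\alg A)$ and tuples $\tuple{t}^{(1)},\dots,\tuple{t}^{(m)}\in R_n$; the goal is that the coordinatewise value $f(\tuple{t}^{(1)},\dots,\tuple{t}^{(m)})$ is again in $R_n$. By projectivity choose a coordinate $i$ of $f$ with the property that $f$ returns an element of $B$ whenever its $i$-th argument is in $B$. Since $\tuple{t}^{(i)}\in R_n$, there is a coordinate $j$ with $t^{(i)}_j\in B$; then the $j$-th coordinate of the output is $f(t^{(1)}_j,\dots,t^{(m)}_j)$, whose $i$-th argument $t^{(i)}_j$ lies in $B$, hence this value lies in $B$, and so the output is in $R_n$. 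No cleverness is needed here.

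For the converse I would argue by contradiction: assume every $R_n$ is a subpower but $B$ fails to be projective, witnessed by some $f\in\Clo_n(\alg A)$. Failure of projectivity means that for every coordinate $i\in\{1,\dots,n\}$ there is a tuple $\tuple{d}^{(i)}\in A^n$ with $d^{(i)}_i\in B$ but $f(\tuple{d}^{(i)})\notin B$. Arrange $\tuple{d}^{(1)},\dots,\tuple{d}^{(n)}$ as the rows of an $n\times n$ matrix and let $\tuple{c}^{(1)},\dots,\tuple{c}^{(n)}$ be its columns, so $c^{(k)}_j = d^{(j)}_k$. Each column $\tuple{c}^{(k)}$ lies in $R_n$ because its $k$-th coordinate $c^{(k)}_k = d^{(k)}_k$ is in $B$. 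Yet applying $f$ coordinatewise to $\tuple{c}^{(1)},\dots,\tuple{c}^{(n)}$ produces the tuple whose $j$-th coordinate is $f(c^{(1)}_j,\dots,c^{(n)}_j) = f(d^{(j)}_1,\dots,d^{(j)}_n) = f(\tuple{d}^{(j)})\notin B$ for every $j$; so this tuple has no coordinate in $B$ and is not in $R_n$, contradicting closure of $R_n$ under $f$. (Symmetrically, if one prefers to avoid contradiction, one can directly show that a projective coordinate of $f$ is produced by $R_n$-closure.)

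The only point requiring any attention is the ``diagonal'' bookkeeping in the converse: one must set things up so that each column $\tuple{c}^{(k)}$ picks up a $B$-entry from the diagonal position $k$ while the coordinatewise application of $f$ reproduces exactly the $n$ prescribed values $f(\tuple{d}^{(j)})\notin B$. Everything else is routine index chasing, and the argument uses nothing about $\alg A$ beyond the standing conventions.
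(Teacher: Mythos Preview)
Your proof is correct and is the standard argument for this folklore fact. The paper does not give its own proof of this proposition; it merely states it with a citation to Lemma~3.2 of~\cite{markovic12:cube-terms}, and your argument matches the one found there.
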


\section{Three approaches}

In this section we introduce the central concepts and results from the three algebraic approaches. 

\subsection{Edges} \label{subsec:prelim_edges}

We start by formally introducing the three types of edges used in Bulatov's approach to the CSP. These three types are inspired by the three types of two-element algebras from Subsection~\ref{subsec:boolean}. In the definition, $\Sg_{\alg A}(a,b)$ denotes the subalgebra of $\alg A$ generated by $\{a,b\}$. 

\begin{definition} \label{def:edges}
    Let $\alg A$ be an algebra. 
    A pair $(a,b) \in A^2$ is an {\em edge} if there exists a proper congruence $\theta$ on $\Sg_{\alg A}(a,b)$, called a \emph{witness} for the edge,  such that one of the following happens:
      \begin{itemize}
          \item (\emph{semilattice edge}) There is a term operation $f \in \Clo_2(\alg A)$ acting as a join semilattice operation on $\{a/\theta,b/\theta\}$ with top element $b/\theta$, i.e. $f(a/\theta,b/\theta)$, $f(b/\theta,a/\theta) \subseteq b/\theta$.
          \item (\emph{majority edge}) There is a term operation $m \in \Clo_3(\alg A)$ acting as a majority operation on $\{a/\theta,b/\theta\}$.
          \item (\emph{abelian edge}) The algebra $\Sg_{\alg A}(a,b)/\theta$ is abelian.
      \end{itemize}
    An edge $(a,b)$ is called \emph{minimal} if for some maximal congruence $\theta$ witnessing the  edge     and every $a',b' \in A$ such that $(a,a'), (b,b') \in \theta$, we have $\Sg_{\alg A}(a',b') = \Sg_{\alg A}(a,b)$. Such $\theta$ is called a witness for this minimal edge. 
\end{definition}
\noindent 
  A witnessing congruence $\theta$ for an edge $(a,b)$ necessarily separates $a$ and $b$, i.e., $(a,b) \not\in \theta$, since each congruence class of an idempotent algebra is a subuniverse. 
  Moreover, if $\theta$ is a witness for an edge $(a,b)$, then any proper congruence of $\Sg_{\alg A}(a,b)$ containing $\theta$ witnesses the same edge.
  \anote{It is possible that a larger congruence does not witness the same type of an edge. This applies to majority edges. For example, let $A=(\{0,1,2\}, \vee,m)$ where $\vee$ is the semilattice operation with $0\vee 1=2$, and $m$ is majority on $\{0,1\}$ and $m(x,y,z)=2$ whenever $2\in\{x,y,z\}$. Then $01$ majority witnessed by the equality relation, but is not a majority edge witnessed by, say $02|1$. I would add a caveat that type is not necessarily preserved.}
\lnote{I don't think this is possible - the same witnesses the same edge. Note that our edges are inclusive - it can be both semilattice and majority. In your example $02|1$ is not a congruence}

Note that if $(a,b)$ is an edge  of majority or abelian type, then so is $(b,a)$. 
If $(a,b)$ is a semilattice edge it can happen that $(b,a)$ is not an edge at all,
in fact this is always the case for minimal edges in a minimal Taylor algebra, see Subsection~\ref{ssec:mtalgebras_edges}.

In order to make the concepts in this paper elegant and theorems more general, we deviate from the definition given in e.g.~\cite{Bulatov04:graph,Bulatov20:local1}. 
There, majority edges have an additional requirement that the same congruence does not witness the semilattice type, and abelian edges (called affine) required the quotient to be an affine module. In Taylor algebras, abelian edges are the same as affine edges by Theorem~\ref{thm:FTA}. Outside Taylor algebras, it makes sense to separate abelian edges into two types: affine  and sets whose only term operations are projections, as is done in~\cite{Bulatov20:local1}.

Also note that the definition of abelian edges  is of a different type: it restricts the set of term operations from above, as opposed to semillatice and majority edges that restrict them from below. We shall see in Theorem~\ref{thm:minimaledgesquotients} that these differences disappear in minimal Taylor algebras.

Minimal edges do not appear in Bulatov's theory in this form. Somewhat related are thin edges, which at present have rather technical definitions with the exception of thin semilattice edges. We show in Proposition~\ref{prop:thin_semilattice} that minimal semilattice edges and thin semilattice edges coincide in minimal Taylor algebras.

The fundamental theorem of Bulatov's approach shows that edges sufficiently approximate any algebra in the following sense. 

\begin{ThCSrestatablen}{theorem}{BulatovFundamentalThm}{The Connectivity Theorem \cite{Bulatov16:graph,Bulatov04:graph}\ooo{, \hyperlink{ProofThm32}{link to proof}}}{thm:connected}
  The directed graph formed by the edges of any algebra is (weakly) connected.
\end{ThCSrestatablen}

Another important technical result for the edge approach is that the operations appearing in Definition~\ref{def:edges} can be significantly unified, see Theorem 7 in~\cite{Bulatov16:graph}. We give a much stronger unification result
in Theorem~\ref{thm:unifiededges}.

\subsection{Absorption} \label{subsec:prelim_absorption}

A central concept for the absorption theory, as well as for Zhuk's theory, is an absorbing subuniverse.

\begin{definition} \label{def:absorption}
Let $\alg A$ be an algebra and $B \subseteq A$. 
We call $B$ an $n$-absorbing set of $\alg A$ 
if there is a term operation $t \in \Clo_n(\alg A)$ 
  such that $t(\tuple{a}) \in B$ whenever $\tuple{a} \in A^n$ and  $|\{i\st a_i\in B\}|\geq n-1$.
  
  If, additionally, $B$ is a subuniverse of $\alg A$, we write $B\abs_n\alg A$, or $B\abs\alg A$ when the arity is not important.
\end{definition}

\noindent  
We also say ``$B$ absorbs $\alg A$ (by $t$)'' in the situation of Definition~\ref{def:absorption}. Of particular interest for us are $n$-absorbing subuniverses with $n=2$,  e.g., $\{1\}$ in the semilattice $(\{0,1\}; \vee)$, or $n=3$, e.g., $\{0\}$ and $\{1\}$ in the two-element majority algebra on $\{0,1\}$.

The fundamental theorem for the absorption theory states that every proper subdirect linked subpower produces a nontrivial absorbing subuniverse. The required definitions are as follows.
We say that $R$ is a \emph{subdirect product} of $A_1$, \ldots, $A_n$, and write $R \subseteq_{sd} A_1 \times \dots \times A_n$, if $R \subseteq A_1 \times \dots \times A_n$ and the projection of $R$ to any coordinate $i$ is full, i.e., equal to $A_i$. Similarly, we write $R \leq_{sd} \alg{A}_1 \times \dots \times \alg{A}_n$ if $R$ is additionally a subuniverse of that product.  %
If $R \subseteq_{sd} A \times C$, we call $R$ \emph{linked} if it is connected when viewed as a bipartite graph between disjoint copies of $A$ and $C$.

\begin{theorem}[The Absorption Theorem~\cite{Barto12:cyclic}] 
\label{thm:AToriginal} 
Suppose $R \leq_{sd} \alg A \times \alg C$ is proper and linked, and $\alg A$ and $\alg C$ are Taylor.
 Then $\alg A$ or $\alg C$ has a nontrivial absorbing subuniverse.
\end{theorem}

We will often take advantage of one of the results of the absorption theory, the characterization of Taylor algebras by means of cyclic operations.
An alternative proof is now available using Zhuk's approach~\cite{zhuk21:strong-subalgs}.

\begin{theorem}[\cite{Barto12:cyclic}] \label{thm:cyclic}
The following are equivalent for any algebra.
\begin{itemize}
    \item $\algA$ is Taylor.
    \item There exists $n>1$ such that  $\algA$ has a term operation $t$ of arity $n$ which is \emph{cyclic}, that is, for any $\tuple{x} \in A^n$,
$$
t(x_1, x_2, \dots, x_n) = t(x_2, \dots, x_n,x_1).
$$
\item For every prime $p > |A|$, $\alg A$ has a term operation $t$ of arity $p$ which is cyclic.
\end{itemize}
\end{theorem}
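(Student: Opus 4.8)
The plan is to prove the cycle of implications $(3)\Rightarrow(2)\Rightarrow(1)\Rightarrow(3)$, with essentially all the work in the last one.

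The implication $(3)\Rightarrow(2)$ is immediate, since there is a prime $p>|A|$ and any such $p$ exceeds $1$. For $(2)\Rightarrow(1)$, suppose $t$ is a cyclic term operation of arity $n>1$ and, towards a contradiction, that some quotient $\alg C$ of a subpower $\alg S\le\alg A^m$ is a two-element algebra all of whose operations are projections. Restricting $t$ to $\alg S$ and passing to the quotient produces an $n$-ary term operation of $\alg C$; since the cyclic identity is preserved under $\Sg$, under passage to subalgebras, and under quotients, this operation is cyclic, contradicting the fact that no projection of arity $n>1$ is cyclic. Hence no such $\alg C$ exists and $\alg A$ is Taylor by Definition~\ref{def:taylor}. (Equivalently, one checks directly that a cyclic operation is a Taylor operation: for each coordinate $i$, the $\{x,y\}$-tuples with $y$ in coordinate $i{+}1 \bmod n$ and $x$ elsewhere, respectively with $y$ in coordinate $i$ and $x$ elsewhere, agree under $t$ by a single application of the cyclic identity and have $x$, respectively $y$, in coordinate $i$.)

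For $(1)\Rightarrow(3)$ I fix a prime $p>|A|$ and set $\mathbf F:=\Clo_p(\alg A)=\Sg_{\alg A^{A^p}}(\pi_0,\dots,\pi_{p-1})$, the finite algebra of $p$-ary term operations generated by the coordinate projections. Cyclic shifting of arguments induces an automorphism $\gamma$ of $\mathbf F$ of order $p$ with $\gamma(\pi_i)=\pi_{i-1}$, and a $p$-ary cyclic term of $\alg A$ is exactly a $\gamma$-fixed point of $\mathbf F$. Since $\mathbf F\in \mathrm{S}\mathrm{P}(\alg A)$ and the class of Taylor algebras is closed under $\HSP$, $\mathbf F$ is again finite, idempotent, and Taylor. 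I would then reformulate ``$\mathbf F$ has a $\gamma$-fixed point'' as ``a suitable compatible relation $R$ of $\alg A$ contains a constant tuple'', where $R$ is assembled from the $\gamma$-orbit of the tuple of projections together with the identifications forced by the pigeonhole principle (every $p$-tuple over $A$ repeats an entry, because $p>|A|$). The role of the pigeonhole is precisely to make $R$ \emph{linked}: without it $R$ degenerates to a ``twisted diagonal'' and the argument fails — and it must fail, since the affine algebra of $\zee{p}$ is Taylor but has no $p$-ary cyclic term.

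Granting this reformulation, I would show that $R$ contains a constant tuple by induction on $|\alg A|$ (trivial base case $|A|=1$), splitting on the presence of a proper absorbing subuniverse. If $\alg A$ has none, the absorption theorem of absorption theory (alternatively, Zhuk's strong-subalgebra argument \cite{zhuk20:strong-subalgs}) applies to the linked, subdirect relation $R$ and forces $R$ to be the full product, which contains constant tuples; undoing the reformulation yields a $p$-ary cyclic term. If $\alg A$ has a proper absorbing subuniverse $B$, then $\alg A|_B$ is a strictly smaller finite idempotent Taylor algebra, hence has a $p$-ary cyclic term by the induction hypothesis; the restriction of $R$ to the $B$-coordinates remains subdirect and linked over $\alg A|_B$, so it has the desired constant tuple there, which one propagates back into $R$ over $\alg A$ using the absorbing term. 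The main obstacles are all in this step: choosing $R$ so that linkedness holds exactly when $p>|A|$ and so that proper absorbing subuniverses of the ambient algebra carrying $R$ are governed by those of $\alg A$ (so the induction closes), and carrying out the propagation from $B$ back up to $A$, which is where the absorption machinery is genuinely used; the absorption theorem itself is the deep external input. Finally, having a cyclic term of every prime arity $p>|A|$ trivially gives one of some arity $n>1$, closing the cycle.
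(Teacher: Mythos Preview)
The paper does not prove this theorem; it is quoted from \cite{Barto12:cyclic} with the remark that the original proof is via absorption theory and that an alternative proof via Zhuk's approach is in \cite{zhuk20:strong-subalgs}. There is thus no in-paper proof to compare against.

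That said, your outline follows the absorption-theoretic route the paper points to. The easy implications $(3)\Rightarrow(2)\Rightarrow(1)$ are correct. For $(1)\Rightarrow(3)$ you have the right ingredients --- the free algebra $\mathbf F=\Clo_p(\alg A)$, the shift automorphism $\gamma$, the pigeonhole $p>|A|$, induction on $|A|$, and the absorption theorem --- but the sketch has a real gap at its centre: the relation $R$ is never actually defined. You invoke ``$R$ is linked'' (a property of \emph{binary} relations), apply the absorption theorem (stated for $R\le_{sd}\alg A^2$), and conclude ``$R$ is full, hence contains a constant tuple'', all without specifying the arity or the ambient algebra of $R$, or how a constant in $R$ translates back to a $\gamma$-fixed point in $\mathbf F$. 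The propagation step from an absorbing $B\le\alg A$ back up to $\alg A$ is also where the substantive work in \cite{Barto12:cyclic} happens, and it cannot be dispatched in a sentence. As written this is a plausible plan in the right spirit, not a proof.
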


\subsection{Four types} \label{subsec:four_types}

The fundamental theorem for Zhuk's approach is that each Taylor algebra has a proper subuniverse of one of four types. We first introduce the required additional concepts, centers and polynomially complete algebras.

The concept of a center is still evolving, and it is not yet clear what the best version would be for general algebras. Our definition follows~\cite{Zhuk20:dichotomy}, although a more recent paper~\cite{zhuk21:strong-subalgs} made an adjustment motivated by this work. As we shall see in Theorem~\ref{thm:center_abs},
the situation is  much cleaner for minimal Taylor algebras.

Our definition of center of an algebra requires the concept of a (left/right) center of a relation.
The \emph{left center}  of $R \subseteq A \times B$ is the set $\{a \in A \st (\forall b \in B) (a,b) \in R\}$. 
If $R$ has a nonempty left center, it is called \emph{left central}. 
\emph{Right center} and \emph{right central} relations are defined analogically. 
A relation is \emph{central} if it is left central and right central.

  \begin{definition} \label{def:center}
  A subset $B \subseteq A$ is a \emph{center} of $\alg A$ if there exists an algebra $\alg C$ (of the same signature) with no nontrivial 2-absorbing subuniverse and $R \leq_{sd} \alg A \times \alg C$ such that $B$ is the left center of $R$. The relation $R$ is called a \emph{witnessing} relation. 
  If $\alg C$ can be chosen Taylor, we call $B$ a \emph{Taylor center} of $\alg A$.
 \end{definition}

An algebra $\alg A$ is  \emph{polynomially complete} if  every operation on $A$ is in  the clone generated by $\alg A$ together with the constant operations. This property can be equivalently phrased in terms of relations: $\alg A$ is polynomially complete if and only if it has no proper reflexive (that is, containing all the tuples $(a,a, \dots, a)$) irredundant (that is, no binary projection is the equality relation) subpowers.

The fundamental theorem can now be stated as follows.
\begin{ThCSrestatablen}{theorem}{ZhukFundamental}{The Four Types Theorem \cite{Zhuk20:dichotomy}}{thm:4types} 
Let $\alg A$ be an algebra, then
 \begin{enumerate}
  \item[(a)] $\alg A$ has a nontrivial 2-absorbing subuniverse, or 
  \item[(b)] $\alg A$ has a nontrivial center
    (which is a Taylor center in the case where $\alg A$ is a Taylor algebra), 
    or 
  \item[(c)] $\alg A/\alpha$ is abelian for some proper congruence $\alpha$ of $\alg A$, or 
  \item[(d)] $\alg A/\alpha$ is polynomially complete for some proper congruence $\alpha$ of $\alg A$.
 \end{enumerate}
\end{ThCSrestatablen}
  
  \noindent
  We referred to four types of \emph{subuniverses} whereas cases (c) and (d) talk about congruences -- the subuniverses used in~\cite{Zhuk20:dichotomy} are the equivalence classes of such congruences.

  Examples of \emph{simple} (with only trivial congruences) Taylor algebras, for which one of the cases takes place and no other, 
  are (a) a two-element semilattice, (b) a two-element majority algebra, (c) an affine Mal'cev algebra, and (d) the three element rock-paper-scissors algebra  
  $$
  (\{\paper,\rock,\scissors\}; \winner(x,y)).
  $$
  Note, however, that Theorem~\ref{thm:4types} does not require that $\alg A$ is Taylor. 
  If it is, then we get additional properties: centers  are 3-absorbing by Proposition~\ref{prop:center_absorbs} and abelian algebras are term equivalent to affine modules by Theorem~\ref{thm:FTA}. For non-Taylor idempotent algebras, \cite{zhuk21:strong-subalgs} suggests a similar five type  theorem, which also follows immediately from our results.

\section{Taylor algebras} 
\label{sec:taylor}

This section presents unifications, simplifications, and refinements of the three algebraic theories in the setting of  Taylor algebras  (still finite and idempotent) that are not necessarily minimal. In Subsection~\ref{subsec:AT} we discuss the already existing refinements to the proof of the absorption theorem, and provide two additional new refinements in Proposition~\ref{prop:center} and Proposition~\ref{prop:center_absors_in_T}. 
This gives tight links 
to centers and projective subuniverses. Subsection~\ref{subsec:SIS} contains the main contribution of this section, Theorem~\ref{ff:subdirect}. This theorem together with additional technical contributions, Theorems~\ref{thm:maj} and \ref{thm:semilat}, directly imply the fundamental facts in the two proofs of the CSP Dichotomy Theorem  -- the four types theorem and the connectivity theorem.
The proofs of claims in this section are in Section~\ref{app:taylor}.

\subsection{Absorption theorem} \label{subsec:AT}

We first sketch the original proof of the absorption theorem (Theorem~\ref{thm:AToriginal}) and then comment on subsequent improvements and simplifications. 

For simplicity, we sketch the proof for a slighty simplified version with $\alg A = \alg C$. So, we assume that $\alg A$ is Taylor and that $R \leq_{sd} \algA^2$ is linked. The original proof that $\alg A$ then necessarily has a proper absorbing subuniverse can be divided into 3 steps.
\begin{itemize}
    \item[(1)] From $\alg A$  being Taylor it is derived that $\alg A$  either has a nontrivial 2-absorbing subuniverse or a \emph{transitive} term operation $t$ of some arity $n$, i.e., for each $b,c \in A$ and every coordinate $i$ of $t$, there exists a tuple $\tuple{a} \in A^n$ with $a_i=b$ such that $t(\tuple{a})=c$.
    \item[(2)] Using the transitive operation, it is proved that if $\alg A$ has no nontrivial absorbing subuniverses, then $R$ is left or right central. 
    \item[(3)] It is shown that the transitive operation witnesses that the left (right) center absorbs $\alg A$. 
\end{itemize}
 
The first step was explored in more detail in~\cite{Barto15:malcev}.  Lemma 2.7 in~\cite{Barto15:malcev} shows that each  algebra has a nontrivial projective subuniverse or a transitive term operation. A simple argument
then shows that every projective subuniverse in a Taylor algebra is 2-absorbing, a witness is, e.g., any operation of the form  $t(x, \dots,x$, $y, \dots,y)$ where $t$ is cyclic.

As for the second step, it has turned out that left (or right) central relations can be very easily obtained from linked relations by means of pp-definitions, avoiding algebraic considerations altogether.

\begin{ThCSrestatable}{proposition}{LinkedImpliesCenterOld} {(\hyperlink{ProofProp41}{link to proof})}{prop:linked-implies-central}
Let $R \subseteq_{sd} A \times C$ be proper and linked. Then either $R$ is left central or pp-defines a proper subdirect symmetric central relation on $C$. 
\end{ThCSrestatable}

\noindent
We give a refined version for the case $A=C$ that derives central relations with further properties. 

\begin{ThCSrestatable}{proposition}{LinkedImpliesCenter}{(\hyperlink{ProofProp42}{link to proof})}{prop:center}
  Let $R\setsd A^2$ be linked and proper.
  Then $R$ pp-defines a subdirect proper central relation on $A$ which is symmetric or transitive.
\end{ThCSrestatable}

The third step, that a transitive operation witnesses absorption of left centers, is straightforward, see Proposition~\ref{prop:trans-and-center-implies-absorption}. A significant refinement, Corollary 7.10.2 in~\cite{Zhuk20:dichotomy},  shows that left centers are, in fact,  3-absorbing. An adjustment of the proof will also help us in proving Theorem~\ref{thm:center_abs}. 

\begin{ThCSrestatablen}{proposition}{CenterTernaryAbsorbs}{\cite{Zhuk20:dichotomy}\ooo{, \hyperlink{ProofProp43}{link to proof}}}{prop:center_absorbs}
If $B$ is a Taylor center of an algebra $\alg A$, then $B \abs_3 \alg A$. 
\end{ThCSrestatablen}
\noindent Note that in the previous proposition $\alg A$ need not be a Taylor algebra, but $\alg C$~(where the witnessing relation is $R\leq_{sd}\alg A\times \alg C$) must be.
The following proposition states that we can switch the condition:
\begin{ThCSrestatable}{proposition}{CenterTernaryAbsorbsSecond}{(\hyperlink{ProofProp44}{link to proof})}{prop:center_absors_in_T}
If $B$ is a center of a Taylor algebra $\alg A$, then $B\abs_3 \alg A$.
\end{ThCSrestatable}
\noindent
The assumptions of the latter proposition are easier to satisfy --- the algebra $\alg A$ is usually Taylor by default.

Altogether, either of the propositions above provides the following improvement of the absorption theorem, which does not seem to be explicitly stated in the literature.

\begin{ThCSrestatable}{corollary}{NewAT}{(\hyperlink{ProofCor45}{link to proof})}{cor:newAT}
Suppose $R \leq_{sd} \alg A \times \alg C$ is proper and linked, and $\alg C$ is  Taylor.
 Then $\alg A$ or $\alg C$ has a nontrivial 3-absorbing subuniverse.
\end{ThCSrestatable}

\subsection{Subdirect irredundant subpowers} \label{subsec:SIS}

We now present the unification result. It says that any ``interesting'' (subdirect irredundant proper) relation either pp-defines an interesting binary relation or pp-defines (and is even inter-pp-definable with) ternary relations of very particular shape (strongly functional). 
We have recently found out that Theorem~\ref{ff:subdirect} can be also deduced from Szendrei's results in~\cite{Szendrei:idempotent}. Our contribution is thus the presented formulation, its alternative proof, and the application to the fundamental theorems of the dichotomy proofs.

The definitions are as follows. A relation $R \subseteq_{sd} A^n$ is \emph{irredundant} if no projection to a pair of coordinates is the graph of a bijection; in other words, there is no bijective correspondnce between $a_i$ and $a_j$ in tuples $\tuple{a} \in R$.  We remark the definition of irredundancy is different than in some other papers. Also note that this definition agrees with the definition of irredundancy for reflexive relations given in Subsection~\ref{subsec:four_types}.

\emph{Strongly functional} relations can be defined as graphs of quasigroup operations or, more explicitly, as follows.

\begin{definition}
A relation $R \subseteq A^3$ is called \emph{strongly functional}   if 
        \begin{itemize}
        \item the binary projections of $R$ are equal to $A^2$, and
        \item every tuple in $R$ is determined by its values on any two coordinates.
        \end{itemize}
\end{definition}

We are ready to state the main result. In the theorem statement, pp-definability should be understood as pp-definability with parameters, i.e., one is allowed to use singleton unary relations.

\begin{ThCSrestatable}{theorem}{DefinabilityTheorem}{(\hyperlink{ProofThm47}{link to proof})}{ff:subdirect}%
    Let $R\subseteq_{sd} A^n$ be an irredundant proper relation. Then either
    \begin{itemize}
        \item 
        $R$ pp-defines an irredundant and proper $R'\subseteq_{sd} A^2$, or
        \item there exist strongly functional ternary relations $R_1,\dotsc,R_n\subseteq_{sd}A^3$ such that the set $\{R_1,\dots$, $R_m\}$ is inter-pp-definable with $R$ (i.e., the $R_i$s pp-define $R$ and, conversely, $R$ pp-defines all the $R_i$s). 
    \end{itemize}
\end{ThCSrestatable}

Theorem~\ref{ff:subdirect} implies that every algebra $\alg A$ has at least one of the following properties of its invariant relations.
\begin{enumerate}
    \item [(1)] $\alg{A}$ has no proper irredundant subdirect subpowers.
    \item[(2)] $\alg{A}$ has a proper irredundant binary subdirect subpower.
    \item[(3)] $\alg{A}$ has a ternary strongly functional subpower.
\end{enumerate}

\noindent
In the last case, it is easy to pp-define a congruence on $\alg A^2$ such that the diagonal is one of its classes, so $\alg A$ is abelian in that case.

\begin{ThCSrestatable}{proposition}{ImpliesAbelian}{(\hyperlink{ProofProp48}{link to proof})}{prop:ImpliesAbelian}
  If  $R \leq \alg A^3$ is a strongly functional relation, then $\alg A$ is abelian.
\end{ThCSrestatable}

\noindent
In case (1), subdirect relations have a very simple structure; for instance, any constraint $R(x_1, \dots$, $x_n)$ with subdirect $R$ is effectively a conjunction of bijective dependencies $x_i = f(x_j)$ (in particular, $\alg A$ is simple). It is also immediate that $\alg A$ is  polynomially complete. 
Less trivially, case (1) often leads to majority edges, as we show in Theorem~\ref{thm:maj} below. We require the following definition first.

\begin{definition}
Let $\algA$ be an algebra. By the \emph{connected-by-subuniverses} equivalence, denoted $\mu_{\algA}$, we mean the  
smallest equivalence containing all the pairs $(a,b)$ such that $\Sg_{\alg A}(a,b)\neq A$.
\end{definition}

\noindent We remark that the equivalence $\mu_{\alg A}$ is not, in general, a congruence of $\alg A$, so this concept may seem somewhat unnatural from the algebraic perspective.
This equivalence may be full. If we are in case (1) and $\mu_{\alg A}$ is not full, then the following theorem produces majority edges.

\begin{ThCSrestatable}{theorem}{GettingMajorityEdges}{(\hyperlink{ProofThm410}{link to proof})}{thm:maj}
Suppose that $\alg A$ has no subdirect proper irredundant subpowers. Then there exists a term operation $t \in \Clo_3(\alg A)$ such that for any $a,b \in A$ with $\Sg_{\alg A}(a,b) = \alg A$ (e.g., $(a,b)\notin\mu_{\algA}$), $t(a,a,b)=t(a,b,a)=t(b,a,a)=a$.
\end{ThCSrestatable}

\noindent 
In case (2) and when $\alg A$ is simple, a binary irredundant relation is necessarily linked. Then we get a central relation, e.g., by Proposition~\ref{prop:center}, and, by the following theorem, also semilattice edges in case that $\mu_{\alg A}$ is not full and the size of $A$ is at least three (in case that $\mu_{\alg A}$ is full the theorem is vacuously true). 

\begin{ThCSrestatable}{theorem}{GettingSemilatticeEdges}{(\hyperlink{ProofThm411}{link to proof})} {thm:semilat}
Suppose $\alg A$ with $|A|>2$ is simple and 
there exists a proper irredundant subdirect binary subpower. Then 
 there exists a $\mu_{\alg A}$-class $B$ such that, for every $b\in B,a\notin B$, the pair $(a,b) $ is a semilattice edge witnessed by the identity congruence.
\end{ThCSrestatable}

Zhuk's four types theorem (Theorem~\ref{thm:4types}) is now a consequence of Theorem~\ref{ff:subdirect}, Proposition~\ref{prop:ImpliesAbelian}, and Proposition~\ref{prop:center}. Indeed, one simply applies these facts to $\alg A$ factored by a maximal congruence, which is a simple algebra, and then lifts 2-absorbing subuniverses and centers back to $\alg A$. 
Moreover, we can improve item (d) in the four types theorem as follows.

\begin{ThCSrestatable}{corollary}{ZhukFundamentalRefined}{(\hyperlink{ProofCor412}{link to proof})}{thm:4typesRefined} 
Let $\alg A$ be an algebra, then
 \begin{enumerate}
  \item[(a)] $\alg A$ has a nontrivial 2-absorbing subuniverse, or 
  \item[(b)] $\alg A$ has a nontrivial center
    (which is a Taylor center in the case where $\alg A$ is a Taylor algebra), 
    or 
  \item[(c)] $\alg A/\alpha$ is abelian for some proper congruence $\alpha$ of $\alg A$, or 
  \item[(d)] $\alg A/\alpha$ has no proper irredundant subdirect subpowers  for some proper congruence $\alpha$ of $\alg A$.
 \end{enumerate}
\end{ThCSrestatable} 

The connectivity theorem (Theorem~\ref{thm:connected}) is also a straightforward consequence of the obtained results, Theorem~\ref{ff:subdirect}, Proposition~\ref{prop:ImpliesAbelian}, Theorem~\ref{thm:maj}, and Theorem~\ref{thm:semilat}. In fact, a little additional effort gives a somewhat stronger result.

\begin{ThCSrestatable}{corollary}{BulatovFundamentalMinimal}{(\hyperlink{ProofCor413}{link to proof})}{cor:connected}
  The directed graph formed by the minimal edges of any algebra is (weakly) connected.
\end{ThCSrestatable}

\section{Minimal Taylor algebras} \label{sec:mtalgebras}

We start this section by recalling the central definition and giving some examples. 

\begin{definition}
An algebra $\algA$ is called a \emph{minimal Taylor algebra} if it is Taylor but no proper reduct of $\alg A$ is. 
\end{definition}

There exist exactly four minimal Taylor algebras on a two-element set, counted up to term-equivalence: the two semilattices, the majority algebra, and the affine Mal'cev algebra. Indeed, from the description of their term operations given in Subsection~\ref{subsec:boolean} it follows that clones of these algebras are even minimal, in the sense that the only proper subclone is the clone of projections. 
A nice example of a minimal Taylor algebra on a three-element domain is the rock-paper-scissors algebra mentioned in Subsection~\ref{subsec:four_types}.
To see that this algebra is minimal Taylor
observe that any term operation 
behaves on any two-element set like the term operation of a two-element semilattice with the same set of essential coordinates.
Therefore, the original operation can be obtained by identifying variables in  any term operation having at least two essential coordinates. The same argument shows that any semilattice, not necessarily two-element, is minimal Taylor. Affine Mal'cev algebras are also minimal Taylor; this can be deduced e.g. from Theorem~\ref{thm:FTA}.

In Subsection~\ref{ssec:mtalgebras_general} we give the basic general theorems that were
proved in~\cite{Brady20:bw} in the context of minimal bounded width algebras. Subsection~\ref{subsec:mt_abs} concentrates on absorption and related concepts in Zhuk's theory. It turns out  that 2-absorbing sets are exactly  projective subuniverses (Theorem~\ref{thm:bin_abs}) and 3-absorbing sets are exactly centers (Theorem~\ref{thm:center_abs}). Subsection~\ref{ssec:mtalgebras_edges} shows that edges substantially simplify in minimal Taylor algebras (Theorem~\ref{thm:minimaledgesquotients}) and gives additional information for minimal edges; in particular, minimal semilattice edges coincide with thin semilattice edges as defined in~\cite{Bulatov16:graph,Bulatov20:local1} 
(Proposition~\ref{prop:thin_semilattice}). 
Finally, in Subsection~\ref{subsec:abs_vs_edges}, we demonstrate a strong interaction between absorption and edges. We show that 2-absorbing subuniverses are exactly subsets that are, in some sense, stable under all the edges (Theorem~\ref{thm:bin_abs_stable}), we provide somewhat weaker interaction between absorbing subuniverses and subsets stable under semilattice and abelian edges (Theorem~\ref{thm:abs_stable}),    we give a common witnessing operation for all of the edges as well as all of the 2- and 3-absorbing subuniverses (Theorem~\ref{thm:unifiededges}), and we show that each such a witnessing operation generates the whole clone of term operations (Theorem~\ref{thm:single_operation_generates}). 
Proofs for this section, including verification of examples, are in Section~\ref{app:mtalgebras}.

\subsection{General facts}\label{ssec:mtalgebras_general}

It is not immediate from the definitions that each Taylor algebra has a minimal Taylor reduct. Nevertheless, this fact easily follows from the characterization of Taylor algebras by means of cyclic operations.

\begin{ThCSrestatable}{proposition}{XXXpropTMexist}{(\hyperlink{ProofProp52}{link to proof})}{prop:TMexist}
Every Taylor algebra has a minimal Taylor reduct. 
\end{ThCSrestatable}

\noindent
Another simple, but important consequence of cyclic operations is the following proposition. The result is slightly more technical than most of the others, but it is in the core of many strong  properties of minimal Taylor algebras. The proposition can be further strengthened, see~\cite[Theorem 4.2.4.]{brady:notes}. 

\begin{ThCSrestatable}{proposition}{XXXtmsubs}{(\hyperlink{ProofProp53}{link to proof})}{ff:TMsubs}
    Let $\alg A$ be a minimal Taylor algebra and $B\subseteq A$ be closed under an operation $f \in \Clo(\alg A)$ such that $B$ together with the restriction of $f$ to $B$ forms a Taylor algebra.
    Then $B$ is a subuniverse of $\alg A$.
\end{ThCSrestatable}

A similar argument based on cyclic operations proves that the class of minimal Taylor algebras is closed under the standard constructions.    
    
\begin{ThCSrestatable}{proposition}{XXXhsp}{(\hyperlink{ProofProp54}{link to proof})}{prop:MinimalVariety}
Any subalgebra, finite power, or quotient of a minimal Taylor algebra is a minimal Taylor algebra.
\end{ThCSrestatable}

\subsection{Absorption} \label{subsec:mt_abs}\label{ssec:mtalgebras_absorption}

The goal of this section is to show that absorbing subsets, which are abundant in general Taylor algebras by Theorem~\ref{thm:AToriginal}, Theorem~\ref{thm:4types}, and Proposition~\ref{prop:center_absors_in_T}, have strong properties in minimal Taylor algebras.
We start with a surprising fact, which clearly fails in general Taylor algebras.

\begin{ThCSrestatable}{theorem}{XXXabssub}{(\hyperlink{ProofThm55}{link to proof})}{thm:AbsIsSubuniv}
      Let $\alg A$ be a minimal Taylor algebra and $B$ an absorbing set of $\alg{A}$. 
      Then $B$ is a subuniverse of~$\alg A$.
\end{ThCSrestatable}

Now we move on to $2$-absorption. 
We have already mentioned in Subsection~\ref{subsec:AT} that projectivity is a stronger form of absorption in Taylor algebras,
but we can go even further.

\begin{definition}
    Let $\alg A$ be an algebra and $B \subseteq A$.
    The set $B$ is a \emph{strongly projective subuniverse} of $\alg A$ if for every $f \in \Clo_n(\alg A)$ and every essential coordinate $i$ of $f$, we have $f(\tuple{a}) \in B$ whenever $\tuple{a} \in A^n$ is such that $a_i \in B$. 
\end{definition}

\noindent The property of being a strong projective subuniverse is indeed very strong.
For example, in any nontrivial clone, a strong projective subuniverse is $2$-absorbing and every binary operation of the clone, except for projections, witnesses the absorption. 
The next theorem states that  strong projectivity in minimal Taylor algebras is equivalent to $2$-absorption,  which in general is a much weaker concept. 

\begin{ThCSrestatable}{theorem}{XXXtwoabs}{(\hyperlink{ProofThm57}{link to proof})}{thm:bin_abs}
The following are equivalent for any minimal Taylor algebra $\algA$ and any $B \subseteq A$.
\begin{enumerate}
    \item[(a)] $B$ 2-absorbs $\algA$.
    \item[(b)] $R(x,y,z)= B(x)\vee B(y)\vee B(z)$  is a subuniverse of $\algA^3$.   
    \item[(c)] $B$ is a projective subuniverse of $\alg A$.
    \item[(d)] $B$ is a strongly projective subuniverse of $\algA$. 
\end{enumerate}
\end{ThCSrestatable}

\noindent
The main value of this theorem is the implication showing that,
in minimal Taylor algebras,
every $2$-absorption, i.e. (a), is as strong as possible (d).
Moreover, (b) provides a nice relational description of $2$-absorption, which collapses the general condition from Proposition~\ref{prop:cubetermblocker} for projectivity to arity $3$.
The implications (d) implies (c) implies (b) implies (a) hold for all Taylor algebras.
The following examples disprove the reverse implications. We leave the verification as an exercise.

\begin{example} 
    The algebra $(\{0,1\};x\vee(y\wedge z))$ is Taylor and $\{1\}$ is a projective universe which is not strongly projective.
    This shows that ``(c) implies (d)'' in Theorem~\ref{thm:bin_abs} fails in Taylor algebras.
    
    Let $t(x,y,z,w)$ be an operation on $\{0,1\}$
    which is equal to $x\vee y\vee z \vee w$ everywhere, 
    except that it is zero on the tuple $(1,0,0,0)$ and its cyclic shifts.
    Then $\alg A =(\{0,1\};t)$ is a Taylor algebra and the $R$ defined for $B= \{1\}$ in (b) of Theorem~\ref{thm:bin_abs} is compatible with $\alg A$. 
    On the other hand $B$ is not a projective subuniverse of $\alg A$ as witnessed by, for example, $t$ and thus (c) of Theorem~\ref{thm:bin_abs} is false.
    
    Finally, the two-element lattice $(\{0,1\}; \wedge, \vee)$ is an example of a Taylor algebra such that (a) of Theorem~\ref{thm:bin_abs} holds, but (b) does not.
\end{example}

The following proposition 
collects some strong and unusual properties of $2$-absorbing subuniverses in minimal Taylor algebras. 
Already the  first item might  be surprising  since the union of two subuniverses is  rarely a subuniverse. 

\begin{ThCSrestatable}{proposition}{PropertiesBinAbs}{(\hyperlink{ProofProp59}{link to proof})}{prop:bin_abs_intersect}
Let $\algA$ be a minimal Taylor algebra.
  \begin{enumerate}
      \item[(1)] If $B \abs_2 \alg A$ and $C\leq \alg A $, then $B \cup C \leq \algA$.
      \item[(2)] If $B \abs_2 \alg A$ and $C \abs \alg A$ by $f$, where $\emptyset \neq B,C \neq A$, then
      \begin{enumerate}
          \item $B\cup C\abs \alg A$ by $f$, and
          \item $B\cap C\neq \emptyset$ and $B\cap  C\abs \alg A$ by $f$.
      \end{enumerate}
      \item[(3)] If $\alg C \abs_2 \alg B \abs_2 \alg A$, then $\alg C \abs_2 \alg A$.
      \item[(4)] $\algA$ has a unique minimal 2-absorbing subalgebra $\alg B$. Moreover, this algebra $\alg B$ does not have any nontrivial 2-absorbing subalgebra.
  \end{enumerate}
\end{ThCSrestatable}

As for absorption of higher arity, we have already shown in Proposition~\ref{prop:center_absors_in_T} that centers are 3-absorbing. 
Next theorem says
that, in minimal Taylor algebras, the converse is true as well.

\begin{ThCSrestatable}{theorem}{thmcenterabs}{(\hyperlink{ProofThm510}{link to proof})}{thm:center_abs}
The following are equivalent for any minimal Taylor algebra $\algA$ and any $B \subseteq A$.
\begin{enumerate}
    \item[(a)] $B$ 3-absorbs $\algA$. 
    \item[(b)] $R(x,y) = B(x) \vee B(y)$ is a subuniverse of $\algA^2$.
    \item [(c)] $B$ is a (Taylor) center of $\alg A$.
    \item[(d)] there exists $\alg C$ with $\Clo(\alg C)\subseteq\Clo(\{0,1\};\maj)$ such that 
    $R(x,y)= B(x) \vee (y=0)$
    is a centrality witness.
\end{enumerate}
Moreover, if $B=\{b\}$, then these items are equivalent to 
\begin{enumerate}
  \item[(e)] $B$ absorbs $\algA$.
  \end{enumerate}
\end{ThCSrestatable}

\noindent
Just like in Theorem~\ref{thm:bin_abs} we have that a relatively weak notion of $3$-absorption
implies a very strong type of centrality which is (d).
Let us investigate  (d) in greater detail. The fact that $R$ is a subuniverse of $\alg A \times \alg C$ translates to the following fact.
To every operation of $\alg A$, say $f$, we can associate an operation $f'\in\Clo(\{0,1\},\maj)$ such that $f(\tuple a)\in B$ whenever $f'(\tuple x) = 1$ and $\tuple x$ is the characteristic tuple of $\tuple a$ with respect to $B$~%
(i.e. $x_i=1$ if and only if $a_i\in B$).
That is, from the viewpoint of ``being outside $B$'' vs. ``being inside $B$'' every operation outputs ``inside $B$'' every time the corresponding operation of $\Clo(\{0,1\};\maj)$ outputs $1$.

In fact, 
there exists a cyclic $t$ in $\alg A$ (say, $p$-ary) such that $R(x,y) = B(x)\vee (y=0)$ is a subuniverse of $(A;t)\times (\{0,1\};\maj_p)$ for every $3$-absorbing $B$.
This translates to a simpler statement:
for every $3$-absorbing $B$ we have $t(\tuple a) \in B$ whenever majority of the $a_i$ belong to $B$, 
and we cannot expect more, as witnessed by the $2$-element majority algebra.
Since $t$ is cyclic it generates the whole clone and, 
for example, item (2) in Proposition~\ref{prop:3_abs_intersect} below becomes obvious.

Item (b) provides a relational description of $3$-absorption, while item (c) provides a connection with the notion of a center~(whether it is Taylor or not).
We now give an example showing that (e) and (a) are not equivalent~%
even in minimal Taylor algebras
if $B$ has more than one element.

\begin{ThCSrestatable}{example}{ExAbsorptionDoesntImplyThreeAbsorption}{(\hyperlink{VerEx511}{link to verification})}{dummylink}
Consider the algebra $\algA = (\{0,1,2\}, m)$ where $m$ is the majority operation such that $m(a,b,c) = a$ whenever $|\{a,b,c\}|=3$. This algebra is minimal Taylor and the set $C = \{0,1\}$ is an absorbing subuniverse of $\alg A$. However, $C$ is not a center of~$\alg A$.
\end{ThCSrestatable}

Finally, we list some strong and unusual properties of $3$-absorbing subuniverses.
They are not as strong as in the case of $2$-absorbing subuniverses,
which is to be expected
since every $2$-absorbing subuniverse is $3$-absorbing but not vice versa.

\begin{ThCSrestatable}{proposition}{propThreeabsintersect}{(\hyperlink{ProofProp512}{link to proof})}{prop:3_abs_intersect}
Let $\algA$ be a minimal Taylor algebra.
  \begin{enumerate}
      \item[(1)] If $B,C\abs_3 \algA$, then $B\cup C\leq \alg A$ and $B \cap C \abs_3 \alg A$.
      \item[(2)] If $\emptyset \neq B,C\abs_3 \algA$ and $B \cap C = \emptyset$, then
          $B^2\cup C^2$ is a congruence on the subalgebra of $\alg A$ with universe $B\cup C$
          and the quotient is term-equivalent to a two-element majority algebra.
     \item[(3)] If $\alg C \abs_3 \alg B \abs_3 \alg A$, then $\alg C \abs_3 \alg A$.
  \end{enumerate}
\end{ThCSrestatable}

\subsection{Edges}\label{ssec:mtalgebras_edges}

The next theorem says that, in minimal Taylor algebras, every ``thick'' edge, in the terminology of~\cite{Bulatov16:graph, Bulatov16:restricted}, is automatically a subuniverse. This property is a simple consequence of the results we have already stated, whereas it was relatively  painful to achieve using the original approach. We additionally obtain that semilattice and majority edges have unique witnessing congruences. 

\begin{ThCSrestatable}{theorem}{ThmMinimalEdgesQuotients}{(\hyperlink{ProofThm513}{link to proof})}{thm:minimaledgesquotients}
Let $(a,b)$ be an edge (semilattice, majority, or abelian) of a minimal Taylor algebra $\alg A$ 
and $\theta$ a witnessing congruence of $\alg E = \Sg_{\algA}(a,b)$. 
\begin{enumerate}
    \item[(a)] If $(a,b)$ is a  semilattice edge, then $\alg E/\theta$ is term equivalent to a two-element semilattice with absorbing element $b/\theta$.
    \item[(b)] If $(a,b)$ is a  majority edge, then 
    $\alg E/\theta$ is term equivalent to a two-element majority algebra.
    \item[(c)] 
    if $(a,b)$ is an  abelian edge, then
    $\alg E/\theta$ is term equivalent to an affine Mal'cev algebra of an abelian group isomorphic to $\zee{m}$ for some positive integer $m$, where $\zee{m}$ denotes the group of integers modulo $m$.
\end{enumerate}
Moreover, a semilatice edge is witnessed by
exactly one congruence of $\alg E$, and that congruence is maximal.
The same holds for majority edges.
\end{ThCSrestatable}

\noindent
For minimal edges we can say a bit more.
If $(a,b)$ is a minimal edge witnessed by $\theta$, a congruence on $\alg E = \Sg_{\alg A}(a,b)$, then $\alg E/\theta$ is simple.
In particular, for abelian minimal edges, 
$\alg E/\theta$ is an affine Mal'cev algebra of a group isomorphic to $\zee{p}$. 
Moreover, such an $\alg E$ has a unique maximal congruence
as shown in the next proposition.
This implies that the type of a minimal edge is unique 
and so is the direction of a semilattice minimal edge and the prime $p$ associated to an abelian minimal edge.

\begin{ThCSrestatable}{proposition}{ProUniqueType}{(\hyperlink{ProofProp514}{link to proof})}{pro:unique-type}
  Let $(a,b)$ be a minimal edge in a minimal Taylor algebra.
  Then $\alg E = \Sg_{\alg A}(a,b)$ has a unique maximal congruence equal to $\mu_{\alg E}$.
  In particular, minimal edges have unique types.
\end{ThCSrestatable}
\noindent
The structure of minimal semilattice edges is especially simple.

\begin{ThCSrestatable}{proposition}{PropThinSemilattice}{(\hyperlink{ProofProp515}{link to proof})} 
{prop:thin_semilattice}
  Let $(a,b)$ be a minimal semilattice edge in a minimal Taylor algebra. 
  Then $\{a,b\}$ is a subuniverse of $\alg A$, so $\Sg_{\alg A}(a,b) = \{a,b\}$ and the witnessing congruence is the equality. 
\end{ThCSrestatable}

\noindent
Unfortunately, majority and abelian edges do not simplify in a similar way; 
see Example~\ref{example:noEZmajority} and Example~\ref{example:noEZabelian}. 
Weaker versions of Proposition~\ref{prop:thin_semilattice} have been developed by Bulatov~(comp. Lemma~12 and Corollary~13 in~\cite{Bulatov16:graph})
to deal with this problem.

\begin{ThCSrestatable}{example}{ExMajorityAintSimple}{(\hyperlink{VerEx516}{link to verification})}{example:noEZmajority} 
Let $A=\{0,1,2,3\}$ and $\alpha$ the equivalence 
relation on $A$ with classes $\{0,2\}$ and $\{1,3\}$.
Define a \emph{symmetric} ternary operation $g$ on $A$ as follows. When two of the inputs to $g$ are equal, $g$ is given by
$g(a,a,a+1)=a$, $g(a,a,a+2)=g(a,a,a+3)=a+2$ (all modulo  $4$)
and when all three inputs to $g$ are distinct, $g$ is given by $g(a,b,c) = d-1 \pmod{4}$
where $a,b,c,d$ are any permutation of $0,1,2,3$.
Then $\algA = (A; g)$ is a minimal Taylor algebra,  $\alpha$ is a congruence on $\algA$, and each of pair of elements in different $\alpha$-classes is a minimal majority edge 
with witnessing congruence $\alpha$.
\end{ThCSrestatable}

\begin{ThCSrestatable}{example}{ExAbelianAintEasy}{(\hyperlink{VerEx517}{link to verification})}{example:noEZabelian}  Let $\algA = (\{a,b,c,d\}, p)$, where $p$ is a Mal'cev operation with the following properties. The operation $p$ commutes with the permutations $\sigma = (a\ c)$ and $\tau = (b\ d)$. The polynomials $+_a = p(\cdot,a,\cdot), +_b = p(\cdot,b,\cdot)$ define abelian groups:
\begin{center}
\begin{tabular}{cc}
\begin{tabular}{c|cccc} $+_a$ & $a$ & $b$ & $c$ & $d$\\ \hline $a$ & $a$ & $b$ & $c$ & $d$\\ $b$ & $b$ & $c$ & $d$ & $a$\\ $c$ & $c$ & $d$ & $a$ & $b$\\ $d$ & $d$ & $a$ & $b$ & $c$ \end{tabular} &
\begin{tabular}{c|cccc} $+_b$ & $a$ & $b$ & $c$ & $d$\\ \hline $a$ & $b$ & $a$ & $d$ & $c$\\ $b$ & $a$ & $b$ & $c$ & $d$\\ $c$ & $d$ & $c$ & $b$ & $a$\\ $d$ & $c$ & $d$ & $a$ & $b$ \end{tabular}
\end{tabular}
\end{center}
Then $\algA$ is a minimal Taylor algebra, with a unique maximal congruence $\theta$ whose congruence classes are $\{a,c\}$ and $\{b,d\}$. Each pair of elements of $\algA$ in different congruence classes of $\theta$ is a minimal abelian edge of $\algA$ with witnessing congruence  $\theta$.
\end{ThCSrestatable}

We can also provide nontrivial information about $\Sg(a,b)$ in the case where  $(a,b)$ is not necessarily an edge, and this information helps in proving Theorem~\ref{thm:single_operation_generates} in the next subsection (and shows that case (d) in Theorem~\ref{thm:4types} is never necessary for two-generated algebras). However,  the following fundamental question remains open:
Is there  a minimal Taylor algebra such that, for some $a,b$, 
neither $(a,b)$ nor $(b,a)$ is an edge?

\subsection{Absorption and edges} \label{subsec:abs_vs_edges}

We start this subsection with a definition that will connect absorption with edges.

\begin{definition}
Let $\alg A$ be an algebra, let $B \subseteq A$ and let $(b,a)$ be an edge.
We say that $B$ is \emph{stable under $(b,a)$} if, for every  witnessing congruence $\theta$ of $\Sg_{\alg A}(b,a)$ such that  $b/\theta$ intersects $B$, each  $\theta$-class intersects $B$.
\end{definition}

\noindent
As the next theorem states, stability under every edge can be added as an additional equivalent condition in  Theorem~\ref{thm:bin_abs}. This direct connection between absorption, which is a global property, to the local concepts in Bulatov's theory is among the most surprising phenomena that the  authors have encountered in this work. 

\begin{ThCSrestatable}{theorem}{BinAbsIsStable}{(\hyperlink{ProofThm519}{link to proof})} 
 {thm:bin_abs_stable}
The following are equivalent for any minimal Taylor algebra $\algA$ and any $B \subseteq A$.
\begin{enumerate}
    \item[(a)] $B$ 2-absorbs $\alg A$. 
    \item[(b)] $B$ is stable under all the edges.
\end{enumerate}
\end{ThCSrestatable}

\noindent
The implication from (b) to (a) does not require the full strength of stability for semilattice and majority edges. It is enough to require that for a minimal semilattice or a majority edge $(b,a)$  it is never the case that $b/\theta \subseteq B$ and $a/\theta \cap B = \emptyset$, where $\theta$ is the edge-witnessing congruence of $\Sg(b,a)$ (which is the equality relation on $\{a,b\}$ in case of semilattice edges). The following example shows that stability under abelian edges cannot be significantly weakened.

\begin{ThCSrestatable}{example}{ExStabilityCantBeSimplified}{(\hyperlink{VerEx520}{link to verification})}{ex:abelian_closed_not_enough}   We consider the four-element algebra $\algA = (\{0,1,2,*\},\cdot)$ with binary operation $\cdot$ given by
\begin{center}
\begin{tabular}{c|cccc}
$\cdot$ & $0$ & $1$ & $2$ & $*$\\
\hline
$0$ & $0$ & $2$ & $1$ & $*$\\
$1$ & $2$ & $1$ & $0$ & $2$\\
$2$ & $1$ & $0$ & $2$ & $1$\\
$*$ & $*$ & $2$ & $1$ & $*$
\end{tabular}
\end{center}
Then $\algA$ is a minimal Taylor algebra, with a semilattice edge $(0,*)$, with $\{0,1,2\}$ an affine subalgebra, and with a congruence $\theta$ corresponding to the partition $\{0,*\},\{1\},\{2\}$ such that $\algA/\theta$ is affine. The set $\{*\}$ is stable under semilattice and majority edges and there is no minimal abelian edge $(*,a)$ with $a \neq *$. But $\{*\}$ is not an absorbing subalgebra of $\algA$.
\end{ThCSrestatable}

For absorption of higher arity the connection to edges is not as tight as for 2-absorption. Nevertheless, one direction still works, and both directions work for singletons.

\begin{ThCSrestatable}{theorem}{AbsIsStable}{(\hyperlink{ProofThm521}{link to proof})} {thm:abs_stable}
  Any absorbing set of a minimal Taylor algebra $\alg A$ is stable under semilattice and abelian edges.
  Moreover, for any $b \in A$ the following are equivalent.
  \begin{itemize}
      \item[(a)] $\{b\}$ absorbs $\algA$.
      \item[(b)] $\{b\}$ is stable under semilattice and abelian edges.
  \end{itemize}
\end{ThCSrestatable}

Stability under semilattice edges for the implication from (b) to (a) can be again replaced by the requirement that there is no minimal semilattice edge $(b,a)$ with $b \in B$ and $a \not\in B$. Example~\ref{ex:abelian_closed_not_enough} shows that this is not the case for abelian edges.

The following example shows that  the implication from (b) to (a) does not hold for non\-singleton subuniverses.

\begin{example} 
Consider the algebra $\algA = (\{0,1,2\},m)$ where $m$ is the majority operation such that $m(a,b,c)=2$ whenever $|\{a,b,c\}|=3$. This algebra is minimal Taylor, 
every pair of distinct elements forms a subuniverse, and every pair is a minimal majority edge. So there are no semilattice or abelian edges. However, the subuniverse $\{0,1\}$ is not absorbing.
\end{example}

An important fact for the edge approach is that semilattice, majority, and Mal'cev operations coming from edges can be unified.
In minimal Taylor algebras, a simple consequence of the already stated results is that we not only have a common ternary witness for all the edges but also for all the binary and ternary absorptions.  

\begin{ThCSrestatable}{theorem}{UnifiedEdges}{(\hyperlink{ProofThm523}{link to proof})} {thm:unifiededges}
  Every minimal Taylor algebra $\alg A$ has a ternary term operation $f$ such that if $(a,b)$ is an edge witnessed by $\theta$ on $\alg E =\Sg_{\alg A}(a,b)$,
  then
  \begin{itemize}
    \item if $(a,b)$ is a semilattice edge, then $f(x,y,z)=x\vee y\vee z$ on $\alg E/\theta$~(where $b/\theta$ is the top);
    \item if $(a,b)$ is a majority edge, then $f$ is the majority operation on $\alg E/\theta$~(which has two elements);
    \item if $(a,b)$ is an abelian edge, then $f(x,y,z) = x-y+z$  on $\alg E/\theta$;
    \item $f$ witnesses all the 3-absorptions $B \abs_3 \alg A$;
    \item any binary operation obtained from $f$ by identifying two arguments  witnesses all the  2-absorptions $B \abs_2 \alg A$.
  \end{itemize}
\end{ThCSrestatable}

\noindent
In fact, any ternary operation $f$ defined from a cyclic term operation $t$ of odd arity $p \geq 3$ by
$$
f(x,y,z) = t(\underbrace{x,x, \dots, x}_{k \times },
         \underbrace{y,y, \dots, y}_{l \times },
         \underbrace{z,z, \dots, z}_{m \times }),
$$         
where $k+l, l+m, k+m > p/2$, satisfies all the items in Theorem~\ref{thm:unifiededges} except possibly the third one (which can be obtained by picking $k$, $l$, and $m$ a bit more carefully). 

We finish this section with a theorem stating that any ternary witness of edges generates the whole clone of the algebra. In particular, the number of minimal Taylor clones on a domain of size $n$ is at most $n^{n^3}$.

\begin{ThCSrestatable}{theorem}{SingleOperationGenerates}{(\hyperlink{ProofThm524}{link to proof})}{thm:single_operation_generates}
If $\algA$ is a minimal Taylor algebra, then $\Clo(A;f) = \Clo(\alg A)$ for any operation $f$ satisfying the first three items in Theorem~\ref{thm:unifiededges}.
\end{ThCSrestatable}

\section{Omitting types}
\label{sec:omitting_types}

In this section we consider classes of algebras whose graph only contains edges of certain types. 
We say that an algebra is \emph{$\types{a}$-free} if it has no abelian edges. More generally, an algebra is \emph{$\types{x}$-free} or is \emph{$\types{xy}$-free}, where $\types x,\types y\in\{\text{($\types{a}$)belian, ($\types{m}$)ajority}$, $(\text{$\types s$)emilattice}\}$ if it has no edges of type $\types x$ (of types $\types{x,y}$).

It turns out that within minimal Taylor algebras these ``omitting types'' conditions are often equivalent to important properties of algebras. In the theorems below we prove the equivalence of the following four types of conditions: (i) the absence of edges of a certain type (equivalently, minimal edges of the same type); (ii) properties of absorption and the four types in Zhuk's approach;
(iii) the existence of a certain special term operations; (iv) algorithmic properties of the CSP. 
Theorems in this section are consequences of the theory we have already built in the previous section and known results (see \cite{Barto17:polymorphisms}). 
The proofs are in Section~\ref{sec:proofsomittingtypes}.

The first theorem concerns the class of algebras omitting abelian edges. Numerous characterizations of this class are known for general algebras and we do not add a new one, but we state the characterization for comparison with the other classes. 
In order to state a characterization in terms of identities we recall that an operation $f$ is a \emph{weak near unanimity} operation (or \emph{wnu} for short) if it satisfies $
    f(y,x,\dots,x)=f(x,y,x,\dots,x) = \dots
    =f(x,\dots,x,y)$ for every $x,y$ in the algebra.

\begin{ThCSrestatable}{theorem}{ThmAFree}{(\hyperlink{ProofThm61}{link to proof})}{thm:a-free}
The following are equivalent for any algebra $\algA$.
\begin{enumerate}
    \item[(i)] $\algA$ is $\types{a}$-free.
    \item[(ii)] 
      No subalgebra of $\algA$ has a nontrivial abelian quotient, i.e., 
      no subalgebra of $\algA$ falls into case (c) in Theorem \ref{thm:4types}. 
    \item[(iii)] $\alg A$ has a wnu term operation of every arity $n \geq 3$.
    \item[(iv)] $\algA$ has bounded width. 
\end{enumerate}
\end{ThCSrestatable}

\noindent
Minimal Taylor algebras omitting other types of edges \emph{do} have significantly stronger properties than general Taylor algebras omitting those edges.
Minimal  $\types{s}$-free algebras are exactly those for which option (a) in Theorem~\ref{thm:4types} does not hold, and that have  few subpowers~\cite{Bulatov16:restricted}. 
Few subpowers algebras  can be characterized by the existence of an edge term operation~\cite{Berman10:varieties} in general. In minimal Taylor algebras, the second strongest edge operation always exists -- the 3-edge operation defined by the identities $e(y,y,x,x)=e(y,x,y,x)=e(x,x,x,y)=x$.
This is significant, because the exponent in the running time of the few subpowers algorithm depends on the least $k$ such that the algebra has a $k$-edge term operation. The number $3$ here is best possible: a $2$-edge operation is the same as a Mal'cev operation appearing in Theorem~\ref{thm:as-free}.

\begin{ThCSrestatable}{theorem}{ThmSFree}{(\hyperlink{ProofThm62}{link to proof})}{thm:s-free}
The following are equivalent for any minimal Taylor algebra $\algA$.
\begin{enumerate}
    \item[(i)] $\algA$ is $\types{s}$-free.
    \item[(ii)] No subalgebra of $\algA$ has a nontrivial 2-absorbing subuniverse, i.e.,
      no subalgebra of $\algA$ falls into case (a) in Theorem~\ref{thm:4types}.
    \item[(iii)] $\algA$ has a $3$-edge term operation.
    \item[(iv)] $\algA$ has few subpowers.
\end{enumerate}
\end{ThCSrestatable}

\noindent
For the remaining omitting-single-type condition, $\types{m}$-freeness, we do not provide a natural condition in terms of identities, and we are not aware of algorithmic implications of this condition. Nevertheless, it can be characterized by means of absorption.

\begin{ThCSrestatable}{theorem}{ThmMFree}{(\hyperlink{ProofThm63}{link to proof})}{thm:m-free}
The following are equivalent for any minimal Taylor algebra $\algA$.
\begin{enumerate}
    \item[(i)] $\algA$ is $\types{m}$-free.  
    \item[(ii)] Every center (3-absorbing subuniverse) of any $\alg B \leq \alg A$ 2-absorbs $\alg B$,
      i.e., case (b) implies case (a) in Theorem~\ref{thm:4types} in all the subalgebras of $\alg A$.
    \item[(ii')] Every subalgebra of  $\algA$ has a unique minimal 3-absorbing subuniverse.
    \end{enumerate}
\end{ThCSrestatable}

\noindent
Surprisingly, 
if along with $\types{m}$-freeness we also limit the type of abelian edges allowed in an algebra, the resulting condition is equivalent to the existence of a binary commutative term operation. 
This is interesting to us (the authors), since we did not regard the existence of a commutative term operation to be a natural requirement for the CSP (cf.~\cite{Barto17:polymorphisms}).
We call an abelian edge $(a,b)$ a \emph{$\zee{2}$-edge} if the corresponding affine Mal'cev algebra $\Sg(a,b)/\theta$ is isomorphic to the affine Mal'cev algebra of $\zee{2}$. 

\begin{ThCSrestatable}{theorem}{ThmMTwoFree}{(\hyperlink{ProofThm64}{link to proof})}{thm:m2-free}
The following are equivalent for any minimal Taylor algebra $\algA$.
\begin{enumerate}
    \item[(i)] $\algA$ is $\types{m}$-free and has no $\zee{2}$-edges.
    \item[(iii)] $\algA$ has a binary commutative term operation.
    \item[(iii')] $\Clo(\alg A)$ can be generated by a collection of binary operations.
\end{enumerate}
\end{ThCSrestatable}

Properties of minimal Taylor algebras having edges of only one type  can be derived as conjunctions of the properties stated above. For two of these cases, $\types{sm}$-free and $\types{as}$-free, we provide additional information.

Minimal Taylor $\types{am}$-free algebras are exactly those which have wnu operations of every arity $n \geq 2$.
These are exactly the minimal \emph{spirals}  in the terminology of~\cite{Brady20:bw} and a significant property is that for every $(a,b)$ such that neither $(a,b)$ nor $(b,a)$ is a minimal semilattice edge, there is a surjective homomorphism from $\Sg\{a,b\}$ onto the (three-element) free semilattice on two generators.

The $\types{sm}$-free minimal Taylor algebras are those where cases (a) and (b) in Theorem~\ref{thm:4types} do not occur. Additionally, these are exactly the hereditarily absorption free algebras studied in~\cite{Barto15:malcev}  and, also, the algebras with a Mal'cev term operation -- a type of operation that played a significant role in the CSP~\cite{Bulatov06:maltsev}.

\begin{ThCSrestatable}{theorem}{ThmSMFree}{(\hyperlink{ProofThm65}{link to proof})}{thm:sm-free}
The following are equivalent for any minimal Taylor algebra $\algA$.
\begin{enumerate}
    \item[(i)] $\algA$ is $\types{sm}$-free.
    \item[(ii)] No subalgebra of  $\algA$ has a nontrivial absorbing subuniverse.    
    \item[(iii)] $\algA$ has a Mal'cev term operation.
\end{enumerate}
\end{ThCSrestatable}

\noindent
Finally, the $\types{as}$-free algebras are those where cases (a) and (c) in Theorem~\ref{thm:4types} do not occur and those that have bounded width and few subpowers. It is known~\cite{hobby88:tct,Berman10:varieties} that the latter property in general implies having a near-unanimity term operation of some arity. Surprisingly, in minimal Taylor algebras, the arity goes down directly to three. In the algorithmic language, these algebras have strict width two~\cite{Feder98:monotone,Barto17:polymorphisms}.

\begin{ThCSrestatable}{theorem}{ThmASFree}{(\hyperlink{ProofThm66}{link to proof})}{thm:as-free}
The following are equivalent for any minimal Taylor algebra $\algA$.
\begin{enumerate}
    \item[(i)]  $\algA$ is $\types{as}$-free.
    \item[(iii)] $\alg A$ has a near unanimity term operation.
    \item[(iii')] $\alg A$ has a majority term operation.    
    \end{enumerate}
\end{ThCSrestatable}

\newpage
\part{Technical details}

\renewcommand{\ooo}[1]{}
\renewcommand{\rrr}{{\normalfont\bfseries (Restated)}}

\section{Preliminaries} \label{app:prelim}

In this section  we list definitions and facts that will be, often implicitly, used in the proofs.

\subsection{Relations} \label{subsec:prelim_relations}

A relation on $A$ is a subset of $A^n$, but we often work with more general ``multisorted'' relations $R \subseteq A_1 \times A_2 \times \dots \times A_n$.
 We call such an $R$  {\em proper} if $R \neq A_1 \times \dots \times A_n$ and
 \emph{nontrivial} if it is nonempty and proper. Tuples are written in boldface
 and components of $\tuple{x} \in A_1 \times \dots \times A_n$ are denoted
$x_1, x_2, \dots$. 
Both $\tuple{x} \in R$ and $R(\tuple{x})$ are used to denote the fact that $\tuple{x}$ is in $R$.
 The projection of $R$ onto the coordinates $i_1, \dots, i_k$ is denoted $\proj_{i_1, \dots, i_k}(R)$. The relation $R$ is \emph{subdirect}, denoted $R \subseteq_{sd} A_1 \times \dots \times A_n$, if  $\proj_{i}(R) = A_i$ for each $i$.
  We call  $R$ {\em redundant}, if there exist coordinates $i \neq j$ 
  such that $\proj_{ij}(R)$ is a graph of bijection from $A_i$ to $A_j$; otherwise $R$ is \emph{irredundant}. A relation $R \subseteq A^n$ is \emph{reflexive} if $(a,a, \dots, a)\in R$ for each $a \in A$. Note that for a redundant reflexive relation, some $\proj_{ij}(R)$, $i \neq j$, is the equality relation.

For a subset $R \subseteq A^X$, the projection of $R$ onto a set of coordinates $I \subseteq X$ is also denoted $\proj_I(R)$; it is a subset of $A^I$.

  We say that a set of relations $\mathcal{R}$ {\em pp-defines} $S$ if $S$ can be defined from $\mathcal{R}$ by a primitive positive formula with parameters, that is, using the existential quantifier, relations from $\mathcal{R}$, the equality relation, \textbf{and the singleton unary relations}.

For binary relations we write $-R$ instead of $R^{-1}$ and $R+S$ for the relational composition of $R$ and $S$, that is $R+S = \{(a,c) \st (\exists b) \, R(a,b) \wedge R(b,c)\}$. 
For a unary relation $B$ we write $B+S$ to denote the set $\{c\st (\exists b)\, B(b) \wedge S(b,c)\}$ and 
if $B$ is a singleton we often write $b+S$ instead of $b +S$.
Also, we set $R-S = R + (-S)= R\circ S^{-1}$. 
A  relation $R \subseteq A \times B$ is \emph{linked} if $(R-R)+(R-R)+ \dots + (R-R)$ is equal to $(\proj_1(R))^2$ for some number of summands. 
In other words, $R$ is connected when viewed as a bipartite graph between disjoint copies of $A$ and $B$, with possible isolated vertices if $R$ is not subdirect.
The \emph{left center}  of $R \subseteq A \times B$ is the set $\{a \in A \st a + R = B\}$. 
If $R$ has a nonempty left center, it is called \emph{left central}. 
\emph{Right center} and \emph{right central} relations are defined analogically. 
A relation is \emph{central} if it is left central and right central.
Note that $R+S$, $-R$, and the left (right) center of $R$ are pp-definable from $\{R,S\}$, e.g.,
the left center of $R$ is defined by the formula $R(x,b_1) \wedge R(x,b_2) \wedge \dots \wedge R(x,b_n)$, where the $b_i$ are selected so that $\{b_1, \dots, b_n\} = B$.

\subsection{Algebras} \label{subsec:prelim_algebra}

Algebras, i.e. structures with a purely functional signature, will be denoted by boldface capital letters (e.g., $\alg A$) and their universes (also called domains) typically by the same letter in the plain font (e.g., $A$). The basic general algebraic concepts, such as subuniverses, subalgebras, products, and quotients modulo congruences  are used in the standard way (see, e.g.~\cite{Bergman12:UAbook}). An algebra is \emph{nontrivial} if it has at least two elements, otherwise it is \emph{trivial}. We use $B \leq \alg A$ to mean that $B$ is a subuniverse of $\alg A$.  We often abuse notation and write $t$ both for a term and the induced term operation $t^{\alg A}$ of an algebra $\alg A$.  
By a \emph{subpower} we mean a subuniverse (or a subalgebra) of a finite power. Subpowers are the same as \emph{invariant relations} and we may also call them \emph{compatible relations}. The set of all subpowers  is denoted $\Inv(\alg A)$. The subuniverse (or the subalgebra) of $\algA$ generated by a set $X \subseteq A$ is denoted 
$\Sg_{\algA}(X)$ or $\Sg_{\algA}(x_1, \dots, x_n)$ when $X = \{x_1, \dots, x_n\}$.
An algebra is \emph{simple} if it has only the \emph{trivial congruences} -- the equality relation and the full relation.

An operation $f$ on $A$ is \emph{idempotent} if $f(a,a, \dots, a)=a$ for every $a \in A$. An algebra is idempotent if all of its operations are. Recall that
\textbf{all theorems in this paper concern algebras that are finite and idempotent}.

An operation $f$ on $A$ is a \emph{semilattice operation} if it is binary, commutative, idempotent, and associative; \emph{cyclic operation} if $f(a_1, \dots, a_n) = f(a_2, \dots, a_n,a_1)$ for all $\tuple{a} \in A^n$; \emph{weak near unanimity (wnu)  operation} if $f(b,a, \dots, a) = f(a,b,a, \dots, a) = \dots = f(a, \dots, a,b)$ for all $a,b \in A$;
\emph{near unanimity operation} if it is a wnu and $f(b,a, \dots, a)=a$; a \emph{majority operation} if it is a ternary near unanimity operation; a \emph{Mal'cev operation} if $f(b,a,a)=f(a,a,b)=b$ for all $a,b \in A$; and \emph{affine Mal'cev operation} if $f(a,b,c)=a-b+c$ for all $a,b,c \in A$, where $+,-$ are computed with respect to an abelian group structure on $A$. For an odd $n$, the $n$-ary majority operation on a two-element domain is denoted $\maj_n$, i.e., $\maj_n(a_1, \dots, a_n)=a$ iff the majority of the $a_i$ is $a$.
A \emph{semilattice} is an algebra with a semilattice operation (and no other operations); an \emph{affine Mal'cev} algebra is an algebra with an affine Mal'cev operation; and an \emph{affine algebra} is an algebra 
whose term operations are exactly the idempotent term operations of a module over a unital ring.

A \emph{(function) clone} is a set of operations $\clone{C}$ on a set $A$ which contains all the projections $\proj^n_i$ (the $n$-ary projection to the $i$-th coordinate, i.e., $\proj^n_i(\tuple{a})=a_i$) and is closed under composition, i.e., $f(g_1, \dots, g_n) \in \clone{C}$ whenever $f \in \clone{C}$ is $n$-ary and $g_1$, \dots, $g_n \in \clone{C}$ are all $m$-ary, where
  $f(g_1, \dots, g_n)$ denotes the operation defined by 
  $f(g_1(x_1, \dots, x_m), \dots,  g_n(x_1, \dots, x_m))$.
By $\Clo(\algA)$ ($\Clo_{n}(\algA)$, respectively), we denote the clone of all term operations (the set of all $n$-ary term operations, respectively) of $\algA$. 
An algebra $\alg B$ is a \emph{reduct} of $\alg A$ if they have the same universe $A=B$ and $\Clo(\alg B) \subseteq \Clo(\algA)$. 
Algebras $\alg A$ and $\alg B$ are \emph{term-equivalent} if each of them is a reduct of the other, i.e., $\Clo(\alg A) = \Clo(\alg B)$.

An algebra $\alg A$ is  \emph{polynomially complete} if  every operation on~$A$ is in  the clone generated by~$\alg A$ together with the constant operations. Note that $\alg A$ is polynomially complete if and only if it has no proper reflexive irredundant subpowers.

A coordinate $i$ of an operation $f: A^n \to A$ is \emph{essential} if $f$ depends on the $i$th coordinate, i.e., $f(\tuple{a}) \neq f(\tuple{b})$ of some tuples $\tuple{a},\tuple{b} \in A^n$ that differ only at the $i$th coordinate.

\subsection{Star and cyclic composition}

If $t,s$ are  operations on the same set, of arities $p$ and $q$, then the {\em star composition} of $t$ and $s$ is defined by
\begin{equation*}
  t\big(s(x_1,x_{p+1},\dotsc,x_{qp-p+1}),\dotsc,s(x_p,x_{2p},\dotsc,x_{qp})\big).
\end{equation*}
The star composition of a $p$-ary cyclic operation and a $q$-ary cyclic operation is  a cyclic operation of arity $pq$. 
 
If $t$ is a cyclic operation and $s$ is any operation of the same arity $p$, then the {\em cyclic composition} of $t$ and $s$ is defined by
\begin{equation*}
  t\big(s(x_1,\dotsc,x_p),s(x_2,\dotsc,x_p,x_1),\dotsc,s(x_p,x_1,\dotsc,x_{p-1})\big)
\end{equation*}
and it is a cyclic operation of arity $p$.

\subsection{Subpowers and pp-definitions} \label{subsec:app_prelim_pp}

Our proofs heavily exploit the fact that subpowers of algebras are closed under pp-definitions. 

We have already mentioned in Subsection~\ref{subsec:prelim_relations} that the relational composition $R+S$, the inverse relation $-R$, and  left and right centers of a binary relation $R$ are pp-definable from $\{R,S\}$. Therefore if $R$ and $S$ are subuniverses of $\alg A^2$, then $R+S$ is a subuniverse of $\alg A^2$, and  $-R$ as well as the centers of $R$ are subuniverses of $\alg A$ (and this also applies to the multisorted setting when $R$ and $S$ are subuniverses of $\alg A \times \alg B$). We now discuss further such observations that we use often in this paper.

If $R$ is a subuniverse of $\alg A^n$, then the projection onto a set of coordinates $I \subseteq \{1,2, \dots, n\}$ is a subpower of $\alg{A}$ as well. We sometimes \emph{fix} some coordinate $i$ to a subuniverse $B \leq \alg A$ before projecting, i.e., we consider the relation $R'(x_1, \dots, x_n) \equiv R(x_1, \dots, x_n) \wedge B(x_i)$. 

Let $R$ now be a subdirect subuniverse of $\alg A^2$. Recall that $R$ is linked if $S = (R-R) + \dots + (R-R)$ is equal to $A^2$ when we take a sufficiently large number of summands. In general, $S$ is an equivalence relation on $A$ which is a subuniverse of $\alg A^2$ -- a congruence of $\alg A$. In particular, if $\alg A$ is simple, then $S$ is either $A^2$ (so $R$ is linked) or the equality relation, in which case $R$ is a graph of a permutation $A \to A$ -- an automorphism $\alg A \to \alg A$. So, for a simple algebra $\alg A$, an irredundant binary relation $R \leq_{sd} \alg A^2$ is necessarily linked.

One type of subpower is of particular importance. Fix $n$ and 
consider the subalgebra $\alg F$ of $\alg A^{A^n}$ with universe $F = \Clo_n(\alg A)$, the set of $n$-ary operations on $A$. Thus tuples in $F$ are $n$-ary term operations of $\alg A$, a coordinate of a tuple in $F$ is an element $\tuple{a}$ of $A^n$, and the $\tuple{a}$-th component of a tuple $f \in F$ is the value $f(\tuple{a})$. The algebra $\alg F$ is isomorphic to the free algebra for $\alg A$ over an $n$-element set of generators.

\subsection{Absorption and pp-definitions} \label{subsec:app_prelim_absorption}

Pairs of subpowers of $\alg A$ with $\alg B \abs \alg C$ are also closed under pp-definitions in the following sense (this folklore fact was recorded as Lemma 2.9 in ~\cite{Barto16:deciding_abs}).

\begin{lemma}  \label{lem:absorption_and_pp}
  Assume that a subpower $R$ of $\algA$ is defined by 
  \[
    R(x_1, \dots, x_n) \equiv  \exists y_1,\dots,y_m:\, R_1(\sigma_1)\wedge
    \dots\wedge R_k(\sigma_k),
  \]
  where $R_1,\dots,R_k$ are subpowers of $\algA$ regarded as predicates
   and $\sigma_1$,\dots,$\sigma_k$ stand for sequences of (free or bound) variables. 
  Let $S_1,\dots,S_k$ be subpowers of $\algA$ such that $S_i\abs \alg{R}_i$  for all $i$. Then the subpower
  \[
    S(x_1, \dots, x_n) \equiv  \exists y_1,\dots,y_m:\, S_1(\sigma_1) \wedge
    \dots \wedge S_k(\sigma_k),
  \]
  absorbs $\alg{R}$. Moreover, if all the absorptions $S_i \abs \alg{R}_i$ are witnessed by $t$, then so is $S \abs \alg{R}$.
\end{lemma}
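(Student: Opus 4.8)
\textbf{Proof plan for Lemma~\ref{lem:absorption_and_pp}.}
The plan is to verify directly that the operation $t$ (which, by hypothesis, witnesses every absorption $S_i \abs \alg R_i$) witnesses the absorption $S \abs \alg R$; once this is done the weaker statement ``$S \abs \alg R$'' follows immediately. So let $t$ be $r$-ary and let $\tuple b^1, \dots, \tuple b^r \in R$ be tuples such that at most one of them, say $\tuple b^j$, fails to lie in $S$ (the case where all of them lie in $S$ is subsumed, since $S$ is a subuniverse of $\alg R$). We must show $t(\tuple b^1, \dots, \tuple b^r) \in S$.

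First I would unpack what membership in $S$ means. For each $k \in \{1, \dots, r\}$, since $\tuple b^k \in R$, there is a witnessing assignment of the existentially quantified variables $y_1, \dots, y_m$; concatenating $\tuple b^k$ with this assignment gives a tuple $\tuple c^k$ over all variables (free and bound) such that $R_i(\tuple c^k|_{\sigma_i})$ holds for every $i$, where $\tuple c^k|_{\sigma_i}$ denotes the subtuple picked out by the variable sequence $\sigma_i$. The key point is that for the index $k = j$ we do \emph{not} get $S_i(\tuple c^j|_{\sigma_i})$, only $R_i(\tuple c^j|_{\sigma_i})$; for all other $k$ we may assume — and here is the one place that needs a tiny bit of care — that the witnessing assignment can be chosen so that the full tuple $\tuple c^k$ lies in the ``$S$-version'' of the formula, i.e. $S_i(\tuple c^k|_{\sigma_i})$ holds for all $i$. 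This is exactly the meaning of $\tuple b^k \in S$: $S$ is \emph{defined} by the formula with the $R_i$'s replaced by the $S_i$'s, so a witnessing assignment for $\tuple b^k$ with respect to that formula exists. Record these tuples $\tuple c^1, \dots, \tuple c^r$.

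Now apply $t$ coordinatewise to $\tuple c^1, \dots, \tuple c^r$ to obtain a single tuple $\tuple d = t(\tuple c^1, \dots, \tuple c^r)$ over all the variables. I claim $\tuple d$ witnesses that $t(\tuple b^1, \dots, \tuple b^r)$ lies in $S$: its restriction to the free variables is $t(\tuple b^1, \dots, \tuple b^r)$ by construction, and for each atom $R_i(\sigma_i)$ we need $S_i(\tuple d|_{\sigma_i})$. But $\tuple d|_{\sigma_i} = t(\tuple c^1|_{\sigma_i}, \dots, \tuple c^r|_{\sigma_i})$, and among these $r$ arguments, all except possibly the $j$-th lie in $S_i$, while the $j$-th lies in $R_i$; since $S_i \abs \alg R_i$ witnessed by $t$, the value lies in $S_i$. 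Hence every conjunct of the ``$S$-version'' of the formula is satisfied by $\tuple d$, so $t(\tuple b^1, \dots, \tuple b^r) \in S$, as desired. Finally, since each $R_i$ and $S_i$ is a subpower of $\alg A$ and $S$ is pp-defined from the $S_i$'s, $S$ is a subpower of $\alg A$ as well, and the argument above shows $S \abs \alg R$ (indeed $S \abs_r \alg R$ when $t$ is $r$-ary).

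The main obstacle, such as it is, is purely bookkeeping: one has to be scrupulous about which variables are free and which are existentially quantified, and about the fact that the absorbing tuple may sit in any one of the $r$ argument slots, so the ``at most one argument outside $S$'' condition must be tracked uniformly across all the conjuncts $R_i(\sigma_i)$. There is no genuine difficulty — the essential content is just that applying $t$ coordinatewise commutes with both projection onto a subsequence of variables and with existential quantification (a witness for the quantified variables is produced by applying $t$ to the individual witnesses).
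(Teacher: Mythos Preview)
Your argument is correct and is exactly the standard direct verification of this folklore fact. The paper does not actually supply its own proof of this lemma; it merely cites it as Lemma~2.9 in~\cite{Barto16:deciding_abs}, so there is nothing to compare against beyond noting that your write-up matches the expected routine argument.
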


In particular, if $R \leq_{sd} \alg A^2$ and $B$ absorbs  $\alg A$, then $B+R$ and $B-R$ absorb $\alg A$ as well.

It is also useful to observe that absorption is transitive: if $\alg C \abs \alg B$ by $s$ and $\alg B \abs \alg A$ by $t$, then $\alg C \abs \alg A$ by the star composition of $s$ and $t$.

\subsection{Relational descriptions}

Many algebraic notions we deal with in this paper have their relational counterparts. We have already stated such a characterization for projectivity in Proposition~\ref{prop:cubetermblocker} and we have used a relational description of abelianess as a definition for this concept in Definition~\ref{def:abelian}. Now we state two other helpful facts.

\noindent 

The first one is a characterization of absorption by means of so called $B$-essential relations~(see e.g. Proposition~2.14 in \cite{Barto16:deciding_abs}). A relation $R \subseteq A^n$ is \emph{$B$-essential} if $R$ does not intersect $B^n$ but  every projection of $R$ onto all but one of the coordinates intersects the corresponding power of $B$. The characterization is that $B \abs_n \alg A$ if, and only if, there are no $B$-essential subuniverses of~$\alg A^n$. We state this fact as follows.

\begin{proposition} \label{prop:abs_blockers}
    Let $\alg A$ be an algebra and $B\leq \alg A$. 
    Then $B \abs_n \alg A$ if and only if 
    for every $\tuple a^1,\dotsc, \tuple a^n\in A^n$ such that $\tuple a^i_j\in B$ for $i\neq j$ 
    we have $\Sg_{\alg A^n}(\tuple a^1,\dotsc, \tuple a^n)\cap B^n\neq\emptyset$.
\end{proposition}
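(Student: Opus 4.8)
The plan is to prove the two implications separately; the forward one is immediate and the converse carries all the work. First assume $B\abs_n\alg A$ is witnessed by $t\in\Clo_n(\alg A)$, and let $\tuple a^1,\dots,\tuple a^n\in A^n$ satisfy $\tuple a^i_j\in B$ for all $i\neq j$. Put $\tuple c:=t(\tuple a^1,\dots,\tuple a^n)$, computed coordinate-wise in $\alg A^n$; then $\tuple c\in\Sg_{\alg A^n}(\tuple a^1,\dots,\tuple a^n)$ and $c_j=t(\tuple a^1_j,\dots,\tuple a^n_j)$. In this argument tuple only the entry in position $i=j$ can fail to lie in $B$, so at least $n-1$ of the $n$ entries are in $B$ and $c_j\in B$ by the choice of $t$; hence $\tuple c\in B^n$ and $\Sg_{\alg A^n}(\tuple a^1,\dots,\tuple a^n)\cap B^n\neq\emptyset$.

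For the converse I would argue by contraposition, using the $B$-essential relation viewpoint of absorption theory. Write $D=\{\tuple a\in A^n:\text{at least }n-1\text{ coordinates of }\tuple a\text{ lie in }B\}$ and, for each $i$, $D_i=\{\tuple a\in A^n:a_l\in B\text{ for all }l\neq i\}$, so that $D=D_1\cup\dots\cup D_n$. It helps to first reformulate the hypothesis as: \emph{for every choice $\tuple c^1\in D_1,\dots,\tuple c^n\in D_n$ there is a term operation mapping each $\tuple c^i$ into $B$}. Indeed, given such $\tuple c^i$, the $n\times n$ matrix whose $i$-th row is $(\tuple c^1_i,\dots,\tuple c^n_i)$ has off-diagonal entries in $B$ and columns $\tuple c^1,\dots,\tuple c^n$, so applying the hypothesis to its rows yields a $t$ with $t(\tuple c^1),\dots,t(\tuple c^n)\in B$. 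Now suppose $B$ does \emph{not} $n$-absorb $\alg A$. I would work inside the free algebra $\alg F$ on $n$ generators $\pi_1,\dots,\pi_n$ (the subpower of $\alg A$ with coordinate set $A^n$ and universe $\Clo_n(\alg A)$ from the appendix): for $\tuple d\in D$ the set $U_{\tuple d}=\{f\in\Clo_n(\alg A):f(\tuple d)\in B\}$ is a subuniverse of $\alg F$ since $B\leq\alg A$, and it is nonempty because it contains $\pi_l$ whenever $d_l\in B$. Now $B\abs_n\alg A$ is equivalent to $\bigcap_{\tuple d\in D}U_{\tuple d}\neq\emptyset$, so by assumption this intersection is empty; the aim is to find a single tuple $\tuple a^i\in D_i$ for each coordinate $i$ with $U_{\tuple a^1}\cap\dots\cap U_{\tuple a^n}=\emptyset$ already, since those $n$ tuples then violate the displayed condition. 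In purely relational terms: the failure of $n$-absorption produces a $B$-essential subpower of $\alg A$ (for example $\alg F$ itself, its coordinates coloured by the ``bad'' coordinate of the corresponding element of $A^n$), and what remains is to produce one of arity exactly $n$.

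The hard part is exactly this reduction --- turning the a priori global failure of $n$-absorption (which involves all tuples of $D$ simultaneously) into a certificate of size $n$ --- and I expect it to be the main obstacle. The standard method is to take a $B$-essential subpower $R\leq\alg A^k$ of minimal arity $k$ and show $k=n$: when two coordinates $l_1\neq l_2$ carry the same colour one merges them, replacing $R$ by $\{\tuple r\in R:r_{l_1}=r_{l_2}\}$ and projecting away $l_2$, and checks the result is still $B$-essential; the delicate points are that the merged relation be nonempty and still meet every coordinate face, and this is where idempotence of $\alg A$ and a judicious choice of witnessing tuples are used. (An iterative proof that instead builds the absorbing operation directly, handling one tuple of $D$ at a time, hits the same core difficulty, since partial witnesses need not preserve $B$ and so do not combine by naive composition.) Since this reduction is the content of Proposition~2.14 in~\cite{Barto16:deciding_abs}, in the final write-up I would either reproduce that argument or simply invoke it.
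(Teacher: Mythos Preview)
Your proposal is correct and matches the paper's treatment: the paper also records the forward direction as immediate and, for the converse, simply cites Proposition~2.14 in~\cite{Barto16:deciding_abs} rather than reproducing the $B$-essential reduction. Your transposition argument reformulating the hypothesis is accurate, and your identification of the arity-reduction step as the only nontrivial ingredient is exactly right.
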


The second proposition characterizes strongly projective subuniverses.

\begin{proposition} \label{prop:rel_descr_strong_blocking}
    Let $\alg A$ be an algebra and $B \subseteq A$.
    Then $B$ is a strongly projective subuniverse of $\alg A$ if and only if
    the relation $R(x,y,z)= B(x)\vee (y = z)$ is a subuniverse of $\algA^3$.
\end{proposition}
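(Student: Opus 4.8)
The plan is to prove both implications directly, unwinding the definition of a strongly projective subuniverse and of the relation $R(x,y,z) = B(x) \vee (y=z)$, and using the elementary fact that a term operation does not change its value when its arguments are modified only in non-essential coordinates.

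For ($\Rightarrow$) I would assume $B$ is strongly projective, fix an $n$-ary $f \in \Clo(\alg A)$ and triples $(x_k,y_k,z_k) \in R$ for $k=1,\dots,n$, and aim to show $(f(\tuple x), f(\tuple y), f(\tuple z)) \in R$, i.e.\ $f(\tuple x) \in B$ or $f(\tuple y) = f(\tuple z)$. I would split into two cases. If $y_i = z_i$ for every essential coordinate $i$ of $f$, then $\tuple y$ and $\tuple z$ agree on all essential coordinates of $f$, hence $f(\tuple y) = f(\tuple z)$ and the resulting triple lies in $R$. Otherwise there is an essential coordinate $i$ with $y_i \neq z_i$; then $(x_i,y_i,z_i) \in R$ together with $y_i \neq z_i$ forces $x_i \in B$, and strong projectivity, applied to $f$, the essential coordinate $i$, and the tuple $\tuple x$, gives $f(\tuple x) \in B$.

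For ($\Leftarrow$) I would assume $R \leq \alg A^3$, fix an $n$-ary $f \in \Clo(\alg A)$, an essential coordinate $i$ of $f$, and $\tuple a \in A^n$ with $a_i \in B$, and aim to show $f(\tuple a) \in B$. Since $i$ is essential, I would pick $\tuple u, \tuple v \in A^n$ differing only in coordinate $i$ with $f(\tuple u) \neq f(\tuple v)$. Then I would form $n$ triples column-wise: $(a_j, u_j, u_j)$ for $j \neq i$, which lies in $R$ because its last two entries coincide (note $u_j = v_j$), and $(a_i, u_i, v_i)$ for the coordinate $i$, which lies in $R$ because $a_i \in B$. Applying $f$ coordinatewise and using that $R$ is a subuniverse yields $(f(\tuple a), f(\tuple u), f(\tuple v)) \in R$, and since $f(\tuple u) \neq f(\tuple v)$ the definition of $R$ forces $f(\tuple a) \in B$.

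I do not expect a genuine obstacle here; the one point needing care — which I would flag as the crux — is the handling of non-essential coordinates. In ($\Rightarrow$) one relies on the fact that $f$ is constant along non-essential coordinates to pass from ``$\tuple y$ and $\tuple z$ agree on all essential coordinates'' to ``$f(\tuple y) = f(\tuple z)$'', and in ($\Leftarrow$) one uses precisely the definition of an essential coordinate to produce the witnessing pair $\tuple u, \tuple v$. Everything else is a routine application of the closure of subpowers under coordinatewise term operations. (As a free byproduct, feeding triples $(b_k, c, d)$ with $b_k \in B$ and a fixed pair $c \neq d$ into $R$ recovers, via idempotence, that $B$ is a subuniverse of $\alg A$ when $|A| \geq 2$, although this is not needed for the statement.)
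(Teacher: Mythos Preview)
Your proof is correct and follows essentially the same approach as the paper's: the backward implication is identical, and your forward implication is the direct (case-analysis) version of the paper's contrapositive argument. The crux you flag---that $f$ is unchanged when only non-essential coordinates vary---is exactly what the paper uses as well.
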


\begin{proof}
    For the backward implication let  $f$ be an $n$-ary term operation of $\alg A$ and say, without loss of generality, that the first coordinate is essential
    as witnessed by tuples $(c,c_2,\dotsc,c_n)$ and $(c',c_2,\dotsc,c_n)$. 
    Take $(b,a_2,\dotsc,a_n) \in B\times A^{n-1}$ and note that  
    $R(b,c,c')$, $R(a_2,c_2,c_2), \dotsc,R(a_n,c_n$, $c_n)$.
    Therefore 
    \begin{equation*}
    R(f(b,a_2,\dotsc,a_n), f(c,c_2,\dotsc,c_n),f(c',c_2,\dotsc,c_n))\end{equation*}
     and, by the choice of $c,c',c_2,\dotsc,c_n$, we get $f(b,a_2,\dotsc,a_n)\in B$, as required.
    
    For the forward implication we proceed by way of contradiction and suppose that an application of an operation $f$ to triples from $R$ produces a triple outside.
    The resulting triple does not have an element of $B$ at the first position, therefore, by the assumption, all the input triples that \emph{have} an element of $B$ on the first position appear on  inessential coordinates of $f$. 
    The remaining triples have the same element on the second and third positions, therefore so does the resulting triple, a contradiction.
\end{proof}

\section{Proofs for Section~\ref{sec:taylor}: Taylor algebras} \label{app:taylor}

This section  contains proofs of the claims made in Section~\ref{sec:taylor}.

\subsection{Absorption theorem}

We start by formally stating the already known improvements and refinements of the proof of the absorption theorem (Theorem~\ref{thm:AToriginal}) discussed in Subsection~\ref{subsec:AT}. Some of these improvements do not seem to be recorded in the literature or are hidden inside proofs of different results. As a corollary we obtain an improved version of the absorption theorem, Corollary~\ref{cor:newAT}, which seems to be new. Afterwards, we prove the two refinements of the theory which were promised in Subsection~\ref{subsec:AT}.

The first step toward Corollary~\ref{cor:newAT} can be divided into two sub-steps. The first sub-step is  Lemma 2.7 in~\cite{Barto15:malcev}.

\begin{proposition} \label{prop:transitive-or-projective}
 Every algebra has a transitive operation or a nontrivial projective subuniverse.    
\end{proposition}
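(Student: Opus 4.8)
The plan is to argue by induction on $|A|$, combining a case analysis based on whether $\alg A$ has a proper subuniverse that is not absorbing, with a congruence-collapsing step. The key observation is that ``transitivity'' of an operation is a property one can hope to build up by \emph{star composition}: if $t$ is a $p$-ary operation that already hits every element of $A$ from every coordinate, then it is transitive; and star composition of two transitive operations is transitive. So the real content is to manufacture \emph{one} transitive operation out of partial information, or else locate a projective subuniverse.

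First I would dispose of the trivial base case ($|A| = 1$, every operation is transitive). For the inductive step, suppose $\alg A$ has no nontrivial projective subuniverse. I would like to show it has a transitive operation. Consider the free algebra $\alg F$ on, say, $|A|$ generators realized as the subpower of $\alg A^{A^{|A|}}$ whose elements are the $|A|$-ary term operations (as in Subsection~\ref{subsec:app_prelim_pp}). For each element $c \in A$ and each coordinate position $i$, let $U_{i,c}$ be the set of $|A|$-ary term operations $f$ with $f(\tuple a) = c$ for some $\tuple a$ having $a_i$ equal to a fixed generator value; transitivity of a single $f$ amounts to lying in the intersection of all these $U_{i,c}$. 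The failure of projectivity should be exploited exactly to guarantee these sets are simultaneously nonempty: by Proposition~\ref{prop:cubetermblocker}, $B$ being non-projective means the relation $B(x_1) \vee \dots \vee B(x_n)$ fails to be a subpower for some $n$, which gives a term operation escaping $B$ from every coordinate in which an argument was forced into $B$ — this is the seed of transitivity off of $B$. The inductive hypothesis, applied to proper subalgebras and to proper quotients, then lets one patch transitivity \emph{within} those smaller pieces and lift/glue via star and cyclic composition (cyclic composition keeps arities fixed, which is what one needs to combine operations that must act simultaneously on several coordinates).

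The main obstacle I anticipate is the gluing step: transitivity is a ``for all coordinates, for all targets, there exists a witness'' statement, and the natural operations coming from the inductive hypothesis each only control part of the coordinate/target grid — one controls behavior modulo a congruence $\alpha$, another controls behavior inside $\alpha$-blocks, and there are leftover ``mixed'' cases where arguments straddle blocks. Making these cohere requires choosing the compositions carefully so that no later composition destroys transitivity already achieved (idempotence helps, since identifying variables is harmless), and handling the case where $\alg A$ is simple separately, since then there is no congruence to induct on and one must instead use a proper subalgebra (which exists because $\alg A$ is idempotent and has $\geq 2$ elements, hence has a two-generated proper subalgebra unless it is generated by every pair, a boundary case that should itself quickly yield either projectivity or transitivity). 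I would also keep in mind the alternative, cleaner route that the paper may actually take: rather than full induction, directly analyze the free algebra $\alg F$ and show that if \emph{no} coordinate-$i$ element can be ``moved everywhere'' by term operations, the stuck elements assemble into a projective subuniverse; this is closer in spirit to the relational characterization in Proposition~\ref{prop:cubetermblocker} and sidesteps much of the composition bookkeeping.
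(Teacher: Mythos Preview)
The paper does not actually prove this proposition; it merely records it as Lemma~2.7 of~\cite{Barto15:malcev}. So there is no in-paper argument to compare against directly, only the cited one.

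Your inductive approach has the real gap you yourself flag, and it is not just bookkeeping. A term that is transitive on $\alg A/\alpha$ only lets you move \emph{between} $\alpha$-blocks, and a term transitive on a single block only moves you \emph{within} that block; star-composing these does not yield transitivity on $\alg A$, because when you fix coordinate $i$ to some $b$ and want to hit a $c$ in a different block, neither ingredient applies (the quotient term needs control over which block, not which element, sits at coordinate $i$). The simple case is also not a ``boundary case that quickly yields'' anything --- with no proper congruence to split along, that is precisely where the content lies.

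Your second route is the right one and is essentially how the cited lemma proceeds. The clean reduction you are missing is this: call a term $t$ \emph{$a$-transitive} if $t(A,\dots,\{a\},\dots,A)=A$ at every coordinate. Using idempotence one checks that if $s$ is $a$-transitive and $t$ is $b$-transitive then the star composition of $s$ and $t$ is both $a$- and $b$-transitive; hence if every $a\in A$ admits an $a$-transitive term, star-composing one such term per element of $A$ yields a transitive term outright. The substantive step is the remaining one: if some fixed $a$ admits \emph{no} $a$-transitive term, one must manufacture a nontrivial projective subuniverse from this failure. Your invocation of Proposition~\ref{prop:cubetermblocker} is in the right spirit, but the actual construction of the blocker from the failure of $a$-transitivity is exactly the nontrivial content of the cited lemma, and your sketch does not supply it.
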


\noindent
The other sub-step has a short proof via Taylor operations, see the proof of Lemma 3.4 in~\cite{Barto15:malcev}. An elementary proof via clone homomorphisms is also available, but unpublished. It will appear in another paper. 

\begin{proposition} \label{prop:projective-implies-binabs}
 Every projective subuniverse of a Taylor algebra is 2-absorbing.
 \end{proposition}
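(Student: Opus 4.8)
The plan is to leverage the cyclic-operation characterization of Taylor algebras (Theorem~\ref{thm:cyclic}) and to exploit cyclic symmetry in order to turn the existential ``some coordinate'' in the definition of a projective subuniverse into a universal ``every coordinate''. So let $B$ be a projective subuniverse of the Taylor algebra $\alg A$, and fix, using Theorem~\ref{thm:cyclic}, a cyclic term operation $t$ of some arity $n>1$.

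The first — and essentially only — step is the observation that $t(\tuple a)\in B$ whenever \emph{some} coordinate of $\tuple a$ lies in $B$, not merely the privileged coordinate supplied by projectivity. Indeed, projectivity gives a coordinate $i$ of $t$ with $t(\tuple a)\in B$ whenever $a_i\in B$. Given an arbitrary coordinate $j$ and an arbitrary $\tuple a\in A^n$ with $a_j\in B$, cyclicity lets us rewrite $t(a_1,\dots,a_n)=t(a_{1+s},\dots,a_{n+s})$ (indices mod $n$) with shift $s=j-i$; the resulting tuple has $a_j$ in position $i$, so its value, which equals $t(\tuple a)$, lies in $B$ by the choice of $i$. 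Hence $t$ sends into $B$ every tuple having at least one entry in $B$.

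Given this, I would simply take the binary term operation $s(x,y):=t(x,\underbrace{y,\dots,y}_{n-1})$ obtained by identifying the last $n-1$ arguments of $t$ to the variable $y$. If $x\in B$ then the tuple $(x,y,\dots,y)$ has its first coordinate in $B$; if $y\in B$ then (using $n>1$) it has its second coordinate in $B$; in either case $s(x,y)=t(x,y,\dots,y)\in B$ by the previous step. Since $B$ is a subuniverse by hypothesis, this is exactly $B\abs_2\alg A$, witnessed by $s$. (Any split $t(\underbrace{x,\dots,x}_{k},\underbrace{y,\dots,y}_{n-k})$ with $1\le k\le n-1$ works equally well, matching the remark after Proposition~\ref{prop:center}.) There is no genuine obstacle here: the only point demanding a little care is the cyclic-shift bookkeeping that propagates the privileged coordinate to all coordinates, together with remembering that the hypothesis is projectivity of $B$ \emph{as a subuniverse}, so that closure of $B$ under the operations of $\alg A$ does not need to be re-established.
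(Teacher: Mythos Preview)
Your proof is correct and matches exactly the argument the paper sketches in Subsection~\ref{subsec:AT}, where it notes that ``a witness is, e.g., any operation of the form $t(x,\dots,x,y,\dots,y)$ where $t$ is cyclic.''

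However, the paper explicitly flags this route as ``not fair'': the cyclic-term characterization (Theorem~\ref{thm:cyclic}) is itself proved using the absorption theorem, and Proposition~\ref{prop:projective-implies-binabs} is one of the ingredients in that proof. To avoid this circularity, the paper's official reference for the proposition is the elementary argument via Taylor operations from Lemma~3.4 of~\cite{Barto15:malcev}, which does not rely on cyclic terms. Your argument is perfectly valid if one is willing to take Theorem~\ref{thm:cyclic} as a black box, but it does not stand on its own within the development of the absorption theorem.
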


\noindent
We remark that the argument for Proposition~\ref{prop:projective-implies-binabs} via cyclic operations sketched in Subsection~\ref{subsec:AT} is not ``fair'' since Theorem~\ref{thm:cyclic} heavily uses the absorption theorem, which uses this proposition in its proof. 

The second step is a purely relational fact and we provide a short proof.

\LinkedImpliesCenterOld*

\begin{proof} \hypertarget{ProofProp41}{}
   Suppose that $R$ is not left central.
   Since $R$ (and then $-R$ as well) is linked and subdirect, we have $(-R+R)+(-R+R) + \dots = C^2$ for a sufficiently large number of summands.
   
   If $-R+R \neq C^2$, then we first preprocess $R$ by taking $(-R+R)+(-R+R)+ \dots$ sufficiently many times so that $-R+R = C^2$ and $R$ is still proper. If it became left central, we are done since $(-R+R)+\dots$ is symmetric.
   
   For any set $D=\{c_1, \dots, c_k\} \subseteq C$ consider the binary pp-definable relation $S_D(x,y)$ expressing ``$x$ , $y$, and all the $c_i$ have a common neighbor'', that is $S_D(x,y) \equiv (\exists a) R(a,x) \wedge R(a,y) \wedge R(a,c_1) \wedge \dots \wedge R(a,c_k)$. Since $-R+R=C^2$, the set $S_{\emptyset}$ is equal to $C^2$, and since $R$ is not left central, the set $S_C$ is empty.
    Take a maximal $D$ so that $S_D=C^2$ and take any $E \subseteq C$ with $D \subseteq E$ and $|E \setminus D|=1$. Now observe that $S_E$ is a proper symmetric subdirect relation on $C$ whose  center is nonempty since it contains $E$.
\end{proof}

The third step shows how \emph{left} central relations together with transitive operations on the \emph{right} side produce absorption. 
The argument is in the final part of the proof of Theorem 2.11 in~\cite{Barto12:cyclic}.

\begin{proposition} \label{prop:trans-and-center-implies-absorption}
Let $\alg A$ and $\alg C$ be algebras and suppose $R \leq_{sd} \alg A \times \alg C$ is left-central. 
If $\alg C$ has a transitive term operation $t^{\alg C}$, then the left center of $R$ absorbs $\alg{A}$ by $t^{\alg A}$. 
\end{proposition}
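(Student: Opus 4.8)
The plan is to show that the very operation making $\alg C$ transitive, read as a term operation of $\alg A$, already witnesses the absorption. Let $t$ be a term such that $t^{\alg C}$ is transitive of arity $n$, and write $B = \{a \in A \st a + R = C\}$ for the left center of $R$. First I would note that $B$ is a subuniverse of $\alg A$: since $\alg C$ is idempotent, every singleton $\{c\}$ is a subuniverse, so $B$ is pp-defined from $R$ by $B(x) \equiv \bigwedge_{c \in C} R(x,c)$, hence $B \leq \alg A$. (Only the defining property of $B$ will actually be used below, together with the fact that $B$ is closed under $t^{\alg A}$.) We may also assume $n \geq 2$, since an idempotent unary operation is the identity and transitivity would force $|C| = 1$, a case in which $R \leq_{sd} \alg A \times \alg C$ forces $B = A$ and there is nothing to prove.

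The core step is to check that $t^{\alg A}$ absorbs $B$, i.e.\ that $t^{\alg A}(a_1, \dots, a_n) \in B$ whenever at most one of the $a_i$ lies outside $B$. If all $a_i \in B$ this is immediate, since $B$ is a subuniverse, so assume $a_j \notin B$ for exactly one index $j$. To prove $t^{\alg A}(\tuple a) \in B$ I must show $(t^{\alg A}(\tuple a), c) \in R$ for every $c \in C$. Fix $c$. Since $R$ is subdirect, $\proj_1(R) = A$, so $a_j$ has some $R$-neighbour $d \in C$, i.e.\ $(a_j, d) \in R$. Applying transitivity of $t^{\alg C}$ at the coordinate $j$ with input $d$ and target $c$ yields a tuple $\tuple e \in C^n$ with $e_j = d$ and $t^{\alg C}(\tuple e) = c$. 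Now for $i \neq j$ we have $a_i \in B$, hence $a_i + R = C$, so $(a_i, e_i) \in R$; and $(a_j, e_j) = (a_j, d) \in R$ by the choice of $d$. Thus $(a_1, e_1), \dots, (a_n, e_n)$ are $n$ tuples of $R \leq \alg A \times \alg C$, and closure of $R$ under the coordinatewise action of $t$ gives $(t^{\alg A}(\tuple a), t^{\alg C}(\tuple e)) = (t^{\alg A}(\tuple a), c) \in R$. As $c$ was arbitrary, $t^{\alg A}(\tuple a) + R = C$, i.e.\ $t^{\alg A}(\tuple a) \in B$. Hence $B \abs_n \alg A$, witnessed by $t^{\alg A}$.

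There is essentially no serious obstacle here: the argument is a direct unfolding of the definitions of transitivity and of the left center. The two points that need a moment's care are (a) using subdirectness to guarantee that the lone ``bad'' coordinate $a_j$ has at least one $R$-neighbour, so that transitivity can be invoked; and (b) the quantifier bookkeeping — the target $c$ and the index $j$ are universally quantified, the neighbour $d$ depends only on $a_j$, and the remaining coordinates $e_i$ (for $i \neq j$) may be whatever transitivity produces precisely because each such $a_i$ lies in the center. Note that neither $\alg A$ nor $\alg C$ needs to be Taylor; only idempotence and the transitive operation on $\alg C$ are used.
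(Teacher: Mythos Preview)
Your proof is correct and follows essentially the same approach as the paper's: both use the transitive term $t$ of $\alg C$ itself as the absorption witness, pick a neighbour $d$ of the lone bad coordinate via subdirectness, invoke transitivity to hit an arbitrary target $c \in C$, and conclude by compatibility of $R$. Your additional explicit check that $B \leq \alg A$ (via the pp-definition $\bigwedge_{c \in C} R(x,c)$) is a harmless elaboration that the paper leaves implicit.
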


\begin{proof}
 Let $t^{\alg C}$ be a transitive term operation of $\alg C$, say of arity $n$, and let $B$ be the left center of~$R$. We need to show that, for any $\tuple{a} \in A^n$ such that  $|\{i: a_i \in B\}| \geq n-1$, the result $t^{\alg A}(\tuple{a}) = b$ is in $B$, i.e., we need to show that for every $c \in C$ we have $(b,c) \in R$. Let $i$ be the only coordinate for which $a_i \not\in B$ (or take $i$ arbitrary if there is none such) and take $d$ such that $(a_i,d) \in R$. By transitivity of $t^{\alg C}$, there exists $\tuple{d} \in C^n$ with $d_i=d$ and $t^{\alg C}(\tuple{d})=c$. 
 All the component pairs $(a_j,d_j)$ are in $R$ since $B$ is the left center of $R$, so $(t^{\alg A}(\tuple{a}),t^{\alg C}(\tuple{d})) = (b,c)$ is in $R$, as required. 
\end{proof}

Note that the three steps combined
already give us the absorption theorem (Theorem~\ref{thm:AToriginal}). We prove a slightly stronger version where only $\alg C$ is assumed to be Taylor.

\begin{corollary} \label{cor:AT}
Suppose $R \leq_{sd} \alg A \times \alg C$ is proper and linked, and $\alg C$ is  Taylor.
 Then $\alg A$ or $\alg C$ has a nontrivial absorbing subuniverse.
\end{corollary}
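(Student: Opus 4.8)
\textbf{Proof plan for Corollary~\ref{cor:AT}.}
The plan is to chain together the four propositions that have just been established, exactly in the order in which they were stated, using the transitive-versus-projective dichotomy as the branching point. First I would apply Proposition~\ref{prop:transitive-or-projective} to the algebra $\alg C$: either $\alg C$ has a nontrivial projective subuniverse, or $\alg C$ has a transitive term operation. In the first case, since $\alg C$ is Taylor by hypothesis, Proposition~\ref{prop:projective-implies-binabs} upgrades that projective subuniverse to a nontrivial $2$-absorbing subuniverse of $\alg C$, and we are done (the absorbing subuniverse lives in $\alg C$). So the remaining work is entirely in the second case, where $\alg C$ carries a transitive term operation $t^{\alg C}$.

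In the transitive case I would argue on $\alg A$. Since $R \leq_{sd} \alg A \times \alg C$ is proper and linked, Proposition~\ref{prop:linked-implies-central} gives two sub-cases: either $R$ is already left central, or $R$ pp-defines a proper subdirect symmetric central relation $S$ on $\alg C$. In the latter sub-case, $S$ is a subpower of $\alg C$ (subpowers are closed under pp-definitions) whose left center is a nonempty, proper subuniverse of $\alg C$; but $\alg C$ has the transitive operation $t^{\alg C}$, so Proposition~\ref{prop:trans-and-center-implies-absorption} applied to $S \leq_{sd} \alg C \times \alg C$ shows that this left center absorbs $\alg C$, again yielding a nontrivial absorbing subuniverse (now in $\alg C$). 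Note here one must check that the left center of the symmetric central relation $S$ is \emph{proper}: since $S$ is a proper relation on $C$, its center cannot be all of $C$, so this is automatic. In the former sub-case, $R$ itself is left central, and $\alg C$ has a transitive term operation, so Proposition~\ref{prop:trans-and-center-implies-absorption} applied directly to $R \leq_{sd} \alg A \times \alg C$ shows the left center of $R$ absorbs $\alg A$. This left center is nonempty (that is what left central means) and is proper, because $R$ is proper and subdirect: if the left center were all of $A$ then $R$ would be $A \times C$. That exhausts all cases.

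The only mild subtlety — and the place where I would be most careful — is bookkeeping about which of the two algebras ends up containing the absorbing subuniverse, and the properness of the various centers. The statement of the corollary is deliberately symmetric in allowing the absorbing subuniverse to be in $\alg A$ \emph{or} $\alg C$, which is precisely what lets the pp-definability detour through $S$ on the $\alg C$ side go through without worrying about transporting the conclusion back to $\alg A$. I do not anticipate any genuine obstacle, since all four ingredient propositions have already been proved; the corollary is really just their concatenation, and I would present it as a two-line deduction rather than re-proving anything.
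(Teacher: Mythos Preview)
Your proposal is correct and follows essentially the same route as the paper: both chain together Propositions~\ref{prop:transitive-or-projective}, \ref{prop:projective-implies-binabs}, \ref{prop:linked-implies-central}, and \ref{prop:trans-and-center-implies-absorption}, with the only difference being that the paper first extracts the central relation and then branches on whether $\alg C$ has a $2$-absorbing subuniverse, whereas you branch first on the transitive/projective dichotomy. The logic and the bookkeeping about properness of centers are identical.
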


\begin{proof}
 Proposition~\ref{prop:linked-implies-central} implies that there is a left central subuniverse $S$ of $\alg A \times \alg C$ or $\alg C^2$. The left center $B$ of $S$ (which is a subuniverse of $\alg A$ or $\alg C$ depending on the case) is a Taylor center unless $\alg C$ contains a nontrivial proper 2-absorbing subuniverse (in which case we are done). Then $\alg C$ has no nontrivial projective subuniverses by Proposition~\ref{prop:projective-implies-binabs}, and therefore it has a transitive term operation by Proposition~\ref{prop:transitive-or-projective}. Proposition~\ref{prop:trans-and-center-implies-absorption} finishes the proof by showing that $B$ is an absorbing subuniverse (of $\alg A$ or $\alg C$). 
\end{proof}

For future reference we also record the following variation.

\begin{corollary} \label{cor:ATvariation}
Suppose $R \leq_{sd} \alg A \times \alg C$ is left-central and $\alg C$ is  Taylor. Then $\alg C$ has a nontrivial projective and 2-absorbing subuniverse, or the left center of $R$ absorbs $\alg A$.
\end{corollary}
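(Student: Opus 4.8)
The plan is to prove Corollary~\ref{cor:ATvariation} by a light modification of the chain of four propositions already assembled for Corollary~\ref{cor:AT}, being careful not to assume that $R$ is \emph{proper} (which is where the projectivity alternative sneaks in). First I would dispose of the case where $\alg C$ has a nontrivial projective subuniverse: by Proposition~\ref{prop:projective-implies-binabs} such a subuniverse is also $2$-absorbing, so we are in the first disjunct and done. So from now on assume $\alg C$ has no nontrivial projective subuniverse; then Proposition~\ref{prop:transitive-or-projective} supplies a transitive term operation $t^{\alg C}$ of $\alg C$.

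The remaining task is to show that the left center of $R$ absorbs $\alg A$. The point is that Proposition~\ref{prop:trans-and-center-implies-absorption} \textbf{already does exactly this}: it takes $R \leq_{sd} \alg A \times \alg C$ and a transitive term operation of $\alg C$, and concludes that the left center of $R$ absorbs $\alg A$. Crucially, that proposition never uses that $R$ is proper --- if $R = A \times C$ then the left center is all of $A$ and the statement is vacuous but still true. So I would simply invoke Proposition~\ref{prop:trans-and-center-implies-absorption} with the transitive operation obtained in the previous paragraph, and that finishes the proof.

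In short, the argument is: either $\alg C$ has a nontrivial projective (hence $2$-absorbing) subuniverse, or it has a transitive term operation, in which case the left center of $R$ absorbs $\alg A$ by Proposition~\ref{prop:trans-and-center-implies-absorption}; and a projective subuniverse is always a subuniverse, so it qualifies as the promised ``nontrivial projective and $2$-absorbing subuniverse''. I do not anticipate a real obstacle here: unlike Corollary~\ref{cor:AT}, we are not trying to produce a \emph{proper} absorbing subuniverse, so we do not need Proposition~\ref{prop:linked-implies-central} at all, and the linkedness hypothesis is dropped. The one thing to double-check is purely bookkeeping --- that Proposition~\ref{prop:trans-and-center-implies-absorption} is stated for subdirect (not necessarily proper) $R$, which it is --- and that ``projective'' in the statement is meant in the sense of Proposition~\ref{prop:cubetermblocker}, so that the conjunction ``projective and $2$-absorbing'' is exactly what Proposition~\ref{prop:projective-implies-binabs} delivers.
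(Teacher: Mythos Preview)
Your proposal is correct and follows exactly the same approach as the paper's proof: apply the dichotomy of Proposition~\ref{prop:transitive-or-projective} to $\alg C$, upgrade projectivity to $2$-absorption via Proposition~\ref{prop:projective-implies-binabs} in one case, and invoke Proposition~\ref{prop:trans-and-center-implies-absorption} in the other. Your additional remarks about why properness and linkedness are not needed here (in contrast to Corollary~\ref{cor:AT}) are accurate and clarify the picture, but the underlying argument is the same.
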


\begin{proof}
 Either $\alg C$ has a nontrivial  projective subuniverse (which is 2-absorbing by Proposition~\ref{prop:projective-implies-binabs}) or $\alg C$ has a transitive term operation by Proposition~\ref{prop:transitive-or-projective} and then the left center absorbs~$\alg A$ by Proposition~\ref{prop:trans-and-center-implies-absorption}.
\end{proof}

Now we improve absorption to 3-absorption. 
We further divide this task into two sub-steps. The first one isolates a property of centers that was, inspired by this paper, exploited in~\cite{zhuk21:strong-subalgs}.

\begin{lemma}\label{lem:alt_def_center}
    Let $B$ be a center of $\alg A$ and let $a\in A\setminus B$. 
    Then
    $(a,a)$ is {\em not} in the subuniverse of $\alg A^2$ generated by $(\{a\}\times B)\cup (B\times B) \cup (B \times \{a\})$.
  \end{lemma}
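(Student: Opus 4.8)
\textbf{Proof plan for Lemma~\ref{lem:alt_def_center}.}

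The plan is to argue by contradiction, extracting from a hypothetical generating term a new witnessing relation that would force $a$ to be in the left center of $R$, contradicting $a \notin B$. So suppose $(a,a)$ \emph{does} lie in the subuniverse $S \leq \alg A^2$ generated by $(\{a\}\times B)\cup(B\times B)\cup(B\times\{a\})$. Since $B$ is a center, fix an algebra $\alg C$ with no nontrivial $2$-absorbing subuniverse and a relation $R \leq_{sd} \alg A \times \alg C$ whose left center is $B$. First I would pick a term operation $t$ of $\alg A$ (of some arity $n$) and tuples $\tuple{p}^1, \dots, \tuple{p}^n$ from the three ``sides'' such that $t(\tuple{p}^1, \dots, \tuple{p}^n) = (a,a)$; here each $\tuple{p}^k$ is one of $(a, b_k)$, $(b_k, b_k')$, or $(b_k, a)$ with $b_k, b_k' \in B$. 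Reading the two coordinates separately, the first-coordinate inputs form a tuple $\tuple{u} \in A^n$ with $t(\tuple{u}) = a$, and analogously $t(\tuple{v}) = a$ on the second coordinate, where for each $k$ at least one of $u_k, v_k$ lies in $B$ (the ``diagonal'' case $(b_k,b_k')$ puts both in $B$).

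The key step is to use $R$ to lift this to a contradiction with $a \notin B$. The idea is that, because $B$ is the left center of $R$, for every $c \in C$ and every $k$ the pair $(u_k, ?)$ or $(v_k, ?)$ can be matched against anything in $C$ whenever the corresponding coordinate is in $B$. Concretely, I would consider the $2n$-ary (or cleverly arranged) pp-construction: given an arbitrary target $c \in C$, I want to find $\tuple{w} \in C^n$ so that $(u_k, w_k) \in R$ for all $k$ and $t^{\alg C}(\tuple{w}) = c$; if such $\tuple{w}$ always exists, then $(t(\tuple u), t(\tuple w)) = (a, c) \in R$ for all $c$, i.e.\ $a$ is in the left center $B$ — contradiction. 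For the coordinates $k$ where $u_k \in B$ this is free: $u_k + R = C$, so $w_k$ can be chosen to be anything. The problematic coordinates are those where $u_k \notin B$, which (since one of $u_k, v_k$ is in $B$) are exactly the coordinates of type $(b_k, a)$, so $u_k = b_k \in B$ — wait, that already forces $u_k \in B$. Re-examining: the only way to get $u_k \notin B$ is the type $(a, b_k)$ giving $u_k = a$; and symmetrically $v_k \notin B$ only from type $(b_k, a)$ giving $v_k = a$. So I would run the argument simultaneously on both coordinates: pick $c, c' \in C$ arbitrarily and build $\tuple{w}$ with $(u_k, w_k), (v_k, w_k') \in R$ componentwise and $t^{\alg C}(\tuple w) = c$, $t^{\alg C}(\tuple w') = c'$, which is possible except we must pin down the few coordinates where $u_k = a$ (resp.\ $v_k = a$). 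This is precisely where the hypothesis ``$\alg C$ has no nontrivial $2$-absorbing subuniverse'' enters, mirroring the proof of Proposition~\ref{prop:center_absorbs} / Proposition~\ref{prop:trans-and-center-implies-absorption}: by Proposition~\ref{prop:transitive-or-projective} and Proposition~\ref{prop:projective-implies-binabs}, if $\alg C$ were Taylor it would have a transitive term, but here $\alg C$ need not be Taylor, so instead one should directly exploit the structure of the generating set.

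The main obstacle I anticipate is handling the coordinates where the first-coordinate input equals $a$ (there is at least one, else all inputs are in $B \times A$ and $t(\tuple u) \in B$ trivially). On those coordinates we need $(a, w_k) \in R$, and since $a \notin B$ the set $a + R$ is a proper nonempty subset of $C$. The trick, I expect, is to choose the target value $c$ to lie in the subuniverse of $\alg C$ generated by $\bigcup_k (u_k + R)$ — or rather to observe that the set of achievable $t^{\alg C}(\tuple w)$, as $\tuple w$ ranges over tuples with $(u_k, w_k)\in R$, is a subuniverse $D$ of $\alg C$ containing all of $C$ on the ``$B$-coordinates'' and containing $a + R$ on the ``$a$-coordinates''; one then shows $D$ must be all of $C$ (otherwise it or its complement gives a nontrivial absorbing structure on $\alg C$), forcing $a \in B$. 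Pinning down exactly which closure/absorption property of $\alg C$ makes $D = C$, and packaging it as a clean pp-definition so that Lemma~\ref{lem:absorption_and_pp} applies, is the delicate part; everything else is bookkeeping with subdirectness and the left-center definition.
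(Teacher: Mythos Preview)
Your setup is right --- contradiction, term $t$, two coordinate tuples $\tuple u,\tuple v$ with $t(\tuple u)=t(\tuple v)=a$ and the observation that the ``$a$-positions'' of $\tuple u$ and of $\tuple v$ are disjoint. But you aim at the wrong contradiction. Trying to show $a\in B$ (i.e.\ $a+R=C$) by hitting an arbitrary $c\in C$ requires something like a transitive term on $\alg C$, which, as you note yourself, is not available. Your fallback (``the achievable set $D$ must be all of $C$ or else it gives absorbing structure'') is close to the truth but misidentified: that set is contained in $a+R$, and there is no reason for \emph{that particular image set} to be absorbing.

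The paper's move is simpler and uses both coordinate equations together. Set $D=a+R$. Transporting $t(\tuple u)=a$ through $R$ gives $t^{\alg C}(D,\dots,D,C,\dots,C)\subseteq D$ (the $D$'s sit in the $a$-positions of $\tuple u$), and $t(\tuple v)=a$ gives $t^{\alg C}(C,\dots,C,D,\dots,D)\subseteq D$. Because the $a$-positions of $\tuple u$ and $\tuple v$ are disjoint, after identifying the first $j$ variables to $x$ and the rest to $y$ (where $j$ is the number of $B$-entries in $\tuple v$) one obtains a binary $g\in\Clo(\alg C)$ with $g(D,C)\subseteq D$ and $g(C,D)\subseteq D$. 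Thus $D\abs_2\alg C$ with $D$ proper and nonempty, contradicting the definition of center. No transitivity, no surjectivity argument on $\alg C$ is needed --- the two coordinate equations already encode the two halves of binary absorption.
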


  \begin{proof}
    Let $R\sd\alg A\times\alg C$ be a witness of centrality. 
    Suppose, for a contradiction that $(a,a)$ is generated by a term operation $f^{\alg A}$, so
    \begin{equation*}
      f^{\alg A}(a,\dotsc,a,b_1,\dotsc,b_i) = a
      = f^{\alg A}(b'_1,\dotsc,b'_j,a,\dotsc,a)
    \end{equation*}
    for some $b_1,\dots,b_{i},b_{1}',\dots,b_{j}'\in B$ where $i+j$ is not less than the arity of $f$. 
    Therefore $f^{\alg C}(a+R, \dots, a+R, b_1+R, \dots, b_i+R) \subseteq a+R$ and, denoting $D = a+R$, we have $f^{\alg C}(D, \dots, D,C, \dots, C) \subseteq D$ (with $i$ occurrences of $C$ on the left).
    Similarly, we obtain $f^{\alg C}(C, \dots, C,D, \dots, D) \subseteq D$ with $j$ occurrences of $C$.
    It follows that the binary operation on $C$ obtained from $f^{\alg C}$ by identifying the first $j$ variables to $x$ and the rest to $y$ witnesses the nontrivial 2-absorption $D \abs_2 \alg C$, a contradiction with the definition of a center.
  \end{proof}

The second sub-step derives 3-absorption from absorption and a weakening of the property of centers isolated in the first sub-step. 

\begin{proposition}\label{prop:getting3abs}
    Suppose $B\abs \alg A$ and that for every $a\in A\setminus B$ we have $(a,a)\notin\Sg_{\alg A^2}(\{a\}\times B\cup B\times\{a\})$.
    Then $B$ $3$-absorbs $\alg A$.
\end{proposition}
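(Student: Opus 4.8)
\textbf{Proof plan for Proposition~\ref{prop:getting3abs}.}

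The plan is to use the ``free algebra'' trick from Subsection~\ref{subsec:app_prelim_pp}: let $\alg F$ be the algebra of $3$-ary term operations of $\alg A$, i.e.\ the subalgebra of $\alg A^{A^3}$ generated by the three projections $\proj^3_1,\proj^3_2,\proj^3_3$, viewed as tuples indexed by $A^3$. Showing $B \abs_3 \alg A$ is, by Proposition~\ref{prop:abs_blockers} (or directly), equivalent to producing a single $t \in F$ such that $t(\tuple a) \in B$ whenever at most one of the coordinates of $\tuple a$ lies outside $B$; equivalently, we must show that the ``bad set'' of coordinates, namely $\{\tuple a \in A^3 \st |\{i \st a_i \notin B\}| \ge 2\}$, can be avoided simultaneously. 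So I would work inside $\alg F$ and try to find an element $t$ all of whose components indexed by ``at most one coordinate outside $B$'' land in $B$.

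First I would use the hypothesis $B \abs \alg A$, say by an $n$-ary operation $w$. Plugging the three projections into $w$ in a patterned way (the standard ``fold an $n$-ary absorbing term into a $3$-ary one'' maneuver) produces, for each unordered selection, a $3$-ary term $t_0$ with the property that $t_0(\tuple a) \in B$ whenever \emph{exactly one fixed} coordinate — say the first — is possibly outside $B$; more precisely, for a suitable choice we can get a $3$-ary $t_0$ that absorbs ``the first two coordinates toward $B$'': $t_0(\tuple a)\in B$ whenever $a_3 \in B$ or ($a_1\in B$ and $a_2 \in B$), etc. The point is that from $B\abs\alg A$ we can certainly get $3$-ary terms handling each individual ``essential'' coordinate, but the obstruction is that we need \emph{one} term handling \emph{all} of the required coordinate-patterns at once, and composing such terms can destroy the good behavior. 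This is where the second hypothesis enters.

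The heart of the argument is to restrict attention to the coordinates $\tuple a \in A^3$ of the form $(a,a,b)$ (and its permutations) with $a \notin B$, $b \in B$, together with coordinates having $\ge 2$ entries in $B$; the bad coordinates we must fix are exactly those with $a_1=a_2=a_3=a\notin B$ is impossible (idempotence forces $t(a,a,a)=a$ for $t$ idempotent, but those coordinates are \emph{not} in the ``$\le 1$ outside'' set, so they are irrelevant) — wait, the dangerous coordinates are precisely those with all three entries outside $B$ or two entries outside and one inside; but we only need $\le 1$ outside, so the only dangerous coordinates have $0$ or $1$ entries outside $B$, and of these the only nontrivial ones are the ones with exactly one entry outside, i.e.\ permutations of $(a,b,b')$ with $a \notin B$, $b,b' \in B$. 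For such a coordinate we must get output in $B$. Now I would use the hypothesis ``$(a,a)\notin \Sg_{\alg A^2}(\{a\}\times B \cup B \times \{a\})$'' as follows: consider the subalgebra $\alg G_a$ of $\alg A^2$ generated by $\{a\}\times B \cup B \times \{a\}$; by hypothesis $(a,a)\notin G_a$, yet $G_a$ is subdirect over the coordinate-projections onto $\{a\}\cup B$ in a suitable sense and, crucially, it \emph{does} contain all pairs $(a,b),(b,a),(b,b')$ with $b,b'\in B$. Projecting the absorbing term $w$ applied to generators of $\alg G_a$, one sees that the value $f(a,\dots)$ with the other entries in $B$ (in the right pattern) cannot equal $a$ — because that would put $(a,a)$ into $G_a$ — and combined with $B\abs\alg A$ (which forces the value \emph{either} into $B$ \emph{or} to be ``close to $a$''), a pigeonhole/closure argument over $A\setminus B$ finite forces the value into $B$. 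Running this for every $a \in A\setminus B$ and stitching the finitely many resulting conditions together inside $\alg F$ — which is legitimate because each condition is a membership statement about a single element of the finite algebra $\alg F$, and we only need their \emph{conjunction} to be realized, which follows by iterating the absorbing operation $w$ on the terms witnessing each condition — yields the desired single $3$-ary $t$.

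\textbf{Main obstacle.} The delicate point is the stitching step: a priori, a $3$-ary term that is good for the ``$a$-pattern'' may fail for the ``$a'$-pattern'' once we compose, and absorption composes well only on the ``inside'' side. I expect the resolution to be exactly the observation recorded after Proposition~\ref{prop:abs_blockers} and in Subsection~\ref{subsec:app_prelim_absorption} that absorption witnessed by $w$ is preserved under pp-definitions and under iterated self-composition (``absorption is transitive''), so applying $w$ coordinatewise to a tuple of terms each of which is good for one pattern produces a term good for \emph{all} patterns simultaneously, at the cost of increasing arity and then folding back to $3$ variables. Making precise that the hypothesis ``$(a,a)\notin\Sg(\dots)$'' exactly rules out the one bad possibility left open by $B\abs\alg A$ — and checking it uniformly over all $a\in A\setminus B$ — is the part that needs care, but it is bounded work since $A$ is finite.
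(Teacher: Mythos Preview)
Your proposal has a genuine gap, and in fact two of them.

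\textbf{The stitching step fails.} You correctly identify that the set $S_{\tuple c} = \{t \in \Clo_3(\alg A) : t(\tuple c) \in B\}$ is an absorbing subuniverse of the free algebra $\alg F$ (witnessed by $w$), for each coordinate $\tuple c$ with at most one entry outside $B$. But your claim that ``iterating the absorbing operation $w$'' produces an element in the intersection $\bigcap_{\tuple c} S_{\tuple c}$ is exactly what is at stake: an $n$-ary absorbing operation with $n>3$ does \emph{not} let you merge two witnesses $t_1 \in S_{\tuple c_1}$, $t_2 \in S_{\tuple c_2}$ into one element of $S_{\tuple c_1}\cap S_{\tuple c_2}$. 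Indeed, $w(t_1,t_2,\dots,t_2)$ evaluated at $\tuple c_1$ gives $w(B,?,\dots,?)$, which absorption does not control. The nonemptiness of that intersection is \emph{equivalent} to $B\abs_3\alg A$, so this line of reasoning is circular.

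\textbf{The use of the hypothesis is unjustified.} You assert that ``$f(a,b,b')$ cannot equal $a$ because that would put $(a,a)$ into $G_a$''. But $G_a=\Sg_{\alg A^2}(\{a\}\times B\cup B\times\{a\})$ is generated by \emph{pairs}; a ternary evaluation $t(a,b,b')=a$ does not by itself produce any pair $(a,a)$ in $G_a$. The hypothesis only says that no \emph{binary} term $g$ satisfies $g(a,b)=a=g(b',a)$ for some $b,b'\in B$. Moreover, even if you could rule out the value $a$, you would still need to rule out every other value in $A\setminus B$, and ``$B\abs\alg A$'' gives you no such pointwise constraint on arbitrary ternary terms.

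\textbf{What the paper does instead.} The paper's argument is indirect and does not try to build a $3$-ary witness. It takes the minimal $n$ with $B\abs_{n+1}\alg A$, picks an inclusion-minimal $B$-essential $R\leq\alg A^n$ (via Proposition~\ref{prop:abs_blockers}), and from two copies of $R$ glued along the last coordinate through $S=\Sg_{\alg A^2}(\{a_n^n\}\times B\cup B\times\{a_n^n\})$ pp-defines a relation $R'\leq\alg A^{2n-2}$. If $R'\cap B^{2n-2}=\emptyset$, then $R'$ witnesses $B\not\abs_{2n-2}\alg A$, contradicting $2n-2\ge n+1$. So $R'$ meets $B^{2n-2}$; unwinding this and using the inclusion-minimality of $R$ forces $(a_n^n,a_n^n)\in S$, which directly contradicts the hypothesis. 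The hypothesis thus enters only at the very end, as the obstruction to the ``doubling'' construction, not as a pointwise constraint on term values.
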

\begin{proof}
Let $n$ be the minimal number such that 
$B \abs_{n+1} \alg A$ and assume, striving for a contradiction, 
that $n>2$. 
By Proposition~\ref{prop:abs_blockers} 
    there exist $\tuple a^1,\dotsc, \tuple a^n\in A^n$ such that 
    $a^i_j\in B$ for $i\neq j$ and  
    $\Sg_{\alg A^n}(\tuple a^1,\dotsc, \tuple a^n)\cap B^n=\emptyset$.
    Put 
    $R = \Sg_{\alg A^n}(\tuple a^1,\dotsc, \tuple a^n)$ and  
    assume that $R$ is an inclusion minimal relation 
    among all  choices of $\tuple a^1,\dotsc, \tuple a^n\in A^n$.
    
By $S$ we denote the binary relation defined by
 $$\Sg_{\alg A^{2}}\left((\{a^n_n\}\times B)\cup (B \times \{a^n_n\})\right).$$
Note that $a^n_n$ is not in $B$. We will now work towards showing that $(a^n_n,a^n_n)$ is in $S$, which will contradict the assumption on $B$.

We pp-define 
$R'\le \alg A^{2n-2}$ by the formula
\begin{align*}
R'(x_{1}, \dots,&x_{n-1},x_{1}', \dots,x_{n-1}') = \exists x_{n}, x_{n}' :\\
&R(x_{1}, \dots,x_{n})\wedge 
R(x_{1}', \dots,x_{n}')\wedge 
S(x_{n},x_{n}').
\end{align*}

For $i\in\{1,\dots,n\}$
by $\tuple c^{i}$ we denote $\tuple a^{i}$ with the last coordinate removed.
By the definition of $R'$ and $S$, we have 
$(\tuple c^{i},\tuple c^{n}),
(\tuple c^{n},\tuple c^{i})\in R'$ for every $i\in\{1,\dots,n-1\}$.
Moreover, these $2n-2$ tuples satisfy 
the condition in the second part of Proposition~\ref{prop:abs_blockers}. Therefore, if $R'\cap B^{2n-2}=\emptyset$, then $B \not\abs_{2n-2} \alg A$. But $B \abs_{n+1} \alg A$ and $2n-2\ge n+1$, a contradiction.

Since 
$R'\cap B^{2n-2} \neq \emptyset$, there exist
$\tuple d, \tuple d'\in R\cap (B^{n-1}\times A)$ such that
$(d_{n},d'_{n})\in S$.
Let $E$ be the projection of $R$ onto the last coordinate after fixing all the other coordinates to $B$, that is,  $E = \proj_{n}(R\cap (B^{n-1}\times A))$.
Since $R$ was chosen inclusion minimal, we get
$\proj_{n}(R) \subseteq \Sg_{\alg A}(B\cup\{e\})$ for every $e\in E$,
otherwise we could replace 
$\tuple a^{n}$ by a tuple $\tuple b\in R\cap (B^{n-1}\times \{e\})$ and get a  $B$-essential relation properly contained in $R$.

Let $E' = E + S$. 
Since $B\cup\{d'_{n}\}\subseteq E'$ (as $a^n_n, d_n \in E$, $(d_n,d'_n) \in S$, and $(a^n_n,b) \in S$ for all $b \in B$), 
we have $E' \supseteq \proj_n(R)$, in particular $a_{n}^{n}\in E'$. 
Therefore $(e,a_n^n) \in S$ for some $e \in E$. 
It follows that $E'' = a_n^n - S$ contains $B \cup \{e\}$ and we get $E'' \supseteq \proj_n(R)$, in particular $a_n^n \in E''$. 
So $(a^n_n,a^n_n) \in S$, which contradicts our assumption on $B$.
\end{proof}

The argument in Corollary~\ref{cor:AT} together with Lemma~\ref{lem:alt_def_center} and Proposition~\ref{prop:getting3abs} now proves 
Proposition~\ref{prop:center_absorbs} (which is  Corollary~7.10.2 in \cite{Zhuk20:dichotomy}) and also the improved absorption theorem.

\CenterTernaryAbsorbs*

\begin{proof} \hypertarget{ProofProp43}{}
 The algebra $\alg C$ has no nontrivial projective subuniverses by Proposition~\ref{prop:projective-implies-binabs}, and therefore it has a transitive term operation by Proposition~\ref{prop:transitive-or-projective}. 
 By
 Proposition~\ref{prop:trans-and-center-implies-absorption}, $B$ is an absorbing subuniverse of $\alg A$.  
 Lemma~\ref{lem:alt_def_center} and Proposition~\ref{prop:getting3abs} now imply $B \abs_3 \alg A$.
\end{proof}

\NewAT*

\begin{proof} \hypertarget{ProofCor45}{}
  Proposition~\ref{prop:linked-implies-central} implies that there is a left central subuniverse of $\alg A \times \alg C$ or $\alg C^2$. In both cases, Proposition~\ref{prop:center_absorbs} finishes the proof. 
   \end{proof}

Now we move on to the improvements of the theory. The first one is that centers of Taylor algebras absorb. Note the difference to Proposition~\ref{prop:center_absorbs} which concerns Taylor centers in arbitrary algebras.

\CenterTernaryAbsorbsSecond*

\begin{proof} \hypertarget{ProofProp44}{}
  Let $R\leq_{sd} \alg A\times\alg C$ be the witnessing relation from the definition of center. 
  We define a directed graph on $A$, by putting
  $c\rightarrow d$ if there exists a cyclic term $t$ and a choice of elements $b_2,\dotsc,b_n \in B$ such that $t^{\alg A}(c,b_2,\dotsc,b_n)= d$.
  
  Obviously the graph has no sinks, 
  but it is also easy to see that it is transitively closed.
  Indeed, let $c\rightarrow c'$ be witnessed by $t$ and $b_2,\dotsc,b_n$, and $c'\rightarrow c''$ be witnessed by $t'$
  and $b_2',\dotsc,b_{n'}'$.
  Then the star composition of $t'$ and $t$,
  together with the tuple
  \begin{equation*}
      b_2,\dotsc,b_n,\underbrace{b_2',\dotsc,b_2'}_n,
      \dotsc,\underbrace{b'_{n'},\dotsc,b'_{n'}}_n,
  \end{equation*}
  reordered as in the definition of star composition,
  witnesses the edge $c\rightarrow c''$.
 
  Next we claim that there is no $a\in A\setminus B$ with a self-loop.
  For a contradiction, say $a$ is such and
  let $t^{\alg A}(a,b_2,\dotsc,b_n)=a$ with $a\in A\setminus B$ and $b_2,\dotsc,b_n\in B$.
  The set $D = a+R$
  is a subuniverse of $\alg C$, and moreover 
  $$
  t^{\alg C}(a+R,b_2+R,\dotsc,b_n+R)=
  t^{\alg C}(D,C,\dotsc,C)\subseteq D
  $$
  and the same holds for all the cyclic shifts of $t$.
  Therefore, $D$ is a $2$-absorbing subuniverse of $\alg C$ (by the operation $t^{\alg C}(x,y, \dots, y)$), a contradiction.
  
  We conclude that all the loops and directed cycles in this graph must be  inside $B$.
  Therefore there is a number $m$ so that starting with any $a\in A$, and following any directed walk, after $m$ steps we necessarily arrive in $B$.
  It suffices to take any cyclic term of $\alg A$ and star-compose it with itself $m$ times to obtain an operation which witnesses $\alg B\abs\alg A$. 
  Now it is enough to apply Lemma~\ref{lem:alt_def_center} and Proposition~\ref{prop:getting3abs}
\end{proof}

The second improvement, Proposition~\ref{prop:center}, 
shows that every linked relation $R \subseteq A \times B$ with $A=B$ pp-defines a central relation which is additionally symmetric or transitive. 
We need an auxiliary, folklore lemma about iterated composition of a binary relation with itself. We use the following notation for a positive integer $n$.
$$
nR  = \underbrace{R + R + \cdots + R}_{n \times}
$$

\begin{lemma} \label{lem:transitive}
    For any proper $R \subseteq A^2$ there exists $n \in \mathbb{N}$ such that $nR$ is proper and $2nR$ is either $A^2$ or $nR$.  
\end{lemma}
\begin{proof}
    First consider the case that $kR = A^2$ for some $k$. Then $A^2 = (k-1)R + R$ implies that the projection of $R$ to the second coordinate is full and therefore $(k+1)R = A^2+R$ is full. By induction, $k'R$ is full for each $k' \geq k$ so we can define $n$ as the largest number such that $nR$ is proper and get $2nR = A^2$.

    Assume now that $kR$ is proper for all $k$. Denote $l=|A|!$ and observe that for any $k \geq |A|$ we have $kR \subseteq (k+l)R$. Indeed, if $(a,b) 
    \in kR$, then there exists a directed walk from $a$ to $b$ of length $k$ in the digraph with edge set $R$. Since $k \geq |A|$, some segment of this walk is a directed cycle. Its length is divisible by $l$, so we can make the directed walk longer by $l$ going along the cycle multiple times, implying $(a,b) \in (k+l)R$.

    It follows that  $lR \subseteq 2lR \subseteq 3lR \subseteq  \ldots$, therefore this sequence of proper relations eventually stabilizes. We define $n = kl$ for a sufficiently large $k$ and obtain $2nR = nR$.
\end{proof}

\LinkedImpliesCenter*

\begin{proof} \hypertarget{ProofProp42}{}
  If the left or right center of $R$ is empty, we apply Proposition~\ref{prop:linked-implies-central} 
  to $R$ itself, or to $-R$,  
  and the result follows. So, let
  $R$ be central.
  We also assume, without loss of generality, that the left center of $R$ contains the maximal number of elements among central, proper and subdirect relations pp-definable from $R$.
  
  As is easily seen by induction, every $nR$ is subdirect and central; moreover, the left center of $nR$ contains the left center of $R$. We give an argument only for the last claim. If $a$ is in the left center of $nR$, then for every $b \in A$ we have $(c,b) \in R$ for some $c$ (as $R$ is subdirect) and $(a,c) \in nR$ (as $a$ is in the left center of $nR$), therefore $(a,b) \in nR+R = (n+1)R$, i.e. $a$ is in the left center of $(n+1)R$.

  By Lemma~\ref{lem:transitive}, there exists $n$ such that $nR$ is proper and $2nR$ is either $A^2$ or $nR$. In the latter case, $nR$ is the desired subdirect, proper, central, and transitive relation (as $nR + nR = nR$). 
  Assume now that $2nR = A^2$ and define $S = nR$. The relation $S$ is subdirect, proper, and central, it satisfies $S+S=A^2$, and its left center is the same as the left center of $R$ (since $R$ was chosen to have the largest possible left center).   
  
  Next, let $B$ be the right center of $S$, we consider two cases: either $B+S=A$ or $B+S\neq A$. (Note that $B$ is the \emph{right} center, which implies $B-S=A$, but not necessarily $B+S=A$. So the latter case cannot be excluded immediately.)

\medskip
\noindent
{\sc Case 1.}
$B+S=A$.

\smallskip 

Consider $S' = S\cap -S$. This relation is proper (because $S$ is proper) and is symmetric by construction. It is also subdirect, as $S+S=A^2$ implies that for every $a$ there is $b$ such that $S(a,b)$ and $S(b,a)$. Finally, $S'$ is also linked. Indeed, note that, since $B$ is the right center, $B^2\subseteq S$, and so $B^2\subseteq S'$. Also, the assumption $B+S=A$ implies that for any $a\in A$ there is $b\in B$ such that $(b,a)\in S$. On the other hand, $(a,b)\in S$, because $b$ belongs to the right center. Therefore $(a,b)\in S'$, implying together with $B^2\subseteq S'$ that $S'$ is linked. 
 
 If $S'$ is central, then we are done. Otherwise, since $S'$ is symmetric, its right center is empty, and we use Proposition~\ref{prop:linked-implies-central} to obtain a symmetric central relation.
 
 \medskip
 \noindent
{\sc Case 2.}
$B+S\neq A$.

\smallskip

We will derive a contradiction in order to show that this case is impossible.
Let $A=\{a_1,\dotsc,a_n\}$, and for $j\ge0$, let the relation $T_j$ be given by 
\[
T_j(x,y)=(\exists z) S(x,z)\wedge S(z,y)\wedge\bigwedge_{i=1}^jS(a_i,z).
\]
  Clearly, $T_0 = A^2$ and $T_n$ is not even subdirect. Indeed, for the latter claim if $(a,b)\in T_n$, then the value of $z$ in the pp-definition above belongs to $B$, the right center. As $B+S\neq A$, there is $c\in A$ such that $(z,c)\notin S$ for any feasible choice of $z$, witnessing that $c\notin\proj_2(T_n)$.
  Therefore there is $j$ such that $T_{j-1} = A^2$, and $T_j \neq A^2$.
  We will show that $T_j$ is central and has strictly larger left center than $S$, which contradicts the choice of $R$. 
  
  By the definition of $T_j$, for every $b \in A$, we have $(a_j,b)\in T_{j-1}$ if and only if $T_j(a_j,b)$, therefore $a_j+T_j = A$. 
  This implies that $\proj_2(T_j)=A$ and that $a_j$ is in the left center of $T_j$.
  Note that every element in the left center of $S$ is in the left center of $T_j$, since for any $a$ in the left center of $S$ we have $a + T_j \supseteq a_j + T_j = A$.
  Note also that $a_j$ does not belong to the left  center of $S$, because this would imply that $T_{j-1}=T_j$.
  Finally, we claim that $\proj_1(T_j)=A$, so $T_j$ is subdirect.
  Indeed, since $S$ has nonempty right center, for any $a\in A$, we can choose the value of $z$ in the definition of $T_j$ to be from the right center. Then $(a,z),(a_1,z),\dotsc,(a_j,z)\in S$, and a value for $y$ can be chosen with $(z,y)\in S$.

  Thus $T_j$ is proper, subdirect, central and pp-definable from $R$. However, its left center is a proper superset of the left center of $R$ since it contains the element $a_j$ which is not in the left center of $R$. This is a contradiction with the choice of $R$. 
\end{proof}

\subsection{Subdirect irredundant subpowers}

In this subsection we prove our main pp-definability result, Theorem~\ref{ff:subdirect}, and we also prove Theorems~\ref{thm:maj} and \ref{thm:semilat} which will give us enough edges for the connectivity theorem. We start with a technical lemma. 

\begin{lemma}\label{fl:long}
    Let $R\subseteq_{sd} A^n$ be a relation and let $I\subseteq [n]$ be an inclusion maximal set of 
    coordinates such that $\proj_I (R)$ is the full product $A^{I}$. 
    Suppose that $R$ does not pp-define $R'\subseteq_{sd} A^2$ which is irredundant and proper. Then every tuple in $R$ is determined by its projection to $I$. 
\end{lemma}
\begin{proof}
    Let $R \subseteq_{sd} A^n$ and $I \subseteq [n]$ form a counterexample minimal with respect to $n$. 
    In particular, $n \geq 2$, $R$ is not the full relation, and no relation pp-definable from $R$, such as the projection to a subset of variables, pp-defines a
    subdirect, irredundant, and proper binary relation.

    First, $|I|$ has to be $n-1$. Indeed, otherwise $|I|<n-1$ and, for any $j \not\in I$, the projection $S$ of $R$ onto $I \cup \{j\}$ does not pp-define a subdirect, irredundant, and proper binary relation. Also, the set of coordinates $I$ is maximal such that $\proj_{I} (S)$ is full. So, by the minimality of $n$, for every tuple $\tuple b \in S$ the value $b_{j}$  is determined by the remaining coordinates. It follows that every tuple $\tuple a \in R$ is determined by the values $a_i$, $i \in I$, which we know is not the case since $R$ and $I$ form a counterexample.

    Without loss of generality assume $I=\{1,\dotsc,n-1\}$.
    Next, we claim that $\proj_{1,\dotsc,n-2,n} (R)$ is full. Indeed, otherwise this projection together with $\{1,2, \dots, n-2\}$ is also a counterexample, contradicting the minimality of $n$. 
    
    Since $R$ is a counterexample, there are elements $a\neq a'$ and
    tuples $(a_1,\dotsc,a_{n-1},a), (a_1,\dotsc$, $a_{n-1},a')\in R$. Also, as $R$ is not full, $(c_1,\dotsc,c_n)\notin R$ for some tuple in $A^n$. Set $T = \{(b_1,\dotsc,b_{n-1})\st (b_1,\dotsc,b_{n-1},c_n)\in R\}$. We show that $T$ together with $I' = \{1,2, \dots, n-2\}$ is a smaller counterexample, thus obtaining a contradiction. 
    Indeed, the relation $T$ is proper, as it does not contain $(c_1,\dotsc,c_{n-1})$. Also, $\proj_{I'}(T)$ is the full relation since $\proj_{1,\dotsc,n-2,n}(R)$ is. It remains to show that tuples in $T$ are not determined by their projection to $I'$.  To this end we consider an auxiliary binary relation $S$ given by $S = \{(a,b)\st (a_1,\dotsc,a_{n-2},a,b)\in R\}$. This relation is subdirect, as both $\proj_I(R)$ and $\proj_{1,\dotsc,n-2,n}(R)$ are full relations. The relation $S$ is also irredundant, because $(a_{n-1},a),(a_{n-1},a')\in S$. By the assumptions about $R$, the relation $S$ cannot be proper. Therefore, the tuples $(a_1,\dotsc,a_{n-2},a,c_n),(a_1,\dotsc,a_{n-2},a',c_n)$ are in $R$, implying that  $(a_1,\dotsc,a_{n-2},a),(a_1,\dotsc,a_{n-2},a')$ are in $T$. 
\end{proof}

We are now in a position to prove Theorem~\ref{ff:subdirect}.

\DefinabilityTheorem*

\begin{proof} \hypertarget{ProofThm47}{}
    First, we argue that $R$ pp-defines some binary or ternary subdirect, proper, and irredundant relation. Let $R'$ be a subdirect proper irredundant relation of minimal arity pp-definable from $R$. Clearly $R'$ cannot be unary and if $R'$ is binary or ternary, we are done. Otherwise observe that the projection of $R'$ onto any proper set of coordinates is the full relation. Let $(a_1,\dotsc,a_k)\notin R'$. Consider the relation $S$ given by 
    \[
    S=\{(x_1,\dotsc,x_{k-1})\st (x_1,\dotsc,x_{k-1},a_k)\in R'\}.
    \] 
    This relation is proper, as $(a_1,\dotsc,a_k)\notin R'$. It is also subdirect, because every binary projection of $R'$ is the full relation. If $S$ is redundant, say, for $i,j\in\{1,\dotsc,k-1\}$ it holds that $\proj_{ij}(S)$ is the graph of bijection, then $\proj_{ijk}(R')$ is a proper relation, a contradiction with the choice of $R'$.
    
    If $R$ pp-defines a binary, subdirect, proper, and irredundant relation, the first item from the conclusion of the theorem holds. So, suppose that such a binary relation cannot be defined. 
    Let $R_1,\dotsc,R_m\subseteq_{sd} A^3$ be all the proper, ternary, irredundant and subdirect relations pp-definable from $R$.
    Any binary projection of each $R_i$ is the full relation, and
    by Lemma~\ref{fl:long}, any tuple from~$R_i$ is determined by any pair of its entries, therefore each $R_i$ is strongly functional.
    It remains to prove that the set $\{R_1,\dotsc,R_m\}$ pp-defines $R$.
    
    We show, by induction on the arity $l$, that any irredundant subdirect relation $S \subseteq_{sd} A^l$ pp-definable from $R_1,\dotsc,R_m,R$ is also pp-definable from $R_1,\dotsc, R_m$. For $l < 3$ the claim follows trivially, so assume $l \geq 3$.
    By Lemma~\ref{fl:long} there is some $I\subseteq\{1,\dotsc,l\}$ such that any tuple $(a_1,\dotsc,a_l)\in S$ is determined by its projection on $I$ and $\proj_I(S)$ is the full relation. Assume $I=\{1,\dotsc,k\}$. If $k+1\neq l$, then 
    \[
    S(x_1,\dotsc,x_k)=\bigwedge_{j=k+1}^l \proj_{I\cup\{j\}}(S(x_1,\dotsc,x_k,x_j)).
    \]
    By the induction hypothesis every $\proj_{I\cup\{j\}}(S)$ is pp-definable from $R_1,\dotsc,R_m$, hence, so is $S$ and we are done. Thus we may assume that $k+1=l$.
    
    \medskip
    \noindent
    {\sc Claim.}
    Each projection $S'=\proj_J(S)$ on $J\subseteq\{1,\dotsc,l\}$ with $|J|<l$ is full or $S$ is pp-definable from $R_1, \dotsc,R_m$. 
    
    \smallskip
    
    Indeed, it is the case if $J\subseteq I$ by the choice of $I$. Suppose for a contradiction that $\proj_J(S')$ is not full. Then we must have $l\in J$, and, as $\proj_{J\setminus\{l\}}(S')$ is the full relation, by Lemma~\ref{fl:long} every tuple from $S'$ is determined by its projection to $J\setminus\{l\}$. This means that $S$ can be obtained from $S'$ by extending the tuples from $S'$ in an arbitrary way, and this is a pp-definition of $S$ from $S'$. By the induction hypothesis $S'$ is pp-definable from $R_1,\dotsc,R_m$, therefore so is $S$.

    \smallskip

    We can therefore assume that the first case in the claim takes place.
    Since $S$ is proper, there is some $(c_1,\dotsc,c_l)\notin S$.
    Set $T=\{(a,b,c)\st (c_1,\dotsc,c_{l-3},a,b,c)\in S\}$. This relation is proper, since $(c_{l-2},c_{l-1},c_l)\notin T$. By the claim we get $\proj_{1,2}(T) = \proj_{1,3} (T) = \proj_{2,3}(T) = A^2$. Thus, $T$ is one of the $R_i$. Consider the relations $U$ and $U'$ given by
    \begin{equation*}
      U(x_1,\dots,x_l,y) = S(x_1,\dotsc, x_l) \wedge T(y,x_{l-1},x_l).
    \end{equation*}
    and 
    \begin{eqnarray*}
      U'(x_1,\dotsc,x_l,y) &=& \proj_{1,\dotsc,l-2, l+1} (U(x_1,\dotsc,x_{l-2},y))\\ 
      && \wedge T(y,x_{l-1},x_l).
    \end{eqnarray*}
    We will show that they are identical. This will imply the result, because, as is easily seen, $S=\proj_{1,\dotsc,l}(U)$, and $U'$ is pp-definable from $R_1,\dotsc,R_m$, as $\proj_{1,\dotsc,l-2, l+1}(U)$ is by the induction hypothesis.
    
    It is not hard to see that $U\subseteq U'$. Next, note that $U''=\proj_{1,\dotsc, l-2,l+1}(U)$ is not full, since $U(c_1,\dotsc,c_{l-3},a,b,c,d)$ imply that $d=a$. On the other hand, $\proj_{1,\dotsc,l-2}(U)$ is full. Therefore, by Lemma~\ref{fl:long}   for any $(a_1,\dotsc,a_l,a)\in U$ the value $a$ is determined by $a_1,\dotsc,a_{l-2}$. 
    
    Take a tuple $(a_1,\dotsc,a_l,a)\in U'$; since $\proj_{1,\dotsc,l-1}(S)$ is the full relation, we have that $(a_1, \dotsc, a_{l-1}, c$, $ d)\in U$ for some $c,d\in A$. Since $(a_1,\dotsc,a_{l-2},a)\in\proj_{1,\dotsc,l-2, l+1}(U)$, and in this relation the last value is determined by the first $l-2$ ones, we have $d=a$. Again by Lemma~\ref{fl:long} the third coordinate of the relation $T$ is determined by the first two coordinates. Therefore, as we have $T(d,a_{l-1},c)$ from the definition of $U$, $T(a,a_{l-1},a_l)$ from the definition of $U'$, and $d=a$, we also obtain $c=a_l$. Thus $(a_1,\dotsc,a_l,a)\in U$, completing the proof.
\end{proof}

As already discussed in Section~\ref{sec:taylor},
Theorem~\ref{ff:subdirect} implies that every algebra $\alg A$ has at least one of the following properties of its invariant relations.
\begin{enumerate}
    \item [(1)] $\alg{A}$ has no proper irredundant subdirect subpowers.
    \item[(2)] $\alg{A}$ has a proper irredundant binary subdirect subpower.
    \item[(3)] $\alg{A}$ has a ternary strongly functional subpower.
\end{enumerate}
Proposition~\ref{prop:ImpliesAbelian} shows that $\alg A$ is abelian in case (3). 

\ImpliesAbelian*

\begin{proof} \hypertarget{ProofProp48}{}
  We define a binary relation $S$ on $A^2$ by
  \begin{align*}
  S((&x_1,x_2),(y_1,y_2))= \exists u,v,u',v': \\
  &R(u, v, x_1) \wedge 
  R(u, v', x_2) \wedge 
  R(u',v, y_1) \wedge
  R(u',v', y_2).
  \end{align*}
  Since a tuple of $R$ is determined by any two coordinates, 
  $x_{1}=x_2$ implies $v=v'$, and this implies $y_1=y_2$, and 
  vice versa.
  Since any binary projection of $R$ is full, 
  for any $(x_{1},x_{2})$ we can choose 
  $u$ and then $v$, $v'$ such that 
  $(u,v,x_1),(u,v',x_2)\in R$. 
  Setting $u'=u$, $y_1=x_1$, and $y_2=x_2$, 
  we see that 
  $S$ is a reflexive relation on $A^{2}$, in particular, the projection of $S$ to any of the two coordinates is $A^2$. Finally, any pair of the form $((x,x),(y,y))$ is in $S$ as witnessed by picking $u$ arbitrarily and then choosing $v=v'$ and $u'$ appropriately.
  It follows that the ``linkedness'' congruence $(S-S)+ \dots + (S-S)$ is a congruence on $\alg A^2$ such that one of its classes is the diagonal $\Delta_A$, so $\alg A$ is abelian.
\end{proof}

The next theorem produces majority edges in case (1) when $\mu_{\alg A}$ is not full.

\GettingMajorityEdges*

\begin{proof} \hypertarget{ProofThm410}{}
We consider the subalgebra $\alg F$ of $\alg A^{A^3}$ formed by all the ternary term operations of~$\alg A$, i.e., $F = \Clo_3(\alg A)$. 
We set $P = \{(a,b) : \Sg_{\alg A}(a,b)=\alg A\}$ and define an equivalence relation $\sim$ on $P$ by $(a,b) \sim (a',b')$ if $t(a,a,b)$ determines $t(a',a',b')$ and vice versa; in other words, if
$\{(t(a,a,b),t(a',a',b')): t \in F\}$ is the graph of a bijection (from $A$ to $A$).
Note that $\sim$ is indeed an equivalence relation on $P$ and that $(a,b) \sim (a',b')$ if, and only if, $t(a,b,a)$ and $t(a',b',a')$ determine each other  (since coordinates of operations in $F$ can be permuted) and this happens if, and only if, $t(b,a,a)$ and $t(b',a',a')$ determine each other. 
Let $P'$ be a subset of $P$ that contains exactly one representative from each $\sim$-class, let
$I = \bigcup_{(a,b) \in P'} \{(a,a,b),(a,b,a),(b,a,a)\}$, and finally let $\alg R \leq \alg A^I$ be the projection of $\alg F$ onto the set of coordinates $I$.

By the definition of $P$, the subpower $R$ is subdirect. We claim that $R$ is irredundant, i.e., that $S=\proj_{(a_1,a_2,a_3),(a'_1,a'_2,a'_3)}(R) = \{(t(a_1,a_2,a_3),t(a'_1,a'_2,a'_3)): t \in F\}$ is not a graph of a bijection for any distinct triples $(a_1,a_2,a_3), (a'_1,a'_2,a'_3)$ in $I$. Indeed, if the position of the nonrepeating element in these two triples is different, then already the pairs $(a_1,a'_1),(a_2,a'_2),(a_3,a'_3)$ (which are in $S$ since $F$ contains the three ternary projections) witness this. On the other hand, if the position of the nonrepeating element is the same, then $S$ is not a graph of a bijection by the definition of $\sim$ and $P'$, and the mentioned equivalent definitions. 

Since $\alg A$ has no subdirect proper irredundant subpowers, we get $R = A^I$. 
In particular, there exists $t \in F$ such that $t(a',a',b')=t(a',b',a')=t(b',a',a') = a'$ for every $(a',b') \in P'$.
It remains to observe that these equalities hold for every $(a,b) \in P$. Indeed, by the definition of $P'$ there exists $(a',b') \in P'$ such that $(a,b) \sim (a',b')$. We have $\proj^3_1(a',a',b')=a'$ and $\proj^3_1(a,a,b)=a$, therefore $t(a',a',b')=a'$ implies $t(a,a,b)=a$. By a similar argument we get $t(a,b,a)=a=t(b,a,a)$ and the proof is concluded. 
 \end{proof}

\noindent 
In case (2) and when $\alg A$ is simple and $\mu_{\alg A}$ is not full we get semilattice edges by the following theorem.

\GettingSemilatticeEdges*

\begin{proof} \hypertarget{ProofThm411}{}
  We assume that $\mu_{\alg A}$ is not full, otherwise the theorem is vacuously true.
  We prove the conclusion via a sequence of claims. 
  We start with a basic observation: if $R\sd \alg A^2$ and 
  $(a,b),(a',b) \in R$ for some $(a,a')\notin\mu_{\alg A}$ then $b$ is in the right center of $R$ (since $\Sg_{\alg A}(a,a')=A$).
  \begin{claim}\label{claim:one}
    Every irredundant, proper $R\sd\alg A^2$ is central.
  \end{claim}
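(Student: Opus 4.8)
The plan is to exploit the standing hypotheses that $\alg A$ is simple, $|A|>2$, and (crucially) that a proper irredundant subdirect binary subpower exists at all; the last fact, combined with simplicity, forces \emph{every} proper irredundant subdirect binary relation to be linked (recall from Subsection~\ref{subsec:app_prelim_pp} that in a simple algebra an irredundant $R\sd\alg A^2$ that is not linked is the graph of an automorphism, hence redundant). So any $R$ as in the claim is linked, and I can feed it to Proposition~\ref{prop:center} to obtain, from $R$, a pp-definable proper subdirect central relation $S$ on $A$ which is symmetric or transitive. The goal is then to transfer centrality back from $S$ to $R$ itself.

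First I would handle the transitive case: if $S$ is transitive, proper, subdirect on $A$, then iterating $S\mapsto S+S$ and using simplicity (the ``linkedness congruence'' is either the equality, impossible since $S$ is not a bijection graph as it is not irredundant-or-we-argue-directly, or the full relation) one sees $S$ stabilizes at a proper transitive relation which, being subdirect and transitive on a simple algebra, has a nonempty left (and, by the symmetric argument on $-S$, right) center; in fact a proper subdirect transitive relation on a simple algebra is a strict partial order union its reflexive points, and its top/bottom classes are the centers. In the symmetric case $S$ already has nonempty center by Proposition~\ref{prop:center}. Either way I obtain a pp-definable proper subdirect \emph{central} relation; then, using the basic observation stated just before the claim (if $(a,b),(a',b)\in R$ with $(a,a')\notin\mu_{\alg A}$, then $b$ is in the right center of $R$), together with $|A|>2$, I argue that the left/right centers of the pp-definable central relations can be ``pulled into'' $R$: concretely, I would show that $R$ composed with a suitable central relation on both sides (a pp-definition from $R$ alone) is again irredundant, proper, subdirect, and now has nonempty left and right center, and then use irredundancy to conclude $R$ itself was already central --- or, more likely, restate the claim's role so that it is really the \emph{existence} of such a central $R$ that is needed and prove that directly.

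The cleanest route, which I expect the paper actually takes, is: among all proper irredundant subdirect binary subpowers choose $R$ with the largest left center; show that if this left center were empty then Proposition~\ref{prop:center} applied to $R$ (linked, proper) produces a pp-definable proper subdirect symmetric central relation $T$, and then $T$ itself, or $T$ intersected/composed with $R$, is a proper irredundant subdirect binary subpower with strictly larger left center (here irredundancy is preserved because $|A|>2$ and the center already forces $T$ not to be a bijection graph), a contradiction; similarly for the right center. Hence the chosen $R$ is central, and --- since any proper irredundant subdirect binary subpower can be composed with this central $R$ on both sides, the composition being pp-definable, proper (by the basic observation and a short argument using that the center of $R$ is a proper subuniverse by simplicity), irredundant, subdirect, and central --- we deduce that \emph{every} such relation is central.

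The main obstacle is the bookkeeping in the inductive/extremal step: verifying that passing from $R$ to the pp-definable central relation output by Proposition~\ref{prop:center}, or to a composite of $R$ with it, genuinely \emph{preserves irredundancy and properness} while \emph{strictly enlarging} the center. Properness can be lost under composition, and irredundancy under intersection, so one has to use $|A|>2$ and simplicity carefully (a proper subuniverse of a simple algebra that arises as a center cannot be all of $A$, and a bijection graph has no proper center), exactly as in the proof of Proposition~\ref{prop:center}. Everything else --- linkedness from simplicity, the reduction of the symmetric/transitive dichotomy --- is routine given the results already in the excerpt.
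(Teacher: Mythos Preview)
Your proposal misses the direct argument and takes an unnecessary detour that you do not actually close.

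The paper's proof is two lines. Since $\alg A$ is simple and $R$ is irredundant and subdirect, $R$ is linked. Because $R$ is linked, the relation $R-R$ (``share a common right neighbor'') generates all of $A^2$ as an equivalence; if every fiber $\{b\}-R$ were contained in a single $\mu_{\alg A}$-class, then $R-R\subseteq\mu_{\alg A}$ and hence $\mu_{\alg A}=A^2$, contradicting the standing assumption that $\mu_{\alg A}$ is not full. So there is some $b$ with $(a,b),(a',b)\in R$ and $(a,a')\notin\mu_{\alg A}$, and the ``basic observation'' stated immediately before the claim gives that $b$ is in the right center of $R$. The left center is handled symmetrically. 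That is the entire proof: no Proposition~\ref{prop:center}, no extremal choice, no compositions.

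Your plan treats the basic observation as a secondary tool and instead invokes Proposition~\ref{prop:center} to produce an auxiliary central relation $S$, then tries to ``transfer centrality back'' to $R$. There are two problems. First, the extremal step (choose $R$ with largest left center, enlarge via $S$) is exactly the machinery inside the proof of Proposition~\ref{prop:center}; re-running it here only re-proves that \emph{some} pp-definable relation is central, not that the given $R$ is. Second, and more seriously, your final move---compose an arbitrary irredundant proper $R'$ with the central $R$ and deduce $R'$ is central---does not follow: centrality of $R'+R$ or $R+R'+R$ says nothing about the centers of $R'$ itself. You correctly flag the bookkeeping here as the ``main obstacle,'' but in fact it is not an obstacle that can be overcome along these lines; the approach simply does not prove the claim for \emph{every} such $R$. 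The fix is to drop Proposition~\ref{prop:center} entirely and use linkedness together with the basic observation directly, as above.
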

  \begin{proof}
    Since $\alg A$ is simple, $R$ needs to be linked.~
    Therefore, since $\mu_{\alg A}$ is not the full relation, we have $(a,b),(a',b)\in R$
    for some $(a,a')\notin\mu_{\alg A}$ which implies that $b$ is in the right center of $R$.
    The proof for left center is symmetric.
  \end{proof}
  \begin{claim}
    Let $R\sd\alg A^2$ be irredundant and proper. 
    If $a$ and $a'$ are in the left~(right) center of $R$ then $(a,a')\in\mu_{\alg A}$.
  \end{claim}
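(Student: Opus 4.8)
The plan is to argue by contradiction, and in fact the only ingredients needed are that $\alg A$ is idempotent and that $R$ is proper; the irredundancy hypothesis, carried along from the preceding claims, will not be used here. So suppose $a$ and $a'$ both lie in the left center of $R$, meaning $(a,y)\in R$ and $(a',y)\in R$ for every $y\in A$, and assume toward a contradiction that $(a,a')\notin\mu_{\alg A}$, i.e.\ $\Sg_{\alg A}(a,a')=A$.

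The key step is that a ``full row'' propagates along every term operation. Fix an arbitrary pair $(u,v)\in A^2$. Since $\Sg_{\alg A}(a,a')=A$, by the standard description of a generated subalgebra we may write $u=t(c_1,\dots,c_n)$ for some term operation $t$ of $\alg A$ and some $c_1,\dots,c_n\in\{a,a'\}$. Each pair $(c_i,v)$ lies in $R$ because $a$ and $a'$ are in the left center, and applying $t$ coordinatewise inside the subuniverse $R\le\alg A^2$, together with idempotency of $t$, yields
\[
t\big((c_1,v),\dots,(c_n,v)\big)=\big(t(c_1,\dots,c_n),\,t(v,\dots,v)\big)=(u,v)\in R.
\]
As $(u,v)$ was arbitrary, $R=A^2$, contradicting that $R$ is proper. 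Hence $\Sg_{\alg A}(a,a')\neq A$, that is, $(a,a')\in\mu_{\alg A}$, as claimed.

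For the right-center version of the statement I would simply apply the same argument to $-R$: its left center is exactly the right center of $R$, and $-R$ is again a proper subdirect subpower of $\alg A^2$ (and irredundant whenever $R$ is), so the conclusion transfers verbatim.

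I do not anticipate a genuine obstacle: the proof is short, and the only point that needs a moment's care is the elementary observation that $\Sg_{\alg A}(a,a')$ consists precisely of the values $t(c_1,\dots,c_n)$ with all $c_i\in\{a,a'\}$ and $t$ ranging over term operations of $\alg A$, which is the routine description of a subalgebra generated by a (nonempty) set.
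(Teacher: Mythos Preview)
Your proof is correct and follows essentially the same approach as the paper: the paper packages the key step as a ``basic observation'' (if $(a,b),(a',b)\in R$ with $(a,a')\notin\mu_{\alg A}$ then $b$ lies in the right center of $R$) and then notes that every $b$ is in the right center, which is exactly the argument you unfold explicitly.
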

  \begin{proof}
    Suppose not.
    Then for every $b\in A$ we have $(a,b),(a',b)\in R$ which implies that $b$ is in the right center.
    This cannot happen in a proper $R$.
  \end{proof}
  \begin{claim}\label{claim:samecenter}
    Let $R\sd\alg A^2$ be irredundant and proper. 
    If $a$ is in the left center of $R$ and $a'$ is in the right center then $(a,a')\in\mu_{\alg A}$.
  \end{claim}
  \begin{proof}
    Suppose, for a contradiction, that $a$ is in the left center of $R$, $a'$ is in the right center, and $(a,a')\notin\mu_{\alg A}$.
    Since $(a,a),(a',a')\in R$ then, for every $b\in A$, we have $(b,b)\in R$.
    
    Consider any $a''$ such that $(a'',a')\notin\mu_{\alg A}$. Then
    $a''$ is in the left center of $R$, as both $(a'',a'')$ and $(a'',a')$ are in $R$.
    This implies, by the previous claim, that $(a'',a)\in\mu_{\alg A}$. 
    
    In particular we conclude that $\mu_{\alg A}$ has two equivalence classes,
    and that the left center is equal to $a/\mu_{\alg A}$. By a symmetric argument, the right center is equal to  $a'/\mu_{\alg A}$. Altogether, $\mu_{\alg A}$ is the equivalence with classes $a/\mu_{\alg A}$ and $a'/\mu_{\alg A}$ which are the left and right centers, respectively.

    Also note that no pair $(b,b')$ with $b \not\in a/\mu_{\alg A}$ and $b' \not\in a'/\mu_{\alg A}$ is in $R$: otherwise $(b,b), (b,b') \in R$ implies that $b$ is in the left center, which is not the case. 
    Therefore  $R = (a/\mu_{\alg A}\times A) \cup (A\times a'/\mu_{\alg A})$,
    but then $R\cap -R = \mu_{\alg A}$ is a proper congruence (recall $|A|>2$)  on a simple algebra $\alg A$,  a contradiction.
  \end{proof}

  So far we have shown that each subdirect, irredundant, and proper $R$ is central and both centers are contained in a single $\mu_{\alg A}$-class $B$. We refer to $B$ as the \emph{central} $\mu_{\alg A}$-class of $R$.

  \begin{claim}\label{claim:moving}
    Let $R\sd\alg A^2$ be irredundant and proper, $B$ its central $\mu_{\alg A}$-class, and $S\sd\alg A^2$ redundant. Then $B+S = B$.
  \end{claim}
  \begin{proof}
     The relation $R+S$ is clearly irredundant and proper, its left center is equal to the left center of $R$, and its right center is $C+S$, where $C$ is the right center of $R$. Therefore the central $\mu_{\alg A}$-class of $R+S$ is $B$ and $C+S \subseteq B$.
     
     The relation $S$ is the graph of an isomorphism from $\alg A$ to $\alg A$. Since isomorphism map subuniverses to subuniverses, it maps each $\mu_{\alg A}$-class to a $\mu_{\alg A}$-class. In particular $B+S$ is a $\mu_{\alg A}$-class and since it contains $C+S \subseteq B$, we get $B+S=B$.
  \end{proof}
  
  \begin{claim}\label{claim:last}
    Let $R, S\sd\alg A^2$ be irredundant and proper and $B,C$ their central $\mu_{\alg A}$-classes, respectively. Then $B=C$. 
  \end{claim}
  \begin{proof}
    Suppose, for a contradiction, that $B\neq C$.
    Let $b,b'$ be some elements of the left and right centers of $R$, respectively,
    and similarly $c,c'$ for $S$.
    Let $T = R \cap S$,
    and note that $(b,c'),(c,b')\in T$.
    As $(b,c)\notin\mu_{\alg A}$ and $(c',b')\notin\mu_{\alg A}$, the relation $T$ is subdirect in $A^2$.

    Since $B\cap C = \emptyset$ the relation $T$ has no center, and thus by Claim~\ref{claim:one} needs to be redundant.
    The previous claim  implies  $c' \in (B+T)\cap C = B \cap C = \emptyset$, a contradiction. 
  \end{proof}
    
  To finish the proof we take a proper irredundant subdirect binary relation $R$ provided by the assumption
  and set $B$ to be its central $\mu_{\alg A}$-class.
  Take any $b\in B,a\notin B$ and let $S= \Sg_{\alg A^2}((a,b),(b,a))$.
  The relation $S$ is subdirect (as $(a,b), (b,a) \in S$) and cannot be redundant as $a\in B+S$ would contradict Claim~\ref{claim:moving}.
  Thus, by Claim~\ref{claim:one} and Claim~\ref{claim:last}, there is $b'\in B$ in the left center of $S$.
  Since $(b',b),(a,b)\in S$ we conclude that $b$ is in the right center of $S$, in particular $(b,b)\in S$ and we are done as witnessed by the operation generating $(b,b)$ from the generators $(a,b)$, $(b,a)$.
\end{proof}

We are ready to derive the promised refinements of the fundamental theorems for Zhuk's and Bulatov's approach as corollaries.

\ZhukFundamentalRefined*

\begin{proof} \hypertarget{ProofCor412}{}\hypertarget{ProofThm37}{}
  Let us first assume that $\alg A$ is simple. 
  If case (d) does not apply, then there exists a proper irredundant subdirect subpower of $\alg A$. By Theorem~\ref{ff:subdirect}, such a relation pp-defines a strongly functional ternary relation or a proper irredundant subdirect binary relation $R$. The former situation leads to case (c) via Proposition~\ref{prop:ImpliesAbelian}. In the latter case, the binary relation $R \leq_{sd} \alg{A}^2$ is linked (as $\alg A$ is simple) and thus it defines a proper left central relation $S \leq_{sd} \alg A^2$ by Proposition~\ref{prop:linked-implies-central}. 
  Now either $\alg A$ has a  nontrivial 2-absorbing subuniverse and we are in case (a) or $S$ witnesses that we are in case (b).
  
  In the general case, take a maximal congruence $\alpha$ of $\alg A$, apply what we have already proved to the simple algebra $\alg A/\alpha$, and
  observe that 2-absorbing subuniverse and centers can be lifted: if $R$ is a 2-absorbing (central) subuniverse of $\alg A/\alpha$, then the preimage of $R$ under the quotient map $A \to A/\alpha$ is a 2-absorbing (central) subuniverse of $\alg A$.
  \end{proof}

\BulatovFundamentalThm*

\begin{proof} \hypertarget{ProofThm32}{}
  We prove the claim by induction on the size of $\alg A$.   If $\alg A$ has two elements the result follows from the classification of Boolean clones by Post~\cite{post41:clones}, so we assume $|A| \geq 3$. 
  
  Suppose that $\mu_{\alg A}$ is full, that is, each pair of elements in $A$ is connected by proper subuniverses. Since all the proper subalgebras of $\alg A$ have connected directed graphs of edges by the induction hypothesis, and these edges are trivially also edges in $\alg A$, it follows that the digraph of edges of $\alg A$ is connected. 
  Suppose further that $\mu_{\alg A}$ is not full.
  
  If $\alg A$ is simple, then 
   each equivalence class of $\mu_{\alg A}$ is connected by edges by the induction hypothesis. We apply Theorem~\ref{ff:subdirect} together with Proposition~\ref{prop:ImpliesAbelian}, Theorem~\ref{thm:maj}, or Theorem~\ref{thm:semilat} to conclude
  that either $\alg A$ is abelian and every pair is an abelian edge,
  or every pair $(a,a')\notin\mu_{\alg A}$ is a  majority edge, 
  or that there exists a $\mu_{\alg A}$-class $B$ such that every pair $a\notin B, b \in B$
  forms a semilattice edge $(a,b)$~%
  (and in every case the witnessing congruence is the identity congruence). 
  
  If $\alg A$ is not simple we consider any
  maximal congruence $\alpha$ on $\alg A$. 
    By the induction hypothesis  the congruence classes of $\alpha$ 
  as well as $\alg A/\alpha$ have connected directed graphs of edges.
  But since any edge $(a/\alpha,b/\alpha)$ in $\alg A/\alpha$ witnessed by
  $\theta$ on $\Sg_{\alg A/\alpha} (a/\alpha,b/\alpha)$ gives rise to edge $(a,b)$ in $\alg A$ witnessed by the lifted congruence $\theta'$ on $\Sg_{\alg A}(a,b)$, the proof is concluded in the nonsimple case as well.
\end{proof}

\BulatovFundamentalMinimal*

\begin{proof} \hypertarget{ProofCor413}{}
  Let $\alg A$ be a minimal counterexample to the theorem.
  By Theorem~\ref{thm:connected}, $A$ is connected by edges
  and it suffices to show that each pair of elements connected by an edge is connected by minimal edges.
  Let $(a,b)$ be an edge in $\alg A$ and let $\theta$ be a maximal
  congruence on $\Sg_{\alg A}(a,b)$
  witnessing the edge.
  The equivalence classes of $\theta$ are connected by minimal edges by the minimality of $\alg A$.
  Choose $a',b'$ such that $(a,a'),(b,b')\in\theta$ and $\alg B =\Sg_{\alg A}(a',b')$ is inclusion minimal.
  If $B=A$, then $(a,b)$ is a minimal edge and we are done.
  Otherwise $a'$ and $b'$ are connected by minimal edges by the minimality of $\alg A$, and then so are $a$ and $b$.
\end{proof}

\section{Proofs for Section~\ref{sec:mtalgebras}: Minimal Taylor algebras}
\label{app:mtalgebras}
This section  contains proofs of the theorems from Section~\ref{sec:mtalgebras}
together with a number of lemmas  mentioned there, or needed for the proofs. The final subsection gives proofs of the claims in examples in Section~\ref{sec:mtalgebras}.

\subsection{General facts}

\XXXpropTMexist*

\begin{proof} \hypertarget{ProofProp52}{}
    Let $\alg A$ be a Taylor algebra and $p>|A|$ be a prime number.
    By Theorem~\ref{thm:cyclic} the algebra~$\alg A$ has a cyclic term operation of a prime arity $p$.
    Consider a family of clones generated by such cyclic operations~%
    (the family is finite),
    and choose a term $t$ which defines a minimal (under inclusion)
    clone in this family.
    The algebra $(A;t)$ is clearly a Taylor reduct of $\alg A$.
    If $(A;t)$ had a proper Taylor reduct then, by Theorem~\ref{thm:cyclic},
    it would have a $p$-ary cyclic term operation. 
    This contradicts the choice of $t$ and shows that $(A;t)$ is a minimal Taylor algebra.
\end{proof}

\XXXtmsubs*

\begin{proof} \hypertarget{ProofProp53}{}
    Choose a prime number $p>|A|$.
    The set $B$ together with the restriction of $f$ to $B$, call it $f'$, forms a Taylor algebra.
    By Theorem~\ref{thm:cyclic} 
    $(B;f')$ has a cyclic operation of arity $p$.
    This operation is defined by a term in $f'$, and we let $h$ to be the operation of $\alg A$ defined by the same term after replacing $f'$ by $f$.
    The set $B$ is closed under $h$ and $h$ is a cyclic operation on $B$.
    Let~$s$ be a cyclic composition of $h$ and a cyclic operation of $\alg A$ of arity $p$ denoted by $t$~%
    ($t$ exists by Theorem~\ref{thm:cyclic}).
    The operation $s$ is cyclic and, more importantly,
    preserves $B$ as $t$ is idempotent and $s$ is cyclic on~$B$. 
    Since $\alg A$ is minimal Taylor, 
    $\Clo(\alg A)= \Clo(A;s)$ and $B$ is a subuniverse of~$\alg A$.
\end{proof}

\XXXhsp*

\begin{proof} \hypertarget{ProofProp54}{}
    For finite powers the claim follows from the definition of a power.
    Let $\alg A$ be a minimal Taylor algebra
    and let $\alg B$ be a subalgebra or quotient of $\alg A$. 
    We choose a prime number $p>|A|$ and a $p$-ary cyclic term operation $t$ of $\alg A$.
    Using Theorem~\ref{thm:cyclic} and Proposition~\ref{prop:TMexist} we find $s \in \Clo(\alg A)$ such that $B$ 
     together with the corresponding term operation $s^{\alg B}$ of $\alg B$ is minimal Taylor.
    Then the cyclic composition $h$ of $t$ and $s$ is a cyclic operation on $\alg A$ and the corresponding $h^{\alg B}$ coincides with $s^{\alg B}$. 
    Since $\alg A$ is minimal Taylor, we have 
    $\Clo(\alg A)= \Clo(A;h)$ 
    and therefore 
    $\Clo(\alg B) = \Clo(B;h^{\alg B}) = \Clo(B; s^{\alg B})$, 
    which completes the proof.
\end{proof}

The following, additional, proposition is proved in a similar way.

\begin{proposition}\label{prop:cyclic_minors}
    Any term operation of a minimal Taylor algebra $\alg A$ can be obtained by identifying and permuting coordinates (and adding dummy coordinates) of a cyclic term operation of $\alg A$.
\end{proposition}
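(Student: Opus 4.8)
The plan is to mimic the arguments already used for Proposition~\ref{ff:TMsubs} and Proposition~\ref{prop:MinimalVariety}, exploiting the key feature of a minimal Taylor algebra: its clone is generated by any single cyclic term operation of a sufficiently large prime arity (Theorem~\ref{thm:cyclic}). The target is to show that an arbitrary term operation $s \in \Clo(\alg A)$ is a \emph{minor} of a cyclic term operation, i.e., can be obtained from a cyclic term operation by identifying coordinates, permuting coordinates, and adding inessential coordinates.

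First I would fix a prime $p > |A|$ and, by Theorem~\ref{thm:cyclic}, pick a $p$-ary cyclic term operation $t$ of $\alg A$. Since $\alg A$ is minimal Taylor, $\Clo(\alg A) = \Clo(A;t)$, so the given $s$ is a term operation built from $t$ (and projections). The next step is to observe that any term built from $t$ can be rewritten as a \emph{star composition} of several copies of $t$ followed by identifying and permuting variables: this is the standard fact that an idempotent term is determined by a ``tree'' of applications of the basic operation, and flattening that tree into a single star composition (as recalled in the appendix subsection on star and cyclic composition) followed by a substitution of variables reproduces $s$. Because the star composition of cyclic operations of arities $p, p, \dots, p$ is again cyclic (of arity $p^k$), we conclude that $s$ is obtained from a cyclic term operation $t' := t * t * \cdots * t$ by identifying and permuting coordinates (and possibly adding dummy coordinates to match arities). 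This already proves the statement, since $t'$ is a cyclic term operation of $\alg A$.

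I expect the only slightly delicate point — and hence the ``main obstacle'' — to be the bookkeeping in the second step: carefully checking that an arbitrary term word over the single symbol $t$ can indeed be normalised to (star composition of copies of $t$) followed by a coordinate substitution, rather than requiring genuinely nested substitutions that cannot be flattened. This is routine once one uses idempotence to pad all leaves of the term tree to the same depth and then reads off the resulting depth-$k$ balanced tree as a $k$-fold star composition; the variable repetitions and permutations at the leaves encode exactly the identifications and permutations of coordinates. No new ideas beyond those already present for Proposition~\ref{ff:TMsubs} and Proposition~\ref{prop:MinimalVariety} are needed.
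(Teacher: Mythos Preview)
Your proposal is correct and follows essentially the same route as the paper: pick a cyclic $t$ generating $\Clo(\alg A)$ by minimality, observe that star compositions of $t$ with itself are cyclic, and use idempotence to flatten any term in $t$ into a balanced tree (i.e., an iterated star composition) followed by variable identification and permutation. The paper's proof is just the two-sentence version of exactly this argument.
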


\begin{proof}
    Since $\alg A$ is minimal Taylor, $\Clo(\alg A) = \Clo(A;t)$ for any cyclic operation $t \in \Clo(\alg A)$ (which exists by Theorem~\ref{thm:cyclic}). 
    The claim now follows by noting that the star composition of  cyclic operations is a cyclic operation and that, since $t$ is idempotent, any term operation defined by a term in the symbol $t$ can be defined by star composing $t$ multiple times and then permuting and identifying coordinates.
\end{proof}

\subsection{Absorption}

We begin by proving an auxiliary lemma.

\begin{lemma}\label{almostBinaryAbsorption}
    Suppose $\alg A$ is a minimal Taylor algebra, 
    $\emptyset \neq B\subsetneq C\subseteq A$, and
    $\Sg_{\algA}(C^{n}\setminus B^{n})\cap 
    B^{n} = \varnothing$ for every $n$.
    Then 
    for every $f\in \Clo_{n}(\alg A)$ and every essential coordinate $i$ of $f$ we have 
    $f(\tuple a)\notin B$ whenever 
    $\tuple a\in C^{n}$ is such that $a_i\in C\setminus B$.
\end{lemma}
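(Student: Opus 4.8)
\textbf{Proof plan for Lemma~\ref{almostBinaryAbsorption}.}

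The plan is to argue by contradiction and exploit the minimality of $\alg A$ together with the hypothesis on $C$ to extract a nontrivial absorption inside a suitable subalgebra, contradicting the disjointness assumption $\Sg_{\algA}(C^n\setminus B^n)\cap B^n=\varnothing$. So suppose $f\in\Clo_n(\alg A)$ has an essential coordinate $i$, and there is $\tuple a\in C^n$ with $a_i\in C\setminus B$ but $f(\tuple a)\in B$. I would first reduce to a cleaner situation: WLOG $i=1$, and by taking the term operation $g$ obtained from $f$ by fixing the coordinates other than $1$ to the appropriate constants from $\tuple a$ (legal since the algebra is idempotent and we are allowed parameters), we may assume we have a \emph{unary} polynomial-like behaviour; more precisely I want to reformulate the obstruction as a binary term operation. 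The right object to look at is the subpower $\alg F\leq\alg A^{A^n}$ of all $n$-ary term operations (the free algebra on $n$ generators), as set up in Subsection~\ref{subsec:app_prelim_pp}, and, within it, the behaviour of operations on tuples drawn from $C$ versus from $B$.

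The key step is to use the characterization of absorption by $B$-essential relations (Proposition~\ref{prop:abs_blockers}) in the ``relative'' form suggested by the hypothesis. Consider, for the fixed $n$, the set $T=C^n\setminus B^n$ and the subalgebra $\alg D=\Sg_{\alg A^n}(T)$. By hypothesis $\alg D\cap B^n=\varnothing$. Now the existence of $f$ and $\tuple a$ as above produces, I claim, a term operation witnessing that the projection of $\alg D$ onto some coordinate (or a pair of coordinates) behaves like a nontrivial absorption: concretely, I would use $f$ (and its cyclic-shift companions, available because in a minimal Taylor algebra every term operation is a minor of a cyclic one by Proposition~\ref{prop:cyclic_minors}) to build an operation that, applied to tuples of $\alg D$ all but one of which lie in $B$ on coordinate $1$, lands inside $B$ on coordinate $1$ — yet the output tuple still lies in $\alg D$, hence outside $B^n$, which forces it to be outside $B$ on \emph{every} coordinate, in particular coordinate $1$. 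That is the contradiction. The delicate point is bookkeeping the coordinates: applying $f$ coordinatewise to tuples from $\alg D$ keeps us in $\alg D$, so I must arrange that the inputs on coordinate $1$ are exactly the ``bad'' configurations $B,\dots,B,(C\setminus B)$ while simultaneously the ambient tuples stay generators-of-$\alg D$ on the other coordinates; idempotence lets me pad with constant tuples from $T$.

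I expect the main obstacle to be exactly this coordinate/padding argument — making precise how one single witness $(f,\tuple a)$, together with the cyclic structure, yields a genuine $B$-essential-type configuration inside $\alg D$ rather than merely a single bad evaluation. A clean way around it is probably to first prove the contrapositive statement at arity $n$ by downward induction on the arity of the minimal counterexample operation, mimicking the structure of the proof of Proposition~\ref{prop:getting3abs}: pick $f$ of least arity violating the conclusion, fix variables to reduce, and show the reduced relation is a properly smaller $B$-essential-type relation, contradicting minimality. Either route reduces the lemma to combinatorics of pp-definitions already developed in Appendix~A, so once the reduction is set up correctly the remaining steps are routine.
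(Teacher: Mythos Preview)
Your argument has a logical slip that breaks the contradiction. You write that the output tuple lies in $\alg D=\Sg_{\alg A^n}(C^n\setminus B^n)$, ``hence outside $B^n$, which forces it to be outside $B$ on \emph{every} coordinate.'' But being outside $B^n$ only means \emph{at least one} coordinate is outside $B$; there is no contradiction with coordinate $1$ landing in $B$. So the core step of your plan does not go through, and neither the $B$-essential reduction nor the downward-induction alternative you sketch repairs this, because both ultimately rely on the same mistaken inference.

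The paper's proof is much shorter and more direct, and it uses Proposition~\ref{prop:cyclic_minors} as the \emph{main} tool rather than as a side remark. First one checks the conclusion for a cyclic $t$ of arity $p$: if $\tuple a\in C^p\setminus B^p$, form the $p\times p$ matrix whose rows are the cyclic shifts of $\tuple a$. Every row lies in $C^p\setminus B^p$, and every column is also a cyclic shift of $\tuple a$, so applying $t$ coordinatewise to the rows yields the constant tuple $(t(\tuple a),\dots,t(\tuple a))\in\Sg_{\alg A^p}(C^p\setminus B^p)$; by hypothesis this tuple is not in $B^p$, hence $t(\tuple a)\notin B$. Second, the property ``$f(\tuple a)\notin B$ whenever $a_i\in C\setminus B$ at an essential coordinate $i$'' is clearly preserved under permuting, identifying, and adding dummy coordinates. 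Since every term operation of a minimal Taylor algebra is such a minor of a cyclic one (Proposition~\ref{prop:cyclic_minors}), the lemma follows. There is no contradiction argument, no free-algebra machinery, and no $B$-essential relations.
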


\begin{proof}
    Any cyclic term operation satisfies the required property (by using the compatibility with $\Sg_{\algA}(C^{n}\setminus B^{n})$ on cyclic permutations of $\tuple{a}$) and the property is stable under identifying and permuting coordinates (and introducing dummy ones).
    The claim now follows from Proposition~\ref{prop:cyclic_minors}.
\end{proof}

\XXXabssub*

\begin{proof} \hypertarget{ProofThm55}{}
Let $f$ be a witness for $B$ absorbing $\alg A$ and assume, for a contradiction, that 
$B$ is not a subuniverse. 

Let $\alg{A}'$ be a reduct of $\alg A$ 
with all the operations from  $\Clo(\alg A)$ that preserve $B$. 
Since $\alg{A}'$ is a proper reduct, it is not a Taylor algebra, and so some quotient of a subalgebra of $\alg{A}'$ is a two-element algebra such that every operation restricts to a projection (see Proposition~\ref{prop:taylor-without-powers}).
Let $B_{0}$ 
and $B_{1}$ be the congruence classes in this quotient,
clearly
every operation $t$ from $\Clo(\alg A')$ 
acts like a projection on $\{B_{0},B_1\}$.
Note that $f$ preserves $B$, therefore it has this property and we assume, without loss of generality, that $f$ acts like the first projection on $\{B_0,B_1\}$.
 
It follows from the previous paragraph that, for every $n$, the relation $S_{n} = (B_{0}\cup B_{1})^{n}\setminus 
B_{0}^{n}$ (just like any other relation ``built'' out of the blocks $B_0$ and $B_1$) is compatible with every operation  from $\Clo(\alg A')$ and is thus pp-definable from 
$\Inv(\alg A)$ and $B$.
Let $T_{n}$ be the relation 
defined by the same pp-definition 
with each conjunct $B(x)$ replaced by the void $A(x)$. 
Since $B$ absorbs $\alg A'$ by $f$, we also know that  
$S_{n}$ absorbs $T_{n}$ by the same $f$ (see Lemma~\ref{lem:absorption_and_pp}) 

Suppose that $T_{n}\cap B_{0}^{n}\neq\emptyset$ and 
choose $\tuple a\in T_{n}\cap B_{0}^{n}$
and $\tuple b\in B_{1}^{n}$.
Then $f(\tuple a,\tuple b,\dots,\tuple b)$ belongs to $B_{0}^{n}$ because $f$ 
acts like the first projection on $\{B_0,B_1\}$, and it also belongs to $S_{n}$
because $S_{n}$ absorbs $T_{n}$ by $f$.
This contradiction shows that 
$T_{n}\cap B_{0}^{n}=\emptyset$ for every $n$.
Note that $S_{n}\subseteq T_{n}$ and $T_{n}$ is a subpower of $\algA$, 
hence 
$\Sg_{\algA^n}((B_{0}\cup B_{1})^{n}\setminus B_{0}^{n})\cap 
B_{0}^{n}=\emptyset$ for every $n$.
We apply Lemma~\ref{almostBinaryAbsorption} 
and consider the behaviour of $f$:
it preserves $B_0\cup B_1$ and has at least two essential coordinates~%
(since it is a witness for a nontrivial absorption)
and therefore  cannot act like a projection on $\{B_0,B_1\}$ --- this is a contradiction.
\end{proof}

\XXXtwoabs*

\begin{proof} \hypertarget{ProofThm57}{}
  We begin with the implications that were discussed in the main part of the paper.
  The implications from  (d) to (c)
  and from (c) to (b) hold for all algebras.
  The first one is trivial, and the second one follows immediately from the definitions.
  The implication from (b) to (a) fails in a trivial clone, but holds in Taylor algebras.
  To see this we assume (b) and let $t$ be a cyclic term operation of $\alg A$ of arity $2k+1$.
  Define $f(x,y)=t(x,\dotsc,x,y,\dotsc y)$ where $x$ appears exactly $k+1$ times. 
  It is easy to see that $f$ witnesses the 2-absorption of $B$.
  Indeed, take any $a\in A, b\in B$ and let $\tuple c_1 = (a,\dotsc,a,b,\dotsc,b)$ with exactly $k+1$ $a$'s.
  Let $\tuple c_2$ be a cyclic shift of $\tuple c_1$ by $k$ positions and $\tuple c_3$ by $k+1$ positions. 
  Clearly $(t(\tuple c_1), t(\tuple c_2), t(\tuple c_3))\in R$ and, as it is a constant tuple, we conclude that $t(\tuple c_1) = f(a,b)\in B$.
  The case of $f(b,a)$ is similar, so we are done showing that (b) implies (a).
  
  For the implication from (a) to (d) 
    let $g$ be a binary operation witnessing the $2$-absorption of $B$ into $A$
    and let $t \in \Clo(\alg A)$ be a  cyclic operation of arity $p$. 
    Define $h(x_1,\dotsc,x_p)$ 
    as
    \begin{equation*}
    g(\dotsb g(g(x_1,x_2),x_3),\dotsc x_p)
    \end{equation*}
    and note that $h(a_1,\dotsc, a_p)\in B$ whenever at least one of the $a_i$ is in $B$.
    The same property holds for the cyclic composition, call it $s$, of $t$ and $h$.
    To see this, take a $p$-tuple $\tuple a$ which, in some position, has an element from $B$.
    In the subterms of $s$ the operation $h$ is applied to cyclic shifts of $\tuple a$, and
    every time the result is in $B$.
    Then $t$ is applied to elements from $B$ and,
    by Theorem~\ref{thm:AbsIsSubuniv},
    the result is in $B$ as well.
    The operation $s$ is cyclic and therefore generates the whole clone of $\alg A$. 
    Thus if any variable appearing in a term $f$ build from $s$ (in particular, a variable corresponding to an essential coordinate of $f^{\alg A}$) is an element of $B$, then the whole term evaluates to an element of $B$. 
\end{proof}

\PropertiesBinAbs*

\begin{proof} \hypertarget{ProofProp59}{}
    For (1) consider the result of applying an operation $f \in \Clo(\alg A)$ to a tuple $\tuple{a}$. If $a_i \in C$ for all of the essential coordinates $i$ of $f$, then the result is in $C$ (as $C \leq \alg A$), and if $a_i \in B$ for an essential $i$, then the result is in $B$ by item (d) in Theorem~\ref{thm:bin_abs}. 
    
    The argument for (2.a) is similar. 
    From (1) we know that $B\cup C$ is a subuniverse. For the absorption note that if a tuple $\tuple{a}$ has all but one entry in $B \cup C$, then $f(\tuple{a}) \in B$ in the case where $a_i \in B$ for some essential $i$, or $f(\tuple{a}) \in C$ in the other case (as $C \abs \alg A$ by $f$ and the inessential coordinates can be substituted with elements of $C$).
    For (2.b) first note that $f$ has at least two essential coordinates. Then $B \cap C \abs \alg{A}$ follows again from item (d) in Theorem~\ref{thm:bin_abs}, and $B \cap C$ is nonempty since it contains $f(c, \dotsc,c, b, c, \dotsc, c)$ for any $b \in B$ and $c \in C$ (where $b$ is at an essential coordinate).

To prove (3) note that, by strong projectivity,  both absorptions $\alg C \abs_2 \alg B \abs_2 \alg A$ can be witnessed by the same operation $f$ (and, in fact, any binary term operation of $\alg A$ such that both coordinates are essential can be taken as a witness) and then $f(f(f(x,y),x),f(f(y,x),y))$ witnesses $\alg C\abs_2\alg A$

The first part of item (4) follows from (2.b) as, again, all 2-absorptions have a common witness. The second part follows from (3). 
\end{proof}

We are moving on to the proof of Theorem~\ref{thm:center_abs}. 
The argument is similar in spirit to Theorem~\ref{thm:bin_abs} but the reasoning is a bit more technical. The following notion, which can be thought of as a common generalization of projectivity and absorption, will be useful.

\begin{definition}\label{def:gabs}
Let $f: A^n \to A$, let $B \subseteq A$, and let $g: \{0,1\}^n \to \{0,1\}$ be monotone. 
We write $f \sssB g$  if $f(\tuple{a}) \in B$ whenever $g(\tuple{x}) = 1$, where $\tuple{x}$ is the characteristic tuple of $\tuple{a}$ with respect to $B$~(i.e. $x_i = 1$ iff $a_i \in B$). 
\end{definition}

\noindent
For instance, for $g$ defined by $g(\tuple{x})=1$ iff $\sum x_i \geq n-1$, we have $f \sssB g$ iff $B$ absorbs $A$ by $f$; if $g$ is constant 0, then $f \sssB g$ for any $f$ and $B$.

We begin with a basic property of the relation $\sssB$.

\begin{proposition}\label{prop:monotonesim}
    Fix $A$ and $B\subseteq A$. If $f\sssB g$ are $n$-ary and $f_1\sssB g_1,\dotsc,f_n\sssB g_n$ are all $k$-ary, then $f(f_1,\dotsc,f_n)\sssB g(g_1,\dotsc,g_n)$.
\end{proposition}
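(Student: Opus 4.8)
The plan is to unwind the definition of $\sim_B$ on both sides and chase characteristic tuples through the composition. Fix an arity $k$, take an arbitrary tuple $\tuple a \in A^k$, and let $\tuple x \in \{0,1\}^k$ be its characteristic tuple with respect to $B$ (so $x_j = 1$ iff $a_j \in B$). We must show: if $g'(g_1'(\tuple x), \dots, g_n'(\tuple x)) = 1$, then $g(g_1(\tuple a), \dots, g_n(\tuple a)) \in B$.

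First I would record the elementary observation that each $g_i \sim_B g_i'$ says precisely: if $g_i'(\tuple x) = 1$ then $g_i(\tuple a) \in B$. Now set $b_i = g_i(\tuple a) \in A$ for $i = 1, \dots, n$, and let $\tuple y \in \{0,1\}^n$ be the characteristic tuple of $(b_1, \dots, b_n)$ with respect to $B$, i.e. $y_i = 1$ iff $b_i = g_i(\tuple a) \in B$. Define also $\tuple z \in \{0,1\}^n$ by $z_i = g_i'(\tuple x)$. The key comparison is $\tuple z \leq \tuple y$ coordinatewise: indeed, if $z_i = 1$, i.e. $g_i'(\tuple x) = 1$, then by $g_i \sim_B g_i'$ we get $g_i(\tuple a) = b_i \in B$, so $y_i = 1$. (When $z_i = 0$ there is nothing to check.)

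Next I would use the hypotheses on the outer operations. Since $g \sim_B g'$, to conclude $g(b_1, \dots, b_n) \in B$ it suffices that $g'(\tuple y) = 1$, because $\tuple y$ is exactly the characteristic tuple of $(b_1,\dots,b_n)$. We have $g'(\tuple z) = g'(g_1'(\tuple x), \dots, g_n'(\tuple x)) = g'(g_1', \dots, g_n')(\tuple x) = 1$ by the assumption on $\tuple x$. Since $g'$ is monotone and $\tuple z \leq \tuple y$, we get $g'(\tuple y) \geq g'(\tuple z) = 1$, hence $g'(\tuple y) = 1$. Therefore $g(g_1(\tuple a), \dots, g_n(\tuple a)) = g(b_1, \dots, b_n) \in B$, which is exactly what $g(g_1,\dots,g_n) \sim_B g'(g_1',\dots,g_n')$ asserts for the tuple $\tuple a$. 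As $\tuple a$ was arbitrary, we are done. Note that monotonicity of the composite $g'(g_1',\dots,g_n')$ (required for the statement $f \sim_B g'$ to even typecheck) is automatic since a composition of monotone Boolean functions is monotone.

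There is essentially no serious obstacle here; the only thing to be careful about is the bookkeeping of three Boolean tuples ($\tuple x$ on the inner arity $k$, and $\tuple y, \tuple z$ on the outer arity $n$) and the direction of the monotonicity inequality $\tuple z \le \tuple y$. I would make sure to state clearly that $\tuple z \le \tuple y$ and that this is where the hypotheses $g_i \sim_B g_i'$ enter, while the hypothesis $g \sim_B g'$ is used only at the final step to pass from $g'(\tuple y) = 1$ to $g(b_1,\dots,b_n) \in B$.
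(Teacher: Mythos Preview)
Your proof is correct and follows essentially the same approach as the paper's own proof: both take an arbitrary $\tuple a$, compare the characteristic tuple $\tuple y$ of $(g_1(\tuple a),\dots,g_n(\tuple a))$ with the tuple $\tuple z=(g_1'(\tuple x),\dots,g_n'(\tuple x))$, establish $\tuple z\le \tuple y$ from the inner $\sim_B$ hypotheses, and then use monotonicity of $g'$ together with $g\sim_B g'$ to conclude. Your version is slightly more explicit in naming $\tuple y$ and $\tuple z$, but the argument is identical.
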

\begin{proof}
    Let $f,f_1,\dotsc,f_n$ and $g,g_1,\dotsc,g_n$ be as in the statement of the proposition. 
    Take any $\tuple a\in A^k$ and
    let $\tuple x$ be its characteristic $k$-tuple~(with respect to $B$).
    Assume that $g(g_1,\dotsc,g_n)(\tuple x) = 1$ and note that if $g_i(\tuple x) =1$ then $f_i(\tuple a)\in B$.
    Therefore the characteristic $n$-tuple of $(f_1(\tuple a),\dotsc,f_n(\tuple a))$
    is above the tuple $(g_1(\tuple x),\dots,g_n(\tuple x))$ and since $g$ applied to the latter tuple outputs~$1$, it outputs~$1$ on the former as well.
    By the definition of $\sssB$, this implies that $f(f_1(\tuple a),\dotsc,f_n(\tuple a))\in B$, and the proposition is proved.
\end{proof}

Note that the previous proposition is especially useful
when the algebra $\alg A$ is generated by a single operation.
More formally if $B\subseteq A$ and $f\sssB g$, then any $f'\in\Clo(A;f)$ is $\sssB$ related to some $g'\in\Clo(\{0,1\};g)$.
Indeed, for any fixed $k$, it suffices to start with $\proj^k_i \mbox{(on $A$)} \sssB \proj^k_i$ \mbox{(on $\{0,1\}$)}, for all $i\leq k$,
and apply the previous proposition as many times as needed.
This is in fact exactly how we prove ``(a) implies (d)'' in Theorem~\ref{thm:center_abs} -- the proof of this implication is extracted to the following theorem.

\begin{theorem}\label{thm:tomajclone}
    Let $\alg A$ be a minimal Taylor algebra. There exists an arity preserving  map $f\mapsto f'$ from $\Clo(\alg A)$ to $\Clo(\{0,1\};\maj)$ such that:
    \begin{itemize}
        \item if a set $B$ $3$-absorbs $A$ then $f\sssB f'$ for every $f$ in $\Clo(\alg A)$, and
        \item for every prime $p>|A|$ there is a cyclic term of $\alg A$ of arity $p$ which is mapped to $\maj_p$.
    \end{itemize}
\end{theorem}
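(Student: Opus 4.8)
\textbf{Proof proposal for Theorem~\ref{thm:tomajclone}.}

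The plan is to build the map $f \mapsto f'$ not operation-by-operation, but by first fixing a single cyclic generator of $\alg A$ and then propagating along the term structure using Proposition~\ref{prop:monotonesim}. First I would fix a prime $p > |A|$ and, using Theorem~\ref{thm:cyclic} together with Proposition~\ref{prop:TMexist}, choose a $p$-ary cyclic term operation $t$ of $\alg A$ with $\Clo(A;t) = \Clo(\alg A)$ (this is where minimality is essential). The candidate assignment for $t$ is $t' := \maj_p$, the $p$-ary majority on $\{0,1\}$, which is indeed cyclic (symmetric) and lies in $\Clo(\{0,1\};\maj)$. The crucial verification is that $t \sim_B \maj_p$ for every $3$-absorbing $B \leq \alg A$: suppose $\tuple a \in A^p$ has more than half of its entries in $B$, i.e. the characteristic tuple $\tuple x$ satisfies $\maj_p(\tuple x) = 1$; I must show $t(\tuple a) \in B$. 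Here I would use Proposition~\ref{prop:center_absors_in_T} (a center of a Taylor algebra is $3$-absorbing) together with the structural description: apply the abelian/absorption analysis to the subalgebra $\Sg_{\alg A}(B \cup \{a_i : i\})$, or more directly invoke that $B \abs_3 \alg A$ and argue via repeated application of the ternary witness — but the cleanest route is exactly the remark after Proposition~\ref{prop:monotonesim} in reverse: I want a cyclic witness, and the standard fact (the displayed remark following Theorem~\ref{thm:unifiededges}) is that a ternary $f$ obtained from a cyclic $t$ of odd arity $p$ by blocking coordinates with block sizes $k+l, l+m, k+m > p/2$ witnesses all ternary absorptions. Iterating this observation along cyclic shifts shows $t$ itself, applied to a majority-in-$B$ tuple, lands in $B$.

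Concretely, for the key claim I would proceed as follows. Let $B \abs_3 \alg A$ be witnessed by some ternary $h \in \Clo_3(\alg A)$ (which exists by Proposition~\ref{prop:center_absors_in_T} applied to the equivalent formulations, or is simply given). Fix $\tuple a \in A^p$ with $m := |\{i : a_i \in B\}| > p/2$. I want $t(\tuple a) \in B$. Since $t$ is cyclic, $t(\tuple a) = t(\sigma^j \tuple a)$ for every cyclic shift $\sigma^j$. Consider any three cyclic shifts $\sigma^{j_1}\tuple a, \sigma^{j_2}\tuple a, \sigma^{j_3}\tuple a$; because $m > p/2$, for each coordinate $i$ the majority of the three values $(\sigma^{j_1}\tuple a)_i, (\sigma^{j_2}\tuple a)_i, (\sigma^{j_3}\tuple a)_i$ lies in $B$ provided the shifts are chosen so that at each position at least two of the three land in $B$ — this can be arranged by a counting/pigeonhole argument on the shifts, exploiting $m > p/2$. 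Applying $h$ coordinatewise to these three shifted tuples and using $B \abs_3 \alg A$ gives a tuple all of whose entries lie in $B$; applying $t$ to that tuple gives an element of $B$ (as $B$ is a subuniverse by Theorem~\ref{thm:AbsIsSubuniv}); but by cyclicity and idempotence this element equals $t(\tuple a)$. This is the step I expect to require the most care — getting the combinatorics of "which shifts make every coordinate a $B$-majority" exactly right, and it may be smoother to instead directly cite the blocked-operation remark after Theorem~\ref{thm:unifiededges} rather than redo it.

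Granting the key claim, the rest is formal. Define the map on $\Clo(\alg A)$: every $f \in \Clo(\alg A) = \Clo(A;t)$ is represented by a term built from $t$; assign to the corresponding term over the symbol $\maj_p$ (with $t$ replaced by $\maj_p$ throughout, and projections mapped to projections $\proj^k_i$ on $\{0,1\}$) the resulting operation $f' \in \Clo(\{0,1\};\maj_p) = \Clo(\{0,1\};\maj)$. Arity is preserved by construction. To see this is well-defined on operations (not just terms) and that $f \sim_B f'$ for every $3$-absorbing $B$, I would argue by induction on term structure using Proposition~\ref{prop:monotonesim}: the base case $\proj^k_i \sim_B \proj^k_i$ is immediate from Definition~\ref{def:gabs}, and the inductive step applies Proposition~\ref{prop:monotonesim} with the outer operation $t \sim_B \maj_p$ (the key claim) and the inner operations given by the inductive hypothesis. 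Since $\sim_B$ is preserved under composition and every operation of $\alg A$ arises this way, well-definedness on operations follows (two terms defining the same $f$ get sent to operations that are $\sim_B$-related to the same thing for all $B$, and in particular — taking $\alg A$ itself or using that $\maj_p$ generates a clone whose operations are determined — they coincide; alternatively one picks a canonical term). Finally, the second bullet is satisfied outright: the $p$-ary cyclic generator $t$ is, by our definition, mapped to $\maj_p$, and for any other prime $p' > |A|$ one takes a cyclic term of arity $p'$, which (by minimality) still generates $\Clo(\alg A)$ and whose image under the map — since the image clone $\Clo(\{0,1\};\maj)$ contains $\maj_{p'}$ — can be taken to be $\maj_{p'}$ after noting $\maj_{p'}$ is expressible as a blocked composition exactly as $t$ of arity $p'$ was chosen. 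This completes the construction and hence the proof.
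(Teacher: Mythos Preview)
Your proposal has a genuine gap at the ``key claim'' that an arbitrary cyclic term $t$ of $\alg A$ satisfies $t \sim_B \maj_p$ for every $3$-absorbing $B$. Both steps of your concrete argument fail. First, the combinatorics: with $p = 5$ and $S = \{i : a_i \in B\} = \{0,1,2\}$ (so $m = 3 > p/2$), no choice of three shifts $j_1, j_2, j_3$ works, since the total number of incidences $\sum_i |\{k : i + j_k \in S\}|$ equals $3|S| = 9$, while covering each of the five positions at least twice requires at least $10$; the same counting obstruction applies whenever $m = (p+1)/2$ and $p > 3$. Second, even granting the shifts, your conclusion ``by cyclicity and idempotence this element equals $t(\tuple a)$'' is unjustified: you compute $t(\tuple c)$ with $c_i = h(a_{i+j_1}, a_{i+j_2}, a_{i+j_3})$, and there is no reason this equals $t(\tuple a)$ --- cyclicity gives $t(\sigma^j \tuple a) = t(\tuple a)$, but $\tuple c$ is not a shift of $\tuple a$. (What \emph{is} true is $h(t(\sigma^{j_1}\tuple a), t(\sigma^{j_2}\tuple a), t(\sigma^{j_3}\tuple a)) = t(\tuple a)$, but that is the opposite order of composition and says nothing about membership in $B$.) Your fallback of citing the remark after Theorem~\ref{thm:unifiededges} is circular: that remark rests on Proposition~\ref{prop:cyclicInEZAlgebras}, which uses that $B(x) \vee B(y)$ is a subuniverse (Theorem~\ref{thm:center_abs}), whose proof in turn invokes the present theorem.

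The paper takes a different route: it does \emph{not} show that every cyclic $t$ satisfies $t \sim_B \maj_p$ a priori, but inductively \emph{constructs} a specific cyclic term that does, adding one $3$-absorbing set $C$ at a time. Given a cyclic $t$ that already works for a family $\mathcal B$ and a ternary witness $f$ for $C \abs_3 \alg A$, one builds $h$ from $f$ by the term expressing $\maj_p$ from $\maj$, and takes $s$ to be the cyclic composition of $t$ with $h$. Then $s \sim_C \maj_p$ by construction, and $s \sim_B \maj_p$ for each $B \in \mathcal B$ by Proposition~\ref{prop:monotonesim} (splitting on whether the image $f'$ of $f$ under the current map is $\maj$ or a projection). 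Only \emph{after} this theorem is established does one obtain, via Theorem~\ref{thm:center_abs} and Proposition~\ref{prop:cyclicInEZAlgebras}, that in fact every cyclic term has the property you tried to prove directly.
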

\begin{proof}
    For any prime $p > |A|$, we have a $p$-ary cyclic  operation, say $t$, in $\alg A$.
    If $t$ satisfies $t\sssB\maj_p$ then one can generate, as in discussion after  Proposition~\ref{prop:monotonesim}, a subdirect relation $\sssB$ between $\Clo(\alg A)$ and $\Clo(\{0,1\};\maj)$.
    Once this is done, we see that for any $f\in\Clo(\alg A)$ we can choose any operation $\sssB$-related to $f$ and fix it to be $f'$. 
    Such a map satisfies the conclusion of the theorem, for the set $B$ in question.
    If the initial condition, i.e. $t\sssB\maj_p$, held for a number of $B$'s then the conclusion will hold for all of them. 
    We will prove the theorem by induction on the size of the set of $B$'s.

Given a set $\mathcal B$ of 3-absorbing subsets of $A$ and a cyclic operation $t \in \Clo_p(\alg A)$, such that $t\sssB\maj_p$  for every $B \in \mathcal B$, we will find another cyclic term operation $s$ which will still work for any $B \in \mathcal B$ and, additionally, for a new 3-absorbing subset $C$ in $\alg A$. 
The claim will then follow by induction since we can start with the empty $\mathcal B$ and any $p$-ary cyclic term operation of $\alg A$~%
(in such a case we can choose an arbitrary map $f\mapsto f'$  sending $t$ to $\maj_p$).

Let $f\mapsto f'$ be a map compatible with $\sssB$ for all $B\in\mathcal B$
and such that $t\mapsto \maj_p$.
Let~$f$ be a witness for $C$ 3-absorbing $\alg A$, and let $f'$ be the image of $f$. 
Finally, let $h$ be the $p$-ary term operation of $\alg A$ defined from $f$ by the same term as a term defining $\maj_p$ from $\maj$ (the latter term exists since $\Clo(\{0,1\},\maj)$ is the clone of monotone selfdual operations -- see the discussion in Section~\ref{sec:mtalgebras}) and let $s$ be the cyclic composition of $t$ and $h$. Our aim is to verify that  $s\sssB\maj_p$ for every $B \in \mathcal{B} \cup \{C\}$.

Observe that $f \sssC \maj$ by the definition of absorption, therefore $h\sssC\maj_p$ by the definition of $h$ and Proposition~\ref{prop:monotonesim}. 
Further $s\sssC\maj_p$ by the definition of $s$ 
and the fact that $C \leq \alg A$~%
(which follows from Theorem~\ref{thm:AbsIsSubuniv}).
We thus further concentrate on the case $B \in \mathcal B$.
If $f'$~(the image of $f$ from the previous paragraph) is $\maj$, then  $h\sssB\maj_p$ 
and further $s\sssB\maj_p$ for all $B\in \mathcal B$
by the same reasoning, so this case is done.

Otherwise $f'$ is a projection (again by the structure of the majority clone) and Proposition~\ref{prop:monotonesim} implies that  $h\sssB \proj_i$ for some~$i$.
But then, again using Proposition~\ref{prop:monotonesim},
a $k$-shift of~$h$ (i.e. a term obtained by cyclically shifting the arguments of $h$ by $k$ positions) is $\sssB$ related to $\proj_{i+k\bmod p}$.
Finally $s\sssB\maj_p$ by the definition of $s$~(and, again, Proposition~\ref{prop:monotonesim}).
This finishes the proof of the theorem.
\end{proof}

We now prove a version of Theorem~\ref{thm:center_abs} formulated in terms of the relation $\sssB$. Note that in this formulation we reordered the items to better match the strength of the concepts.

\begin{theorem}
The following are equivalent for any minimal Taylor algebra $\algA$ and any $B \subseteq A$.
\begin{enumerate}
    \item[(a)] $B$ 3-absorbs $\algA$. 
    \item[(b)] $B$ is a center of $\alg A$.
    \item[(c)] The relation $R(x,y) = B(x) \vee B(y)$ is a subuniverse of $\algA^2$.    
    \item[(d)] for every $f\in\Clo(\alg A)$
    there exists $g\in\Clo(\{0,1\};\maj)$ such that $f \sssB g$.
\end{enumerate}
Moreover, if $B=\{b\}$, then these items are equivalent to 
\begin{enumerate}
    \item[(e)] $B$ absorbs $\alg A$.
\end{enumerate}
\end{theorem}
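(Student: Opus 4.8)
The strategy is to prove the cycle of implications $(a) \Rightarrow (d) \Rightarrow (c) \Rightarrow (b) \Rightarrow (a)$ and then the extra equivalence with $(e)$ in the singleton case. Several of these have already been set up in the excerpt, so the work reduces to assembling them and supplying the two genuinely new arguments, namely $(d) \Rightarrow (c)$ and $(c) \Rightarrow (b)$.

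First, $(a) \Rightarrow (d)$ is exactly Theorem~\ref{thm:tomajclone}: if $B$ $3$-absorbs $\alg A$, then the arity-preserving map $f \mapsto f'$ into $\Clo(\{0,1\};\maj)$ provided by that theorem satisfies $f \sim_B f'$, so we simply take $g = f'$. Next, $(d) \Rightarrow (c)$ is easy: suppose $R(x,y) = B(x) \vee B(y)$ and let $f$ be an $n$-ary term operation applied to rows $(u_1,v_1),\dots,(u_n,v_n) \in R$. For each $i$ at least one of $u_i, v_i$ lies in $B$; let $\tuple{x}, \tuple{y} \in \{0,1\}^n$ be the characteristic tuples of $\tuple u, \tuple v$ with respect to $B$. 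Then $x_i \vee y_i = 1$ for every $i$, and since $g$ is monotone and $g(\tuple 1) = 1$ (idempotency of $\maj$ and monotonicity force $g(1,\dots,1)=1$), we get $g(\tuple x) \vee g(\tuple y) \ge g(x_1 \vee y_1, \dots) = 1$, using that $g$ is a monotone Boolean operation and the join of two tuples dominates each. Hence $g(\tuple x) = 1$ or $g(\tuple y) = 1$, so $f(\tuple u) \in B$ or $f(\tuple v) \in B$, i.e.\ $(f(\tuple u), f(\tuple v)) \in R$. A small care point here is that I should verify $g$ distributes over $\vee$ in the needed one-sided sense; since $g \in \Clo(\{0,1\};\maj)$ is monotone, $g(\tuple x \vee \tuple y) \ge \max(g(\tuple x), g(\tuple y))$ always, which is exactly what is used, and the converse is not needed.

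For $(c) \Rightarrow (b)$: assume $R(x,y) = B(x) \vee B(y) \le \alg A^2$. We want to exhibit a witnessing relation certifying that $B$ is a center. The natural candidate is $R$ itself viewed as a subpower of $\alg A \times \alg A$, but we must produce an algebra $\alg C$ with no nontrivial $2$-absorbing subuniverse. The cleaner route, matching item~(d) of the theorem statement in the excerpt, is to take $\alg C = (\{0,1\}; \maj)$ (or a reduct), set $S(x,y) = B(x) \vee (y = 0)$ on $\alg A \times \alg C$, check $S$ is subdirect and a subpower — subdirectness on the $\alg A$ side since $0$ is always allowed, on the $\{0,1\}$ side since $B \ne \emptyset$; and compatibility because $\maj$ preserves the predicate $y=0$'s complement suitably and $B \vee B \le \alg A^2$ handles the rest — and then observe that the left center of $S$ is exactly $\{a : a + S = \{0,1\}\} = \{a : (a,1) \in S\} = B$. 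Since $(\{0,1\};\maj)$ has no nontrivial $2$-absorbing subuniverse (its only proper subuniverses are singletons, which are $3$- but not $2$-absorbing), $B$ is a (Taylor) center. Finally $(b) \Rightarrow (a)$ is Proposition~\ref{prop:center_absors_in_T} (centers of Taylor algebras are $3$-absorbing), since minimal Taylor algebras are Taylor. For the singleton addendum, $(a) \Rightarrow (e)$ is trivial, and $(e) \Rightarrow (a)$ for $B = \{b\}$ follows because an absorbing singleton, being a center by an argument using Lemma~\ref{lem:alt_def_center} and Proposition~\ref{prop:getting3abs} applied with the singleton hypothesis (the condition $(a,a) \notin \Sg_{\alg A^2}(\{a\}\times\{b\} \cup \{b\}\times\{a\})$ is automatic when $b \ne a$ and the generating set is essentially a single point), is $3$-absorbing; alternatively one invokes Theorem~\ref{thm:AbsIsSubuniv} plus Proposition~\ref{prop:getting3abs} directly.

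\textbf{Main obstacle.} The delicate step is $(c) \Rightarrow (b)$: one must choose the witnessing algebra $\alg C$ and relation $S$ correctly and verify $S \le_{sd} \alg A \times \alg C$ with left center precisely $B$. The temptation to use $\alg A$ itself as $\alg C$ fails because $\alg A$ may well have $2$-absorbing subuniverses; using $(\{0,1\};\maj)$ sidesteps this but requires checking the compatibility of $B(x) \vee (y=0)$ with all operations, which in turn uses $(d)$-flavoured reasoning — so in practice I would prove $(c) \Rightarrow (d) \Rightarrow$ the explicit witness of $(b)$, keeping the $\maj$-clone bookkeeping from Theorem~\ref{thm:tomajclone} central throughout.
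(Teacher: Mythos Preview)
Your overall cycle $(a)\Rightarrow(d)\Rightarrow(c)\Rightarrow(b)\Rightarrow(a)$ matches the paper's structure, and you correctly identify Theorem~\ref{thm:tomajclone} and Proposition~\ref{prop:center_absors_in_T} as the sources for $(a)\Rightarrow(d)$ and $(b)\Rightarrow(a)$. However, there are two genuine gaps.

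\textbf{The implication $(d)\Rightarrow(c)$ is wrong as written.} You claim $g(\tuple x)\vee g(\tuple y)\ge g(\tuple x\vee\tuple y)$ and then justify it by the monotone inequality $g(\tuple x\vee\tuple y)\ge\max(g(\tuple x),g(\tuple y))$ --- but that inequality points the \emph{other} way and does not give what you need. A counterexample with $g=\maj$: $\maj(1,0,0)\vee\maj(0,1,0)=0$ while $\maj(1,1,0)=1$. The missing ingredient is \emph{self-duality}: if $g(\tuple x)=0$ then $g(\tuple 1-\tuple x)=1$, and since $x_i\vee y_i=1$ forces $\tuple y\ge\tuple 1-\tuple x$, monotonicity gives $g(\tuple y)\ge g(\tuple 1-\tuple x)=1$. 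This is exactly how the paper argues it. The paper also proves $(c)\Rightarrow(d)$ directly (by constructing a suitable $g$ from $f$ and using $(c)$ to verify self-duality), which makes $(c)\Leftrightarrow(d)$ hold in \emph{any} algebra and lets $(d)\Rightarrow(b)$ do the work you sketched for $(c)\Rightarrow(b)$; you essentially arrive at this in your ``main obstacle'' paragraph, but the logical dependence should be made explicit.

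\textbf{The singleton case $(e)\Rightarrow(a)$ is not automatic.} To apply Proposition~\ref{prop:getting3abs} you must show $(a,a)\notin\Sg_{\alg A^2}\big((a,b),(b,a)\big)$ for every $a\ne b$; this is not ``essentially a single point.'' The paper's argument is: if $(a,a)$ were generated, some binary term $t$ satisfies $t(a,b)=t(b,a)=a$, so by Proposition~\ref{ff:TMsubs} the set $\{a,b\}$ is a subuniverse, and the (minimal Taylor) subalgebra on $\{a,b\}$ is term-equivalent to the two-element semilattice with absorbing element $a$ --- but then $\{b\}$ cannot absorb it, contradicting $(e)$. That argument genuinely uses minimality of $\alg A$ and should be included.
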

\begin{proof}
    We will begin by showing that
    (c) is equivalent to (d) in every algebra.
  For the implication from (d) to (c) take any $n$-ary operation $f$ of algebra $\alg A$,
  and let $g$ be such that $f\sssB g$.
  Take two tuples $\tuple a, \tuple a'$ such that $(a_1,a_1'),\dotsc,(a_n,a_n')\in R$.
  If $f(\tuple a)\in B$ then $(f(\tuple a),f(\tuple a'))\in R$ and we are done,
  so suppose $f(\tuple a)\notin B$.
  Let $\tuple x$ be the characteristic tuple of $\tuple a$; by the definition of $\sssB$ we have $g(\tuple x) = 0$.
  On the other hand, the definition of $R$ and the choice of $\tuple{a}, \tuple{a}'$ implies that $\tuple x'$ 
  -- the characteristic tuple of $\tuple a'$ -- is greater than or equal to  $\tuple 1-\tuple x = (1-x_1,\dotsc,1-x_n)$.
  Since $g$ is in $\Clo(\{0,1\};\maj)$, the clone of monotone selfdual operations, we get $g(\tuple{x}') \geq g(\tuple 1 - \tuple x) = \tuple 1 - g(\tuple x) = 1$, therefore $f(\tuple{a}') \in B$. 
  This proves the first implication.
  
  For the implication from (c) to (d), note that to every $n$-ary operation $f\in\Clo(\alg A)$ we can associate $g':\{0,1\}^n\rightarrow \{0,1\}$ by putting $g'(\tuple x) = 1$ if and only if $f(\tuple a)\in B$ for all $\tuple a$ with characteristic tuple greater than or equal to $\tuple x$.
  The operation $g'$ is monotone by definition,
  and clearly $f\sssB g'$.
  From $g'$ we define $g$ in the following way:
  for every $\tuple x$ we put $g(\tuple x) = g'(\tuple x)$ unless $g'(\tuple x) = g'(\tuple 1-\tuple x) = 1$ and $x_1 = 0$ -- in this case we put $g(\tuple x) = 0$.
  The operation $g$ is clearly monotone, 
  and we obviously have $f\sssB g$.
  We will show that $g$ is self-dual, which will conclude the proof of (d). 
  Indeed, if $g$ is not self-dual then we have $g'(\tuple x) = g'(\tuple 1-\tuple x) = 0$.
  From the definition of $g'$ we have two tuples $\tuple a$ and $\tuple a'$ with characteristic tuples greater than or equal to $\tuple x$ and $\tuple 1-\tuple x$ respectively and such that $f(\tuple a), f(\tuple a')\in A\setminus B$.
  Note that for every $i$ we have $R(a_i,a_i')$,
  and by compatibility $(f(\tuple a),f(\tuple a'))\in R$,
  but this contradicts the condition that both $f(\tuple a)$ and $f(\tuple a')$ are outside $B$.
    
  Next, we assume (d) and aim to prove (b).
  Again, no assumptions on the algebra are necessary.
  We set $\{0,1\}$ to be the universe of $\alg C$ 
  and for every basic operation of $\alg A$, say $f^{\alg A}$, put $f^{\alg C} = g$ 
  where $g$ is any operation such that $f\sssB g$. 
  Clearly $\Clo(\alg C)\subseteq \Clo(\{0,1\};\maj)$ and therefore $\alg C$ has no $2$-absorbing subuniverses.
  Moreover, we claim that the relation $R=A\times \{0\}\cup B\times \{1\}$ is a subuniverse of $\alg A\times\alg C$.
  Take $(a_1,y_1),\dotsc,(a_n,y_n)\in R$, any $n$-ary symbol $f$ 
  and let $\tuple x$ be the characteristic tuple of $\tuple a$.
  If $f^{\alg A}(\tuple a)\in B$ we are done, so suppose that $f^{\alg A}(\tuple a)\in A\setminus B$ and therefore $f^{\alg C}(\tuple x)=0$.
  Then $\tuple y\leq \tuple x$
  and so $f^{\alg C}(\tuple y) =0$,
  which completes the proof.
  
  For (b) implies (a) see Proposition~\ref{prop:center_absors_in_T}
  and for (a) implies (d) see Theorem~\ref{thm:tomajclone}.
  Finally (a) always implies (e).
  For the converse implication we will use Proposition~\ref{prop:getting3abs}.
  By assumption, $B = \{b\} \abs \alg A$. Take any $a\in A\setminus B$ and consider $\Sg_{\alg A^2}((a,b),(b,a))$.
  If $(a,a)$ is an element of this algebra, then there is a term operation $t$ satisfying $t(a,b) = a = t(b,a)$ and, by Proposition~\ref{ff:TMsubs}, the set $\{a,b\}$ is a subuniverse.
  Since $t$ acts on $\{a,b\}$ as a semilattice with absorbing element $a$, the subalgebra of $\alg A$ with universe $\{a,b\}$ (which is minimal Taylor by Proposition~\ref{prop:MinimalVariety})  is term-equivalent to the semilattice with absorbing element $a$, but then  $\{b\}$ can not be an absorbing subuniverse.
  This contradiction shows that the assumption of Proposition~\ref{prop:getting3abs} is satisfied
  and that $\{b\}$ $3$-absorbs $\alg A$ as required.
\end{proof}

Theorem~\ref{thm:center_abs} now easily follows, we just need to deal with the Taylor part in item (c).

\thmcenterabs*

\begin{proof} \hypertarget{ProofThm510}{}
The proof of the previous theorem shows that all of the items other than ``$B$ is a Taylor center of $\alg A$'' are equivalent to the witness of centrality in item (d). If $\Clo(\alg C) = \Clo(\{0,1\}, \maj)$, then $B$ is a Taylor center. In the other case, where $\Clo(\alg C)$ is the clone of projections, all of the term operations are $\sssB$-related to a projection, so $B$ is projective and then strongly projective by Theorem~\ref{thm:bin_abs}. 
We define an algebra $\alg D$ in the signature of $\alg A$ term-equivalent to the rock-paper-scissors algebra  by interpreting an $n$-ary symbol $f$ 
 in $\alg D$ using an arbitrarily chosen term involving all the essential variables of $f^{\alg A}$, e.g., $f^{\alg D}(d_1, \dots, d_n) = \winner(d_1,\winner(d_2, \dots )\dots)$ if all the coordinates of $f^{\alg A}$ are essential. 
The witness for $B$ being a Taylor center can then be taken  $R = (A \times \{\mbox{rock}\}) \cup (B \times D)$. Indeed, $\alg D$ has no nontrivial 2-absorbing subuniverse and $R$ is a subuniverse of $\alg A \times \alg D$: if $f$ is an $n$-ary symbol and $\tuple{a},\tuple{d}$ tuples such that $(a_i,d_i) \in R$ for each $i$, then either $a_i \in B$ for some essential coordinate $i$ of $f^{\alg A}$ and then $f^{\alg A}(\tuple{a}) \in B$ by strong projectivity,
or $a_i \not \in B$ for all the essential coordinates $i$ of $f^{\alg A}$, but then $d_i = \rock$ for all essential coordinates, therefore $f^{\alg D}(\tuple{d}) = \rock$ by idempotency (note that $f^{\alg A}$ and $f^{\alg D}$ have the same essential coordinates). 
In both cases $(f^{\alg A}(\tuple a),f^{\alg D}(\tuple{d})) \in R$. 
\end{proof}

Proposition~\ref{prop:3_abs_intersect} is a simple consequence of Theorem~\ref{thm:tomajclone}.

\propThreeabsintersect*

\begin{proof} \hypertarget{ProofProp512}{}
  Items (1) and (2) are proved using a cyclic operation $t$ of some odd arity $p$ such that $t\sssB\maj_p$ and $t\sssC\maj_p$.
  Such an operation is provided by Theorem~\ref{thm:tomajclone}.
  
  Note that $t$ generates the whole clone of $\alg A$. 
  For (2) and the first part of (1)  we apply $t$ to a tuple consisting of elements from $B\cup C$. The result will be from $B$ if the majority of the arguments are from $B$ and similarly for $C$.

  For the second part of (1) take $f(x,y,z)=t(x,\dotsc,x,y,\dotsc,y,z,\dotsc,z)$ where the number of $x$'s, $y$'s and $z$'s is roughly equal (more precisely, the number of $x$'s and $y$'s is at least $p/2$ and the same is true for the other two pairs). 
  By construction $f$ is a witness both for $\ B\abs_3\alg A$ and for $C\abs_3\alg A$, and thus for $B\cap C$ as well.

To show (3), recall that absorption is transitive (see Subsection~\ref{subsec:app_prelim_absorption}), so it is enough to verify that, for any $a \in A \setminus C$, the pair $(a,a)$ is not in $\Sg(C \times \{a\} \cup \{a\} \times C)$ and apply Proposition~\ref{prop:getting3abs}.
If $a \not\in B$, this follows from the fact that the relation $B(x) \vee B(y)$ is a subuniverse of $\algA^2$ (see Theorem~\ref{thm:center_abs}) and if $a \in B \setminus C$, this follows from the fact that the relation $C(x) \vee C(y)$ on $B$ is a subuniverse of $\algB^2$.
\end{proof}

\subsection{Edges}

\ThmMinimalEdgesQuotients*

\begin{proof} \hypertarget{ProofThm513}{}
  In case (a) there is a binary term operation acting like the semilattice operation on $\{a/\theta,b/\theta\}$ with top element $b/\theta$. 
  By Proposition~\ref{ff:TMsubs} together with Proposition~\ref{prop:MinimalVariety}, 
  the set $\{a/\theta,b/\theta\}$ is a subuniverse of $\alg E/\theta$ and thus is equal to it.
  By the classification of Post~\cite{post41:clones} and minimality of $\alg E/\theta$ we conclude (a).
  Case (b) is identical, except the operation is a ternary majority.
  In (c), Theorem~\ref{thm:FTA} implies that we have the Mal'cev operation $x-y+z$ for some abelian group $\alg G$, so $\alg E/\theta$ is term equivalent to the affine Mal'cev algebra of $\alg G$. 
   Since $E/\theta = \Sg_{\alg E/\theta}(a/\theta,b/\theta) = a/\theta + \{t \cdot (b/\theta-a/\theta): t \in \mathbb{Z}\}$ (where $+$ and $\cdot$ are computed in $\alg G$), the set $\{t \cdot (b/\theta-a/\theta): t \in \mathbb{Z}\}$ covers $G$, therefore the group homomorphism $\mathbb{Z} \to \alg G$ defined by $t \mapsto t (b/\theta-a/\theta)$ is surjective. We conclude using the first isomorphism theorem that $\alg G$ is isomorphic to $\mathbb{Z}/m$ for some $m$.

  For the last statement,
  suppose we have two different congruences $\theta$ and $\gamma$~%
  (put $\delta =\gamma\cap\theta$) witnessing the same semilattice~(resp. majority) edge.
  If we have a binary~(resp. ternary) operation acting as a semilattice operation $t$ with matching top elements~(or a majority operation)  on both $\{a/\theta,b/\theta\}$ and $\{a/\gamma,b/\gamma\}$, then $t$ acts as a semilattice~(majority) on $\{a/\delta, b/\delta\}$.
  But then, by Proposition~\ref{ff:TMsubs}, $E=a/\delta \cup b/\delta$
  which contradicts the fact that $\theta$ and $\gamma$ were different.

  It remains to obtain an operation acting as a semilattice on both semilattice edges. 
  In the semilattice case both $b/\theta$ and $b/\gamma$ are 2-absorbing subuniverses of $\alg E$ and, by Theorem~\ref{thm:bin_abs}, e.g. $t(x,y, \dots y)$ for a cyclic $t$ is such an operation.
  In the majority case the reasoning is identical except that all the sets $a/\theta, b/\theta, a/\gamma, b/\gamma$ are 3-absorbing
  and all these absorptions can be witnessed by a single term~%
  (e.g. the term obtained by identifying variables in a cyclic term as in the proof of Proposition~\ref{prop:3_abs_intersect}).
\end{proof}

\ProUniqueType*

\begin{proof} \hypertarget{ProofProp514}{}
  Let $\alg E=\Sg_{\alg A}(a,b)$ and let $\theta$ be a maximal congruence on $\alg E$ witnessing the minimal edge.
  Since the only simple affine Mal'cev algebras are (up to isomorphism) the affine Mal'cev algebras of $\zee{p}$ for a prime $p$, we have by Theorem~\ref{thm:minimaledgesquotients} that $\alg E/\theta$ is term-equivalent to a two-element semilattice, a two-element majority algebra, or an affine Mal'cev algebra of a group isomorphic to $\zee{p}$. 
  None of these algebras have a nontrivial subalgebra and so, by minimality, there is no congruence  incomparable with $\theta$.
  Indeed, if a congruence $\alpha$ is incomparable with the maximal congruence $\theta$, then at least one $\alpha$-class $B$  intersects two of the $\theta$-classes. Since $B/\theta \leq \alg E/\theta$ and  $\alg{E}/\theta$ does not have nontrivial subalgebras, then $B$ intersects all of the classes; in particular the classes containing $a$ and $b$. By minimality of the edge $(a,b)$, we get $B=E$, a contradiction.
  Similarly, since $\theta \subseteq \mu_E$, also $\mu_E = \theta$. Indeed, otherwise by definition of $\mu_E$ there is a proper subuniverse $B$ of $\alg E$ intersecting two of the $\theta$-classes, which leads to a contradiction as above.
\end{proof}

Proposition~\ref{prop:thin_semilattice} is an immediate consequence of the following proposition.

\begin{proposition}{\label{prop:inside_abs}}
    Let $\alg A$ be minimal Taylor algebra:
    \begin{enumerate}
        \item If $\alg B\abs_2\alg A$ and $a\in A\setminus B$
        there exists $b\in B$ such that $(a,b)$ is a semilattice edge and $\Sg_{\alg A}(a,b)=\{a,b\}$.
        \item If $\alg B\abs_3\alg A$ and $a\in A\setminus B$ there exists $b\in B$ such that:
        \begin{enumerate}
        \item $(a,b)$ is a semilattice edge and $\Sg_{\alg A}(a,b)=\{a,b\}$, or
        \item $(a,b)$ is a majority edge and $\Sg_{\alg A}(a,b)\cap B$ is one of the two equivalence classes of the congruence witnessing it.
        \end{enumerate}
    \end{enumerate}
\end{proposition}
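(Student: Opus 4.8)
The plan is to prove Proposition~\ref{prop:inside_abs}, from which Proposition~\ref{prop:thin_semilattice} follows by taking $\alg B$ to be the one-element subalgebra $\{b\}$ in item (2). I would handle the two items together, since item (1) is essentially the case of item (2) where the majority alternative (2b) is ruled out. The main tool is the strong structural information we have already collected: Theorem~\ref{thm:bin_abs} (every $2$-absorbing set is strongly projective), Theorem~\ref{thm:center_abs} (every $3$-absorbing set is a center, and $B(x)\vee B(y)$ is a subuniverse of $\alg A^2$), Theorem~\ref{thm:AbsIsSubuniv} (absorbing sets are subuniverses), and Proposition~\ref{ff:TMsubs} together with Proposition~\ref{prop:MinimalVariety} (a subset closed under an operation making it Taylor is automatically a subuniverse, and the class is closed under subalgebras and quotients).

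First, fix $a\in A\setminus B$ and consider the subalgebra $\alg E_b = \Sg_{\alg A}(a,b)$ for varying $b\in B$, together with the relation $B\cap E_b$, which is nonempty (it contains $b$) and absorbs $\alg E_b$ with the same arity as $B$ absorbs $\alg A$ (absorption is inherited by subalgebras, via Lemma~\ref{lem:absorption_and_pp} applied to the defining formula $B(x)\wedge E_b(x)$, or directly). Thus $\alg E_b$ is a two-generated minimal Taylor algebra with a proper nonempty $n$-absorbing subuniverse $B\cap E_b$ for $n\in\{2,3\}$. Now I would pass to a quotient $\alg E_b/\theta$ by a maximal congruence $\theta$; this is simple, minimal Taylor (Proposition~\ref{prop:MinimalVariety}), and two-generated, and $(B\cap E_b)/\theta$ is a proper nonempty $n$-absorbing subuniverse of it --- unless $B\cap E_b$ is a union of $\theta$-blocks, a case I would aim to exclude or exploit. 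A simple two-generated minimal Taylor algebra with a proper nonempty $2$- or $3$-absorbing subuniverse must, by the four types analysis (Corollary~\ref{cor:4types}) combined with Theorems~\ref{thm:minimaledgesquotients}, \ref{thm:bin_abs}, \ref{thm:center_abs}, be term-equivalent to a two-element semilattice (if $n=2$, since an affine or rock-paper-scissors quotient has no nontrivial $2$-absorbing subuniverse, and a majority quotient's $2$-absorbing subuniverses do not come from a $2$-absorbing set of the cover) or to a two-element semilattice or two-element majority algebra (if $n=3$). This pins down the type of the edge $(a/\theta, b/\theta)$: it is a semilattice edge with top $b/\theta$, or a majority edge.

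To get from the quotient back to $\alg E_b$ itself, I would argue that $\theta$ must be trivial in the semilattice case: if $\alg E_b/\theta$ is the two-element semilattice with absorbing (top) element $b/\theta$, then by Theorem~\ref{thm:bin_abs}(d) the set $b/\theta$ is strongly projective in $\alg E_b$, hence in particular a $2$-absorbing subuniverse, and the preimage of the absorbing element is $b/\theta\subseteq B$; applying the same reasoning one level down (or directly using that a strongly projective subuniverse that is itself a two-element semilattice block forces the block to be a singleton) shows $b/\theta=\{b\}$ and $a/\theta=\{a\}$, so $\Sg_{\alg A}(a,b)=\{a,b\}$ with equality as witnessing congruence. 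In the majority case we only claim that $\Sg_{\alg A}(a,b)\cap B$ is one of the two blocks of the witnessing congruence $\theta$: here I would show $B\cap E_b = a/\theta$ or $b/\theta$ by observing that $(B\cap E_b)/\theta$ is a proper nonempty subset of the two-element algebra $\alg E_b/\theta$, hence a singleton block, and that the preimage in $\alg E_b$ equals that $\theta$-block because $B\cap E_b$ is a subuniverse that is $\theta$-saturated (it is a block of a congruence on $\alg E_b$ by Proposition~\ref{prop:3_abs_intersect}(3) applied to $B\cap E_b$ and the complementary block, or more directly because $B(x)\vee B(y)$ being a subuniverse of $\alg A^2$ forces $B\cap E_b$ to be a union of $\theta$-blocks). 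The only remaining point is to rule out, in item (1), the configuration where $B\cap E_b$ is a union of $\theta$-blocks in the first place: if it were, then since $\alg E_b/\theta$ is simple and two-element (by the above, the only options under $2$-absorption), $B\cap E_b$ would be all of $E_b$ or empty, contradicting properness or nonemptiness.

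\textbf{Main obstacle.} The delicate step is the very last one: showing that in the $2$-absorbing case one genuinely lands on a two-element semilattice quotient with $\theta$ trivial, rather than on a majority quotient or a quotient in which $B\cap E_b$ is saturated. The essential input is that a $2$-absorbing set witnessed by a binary operation $t$ with both coordinates essential cannot map onto a majority block nor onto a nontrivial block of a semilattice, because strong projectivity (Theorem~\ref{thm:bin_abs}(d)) forces any two-element quotient through which it survives to have the absorbed set as a one-sided absorbing element, i.e.\ the semilattice case with a singleton preimage. I would spend the bulk of the write-up making this rigidity argument airtight, likely by contradiction: assume $b/\theta$ properly contains $b$, pick $b'\in b/\theta\setminus\{b\}$, and use strong projectivity of $B$ (hence of $b/\theta$ in $\alg E_b$) applied to a cyclic term of $\alg A$ restricted appropriately to derive that $\{a,b\}$ or $\{a,b'\}$ is already a subuniverse via Proposition~\ref{ff:TMsubs}, collapsing $\theta$.
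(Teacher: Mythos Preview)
Your approach has a genuine gap in the semilattice case. You never choose $b\in B$ with any extremal property, and without that the step ``$\theta$ must be trivial'' does not go through. Even if $\alg E_b/\theta$ is a two-element semilattice with top $b/\theta$, all this gives is that $(a,b)$ is a semilattice edge witnessed by $\theta$; there is no general reason for $b/\theta$ to be a singleton. Your proposed justification (``strong projectivity of $b/\theta$ forces the block to be a singleton'', or ``pick $b'\in b/\theta\setminus\{b\}$ and collapse $\theta$ via Proposition~\ref{ff:TMsubs}'') is not an argument: strongly projective subuniverses can have more than one element, and nothing you have written excludes, say, a four-element $\alg E_b$ with a two-element semilattice quotient. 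The majority case has a similar hole: you assert that $B\cap E_b$ is $\theta$-saturated, but Proposition~\ref{prop:3_abs_intersect}(3) needs \emph{two} disjoint $3$-absorbing sets, and you have only produced one.

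The paper's proof is organized quite differently and avoids maximal-congruence quotients. One first chooses $b\in B$ so that $\alg C=\Sg_{\alg A}(a,b)$ is inclusion-minimal among $\{\Sg_{\alg A}(a,b')\}_{b'\in B}$, and sets $D=C\cap B$; minimality gives the key leverage that $\Sg_{\alg A}(a,b')=C$ for every $b'\in D$. One then analyzes the symmetric relation $R=\Sg_{\alg C^2}((a,b),(b,a))$ and splits on whether $R\cap D^2$ is empty. If nonempty, minimality is used twice (first to get $D+R=C$, then $\{b\}-R=C$) to conclude $(b,b)\in R$, yielding a binary term that is a join on $\{a,b\}$ with top $b$; Proposition~\ref{ff:TMsubs} then gives $\Sg_{\alg A}(a,b)=\{a,b\}$ directly. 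For $2$-absorption one gets $R\cap D^2\neq\emptyset$ by a single application of the witnessing binary operation. For $3$-absorption with $R\cap D^2=\emptyset$, the inclusion $R\subseteq B(x)\vee B(y)$ (Theorem~\ref{thm:center_abs}) forces $D+R=C\setminus D$, which is then a second $3$-absorbing subuniverse of $\alg C$ disjoint from $D$; now Proposition~\ref{prop:3_abs_intersect}(3) applies and produces the two-block congruence with majority quotient. So the congruence in case~(2b) is \emph{constructed} from $R$, not chosen in advance as a maximal congruence.
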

\begin{proof}
    We will prove both items simultaneously. First we choose $b\in B$ such that $\Sg_{\alg A}(a,b)$ is minimal under inclusion in the set $\{\Sg_{\alg A}(a,b)\}_{b\in B}$.
    We let $\alg C = \Sg_{\alg A}(a,b)$ and $D = C\cap B$, and note that by our choice of $b$,
    for every $b'\in D$ we have $C = \Sg_{\alg A}(a,b')$.
    
    We put $R =\Sg_{\alg C^2}((a,b),(b,a))$. The first step is to show that if $D^2\cap R\neq\emptyset$ then $(b,b)\in R$ and $(a,b)$ is semilattice edge such that $\Sg_{\alg C}(a,b) = \{a,b\}$.
    Indeed, if $D^2\cap R \neq \emptyset$
    then $(D+R)\cap D\neq\emptyset$ and $a\in D+R$
    and consequently $D+R=C$.
    Further, if $D+R = C$ then there exists $b'\in D$ such that $(b',b)\in R$, but then $b-R$ contains both $a$ and $b'$ and thus $b-R = C$;
    in particular $(b,b)\in R$.
    We have shown that in this case there is a binary term operation acting on $\{a,b\}$ as a join-semilattice operation with top $b$. 
    By Proposition~\ref{ff:TMsubs}, $\{a,b\}$ is a subuniverse of $\alg A$,
    by Proposition~\ref{prop:MinimalVariety} the subalgebra with this subuniverse is a minimal Taylor algebra, which is clearly term equivalent to a two-element semilattice.
    
    Going back to the claims in the proposition, if $B\abs_2\alg A$, then a pair in $D^2\cap R$ is produced by a single application of the operation witnessing the 2-absorption to $(a,b)$ and $(b,a)$.
    This finishes the 2-absorption case.
    
    We are left with the case of $B\abs_3\alg A$ and  $R\cap D^2=\emptyset$.
    Note that $D \abs_3 \alg C$ and, by Theorem~\ref{thm:center_abs},
    $R\subseteq B(x)\vee B(y)$. Therefore $D+R = C\setminus D$ is a 3-absorbing subuniverse of $\alg C$.
    By item (2) in Proposition~\ref{prop:3_abs_intersect}, we conclude that the partition of $C$ into $D$ and $C\setminus D$ defines a congruence of $\alg C$ and the quotient modulo this congruence is term equivalent to a two-element majority algebra.
    This finishes the proof.
\end{proof}
  
\PropThinSemilattice*

\begin{proof} \hypertarget{ProofProp515}{}
  Let $(a,b)$ be as in the statement.
  Put $\alg A' = \Sg_{\alg A}(a,b)$ and $\theta$ a congruence witnessing the edge.
  The subalgebra $b/\theta$ 2-absorbs $\alg A'$,
  and Proposition~\ref{prop:inside_abs} provides $b'$ in $b/\theta$ such that $\Sg_{\alg A'}(a,b')=\{a,b'\}$.
  Since $(a,b)$ is a minimal edge, we get $b=b'$.
\end{proof}

\subsection{Absorption and edges}

The first major goal of this subsection is to prove the connection between absorption and stability under edges stated in Theorems~\ref{thm:bin_abs_stable} and \ref{thm:abs_stable}. 
The following observation will be often implicitly used in proofs: if $B$ is stable under semilattice (majority, abelian) edges in an algebra $\alg A$ and $\alg C \leq \alg A$, then $B \cap C$ is stable under semilattice (majority, abeialn) edges in the algebra $\alg C$. 

The first lemma  proves one direction in the mentioned theorems. 

\begin{lemma}\label{lem:absorption_implies_stable}
Let $\alg A$ be a minimal Taylor algebra and $B \subseteq A$.
If $B$ absorbs $\alg A$, then $B$ is stable under all the abelian and semilattice edges. Moreover, if $B$ 2-absorbs $\alg A$, then $B$ is stable under all the edges.
\end{lemma}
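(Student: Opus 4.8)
The goal is to show that an absorbing set $B$ of a minimal Taylor algebra $\alg A$ is stable under all abelian and semilattice edges, and stable under \emph{all} edges (including majority) in the stronger case $B\abs_2\alg A$. First recall (Theorem~\ref{thm:AbsIsSubuniv}) that $B$ is automatically a subuniverse. Fix an edge $(b,a)$ with witnessing congruence $\theta$ on $\alg E=\Sg_{\alg A}(b,a)$ such that $b/\theta$ meets $B$; we must show every $\theta$-block meets $B$. The key reductions are: $B\cap E$ absorbs $\alg E$ (restriction of the witnessing operation), and $B\cap E$, being closed under the operation witnessing the edge-quotient structure, projects to an absorbing subuniverse of the small quotient $\alg E/\theta$, which by Theorem~\ref{thm:minimaledgesquotients} is term-equivalent to one of the three canonical two-element/affine algebras (for semilattice/majority) or an affine Mal'cev algebra of $\zee{p_1^{k_1}}\times\cdots$ (for abelian). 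So the whole problem reduces to: which nonempty proper absorbing subuniverses do these canonical algebras have?

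\textbf{Main steps.} (1) \emph{Quotient structure.} Let $\pi\colon E\to E/\theta$ be the projection and $B'=\pi(B\cap E)$. Since $B\cap E\abs\alg E$ by some term $t$, and $\pi$ is a surjective homomorphism, $B'$ absorbs $\alg E/\theta$ by $t^{\alg E/\theta}$; moreover $b/\theta\in B'$ by the hypothesis that $b/\theta$ meets $B$. Stability ``every $\theta$-block meets $B$'' is exactly the statement $B'=E/\theta$. (2) \emph{Canonical cases.} By Theorem~\ref{thm:minimaledgesquotients}: if $(b,a)$ is abelian, $\alg E/\theta$ is an affine module (indeed an affine Mal'cev algebra of $\zee{p_1^{k_1}}\times\cdots\times\zee{p_i^{k_i}}$); an affine module has \emph{no} proper nontrivial absorbing subuniverse at all (its only subalgebras are cosets of subgroups, and an absorbing term applied cyclically to a coset and a point outside it would force, via the Mal'cev/affine identities, the whole algebra into the coset — a short direct argument, or invoke that abelian algebras are absorption-free, which is essentially Theorem~\ref{thm:sm-free}/the fact that affine modules have Mal'cev terms and hence no absorption). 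Hence $B'$, being absorbing and nonempty, equals $E/\theta$. Same for semilattice: $\alg E/\theta$ is the two-element semilattice with top $a/\theta$; its only proper absorbing subuniverse is $\{a/\theta\}$ (the top), so a nonempty absorbing $B'$ containing $b/\theta$ must be the whole $\{a/\theta,b/\theta\}$. This handles the ``moreover'' conclusion of Theorem~\ref{thm:abs_stable} and the semilattice/abelian part of Theorem~\ref{thm:bin_abs_stable}. (3) \emph{Majority case under 2-absorption.} If instead $B\abs_2\alg A$, then $B\cap E\abs_2\alg E$, so $B'\abs_2\alg E/\theta$ with the two-element majority algebra as quotient. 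The two-element majority algebra has proper absorbing subuniverses $\{0\}$ and $\{1\}$, but these are only $3$-absorbing, not $2$-absorbing; the only $2$-absorbing subuniverse is the whole algebra (any binary term operation of the majority clone is a projection, which is not an absorbing witness). Since $B'$ is a nonempty $2$-absorbing subuniverse of a two-element majority algebra, $B'$ is everything. This completes the ``moreover'' clause of Theorem~\ref{thm:bin_abs_stable}.

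\textbf{Expected obstacle.} The only slightly delicate point is verifying the claimed list of absorbing subuniverses of the canonical quotients — specifically that affine modules are absorption-free and that the two-element majority algebra has no proper $2$-absorbing subuniverse. Both are standard (the latter is immediate from Post's classification: the binary part of the majority clone consists of projections; the former because affine modules have a Mal'cev term and having a Mal'cev term is incompatible with nontrivial absorption, cf.\ Theorem~\ref{thm:sm-free}). A second small subtlety: one must make sure the restriction of the witnessing operation $t$ to $E$ is genuinely an absorbing witness for $\alg E$ — but this is routine since $E$ is a subuniverse and $t$-closed, and the defining inequality of absorption is inherited. I would therefore structure the write-up as: (i) $B$ is a subuniverse; (ii) reduce to the quotient $\alg E/\theta$ and note $B'$ is nonempty absorbing containing $b/\theta$; (iii) dispatch each of the three edge types using Theorem~\ref{thm:minimaledgesquotients} and the absorbing-subuniverse structure of the canonical algebras, being careful to use $3$-absorption vs.\ $2$-absorption appropriately in the majority case.
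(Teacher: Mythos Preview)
Your proof is correct and follows essentially the same route as the paper: reduce to $B' = (B\cap E)/\theta$ absorbing the canonical quotient $\alg E/\theta$ given by Theorem~\ref{thm:minimaledgesquotients}, and then check in each of the three small algebras that a nonempty (2-)absorbing subuniverse containing $b/\theta$ must be the whole quotient. One caution: do not justify the affine case by citing Theorem~\ref{thm:sm-free}, since that theorem's proof uses Theorem~\ref{thm:abs_stable}, which in turn relies on the present lemma; use instead the direct argument you sketched (every term of an affine Mal'cev algebra has the form $\sum a_i x_i$ with $\sum a_i = 1$, so an absorbing witness applied at $b,\dots,b,c,b,\dots,b$ yields $a_i c + (1-a_i)b \in B$ for each $i$, and summing gives $c\in B$).
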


\begin{proof}
   Assume that $B$ absorbs $\alg A$ and $(b,a)$ is an edge with witnessing congruence $\theta$ of $\alg E := \Sg_{\alg A}(b,a)$ such that $b/\theta$ intersects $B$. By Theorem~\ref{thm:AbsIsSubuniv}, $B$ is a subuniverse of $\alg A$.
   As $B \abs \alg A$ we also have $B \cap E \abs \alg E$ and then $b/\theta \in (B \cap E)/\theta \abs \alg E/\theta$.

By Theorem~\ref{thm:minimaledgesquotients}, if $(b,a)$ is a semilattice edge, then $\alg E/\theta$ is term equivalent to the two element semilattice with absorbing element $a/\theta$.
By the description of term operations in the two element semilattice (see Subsection~\ref{subsec:boolean}), the only nontrivial absorbing subuniverse of $\alg E/\theta$ is $\{a/\theta\}$, therefore $(B \cap E)/\theta = E/\theta$, so each $\theta$-class indeed intersects $B$. 
If $(b,a)$ is an abelian edge, then $\alg E/\theta$ is term equivalent to an affine Mal'cev algebra. These algebras do not have any nontrivial absorbing subuniverses (since, e.g., the term operations are of the form $\sum a_ix_i$ where $\sum a_i = 1$ and these operations do not witness any nontrivial absorption) and we get $(B \cap E)/\theta = E/\theta$ in this case as well.
   If, additionally, $B$ 2-absorbs $\alg A$, then
   $(B \cap E)/\theta \abs_2 \alg E/\theta$ and we get the same conclusion by the description of term operations in the two-element majority algebra.
\end{proof}

Our next aim is to prove the other direction in Theorem~\ref{thm:bin_abs_stable}, that subsets stable under all the edges are 2-absorbing. As we will often work with minimal semilattice edges and sequences of such edges, the following terminology will be useful.

\begin{definition}
Let $\alg A$ be a minimal Taylor algebra. 
\begin{itemize}
    \item An \emph{s-edge} is a  minimal semilattice edge.
    \item An \emph{s-walk} is a sequence $a_1, \dots, a_k \in A$ such that $(a_i,a_{i+1})$ is an s-edge for every $i \in \{1, \dots, k-1\}$. (We allow the trivial walk with $k=1$.)
    \item A set $B \subseteq A$ is \emph{s-closed} if $a \in B$ whenever $(b,a)$ is an s-edge with $b \in B$. 
\end{itemize} 
\end{definition}

\noindent
Recall (Proposition~\ref{prop:thin_semilattice}) that for an s-edge $(a,b)$ in a minimal Taylor algebra $\alg A$, the set $\{a,b\}$ is a subuniverse of $\alg A$ (so the witnessing congruence for this edge is the equality relation on $\{a,b\}$). It follows that every set stable under semilattice edges is s-closed. 

Before proving Theorem~\ref{thm:bin_abs_stable} we give a useful criterion for s-closed subsets to be 2-absorbing. 

\begin{proposition}\label{prop:s-closed-bin} Suppose that $B$ is an s-closed subset of a minimal Taylor algebra $\alg{A}$. Then the following are equivalent:
\begin{itemize}
\item[(a)] $B$ 2-absorbs $\alg{A}$,
\item[(b)] for all $a \in A \setminus B$ and all $b \in B$, the algebra $\Sg_{\alg{A}}(a,b)$  has a nontrivial 2-absorbing subuniverse,
\item[(c)] for all $a \in A$ and all $b \in B$, there is a directed s-walk from $a$ to an element in $B$ which is contained in $\Sg_{\alg A}(a,b)$.
\end{itemize}
\end{proposition}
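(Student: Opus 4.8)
The plan is to prove the cycle of implications $(a)\Rightarrow(b)\Rightarrow(c)\Rightarrow(a)$, using Proposition~\ref{prop:inside_abs} and Theorem~\ref{thm:bin_abs} as the main tools.

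First I would observe that $(a)\Rightarrow(b)$ is essentially immediate: if $B$ $2$-absorbs $\alg A$ by a term operation $t$, then for any $a\in A\setminus B$ and $b\in B$, the intersection $B\cap\Sg_{\alg A}(a,b)$ is nonempty (it contains $t(a,b)$ for instance), and it is $2$-absorbing in $\Sg_{\alg A}(a,b)$ by restricting $t$ (and invoking Theorem~\ref{thm:AbsIsSubuniv} / Proposition~\ref{prop:bin_abs_intersect}(2b) to see it is a proper subuniverse, since $a\notin B$). This gives the required proper $2$-absorbing subuniverse.

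Next, for $(b)\Rightarrow(c)$, I would fix $a\in A$ and $b\in B$ and induct on $|\Sg_{\alg A}(a,b)|$. If $a\in B$ the empty walk works, so assume $a\notin B$. Let $\alg C=\Sg_{\alg A}(a,b)$; by (b), $\alg C$ has a proper nontrivial $2$-absorbing subuniverse $D$. Now I want to reach $B$ through $D$: here is where Proposition~\ref{prop:inside_abs}(1) enters --- since $D\abs_2\alg C$ and, after possibly shrinking, we can find $a'\in C\setminus D$ lying on a directed s-edge into $D$, I would chain s-edges. The cleanest route is: apply Proposition~\ref{prop:inside_abs}(1) with the $2$-absorbing subuniverse $D$ of $\alg C$ to get, for $a\in C\setminus D$, some $d\in D$ with $(a,d)$ an s-edge and $\Sg_{\alg A}(a,d)=\{a,d\}$. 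Then $d\in D$; if $d\in B$ we are done, and otherwise I would need an s-walk from $d$ to $B$ inside $\Sg(a,b)$. To get this I would instead argue more carefully: iterating Proposition~\ref{prop:inside_abs} inside the strictly smaller subalgebra $\Sg_{\alg A}(d,b)\subsetneq\alg C$ (strict because $a\notin\Sg_{\alg A}(d,b)$, as $\{a,d\}$ being a two-element subalgebra forces $d$ alone to generate something not containing $a$) and applying the induction hypothesis to $(d,b)$ gives a directed s-walk from $d$ to $B$ contained in $\Sg(d,b)\subseteq\Sg(a,b)$; prepending the s-edge $(a,d)$ yields the desired s-walk from $a$. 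The base of the induction is $|\Sg(a,b)|=2$, where $\Sg(a,b)=\{a,b\}$ is a two-element minimal Taylor algebra with $b\in B$ and $a\notin B$, hence a semilattice with top $b$, so $(a,b)$ is itself an s-edge.

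Finally, for $(c)\Rightarrow(a)$, I would use the relational characterization from Theorem~\ref{thm:bin_abs}: it suffices to show $R(x,y,z)=B(x)\vee B(y)\vee B(z)$ is a subuniverse of $\alg A^3$, equivalently (or more directly) that $B$ is a strongly projective subuniverse. Take any term operation $f$ and a tuple $\tuple a\in A^n$ with $a_i\in B$ at some essential coordinate $i$; I want $f(\tuple a)\in B$. Using minimality and Proposition~\ref{prop:cyclic_minors}, it is enough to check this for a cyclic term $t$, and then the key point is: if $t(\tuple a)=c\notin B$ but $a_i\in B$, then $\Sg_{\alg A}(c,a_i)\subseteq\Sg_{\alg A}(\{a_1,\dots,a_n\})$ in a way that, combined with the s-walk from $c$ to $B$ inside $\Sg(c,b)$ for $b=a_i\in B$ and the compatibility of $t$ with the two-element subuniverses $\{x,y\}$ arising from s-edges (on which $t$ acts as a semilattice with the $B$-side as top, by Proposition~\ref{prop:thin_semilattice}), forces a contradiction. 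Concretely: along the s-walk $c=e_0,e_1,\dots,e_k\in B$ each pair $\{e_{j-1},e_j\}$ is a subuniverse on which any cyclic $t$ acts as the semilattice with top $e_j$; since $t$ is idempotent and the walk ends in $B$, propagating $b\in B$ through $t$ repeatedly moves us up the walk, contradicting $c\notin B$. The main obstacle, and the step I would need to be most careful about, is exactly this last propagation argument --- making precise how the s-walk inside $\Sg(c,b)$ interacts with an arbitrary cyclic term operation to force $t(\tuple a)\in B$; I expect this is where one must use Proposition~\ref{ff:TMsubs} and the fact that on each s-edge the $B$-endpoint is a $2$-absorbing (hence, via Theorem~\ref{thm:bin_abs}, strongly projective) element of that two-element subalgebra, and then glue these local facts along the walk.
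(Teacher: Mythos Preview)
Your $(a)\Rightarrow(b)$ is fine and matches the paper. The other two implications each have a real gap.

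\textbf{$(b)\Rightarrow(c)$.} Your induction step needs $\Sg_{\alg A}(d,b)\subsetneq\Sg_{\alg A}(a,b)$, and the justification you give is wrong: the fact that $\{a,d\}$ is a two-element subuniverse tells you $\Sg(d)=\{d\}$, but it says nothing about $\Sg(d,b)$, which is generated by $d$ \emph{and} $b$. There is no reason $a$ cannot lie in $\Sg(d,b)$; in particular, nothing forces $b\in D$, so you cannot conclude $\Sg(d,b)\subseteq D$. The paper repairs this by applying Proposition~\ref{prop:inside_abs} a second time, to $b$, obtaining an s-edge $(b,b')$ with $b'\in D$. Since $B$ is s-closed, $b'\in B$. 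Now both new generators $a'$ (your $d$) and $b'$ lie in $D$, so $\Sg(a',b')\subseteq D\subsetneq\Sg(a,b)$ and the induction goes through with the pair $(a',b')$.

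\textbf{$(c)\Rightarrow(a)$.} Your ``propagation'' sketch does not do anything: you fix a tuple $\tuple a$ with $a_i\in B$, compute $c=t(\tuple a)$, and then produce an s-walk from $c$ into $B$. But this walk goes \emph{from} $c$ \emph{to} $B$, so s-closedness of $B$ is oriented the wrong way to force $c\in B$; and the fact that $t$ acts as a semilattice on each edge $\{e_{j-1},e_j\}$ of the walk tells you nothing about the value $t(\tuple a)$, since the entries of $\tuple a$ are unrelated to the $e_j$. The paper's argument is genuinely different: it works inside the subalgebra $\alg R\leq\alg A^{X}$ of binary term operations restricted to $X=\{(a,b),(b,a):a\in A,\ b\in B\}$, and shows $R\cap B^X\neq\emptyset$ (any element of this intersection witnesses $B\abs_2\alg A$). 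One takes $Y\subseteq X$ maximal with $\proj_Y(R)\cap B^Y\neq\emptyset$, picks $f\in R$ with $f|_Y\in B^Y$, and for some coordinate $\tuple x\in X\setminus Y$ uses (c) to get an s-walk from $f(\tuple x)$ into $B$ inside $\proj_{\tuple x}(R)$. The first step of this walk gives a $2$-absorbing singleton inside a two-element subalgebra of $\proj_{\tuple x}(R)$, and Proposition~\ref{prop:inside_abs} then produces an s-edge \emph{in $\alg R$} from $f$ to some $g$ with $g(\tuple x)$ one step further along the walk; s-closedness of $B$ keeps $g|_Y\in B^Y$. Iterating contradicts either minimality of the walk length or maximality of $Y$. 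The key idea you are missing is to run the s-walk argument in the power $\alg A^X$ rather than in $\alg A$ itself.
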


\begin{proof} To see that (a) implies (b), note that $B \abs_{2} \alg{A}$ implies that $B \cap \Sg_{\alg A}(a,b) \abs_{2} \Sg_{\alg A}(a,b)$. 

We prove that (b) implies (c) by induction on the size of $\alg A$ (assuming that the implication is true for all algebras of strictly smaller size). Let $\alg E = \Sg_{\algA}(a,b)$ and let $C$ be a nontrivial 2-absorbing subuniverse of $\alg E$. 
By Proposition~\ref{prop:inside_abs} there is an s-walk from $a$ to some $a' \in C$ contained in $E$: either $a \not\in C$ and there is an s-edge $(a,a')$ by that proposition, or $a \in C$ and we have the trivial s-walk $a$ from $a$ to $a$. If $a' \in B$, then we are done, so we assume $a' \not\in B$. 
Similarly, there is an s-walk from $b$ to some $b' \in C$ and, since $B$ is s-closed, we have $b' \in B$. 
Let $\alg A' = \Sg(a',b')$ and $B' = B \cap A'$. Note that $\alg A'$ is a proper subalgebra of $\alg A$ (since $A' \subseteq C \subsetneq E \subseteq A$), that $B'$ is an s-closed subset of $\alg A'$ (since $B$ is s-closed in $\alg A$ and $B' \subseteq B$), and that $\alg A'$ with $B'$ satisfies the condition of item (b). By induction hypothesis there is an s-walk from $a'$ to some element $b'' \in B'$ contained in $\Sg(a',b'') \subseteq E$. But we also have a walk from $a$ to $a'$ contained in $E$ and we are done by concatenating these two walks.

Now suppose that (c) holds. Let $\alg F$ be the subalgebra of $\alg A^{A^2}$ formed by all the binary term operations of $\alg A$ (recall Subsection~\ref{subsec:app_prelim_pp}) and let $\alg R \leq \alg A^X$ be the projection of $\alg F$ onto the set of coordinates $X = \{(a,b),(b,a): a \in A, b\in B\}$. 
Notice that any $f \in R \cap B^X$ witnesses that $B$ is a 2-absorbing subset of $\alg A$, therefore it is enough to show that $R$ intersects $B^X$. Assume that this is not the case, for the sake of proving a contradiction.

Let $Y \subseteq X$, $n \in \mathbb N$ be such that
\begin{itemize}
    \item $Y$ is maximal such that the projection of $R$ onto $Y$ intersects $B^Y$ (note that $Y \subsetneq X$ by the assumption), and 
    \item $n$ is the smallest number such that there exists a coordinate $\tuple{x} \in X \setminus Y$, there exists $f \in R$ with $f|_Y \in B^Y$, and there exists an s-walk $a_1, \dots, a_n$ in $\proj_{\tuple{x}} R$ with $a_1 = f(\tuple{x})$ and $a_n \in B$ (and fix such a pair $\tuple{x}$ and an s-walk). 
\end{itemize}
Note that the definition in the second item makes sense since $\proj_{\tuple{x}} R$ contains an element of $B$ (as witnessed by one of the two projection operations in $R$), so such an s-walk $a_1, \dots, a_n$ exists by (c). 
Now $\{a_2\}$ is a 2-absorbing subuniverse of $\Sg_{\alg A}(a_1,a_2) = \{a_1,a_2\}$ since $(a_1,a_2)$ is an s-edge. 
Therefore the subuniverse $R'$ of $\alg R$ obtained by fixing the coordinate $\tuple{x}$ of $R$ to $\{a_2\}$ 2-absorbs the subalgebra of $\alg R$ obtained by fixing the same coordinate to $\{a_1,a_2\}$. By Proposition~\ref{prop:inside_abs}, there exists an s-edge from $f$ to an element of $R'$, giving us an s-edge $(f,g)$ in $R$ for some $g \in R$ with $g(\tuple{x}) = a_2$. Since $(f,g)$  is an s-edge, the pair $(f(\tuple{y}),g(\tuple{y}))$ is, for any $\tuple{y} \in Y$, also an s-edge (witnessed by the same term). Recalling that $B$ is s-closed we obtain $g|_Y \in B^Y$. But $g(\tuple{x})$ starts the s-walk $a_2, \dots, a_n$, which is one step shorter than the original walk $a_1, \dots, a_n$, a contradiction to the minimality of $n$ (if $n>2$) or the maximality of $Y$ (if $n=2$). 
\end{proof}

The last lemma needed for the proof of Theorem~\ref{thm:abs_stable} provides some information about two-generated algebras which do not necessarily form edges. It is a weaker version of Theorem~\ref{thm:simple_non-edge_has_ternary_absorption}. 

\begin{lemma}\label{lem:baby_existence_of_ternary_absorption} If $\algA$ is a minimal Taylor algebra which is generated by two distinct elements $a,b \in A$, then either $\algA$ has a nontrivial abelian quotient, or $\algA$ has a nontrivial 3-absorbing subuniverse.
\end{lemma}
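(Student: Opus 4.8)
The statement to prove is Lemma~\ref{lem:baby_existence_of_ternary_absorption}: a two-generated minimal Taylor algebra $\alg A = \Sg_{\alg A}(a,b)$ either has a nontrivial abelian quotient or a nontrivial $3$-absorbing subuniverse. The plan is to reduce to a \emph{simple} quotient and apply the machinery of Section~\ref{sec:taylor}, in particular Theorem~\ref{ff:subdirect}, then lift the conclusion back up using the fact that $3$-absorption lifts along quotients (Proposition~\ref{prop:trans-three-abs}) and abelianness of a quotient is clearly inherited by the original algebra.

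First I would fix a maximal congruence $\alpha$ of $\alg A$ and pass to $\alg S = \alg A/\alpha$, which is simple; note $\alg S$ is still minimal Taylor by Proposition~\ref{prop:MinimalVariety}, and still two-generated, say by $\bar a = a/\alpha$ and $\bar b = b/\alpha$. If $\alg S$ has a nontrivial abelian quotient (indeed if $\alg S$ is itself abelian), then $\alg A$ has a nontrivial abelian quotient too and we are done, so assume $\alg S$ has no nontrivial abelian quotient; since $\alg S$ is simple this means $\alg S$ is not abelian. Now apply Theorem~\ref{ff:subdirect} in the ``every algebra'' form discussed after it: $\alg S$ either (1) has no proper irredundant subdirect subpower, or (2) has a proper irredundant binary subdirect subpower, or (3) has a ternary strongly functional subpower. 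Case (3) is impossible because Proposition~\ref{prop:ImpliesAbelian} would force $\alg S$ to be abelian. In case (2), since $\alg S$ is simple the binary relation is linked, so by Proposition~\ref{prop:center} it pp-defines a proper subdirect central relation on $S$, whose left center is then a nontrivial center of $\alg S$, hence a nontrivial $3$-absorbing subuniverse of $\alg S$ by Proposition~\ref{prop:center_absorbs} (or Proposition~\ref{prop:center_absors_in_T}). In case (1), I would invoke Theorem~\ref{thm:maj}: since $\alg S$ is simple with no proper irredundant subdirect subpowers, there is a ternary term operation $t$ acting as majority on every pair $(a',b') \notin \mu_{\alg S}$; if $\mu_{\alg S}$ is not the full relation, pick such a pair inside $\{\bar a, \bar b\}$-generated subalgebras to produce a majority edge and hence (via Theorem~\ref{thm:center_abs} or directly: the blocks of a majority edge are $3$-absorbing) a nontrivial $3$-absorbing subuniverse; if $\mu_{\alg S}$ \emph{is} full then $\alg S$ is generated by two elements $\bar a, \bar b$ with $\Sg(\bar a, \bar b) = S$, which forces $|S| = 2$ — but the four two-element minimal Taylor algebras are handled directly (the two semilattices and the majority algebra have nontrivial $3$-absorbing singletons, and the affine $\zee 2$ algebra is abelian).

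Having obtained, in every surviving case, either an abelian $\alg S$ (excluded) or a nontrivial $3$-absorbing subuniverse $\bar B \abs_3 \alg S = \alg A/\alpha$, I would lift: let $B$ be the preimage of $\bar B$ under $A \to A/\alpha$. Then $B$ is a subuniverse of $\alg A$, and $B \abs_3 \alg A$ — this is immediate because a ternary witness $\bar f$ on $\alg S$ lifts to the corresponding term operation $f$ on $\alg A$, and $f(\tuple c) \in B$ whenever at least two coordinates of $\tuple c$ lie in $B$, since modulo $\alpha$ this says $\bar f$ of a tuple with at least two coordinates in $\bar B$ lands in $\bar B$. Finally $B$ is nontrivial: it is proper because $\bar B \subsetneq S$, and nonempty because $\bar B \neq \emptyset$.

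\textbf{Main obstacle.} The delicate point is the case analysis when $\mu_{\alg S}$ is \emph{not} the full relation but the two generators $\bar a, \bar b$ happen to lie in the same $\mu_{\alg S}$-class, so Theorem~\ref{thm:maj} gives a majority operation but not obviously a majority \emph{edge} between $\bar a$ and $\bar b$ themselves; here one must instead use that the $\mu_{\alg S}$-class containing the generators is a proper subalgebra and argue by induction on $|A|$ (or invoke the general connectivity/edge results of Corollary~\ref{cor:connected} together with Theorem~\ref{thm:minimaledgesquotients} to locate \emph{some} majority or semilattice edge, whose blocks are $2$- or $3$-absorbing). I expect setting up this induction cleanly — making sure the inductive hypothesis applies to a genuinely two-generated proper subalgebra and that the resulting $3$-absorbing subuniverse of the subalgebra can be enlarged to one of $\alg A$ via Proposition~\ref{prop:trans-three-abs} — to be the technically fussiest part, while all the other cases are short applications of already-stated results.
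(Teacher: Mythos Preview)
Your overall strategy---reduce to a simple quotient, apply the trichotomy from Theorem~\ref{ff:subdirect}, and lift a $3$-absorbing subuniverse back---is exactly the paper's approach, and your handling of case~(3) (strongly functional $\Rightarrow$ abelian) and case~(2) (binary irredundant subdirect $\Rightarrow$ linked $\Rightarrow$ nontrivial $3$-absorbing via Corollary~\ref{cor:newAT}) is correct.

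The gap is in case~(1). Your claim that ``$\mu_{\alg S}$ full forces $|S|=2$'' is unjustified: $\mu_{\alg S}$ being full means only that the \emph{equivalence generated by} pairs $(c,d)$ with $\Sg(c,d)\neq S$ is all of $S^2$, not that every pair generates a proper subalgebra. This is perfectly compatible with $\Sg(\bar a,\bar b)=S$ and $|S|>2$. Conversely, the ``main obstacle'' you flag---$\mu_{\alg S}$ not full but $\bar a,\bar b$ in the same class---is a non-issue: you may pick \emph{any} pair $(c,d)\notin\mu_{\alg S}$, and then Theorem~\ref{thm:maj} gives a majority on $\{c,d\}$, so by Proposition~\ref{ff:TMsubs} $\{c,d\}$ is a subuniverse, hence equals $S$ (since $\Sg(c,d)=S$), and $\{c\}\abs_3\alg S$. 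Your proposed inductive fix does not work either, since $\mu_{\alg S}$-classes are not in general subalgebras.

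The paper handles case~(1) differently and avoids $\mu_{\alg S}$ altogether by exploiting the specific generators. Either $(\bar a,\bar b)$ is an s-edge (then $|S|=2$ and done), or $\Sg_{\alg S^2}((\bar a,\bar b),(\bar b,\bar a))$ is proper subdirect (it misses $(\bar b,\bar b)$), hence by case~(1) redundant, i.e.\ the graph of an automorphism swapping $\bar a,\bar b$. Then $\Sg_{\alg S^3}((\bar b,\bar a,\bar a),(\bar a,\bar b,\bar a),(\bar a,\bar a,\bar b))$ is subdirect and easily seen to be irredundant, hence by case~(1) full; the tuple $(\bar a,\bar a,\bar a)$ gives a term $t$ with $t(\bar b,\bar a,\bar a)=t(\bar a,\bar b,\bar a)=t(\bar a,\bar a,\bar b)=\bar a$, and the automorphism gives the symmetric identities, so $t$ is majority on $\{\bar a,\bar b\}$. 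Proposition~\ref{ff:TMsubs} then forces $\{\bar a,\bar b\}=S$, and $\{\bar a\}\abs_3\alg S$.
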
 

\begin{proof}
Factoring by a maximal congruence we can assume that $\alg A$ is simple (as 3-absorbing subuniverses lift from quotients). If $\alg A$  has a ternary strongly functional subpower or a binary irredundant subdirect subpower, then we are done by Proposition~\ref{prop:ImpliesAbelian} or Corollary~\ref{cor:newAT} (since the subpower is then necessarily linked by the simplicity of $\alg A$). Also, if $(a,b)$ is an s-edge, we are done as well.

Otherwise, $\Sg_{\alg A^2}((a,b),(b,a))$ is not full (it does not contain $(b,b)$), so it must be a graph of a bijection -- an automorphism of $\alg A$. Moreover, by Theorem~\ref{ff:subdirect}, $\alg A$ has no proper irredundant subdirect subpowers, therefore $\Sg_{\alg A^3}((b,a,a),(a,b,a),(a,a,b))$ is full. In particular, it contains $(a,a,a)$, therefore there exists a term operation $t$ such $t(b,a,a)=t(a,b,a)=t(a,a,b)=a$. Since $\alg A$ has an  automorphism switching $a$ and $b$ we see that $t$ acts as a majority operation on $\{a,b\}$. Now $\{a,b\}$ is a subuniverse of $\alg A$ by  Proposition~\ref{ff:TMsubs}, therefore $A = \{a,b\}$ as $\alg A$ is generated by $a$,$b$. The majority operation witnesses that $\{a\}$ (and $\{b\}$) 3-absorbs $\alg A$. 
\end{proof}

\BinAbsIsStable*

\begin{proof} \hypertarget{ProofThm519}{}
 Lemma~\ref{lem:absorption_implies_stable} shows that (a) implies (b), and so we concentrate on the other direction. 
 
 By Proposition~\ref{prop:s-closed-bin}, it is enough to show that  for any $a \in A\setminus B$ and any $b \in B$, there is an s-edge from $a$ to an element of $\Sg(a,b) \cap B$. We may assume without loss of generality that $\Sg(a,b) = A$, and we will inductively assume that the theorem is true for algebras of size smaller than $|A|$.

If $\algA$ has a nontrivial affine quotient, then since $B$ is stable under abelian edges, $B$ must intersect the congruence class which contains $a$, so we can apply the inductive assumption to find an  s-edge from $a$ to $B$ which is contained in that congruence class. Otherwise, Lemma~\ref{lem:baby_existence_of_ternary_absorption} implies that there is some nontrivial  3-absorbing subalgebra $\alg{C}$ of $\algA$.

Next we show that $C$ intersects $B$. Suppose the contrary and apply Proposition~\ref{prop:inside_abs} to the element $b$ and $C \abs_3 \alg A$ -- we get an s-edge from $b$ to $C$, which is impossible since $B$ is s-closed, or a majority edge $(b,c)$ with $c \in C$ such that the witnessing congruence $\theta$ of $\alg E = \Sg_{\alg A}(b,c)$ has two equivalence classes, $C \cap E$ and $E \setminus C$, which is impossible since $B$  is stable under majority edges. (Note that we can get a stronger form of instability in the majority case as follows. 
We get $B \cap E \abs_2 E \setminus C$ by induction hypothesis, therefore $B \cap E \abs_3 \alg E$ by transitivity of 3-absorption stated in item (3) of Proposition~\ref{prop:3_abs_intersect}. Then the union of $B \cap E$ and $C \cap E$ is a subuniverse of~$\alg E$ by item (1) in Proposition~\ref{prop:3_abs_intersect}. Since $E$ is generated by $b \in B \cap E$ and $c \in C \cap E$, we obtain $E \setminus C = B \cap E$.)

Now we have $C \cap B \neq \emptyset$ and  we can apply the induction hypothesis to see that $C \cap B$ 2-absorbs $\alg{C}$. This implies that $C \cap B$ is a 3-absorbing subalgebra of $\algA$ by the transitivity of 3-absorption (item (3) in Proposition~\ref{prop:3_abs_intersect}). Thus we may assume without loss of generality that $C \subseteq B$.

By Proposition \ref{prop:inside_abs}, there is either a semilattice edge from $a$ to $c \in C$, or a majority edge from $a$ to $c \in C$ witnessed by the congruence on $\Sg(a,c)$ with classes $\Sg(a,c) \cap C$ and $\Sg(a,c) \setminus C$. In the first case we are done, so suppose we are in the second case. 
Since $\Sg(a,c) \cap C \subseteq B$ and $B$ is stable under majority edges, the congruence class $\Sg(a,c) \setminus C$ has a nonempty intersection with~$B$.  Then we may apply the induction hypothesis to $\Sg(a,c) \setminus C$ to see that there is a semilattice edge from $a$ to $(\Sg(a,c) \setminus C) \cap B$, which finishes the proof.
\end{proof}

Notice that the  proof of (b) implies (a) shows a stronger claim: it is enough to assume that $B$ is stable under abelian edges, it is s-closed, and that there is no (minimal) majority edge $(b,a)$, witnessed by a congruence $\theta$ on $\Sg(b,a)$, such that $b/\theta \subseteq B$ and $a/\theta \cap B = \emptyset$.

Our next project is to prove that (b) implies (a) in Theorem~\ref{thm:abs_stable}. 
The following lemma combines the relational description of strongly projective subuniverses in Proposition~\ref{prop:rel_descr_strong_blocking} and the sufficient condition for abelianess in Proposition~\ref{prop:ImpliesAbelian}.

\begin{lemma} \label{lem:symmetric_ternary_relation_and _strong_projectivity}
  Let $\alg A$ be a simple algebra and $R \leq_{sd} \alg A^3$ be a symmetric ternary relation whose projection to each pair of coordinates is full. For any $a \in A$ denote by $R_a$ the relation $R_a(x,y) \equiv R(a,x,y)$. Then either $\alg A$ is abelian or the set 
  $B = \{a \in A: R_a \mbox{ is linked } \}  $ is a nonempty strongly projective subuniverse of $\alg A$. 
\end{lemma}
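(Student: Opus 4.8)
The plan is to analyze the relation $R$ together with its ``slices'' $R_a$ and show that the set $B$ of those $a$ whose slice $R_a$ is linked behaves exactly like the set of elements where a ternary relation forces abelianness, unless the whole algebra is already abelian. The key inputs will be Proposition~\ref{prop:ImpliesAbelian} (strongly functional relations force abelianness) and Proposition~\ref{prop:rel_descr_strong_blocking} (strongly projective subuniverses are characterized by the relation $B(x) \vee (y=z)$ being a subpower). First I would record the easy structural facts about the slices: since $R \leq_{sd} \alg A^3$ with all binary projections full, each $R_a \leq_{sd} \alg A^2$ with both unary projections full, and since $R$ is symmetric each $R_a$ is symmetric. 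Because $\alg A$ is simple, a subdirect symmetric binary subpower is either linked or a graph of an automorphism; in the latter case, being symmetric forces it to be the graph of an involution (or the equality). So $a \in B$ iff $R_a$ is linked, and $a \notin B$ iff $R_a$ is the graph of an automorphism of $\alg A$ of order at most two.

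Next I would split into cases according to whether $B = A$. If $B = A$, i.e.\ every slice $R_a$ is linked, I claim $\alg A$ is abelian: one should be able to verify that $R$ is (inter-pp-definable with, or at least pp-defines) a strongly functional ternary relation, or directly mimic the proof of Proposition~\ref{prop:ImpliesAbelian} --- the relation $S$ on $A^2$ built there from the compatible ternary relation will have the diagonal as a block of the linkedness congruence on $\alg A^2$. Actually a cleaner route: if all $R_a$ are linked, then using the symmetry and subdirectness one can pp-define from $R$ a relation witnessing abelianness by the same ``four corners'' construction $R(u,v,x_1) \wedge R(u,v',x_2) \wedge R(u',v,y_1) \wedge R(u',v',y_2)$, where linkedness of the slices replaces the ``determined by two coordinates'' hypothesis in Proposition~\ref{prop:ImpliesAbelian} when checking that the linkedness congruence is proper. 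This is the step I expect to require the most care, since one must be sure the resulting congruence on $\alg A^2$ is not the full relation; here simplicity of $\alg A$ is what rules out the degenerate case, because a non-proper situation would force a compatible bijection contradicting that all slices are linked.

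If $B \neq A$, I would show $B$ is a nonempty strongly projective subuniverse. Nonemptiness: pick any $a \notin B$, so $R_a$ is the graph of an automorphism $\sigma$; then for $c = R_a \cap \text{(some other slice)}$ one can manufacture an element whose slice must be linked --- more concretely, compose/intersect slices using symmetry of $R$ to produce a slice that is simultaneously subdirect and not a graph of a bijection, forcing it into $B$. (This is analogous to Claim~\ref{claim:one}-style arguments in the proof of Theorem~\ref{thm:semilat}.) That $B$ is a subuniverse and strongly projective I would get from Proposition~\ref{prop:rel_descr_strong_blocking}: it suffices to show $T(x,y,z) \equiv B(x) \vee (y = z)$ is a subpower of $\alg A^3$, and $T$ should be pp-definable from $R$ --- the clause $B(x)$ is captured by ``the slice $R_x$ is linked'', which in a finite simple algebra is equivalent to $(R_x - R_x) + \dots + (R_x - R_x) = A^2$ for a fixed bounded number of summands, hence pp-expressible, and one packages this with $y=z$ using the standard trick that $R_x$ being a graph of a bijection (i.e.\ $x \notin B$) forces $y = z$ in the disjunctive encoding. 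The main obstacle throughout is making the pp-definitions honest: turning ``$R_a$ is linked'' into an actual primitive positive condition with a uniform number of composition steps, and checking that the congruence produced in the abelian case is proper rather than full. Both hinge on simplicity of $\alg A$, which I would invoke exactly at those two points.
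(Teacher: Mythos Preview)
Your argument for strong projectivity is essentially the paper's: you pp-define $S(x,y,z)$ expressing ``$y$ and $z$ are linked in $R_x$'', observe that by simplicity each $R_a$ is either linked (so $S_a = A^2$) or the graph of a bijection (so $S_a$ is equality), hence $S(x,y,z) \equiv B(x) \vee (y=z)$, and invoke Proposition~\ref{prop:rel_descr_strong_blocking}. That part is fine.

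The genuine gap is your case split: you have the abelian case exactly backwards. Abelianness arises when $B = \emptyset$, not when $B = A$. If $B = \emptyset$ then every $R_a$ is the graph of a bijection, which means a tuple in $R$ is determined by any two of its coordinates; together with the full-binary-projections hypothesis this makes $R$ strongly functional, and Proposition~\ref{prop:ImpliesAbelian} gives abelianness directly. Your claim that $B = A$ (every slice linked) forces $\alg A$ abelian is false: take $R = A^3$ for any simple non-abelian $\alg A$; every $R_a = A^2$ is linked, so $B = A$, yet $\alg A$ is not abelian. (In this situation $B = A$ is trivially a nonempty strongly projective subuniverse, so there is nothing to prove.) Correspondingly, your ``nonemptiness'' argument in the $B \neq A$ case cannot succeed as stated, since $B = \emptyset$ lands in that case and is genuinely possible---it is precisely the abelian alternative.

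The fix is simple: prove uniformly (as you essentially did) that $S = B(x) \vee (y=z)$ is a subpower, so $B$ is always strongly projective; then split on whether $B$ is empty. If nonempty, done; if empty, $R$ is strongly functional and $\alg A$ is abelian. No manufacturing of linked slices is needed, and no abelianness argument is needed in the linked case.
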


\begin{proof}
Consider the ternary relation $S$ defined  by $S(x,y,z)$ if $y$ and $z$ are "left-linked" in $R_x$, that is, $(y,z)$ is in the relation $(R_x-R_x)+ \cdots + (R_x - R_x)$ with a sufficient number of summands.
Note that $S$ is pp-definable from $R$, so $S$ is a subuniverse of $\alg A^3$. Since the projection of $R$ onto each pair of coordinates is full, the relation $R_a$ is subdirect for every $a$. If $R_a$ is linked, then $S_a$ (defined analogously to $R_a$) is full. If $R_a$ is not linked, then it is the graph of a bijection $A \to A$ (since $\alg A$ is simple), so $S_a$ is the equality relation.
In summary, $S$ is equal to the relation $B(x) \vee (y=z)$. By Proposition~\ref{prop:rel_descr_strong_blocking}, $B$ is strongly projective.

If $B$ is empty, then every $R_a$ is the graph of a bijection. But then, by symmetry of $R$, fixing any coordinate to any $a$ and projecting out this coordinate gives the graph of a bijection. Therefore $R$ is strongly functional, so $\alg A$ is abelian by Proposition~\ref{prop:ImpliesAbelian}, finishing the proof.
\end{proof}

We are ready to prove the main tool for Theorem~\ref{thm:abs_stable}.
It will be useful to generalize the concept of a $B$-essential relation introduced in Subsection~\ref{subsec:app_prelim_absorption} to infinite powers: A subuniverse~$R$ of  $\alg A^X$ is \emph{$B$-essential} if $R$ does not intersect $B^{X}$
but every projection of $R$ onto all but one of the coordinates intersects the corresponding power of $B$. By Proposition~\ref{prop:abs_blockers}, $\{a\}$ absorbs $\alg A$ by a term operation of arity $n$ if and only if it does not have any $\{a\}$-essential subpower of arity $n$.

\begin{theorem} \label{thm:real_abs_stable}
Let $\alg A$ be a Taylor algebra and $a \in A$. Suppose that $\{a\}$ does not absorb $\alg A$ but does absorb every proper subalgebra of $\alg A$ that contains $\{a\}$. Then
\begin{itemize}
    \item $\alg A$ has a nontrivial abelian quotient  or
    \item there exists a nonempty projective subuniverse $B$ of $\alg A$ such that $a \not \in B$.
\end{itemize}
\end{theorem}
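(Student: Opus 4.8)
The plan is to exploit the hypothesis that $\{a\}$ is not absorbing but is absorbing in all proper subalgebras containing $a$, which means there is a $\{a\}$-essential subpower of $\alg A$, and to make it as symmetric and low-dimensional as possible. First I would take, among all $\{a\}$-essential subpowers $R \leq \alg A^X$, one with $|X|$ minimal, and then among those one that is inclusion-minimal. By minimality of $|X|$ together with the assumption that $\{a\}$ absorbs every proper subalgebra, each one-coordinate projection $\proj_{X\setminus\{x\}}(R)$ must be all of (a power of) a proper subalgebra through which $\{a\}$ already absorbs — in fact I expect to be able to arrange that $R$ is subdirect with full projections onto every proper subset of coordinates, by replacing $R$ by a projection or by intersecting with $B$-restrictions and invoking Proposition~\ref{prop:abs_blockers}. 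This is exactly the situation where the arity must be small: if $|X| \geq 4$ one glues two copies of $R$ along a shared face to produce an $\{a\}$-essential subpower of smaller arity, contradicting minimality (the same gluing idea used in the proof of Theorem~\ref{ff:subdirect} and Proposition~\ref{prop:getting3abs}). So $R$ is binary or ternary.

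Next I would symmetrize. Since the symmetric group acts on the coordinates of a $\{a\}$-essential relation and $\{a\}$ is preserved, one can intersect $R$ with its images under coordinate permutations; using minimality/irredundancy one checks this keeps it $\{a\}$-essential, so we may assume $R$ is a \emph{symmetric} binary or ternary relation, subdirect, with full projections onto every proper set of coordinates, and with $(a,a,\dots,a)\notin R$. If $R$ is binary, symmetric, subdirect, and proper, then after factoring by a maximal congruence $\alpha$ (3-absorbing and projective subuniverses as well as abelian quotients all lift through quotients, so we may assume $\alg A$ simple) the relation $R$ is linked, hence by Proposition~\ref{prop:center} it pp-defines a proper subdirect central relation; its left center $B$ is a projective subuniverse of $\alg A$ (projective unless $\alg A$ has a $2$-absorbing subuniverse — but a nontrivial $2$-absorbing subuniverse is itself projective by Proposition~\ref{prop:projective-implies-binabs} / Theorem~\ref{thm:bin_abs}), and because $(a,a)\notin R$ we get $a\notin B$. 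So in the binary case we are done.

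In the ternary case, $R$ is a symmetric subpower of $\alg A^3$ all of whose binary projections are full and with $(a,a,a)\notin R$. Here I would apply Lemma~\ref{lem:symmetric_ternary_relation_and _strong_projectivity}: either $\alg A$ is abelian — which gives the first alternative after lifting back the maximal-congruence quotient — or the set $B = \{c \in A : R_c \text{ is linked}\}$ is a nonempty strongly projective, hence projective, subuniverse. It remains to see $a\notin B$: if $R_a$ were linked it would be a subdirect proper linked binary relation (proper because $(a,a)\notin R_a$, since $(a,a,a)\notin R$), and then $R$ together with $R_a$ (and the already-available machinery: Proposition~\ref{prop:center}, Corollary~\ref{cor:newAT}) would produce a proper absorbing or central subuniverse of $\alg A$ not containing $a$; chasing this a little, one in fact gets that $\{a\}$ absorbs $\alg A$ after all, or that we land in the abelian case — either way a contradiction with $a\in B$ being the wrong conclusion. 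So $a\notin B$ and $B$ is the desired projective subuniverse.

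The main obstacle I anticipate is the reduction to small arity and to a symmetric relation: making the gluing argument actually produce a \emph{$\{a\}$-essential} (not merely proper subdirect) relation of smaller arity, and checking that symmetrization by intersecting coordinate-permuted copies does not destroy $\{a\}$-essentiality, both require care with the inductive setup (the hypothesis that $\{a\}$ absorbs all proper subalgebras is what keeps the projections onto proper coordinate sets full, and one must feed this in at exactly the right point). A secondary subtlety is the ternary endgame — ruling out $a\in B$, i.e. showing $R_a$ cannot be linked — which is where Corollary~\ref{cor:newAT} and Proposition~\ref{prop:center} have to be combined with the $\{a\}$-essentiality of $R$ to derive the contradiction; the abelian alternative must be carried along throughout since it is a genuine possible outcome.
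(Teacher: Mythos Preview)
Your symmetrization step is a genuine gap. Intersecting $R$ with its coordinate permutations can destroy $\{a\}$-essentiality entirely: in the affine algebra $(\zee{3}; x-y+z)$ with $a=0$, the relation $R=\{(x,y):x=y+1\}$ is $\{a\}$-essential (contains $(1,0),(0,2)$, omits $(0,0)$) but $R\cap(-R)=\emptyset$. You acknowledge this needs care, but it is not a matter of care---the method simply fails. The paper obtains a symmetric $\{a\}$-essential relation by a different route: since $\{a\}$ fails to $n$-absorb for every $n$, a pigeonhole over the ``bad'' entries $b_1,\ldots,b_n$ of an $\{a\}$-essential $n$-ary relation yields, for some fixed $b$, the symmetric relation $\Sg_{\alg A^{\mathbb N}}((b,a,a,\ldots),(a,b,a,\ldots),\ldots)$, which is $\{a\}$-essential of \emph{unbounded} arity. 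No minimization of arity is attempted.

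The second structural difference is that the paper, instead of minimizing arity, \emph{maximizes} the set $B(R)=\{b:(b,a,a,\ldots)\in R\}$ among symmetric $\{a\}$-essential $R$. Combined with a preliminary reduction---assume WLOG that every $S\leq_{sd}\alg A^2$ whose left center contains $a$ is full, else Corollary~\ref{cor:ATvariation} produces a proper projective (hence absorbing, hence $a$-avoiding) subuniverse---this maximality forces a clean dichotomy on the binary slice $S(x,y)=R(x,y,a,a,\ldots)$: either $B+S+S=B$ (so $S$ is a bijection and the ternary slice feeds into Lemma~\ref{lem:symmetric_ternary_relation_and _strong_projectivity} with $a\notin B$ automatic) or one shows directly that $R$ equals $\bigvee_i B(x_i)$, witnessing projectivity of $B$. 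Your binary case (``get a central relation, its center avoids $a$'') is not justified: Proposition~\ref{prop:center} hands you \emph{some} central relation pp-defined from $S$, and nothing prevents its center from containing $a$. The paper's WLOG assumption is precisely what closes this. Similarly, your ternary endgame (``if $R_a$ is linked, contradiction'') is hand-waved; in the paper's setup the case analysis guarantees $R_a$ is a bijection before invoking the lemma, so $a\notin B$ needs no separate argument.
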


\begin{proof}
   First, observe that it is enough to prove the claim for simple $\alg A$. Indeed, if $\alpha$ is a maximal congruence of $\alg A$, then either $a/\alpha$ absorbs $A$ (in which case $\{a\}$ absorbs $\alg A$ by transitivity of absorption, a contradiction to the assumptions), or it does not (in which case we apply the simple case -- observe that projective subuniverses lift from quotients).
   Assume therefore that $\alg A$ is simple.
   
   We further assume that every relation $S \leq_{sd} \alg A^2$ whose left center contains $a$ is full. Indeed, if it is not,
   then by Corollary~\ref{cor:ATvariation} either $\alg A$ has a proper projective and 2-absorbing subuniverse $B$ or the left center of $R$ is an absorbing subuniverse of $\alg A$. By transitivity of absorption and the assumptions, no absorbing subuniverse of $\alg A$ can contain $a$. Therefore $B$ would  be a projective subuniverse not containing $a$ and the proof would be concluded.

   Next we observe that there exists a symmetric $\{a\}$-essential relation $R \leq \alg A^{\mathbb{N}}$ (here it will be convenient to use an infinitary relation). Indeed, for any arity $n$ there exists by Proposition~\ref{prop:abs_blockers} some $\{a\}$-essential relation, i.e., containing tuples $(b_1,a,a, \dots,a)$, $(a,b_2,a\dots, a), \dots$ and not containing $(a,a, \dots, a)$. Taking an element $b = b(n)$ that appears the most often among the $b_i$ and fixing (and projecting out) the other coordinates to $a$ we get an $\{a\}$-essential relation of arity at least $n/|A|$ containing $(b,a,a\dots)$, $(a,b,a, \dots)$, \dots. Finally, by taking $R = \Sg_{\alg A^{\mathbb N}}((b,a, a, \dots, ),(a,b,a, \dots), \dots)$, where $b$ is an element that appears infinitely many times among the $b(n)$, we get a symmetric $\{a\}$-essential relation.
   
   For a symmetric relation $R \leq \alg A^{\mathbb N}$, denote $B(R) = \{b \in A: (b,a,a, \dots) \in R\}$, and take a symmetric $\{a\}$-essential relation $R$ (i.e., $a \not \in B(R) \neq \emptyset$) such that $B(R)$ is maximal. Set $B= B(R)$. We will show that $R$ either gives us 
   a ternary subpower forcing strong projectivity or abelianess, or $R$ is a witness for $B$ being a projective subuniverse, i.e., after fixing any co-finite collection coordinates to $a$ and projecting them out, $R$ becomes $B(x_1) \vee B(x_2) \vee \dots \vee B(x_k)$ (see Proposition~\ref{prop:cubetermblocker}); observe here that the obtained relation is indeed a subpower of $\alg A$ despite the infinite arity of $R$. Since $a \not\in B$ we will be done.
   
   We claim that $\Sg_{\alg A}(a,b) = A$ for every $b \in B$. Indeed, otherwise the relation $R$ defined by $\Sg_{\alg A^{\mathbb N}}\{(b,a,a, \dots)$, $(a,b,a, \dots), \dots\}$ is an $\{a\}$-essential symmetric subuniverse of $(\Sg_{\alg A}(a,b))^{\mathbb{N}}$, so $\{a\}$ does not absorb $\Sg_{\alg A}(a,b)$ by Proposition~\ref{prop:abs_blockers} (consider, as above, the relations obtained by fixing co-finite collections of coordinates to $a$), contradicting the assumptions of the theorem.
   
   Next observe that the projection of $R$, call it $Q$, to any two distinct coordinates is full, even after fixing all but one of the remaining coordinates to $a$. Indeed, such a projection $Q$ contains  the pairs $(a,a),(a,b),(b,a)$ (for any $b \in B$). As $\Sg(a,b)=A$, the left center of  $Q$ contains $\{a\}$ so it is full by the assumption made in the second paragraph of the proof.
   
   Denote by $S$ the projection of $R$ onto two of the coordinates after fixing the rest to $a$, i.e., $S(x,y) \equiv R(x,y,a,a,a, \dots)$. Note that if we fix a coordinate of $R$ to a set $C$ and project onto the remaining coordinates, then we get a symmetric relation $R'$ with $B(R') = C + S$, which will be $\{a\}$-essential iff $a \not\in C+S \neq \emptyset$. 
   
   Consider the relation $R'$  obtained by fixing a coordinate of $R$ to $B+S$ and projecting to the remaining coordinates. Observe that $R'$ is symmetric and the set $B' := B(R') = B+S+S$ contains~$B$ (note that $S$ is symmetric) therefore, by the maximality of $B = B(R)$, either $B' = B$ or $a \in B'$. 
   
   If $B' (= B+S+S) = B$, then $S$ is not linked, so it is a graph of a bijection by simplicity of $\alg A$. We fix all but three arbitrarily selected coordinates of $R$ to $a$, project onto the three coordinates, and call $T$ the obtained subuniverse of $\alg A^3$. The relation $T$ has full projection to the first two coordinates (as we argued above) and the binary relation obtained by fixing a coordinate to $a$ is the graph of a bijection. 
   Lemma~\ref{lem:symmetric_ternary_relation_and _strong_projectivity} now implies that either $\alg A$ is abelian or $\alg A$ has a nonempty strongly projective subuniverse that does not contain $a$. Since strong projectivity implies projectivity, we are done.

   If $a \in B'$, then from $a + S = B$ we see that there exists some $(b,b') \in (B \times B) \cap S$. By similar reasoning to the proof of Proposition~\ref{prop:inside_abs}, $S$ contains $B(x_1) \vee B(x_2)$ (it contains $(a,b')$ as well as $(b,b')$ so also $A \times \{b'\}$ since $\Sg\{a,b\} = A$; then for any $b'' \in B$ it contains $(b'',b')$ and $(b'',a$) so $\{b''\} \times A$ as well). Furthermore, $S$ is equal to $B(x_1) \vee B(x_2)$ since otherwise there is $(c,c') \in S$, $c,c' \not\in B$, but then we can fix in $R$ a coordinate to $c$ (and project onto the remaining ones): the obtained $R'$ has $B(R') = c+S$, which contains $B$ and $c'$ but does not contain $a$, a contradiction to the maximality of $B=B(R)$. 
   
   Pick any $b \in B$ and consider $T(x,y) = R(x,y,b,a,a, \dots)$. It contains $(a,b), (b,a), (a,a)$, therefore it is the full relation $A^2$ (as the left center of $T$ contains $a$). It follows that  after fixing in $R$ all but three coordinates to $a$ (and projecting them out), the resulting ternary relation contains $B(x_1) \vee B(x_2) \vee B(x_3)$. Similarly to the argument above, it cannot contain any other triple, such as $(c,c',c'')$, because we would  fix two coordinates to $(c,c')$ (and project them out) and get a relation $R'$ with larger $B(R')$ (containing $B$ and $c''$). 
   
   If the algebra $\alg A$ is minimal Taylor we can stop the proof here, because we have already obtained the subpower $B(x_1) \vee B(x_2) \vee B(x_3)$ (see Theorem~\ref{thm:bin_abs}).
   For general Taylor algebras, we can by induction obtain the subpower $B(x_1) \vee B(x_2) \vee \dots \vee B(x_n)$ for every $n$, in a completely analogous way. 
\end{proof}

Theorem~\ref{thm:abs_stable} is a simple consequence of the results proved so far.

\AbsIsStable*

\begin{proof} \hypertarget{ProofThm521}{}
That absorbing sets are stable under semilattice and abelian edges and that (a) implies~(b) follow from Lemma~\ref{lem:absorption_implies_stable}. That (b) implies (a) follows from Theorem~\ref{thm:real_abs_stable}. Indeed, if $\{b\}$ does not absorb $\alg A$ then we take a minimal subalgebra $\alg B$ of $\alg A$ such that $b \in B$ and $\{b\}$ does not absorb $\alg B$, and apply the theorem. Either $\alg B$ has a nontrivial abelian quotient, in which case $\{b\}$ is not stable under abelian edges, or $\alg B$ has a nonempty projective subuniverse $C$ that does not contain $b$. In the latter case $C \abs_2 \alg A$ by Proposition~\ref{prop:projective-implies-binabs} and there exists a (minimal) semilattice edge $(b,c)$ with $c \in C$ by Proposition~\ref{prop:inside_abs}, therefore $\{b\}$ is not stable under semilattice edges.
\end{proof}

The ``unified operations'' theorem, Theorem~\ref{thm:unifiededges}, is a simple consequence of the results proved so far as well. The following proposition implies a refined version discussed in Subsection~\ref{subsec:abs_vs_edges}.

\begin{proposition} \label{prop:cyclicInEZAlgebras}
 Let $t$ be an $n$-ary cyclic term operation of a minimal Taylor algebra $\alg A$ and $B \abs_3 \alg A$. Then for any $m \geq n/2$ and any $\tuple{a} \in A^n$ such that $a_1, \dots, a_m \in B$ we have $t(\tuple{a}) \in B$. 
\end{proposition}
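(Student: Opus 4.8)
The plan is to reduce this to the already-established fact that $B(x_1)\vee B(x_2)\vee\dots\vee B(x_k)$ is a subpower of $\alg A$ for every $k$, which we know from Theorem~\ref{thm:bin_abs} (condition (b) for arity $3$, hence for every arity by Proposition~\ref{prop:cubetermblocker}, since $B\abs_3\alg A$ implies $B$ is a projective subuniverse by Theorem~\ref{thm:center_abs} item (c)). Actually, for the argument it is cleanest to use the relation $R(x,y)=B(x)\vee B(y)$, which is a subpower of $\alg A^2$ by Theorem~\ref{thm:center_abs}(b), together with cyclicity of $t$.

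First I would fix $\tuple a\in A^n$ with $a_1,\dots,a_m\in B$, where $m\geq n/2$, and set $b=t(\tuple a)$. The key observation is that because $m\geq n/2$ we have $2m\geq n$, so there is a cyclic shift of $\tuple a$ by some amount $j$ — namely taking $j=m$ (or $j=n-m$, whichever lands correctly) — such that for the shifted tuple $\tuple a'=(a_{1+j},a_{2+j},\dots)$ (indices mod $n$) each coordinate position $i$ satisfies: $a_i\in B$ or $a'_i\in B$. Indeed, the positions where $\tuple a$ has a $B$-entry are $\{1,\dots,m\}$ and the positions where $\tuple a'$ has a $B$-entry are $\{1,\dots,m\}$ shifted by $-j$; choosing $j=n-m$ these two index sets are $\{1,\dots,m\}$ and $\{m+1,\dots,2m\pmod n\}=\{m+1,\dots,n,1,\dots,2m-n\}$, whose union is all of $\{1,\dots,n\}$ precisely because $2m\geq n$. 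So the pair $(a_i,a'_i)$ lies in $R=B(x)\vee B(y)$ for every $i$.

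Then I would apply $t$ coordinatewise: since $R$ is a subpower of $\alg A^2$, the pair $(t(\tuple a),t(\tuple a'))$ lies in $R$, i.e. $b\in B$ or $t(\tuple a')\in B$. But $\tuple a'$ is a cyclic shift of $\tuple a$, so by cyclicity of $t$ we have $t(\tuple a')=t(\tuple a)=b$. Hence in either case $b\in B$, which is exactly the claim.

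I do not anticipate a genuine obstacle here; the only thing to be careful about is the bookkeeping of the shift amount and making sure the two index sets really do cover $\{1,\dots,n\}$ when $2m\geq n$ (and handling the trivial edge case $m=n$, where $\tuple a\in B^n$ and the conclusion is immediate from $B$ being a subuniverse). One should also note for completeness that this is exactly the statement that $t\sim_B \maj_n$ in the notation of Definition~\ref{def:gabs} restricted to tuples that are characteristic-above a fixed weight-$\lceil n/2\rceil$ vector, which is consistent with — indeed a cleaner restatement of part of — Theorem~\ref{thm:tomajclone}; but the direct argument via $R=B(x)\vee B(y)$ and cyclicity is self-contained and shorter.
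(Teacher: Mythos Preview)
Your executed argument is correct and is exactly the paper's proof: shift $\tuple a$ cyclically by $n-m$ so that every coordinate pair lies in $R(x,y)=B(x)\vee B(y)$, apply compatibility of $t$ with $R$ (Theorem~\ref{thm:center_abs}(b)), and use cyclicity to conclude $t(\tuple a)\in B$. One caveat about your opening paragraph: the claim that $B\abs_3\alg A$ implies $B$ is a projective subuniverse is false---Theorem~\ref{thm:center_abs}(c) gives that $B$ is a \emph{center}, not projective, and projectivity is equivalent to the strictly stronger $2$-absorption (Theorem~\ref{thm:bin_abs})---so it is good that you abandoned that route.
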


\begin{proof}
Let $\tuple{b}$ be the tuple obtained by cyclically shifting the tuple $\tuple{a}$ by $n-m$ positions. Since $m \geq n/2$, each pair $(a_i,b_i)$ is in the relation $B(x_1) \vee B(x_2)$; and since $t$ is cyclic, we have $t(\tuple{a}) = t(\tuple{b})$. But $t$ is compatible with $B(x_1) \vee B(x_2)$ by Theorem~\ref{thm:center_abs}, therefore $t(\tuple{a}) \in B$. 
\end{proof}

\UnifiedEdges*

\begin{proof} \hypertarget{ProofThm523}{}
  Choose positive integers $n,k,l$ such that $n \equiv 1 \pmod{|A|!}$, $k \equiv 1 \pmod{|A|!}$, $2k + l = n$, and $2k \geq l$. Let $m=k$. 
  Note that $\alg A$ has a cyclic operation $t$ of arity $n$: every prime divisor $p$ of $n$ is greater than $|A|$ and thus there exists a cyclic term operation $t_p$ of arity $p$ by Theorem~\ref{thm:cyclic}. The operation $t$ can then be obtained by a star composition of the $t_p$s. 
  
  Define $f$ by
  $$
  f(x,y,z)  =      t(\underbrace{x,x, \dots, x}_{k \times},
         \underbrace{y,y, \dots, y}_{l \times },
         \underbrace{z,z, \dots, z}_{m \times }).
$$
Because $\alg{E}/\theta$ and $\alg A$ are both minimal Taylor algebras, the 
 operation $f$ satisfies the second and the fourth item by using the cyclicity of $t$ and Proposition~\ref{prop:cyclicInEZAlgebras}, and it satisfies the first and the fifth item by Theorem~\ref{thm:bin_abs}. Note that for these claims we actually only need $k+l, k+m, l+m \geq n/2$.
 
   For the third item recall item (c) in Theorem~\ref{thm:minimaledgesquotients} and note that a cyclic operation in an affine Mal'cev algebra of $\zee{q}$ is equal to $\sum_{i=1}^n ax_i \pmod{q}$ where $na \equiv 1 \pmod{q}$.
  By simple arithmetic we conclude that if the conditions on $k,l,m$ hold, then the ternary operation $f$ is $x-y+z$. Indeed, since $q$ divides $|A|!$, we get $k = m \equiv 1 \equiv n \pmod{q}$, $a \equiv na \equiv 1 \pmod{q}$, and $l = n-2k \equiv -1 \pmod{q}$, so
  $f(x,y,z) \equiv akx + aly + azm \equiv x-y+z \pmod{q}$.
\end{proof}

Now our aim is to prove a refined version of Lemma~\ref{lem:baby_existence_of_ternary_absorption} with the final goal of Theorem~\ref{thm:single_operation_generates}. The following lemma will be used to produce a proper 3-absorbing subuniverse containing a generator. 

\begin{lemma} \label{lem:producing-three-abs}
Let $\alg A$ be a minimal Taylor algebra generated by $a,b \in A$. Suppose that $C \abs_3 \alg A$ is nontrivial and 2-absorbs a subalgebra of $\alg A$ containing $a$ and a subalgebra of $\alg A$ containing $b$.  
Then $\alg A$ has a proper 3-absorbing subuniverse containing $a$ or $b$. 
\end{lemma}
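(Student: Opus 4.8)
The plan is to leverage the two-generated structure together with the abundance of semilattice and majority edges that $C\abs_3\alg A$ produces (via Proposition~\ref{prop:inside_abs}), and then use transitivity of $3$-absorption. Since we have nothing to prove if $a$ or $b$ already lies in $C$, assume $a,b\notin C$. By hypothesis there is a subalgebra $\alg D_a\leq\alg A$ with $a\in D_a$ and $C\cap D_a\abs_2\alg D_a$, and similarly $\alg D_b$ with $b\in D_b$ and $C\cap D_b\abs_2\alg D_b$; note $C\cap D_a$ and $C\cap D_b$ are nonempty proper $2$-absorbing subuniverses of $\alg D_a$, $\alg D_b$ respectively.

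First I would apply Proposition~\ref{prop:inside_abs}(1) inside $\alg D_a$ to the $2$-absorbing subuniverse $C\cap D_a$ and the element $a$: this gives $c\in C\cap D_a$ such that $(a,c)$ is a semilattice edge with $\Sg_{\alg A}(a,c)=\{a,c\}$. Symmetrically there is $c'\in C\cap D_b$ with $(b,c')$ a semilattice edge and $\Sg_{\alg A}(b,c')=\{b,c'\}$. Thus $\{a,c\}$ and $\{b,c'\}$ are two-element subalgebras of $\alg A$, term-equivalent to semilattices with absorbing elements $c$ and $c'$ respectively (Theorem~\ref{thm:minimaledgesquotients}(a)). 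In particular, in the subalgebra $\{a,c\}$ the singleton $\{c\}$ is $2$-absorbing, hence $3$-absorbing there; and $\{c\}\subseteq C\cap\{a,c\}$, so $C\cap\{a,c\}=\{c\}$ is $3$-absorbing in $\{a,c\}$. Now I would like to ``pull'' the absorption back: the key point is that $\{a,c\}\abs_3\alg A$ would suffice, since then $\{a,c\}$ is a proper (it omits, e.g., $b$, using that $\alg A$ is not two-element — or more carefully, if $\{a,c\}=A$ we are already done because then $A=\{a,c\}$ has the proper $3$-absorbing subuniverse $\{c\}$, unless $A$ has only two elements, in which case the statement is either vacuous or immediate) $3$-absorbing subuniverse containing $a$.

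So the core task reduces to showing $\{a,c\}\abs_3\alg A$ (or symmetrically $\{b,c'\}\abs_3\alg A$). Here is where I expect the main obstacle: a priori $\{a,c\}$ need not absorb $\alg A$ at all. To handle this I would instead argue as follows. Consider the subalgebra $\alg E=\Sg_{\alg A}(c,c')\leq C$ — wait, more useful: use that $C\abs_3\alg A$ and $\{c\}\abs_3\{a,c\}$, $\{c'\}\abs_3\{b,c'\}$. By Proposition~\ref{prop:trans-three-abs}, if $\{c\}\abs_3\alg A$ we would be done (that is even a singleton $3$-absorbing subuniverse, but it doesn't contain $a$ or $b$...). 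The right move is: consider $B_0:=\Sg_{\alg A}(a,c')$ or $\Sg_{\alg A}(c,b)$; by symmetry say $F:=\Sg_{\alg A}(c,b)$. Since $c\in C$ and $C\abs_3\alg A$, we get $C\cap F\abs_3\alg F$, and $C\cap F$ is nontrivial (contains $c$, omits $b$). Also $F\ni b$ and, since $b\in D_b$ with $C\cap D_b\abs_2\alg D_b$, intersecting with $F$... I would in fact induct on $|A|$: if $F\subsetneq A$ then by Lemma~\ref{lem:baby_existence_of_ternary_absorption} applied to $\alg F$, or directly by Theorem~\ref{thm:bin_abs_stable}/Proposition~\ref{prop:inside_abs}, we locate inside $\alg F$ a proper $3$-absorbing subuniverse containing $b$, and then Proposition~\ref{prop:trans-three-abs} (since any $3$-absorbing subuniverse of a $3$-absorbing subalgebra of $\alg A$ is $3$-absorbing in $\alg A$) lifts it to $\alg A$ — but we must check it stays proper, which follows because it omits $a$. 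The only remaining case is $F=\Sg_{\alg A}(c,b)=A$, i.e. $c,b$ generate $\alg A$; then apply the whole argument with the generating pair $(c,b)$ in place of $(a,b)$, where now $c\in C$, so $C$ itself witnesses (after replacing $a$ by $c$) that $C\abs_3\alg A$ is proper and contains the ``new $a$'', namely $c$ — but $c\notin$? Hmm: with pair $(c,b)$, we want a proper $3$-absorbing subuniverse containing $c$ or $b$, and $C$ contains $c$ and is proper ($b\notin C$), so $C$ works directly.

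\textbf{Summary of the plan and the hard part.} In short: use Proposition~\ref{prop:inside_abs} to convert the two $2$-absorptions at $a$ and at $b$ into semilattice edges $(a,c)$ and $(b,c')$ with tiny $\Sg$'s; reduce to finding a proper $3$-absorbing subuniverse in $\Sg_{\alg A}(c,b)$ containing $b$ (or symmetrically $\Sg_{\alg A}(c',a)$ containing $a$); handle the case $\Sg_{\alg A}(c,b)=A$ by observing $C$ itself is then the desired proper $3$-absorbing subuniverse (it contains $c$, one of the two new generators, and omits $b$); otherwise induct on $|A|$ using Lemma~\ref{lem:baby_existence_of_ternary_absorption} and lift via Proposition~\ref{prop:trans-three-abs} (Proposition~\ref{prop:trans-three-abs} guarantees the lifted set is $3$-absorbing in $\alg A$, and it is proper since it omits $a$). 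I expect the main obstacle to be bookkeeping the base case versus the inductive case cleanly — in particular making sure the subuniverse produced inside the proper subalgebra genuinely contains one of the original generators $a,b$ and genuinely remains proper in $\alg A$ after lifting; the algebraic content (edges, transitivity of $3$-absorption) is already in hand from Proposition~\ref{prop:inside_abs}, Theorem~\ref{thm:minimaledgesquotients}, and Proposition~\ref{prop:trans-three-abs}.
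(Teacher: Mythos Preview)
Your plan has a genuine gap in both branches of the case split on $F=\Sg_{\alg A}(c,b)$.

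In the case $F=A$ you propose to ``apply the whole argument with the generating pair $(c,b)$'' and observe that $C$ is a proper $3$-absorbing subuniverse containing $c$. But the lemma's conclusion is about the original generators $a$ and $b$: you must produce a proper $3$-absorbing subuniverse containing $a$ or $b$, and $C$ contains neither. Switching the generating pair does not carry any information back to the pair $(a,b)$; this branch is simply unresolved.

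In the case $F\subsetneq A$ the lifting step fails. Proposition~\ref{prop:trans-three-abs} requires the intermediate algebra to $3$-absorb $\alg A$, but $F=\Sg_{\alg A}(c,b)$ is just a subalgebra with no absorption property in $\alg A$; a $3$-absorbing subuniverse of $F$ need not $3$-absorb $\alg A$. (There is a secondary issue: to invoke the inductive hypothesis on $F$ you would need $C\cap F$ to $2$-absorb a subalgebra of $F$ containing $b$, and the natural candidate $\{b,c'\}$ only works if $c'\in F$, which is not guaranteed.)

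The paper's argument avoids these problems by a different mechanism. It works with $S=\Sg_{\alg A^2}((a,b),(b,a))$ and manufactures a second nontrivial $3$-absorbing subuniverse $D$ (either $C$ itself or $C+S$) that again $2$-absorbs subalgebras through $a$ and through $b$, together with a binary term $f$ with $f(a,b)\in C$ and $f(b,a)\in D$. Theorem~\ref{thm:tomajclone} forces $f$ to be $\sim_C$- and $\sim_D$-related to the \emph{same} projection, which combined with strong projectivity (Theorem~\ref{thm:bin_abs}) inside the subalgebras through $a$ and $b$ shows that $C\cup D\cup\{b\}$ $2$-absorbs the set $E=C\cup D\cup\{a,b\}$ by $f$. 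A cyclic-composition trick then shows $E\leq\alg A$, hence $E=A$, and $C\cup D\cup\{b\}$ is the desired proper absorbing (hence $3$-absorbing) subuniverse containing $b$. The key idea you are missing is to arrange that $C$, $D$, $a$, $b$ already exhaust $A$, so that the absorbing set one builds automatically lives in $\alg A$ rather than in a proper subalgebra that would need to be lifted.
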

\begin{proof}
   For simplicity, we will say that a set 2-absorbs $a$ (or $b$) if it 2-absorbs a subalgebra containing $a$ (or $b$).
   Let $S = \Sg_{\alg A^2}((a,b),(b,a))$. As a first step we obtain a nontrivial 3-absorbing subuniverse $D$ that 2-absorbs $a$ and $b$ and such that $S \cap (C \times D)$ is nonempty: take $D=C$ if $C+S = A$, or $D=C+S$ otherwise. Fix $(c,d) \in S \cap (C \times D)$ and let $f$ be a witness for $(c,d) \in S$, that is, $f(a,b) = c$ and $f(b,a) = d$. 
   Next observe that if $a$  is in $C$ or $D$, then the goal is reached. Suppose henceforth that $C \cup D$ does not contain $a$. 
   
   By Theorem~\ref{thm:tomajclone}, $f \sssC f'$ and $f \sssD f'$ for some monotone selfdual binary operation, i.e., $f'$ is one of the two projections. Suppose $f'$ is the first projection, so $f(C,A) \subseteq C$ and $f(D,A) \subseteq D$. From these inclusions, from $f(a,b), f(b,a) \in C \cup D$, and from $f(\{a,b\},C) \subseteq C$, $f(\{a,b\},D) \subseteq D$ (by strong projectivity from Theorem~\ref{thm:bin_abs}) it follows that $C \cup D \cup \{b\}$ absorbs the \emph{set} $E = C \cup D \cup \{a,b\}$ by $f$. It is therefore enough to show that $E$ is a subuniverse of $\alg A$ -- then $E = \Sg(a,b)$ and the proof will be concluded. 

   Let $t$ be a cyclic term operation of arity $p$, take $$s(x_1, \dots, x_p) = f(f(f( \dots f(x_1,x_2),x_3), x_4) \dots x_p),$$ and let $h$ be the cyclic composition of $t$ and $s$. The result of applying $s$ to a tuple $\tuple{e} \in E^p$ is, by the properties above, in $C \cup D$ whenever $\tuple{e}$ is not the constant tuple of $a$'s or $b$'s. We then have $h(\tuple{e}) \in t(C \cup D, \dots, C \cup D)$, which is by item (1) of Proposition~\ref{prop:3_abs_intersect} a subset of $C \cup D$. Since $h$ is a cyclic operation and it preserves $E$, we see that $E$ is a subuniverse of $\alg A$ by Proposition~\ref{ff:TMsubs}, which concludes the proof.
\end{proof}

\begin{theorem} \label{thm:simple_non-edge_has_ternary_absorption} If $\algA$ is a minimal Taylor algebra which is generated by two distinct elements $a,b \in A$, then either $\algA$ has a nontrivial abelian quotient, or at least one of $a$, $b$ is contained in a proper  3-absorbing subuniverse of $\algA$.
\end{theorem}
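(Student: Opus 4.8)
\textbf{Proof proposal for Theorem~\ref{thm:simple_non-edge_has_ternary_absorption}.}

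The plan is to bootstrap from the weak version, Lemma~\ref{lem:baby_existence_of_ternary_absorption}, by an induction on $|A|$. Lemma~\ref{lem:baby_existence_of_ternary_absorption} already gives us, in the case where $\alg A$ has no nontrivial abelian quotient, \emph{some} nontrivial $3$-absorbing subuniverse $C \abs_3 \alg A$; the issue is that $C$ need not contain either generator. So first I would dispose of the abelian case (nothing to do) and assume $\alg A$ has no nontrivial abelian quotient; in particular every proper subalgebra and every proper quotient of $\alg A$ also has this property (quotients inherit it trivially, and a proper subalgebra $\alg B$ with a nontrivial abelian quotient would give $\alg A$ one too after observing $\alg B$ lies inside a suitable congruence — actually more carefully, we use that abelian edges lift, so a nontrivial abelian quotient of a subalgebra yields an abelian edge of $\alg A$, which by the Connectivity machinery would give $\alg A$ an abelian quotient; alternatively just carry the hypothesis ``$\alg A$ and all its subalgebras are abelian-quotient-free'' through the induction). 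By the inductive hypothesis, then, for any proper subalgebra $\alg B$ of $\alg A$ generated by two elements, one of those two generators lies in a proper $3$-absorbing subuniverse of $\alg B$.

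Next I would take the $C \abs_3 \alg A$ produced by Lemma~\ref{lem:baby_existence_of_ternary_absorption} and try to ``pull it towards'' $a$ or $b$. Consider $\alg C' = \Sg_{\alg A}(C \cup \{a\})$ and $\alg C'' = \Sg_{\alg A}(C \cup \{b\})$. If either of these is a proper subalgebra, say $\alg C' \ne A$, then $C \abs_3 \alg C'$ (absorption restricts to subalgebras), and I can invoke Proposition~\ref{prop:inside_abs}(2) inside $\alg C'$: for the generator $a \in C' \setminus C$ (if $a \in C$ we are already done) there is $c \in C$ such that either $(a,c)$ is a semilattice edge with $\Sg(a,c) = \{a,c\}$, or $(a,c)$ is a majority edge with $\Sg_{\alg C'}(a,c) \cap C$ a congruence block. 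In the semilattice case, $\{c\} \abs_2 \{a,c\}$; in the majority case, $\Sg(a,c)\cap C \abs_3 \Sg(a,c)$ but also — since $C \abs_3 \alg C'$ and hence $C$ restricts — we can apply the inductive hypothesis to the two-generated algebra $\Sg_{\alg A}(a,c)$ (which is proper provided $\Sg(a,c) \ne A$) to relocate a $3$-absorbing subuniverse onto a generator. The goal of this step is to produce a nontrivial $3$-absorbing $D$ which $2$-absorbs a subalgebra containing $a$ and also one containing $b$ — then Lemma~\ref{lem:producing-three-abs} finishes immediately. The remaining case is $\Sg_{\alg A}(C \cup \{a\}) = \Sg_{\alg A}(C \cup \{b\}) = A$; here $C$ $2$-absorbs nothing large enough automatically, but I would argue directly: apply Proposition~\ref{prop:inside_abs}(2) to $C \abs_3 \alg A$ and the element $a$, obtaining $c \in C$ with $(a,c)$ a semilattice or majority edge; run the two-generated induction on $\Sg_{\alg A}(a,c)$ (proper whenever $\ne A$, and if $=A$ then $(a,c)$ is an edge of $\alg A$ itself, giving the conclusion via Theorem~\ref{thm:minimaledgesquotients} and the fact that a block of the edge-witnessing congruence $3$-absorbs and contains a generator). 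Symmetrically for $b$.

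I expect the main obstacle to be organising the case analysis so that the two-generated inductive hypothesis is always applied to a \emph{proper} two-generated subalgebra, and cleanly handling the boundary cases where $\Sg(a,c)$ or $\Sg(b,c)$ equals all of $A$ — there one cannot recurse and must instead read off the conclusion from the structure of the edge quotient (Theorem~\ref{thm:minimaledgesquotients}: the quotient is a two-element semilattice or two-element majority algebra, so a congruence block is $2$- or $3$-absorbing in $\alg E = \Sg(a,c) = \alg A$ and contains the generator $a$ or $c$, and $3$-absorption lifts through the quotient by Proposition~\ref{prop:trans-three-abs}). The other delicate point is verifying the hypotheses of Lemma~\ref{lem:producing-three-abs}, namely that the $3$-absorbing set we carry along really does $2$-absorb a subalgebra containing $a$ and one containing $b$; this is where Proposition~\ref{prop:inside_abs} and the inductive relocation of absorption do the work, but the indices have to be tracked carefully. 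Modulo this bookkeeping, the structure is: reduce to the abelian-quotient-free case, obtain a $3$-absorbing $C$ from Lemma~\ref{lem:baby_existence_of_ternary_absorption}, migrate it toward a generator using Proposition~\ref{prop:inside_abs} plus the induction, and close with Lemma~\ref{lem:producing-three-abs}.
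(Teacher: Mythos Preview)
Your plan has a structural gap that I do not see how to close. The induction you propose yields information about \emph{subalgebras} of $\alg A$, but the conclusion of the theorem is about $3$-absorption in $\alg A$ itself, and $3$-absorption does not lift from subalgebras to the ambient algebra. Concretely: when you apply the inductive hypothesis to a proper $\Sg_{\alg A}(a,c)$ and obtain some $D \abs_3 \Sg_{\alg A}(a,c)$ with $a \in D$, you have no way to conclude $D \abs_3 \alg A$ (transitivity via Proposition~\ref{prop:trans-three-abs} would need $\Sg_{\alg A}(a,c) \abs_3 \alg A$, which is not given). The same problem recurs in the majority-edge branch of Proposition~\ref{prop:inside_abs}: the block $\Sg(a,c)\setminus C$ containing $a$ only $3$-absorbs $\Sg(a,c)$. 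Separately, the inductive hypothesis applied to $\Sg(a,c)$ may return ``$\Sg(a,c)$ has a nontrivial abelian quotient'', i.e.\ $(a,c)$ is an abelian edge; this does \emph{not} imply $\alg A$ has an abelian quotient (the Connectivity Theorem says nothing of the sort), and your proposed fallback of strengthening the hypothesis to ``all subalgebras are abelian-quotient-free'' is not a valid induction on the theorem as stated. Finally, Lemma~\ref{lem:producing-three-abs} requires the single set $C \abs_3 \alg A$ to be contained in, and $2$-absorb, subalgebras containing $a$ and $b$; the edges produced by Proposition~\ref{prop:inside_abs} give $\{c\} \abs_2 \{a,c\}$ for some $c\in C$, which does not say $C$ $2$-absorbs anything.

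The paper's proof takes a different route. It first reduces to simple $\alg A$ (since $3$-absorption \emph{does} lift from quotients), disposes of the case where $\alg A$ has a nontrivial $2$-absorbing subuniverse via Lemma~\ref{lem:producing-three-abs} directly, and then argues by contradiction assuming neither generator lies in a proper $3$-absorbing subuniverse. The key step is to show that every reflexive $S \leq_{sd} \alg A^2$ is trivial: otherwise one manufactures a proper right center $C$, proves (using Proposition~\ref{prop:s-closed-bin} and the center machinery) that $C$ $2$-absorbs the subalgebra $\{a\}+S \ni a$ and symmetrically one containing $b$, and then Lemma~\ref{lem:producing-three-abs} fires. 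With reflexive relations controlled, $\Sg((a,b),(b,a))$ is forced to be a bijection graph, and the symmetric ternary relation $\Sg((a,a,b),(a,b,a),(b,a,a))$ is analyzed via Lemma~\ref{lem:symmetric_ternary_relation_and _strong_projectivity} to produce a nontrivial strongly projective (hence $2$-absorbing) subuniverse avoiding $a$, contradicting the earlier reduction. So the work is relational rather than an induction along subalgebras.
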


\begin{proof} 
Just like in the proof of Lemma~\ref{lem:baby_existence_of_ternary_absorption} we can assume that $\algA$ is simple and not abelian, and assume for contradiction that neither of the generators $a,b$ is contained in a proper absorbing 3-subuniverse. We also assume that $\algA$ has no proper 2-absorbing subuniverse, as otherwise Lemma~\ref{lem:producing-three-abs} gives a contradiction immediately.

First we will prove that every reflexive relation $S \leq_{sd} \alg A^2$ is either the equality or the full relation. Suppose not. Then, by simplicity of $\alg A$, $S$ is linked. By replacing $S$ by $S-S$, perhaps multiple times if necessary, we can further assume that $S-S = A^2$ while $S$ is still not full. Then there is some $c \in A$ such that $(a,c), (b,c) \in S$, and since $A = \Sg(a,b)$, we see that $S$ has a nontrivial right center $C$ and is still reflexive. 

We claim that $C$ 2-absorbs the subalgebra $\alg D$ of $\alg A$ with universe $a+S$ (which contains $a$ since $S$ is reflexive). By Proposition~\ref{prop:s-closed-bin}.
we just need to check that for any $d \in D \setminus C$ and any $c \in C$, $\Sg(d,c)$ has a proper 2-absorbing subuniverse. To see this, note that the binary relation $T = S \cap (A \times \Sg(d,c))$ is a subdirect subuniverse of $\algA \times \Sg(d,c)$ (since $c$ is in the right center of $S$) and $a$ is contained in the left center of $T$ (since $d,c \in a + S$). Since $d \not\in C$, we see that $T$ is a proper subset of  $A \times \Sg(d,c)$, so the left center of $T$ is a proper subalgebra of $\algA$. Thus if $\Sg(d,c)$ has no proper 2-absorbing subuniverse, then the left center of $T$ is a center of $\alg A$, and $a$ is contained in a proper 3-absorbing subuniverse  of $\algA$ by Proposition~\ref{prop:center_absorbs}, a contradiction. 
The same argument shows that $C$ 2-absorbs a subalgebra containing~$b$. Application of Lemma~\ref{lem:producing-three-abs} then gives us a contradiction, so we have proved that every reflexive subdirect subuniverse of~$\alg A^2$ is the equality or the full relation.

Observe next that $S = \Sg_{\alg A^2}((a,b),(b,a))$ is a graph of a bijection. Indeed, it cannot be full as otherwise $(b,b) \in S$ implies that $(a,b)$ is an s-edge and then $\{b\}$ 2-absorbs $A = \{a,b\}$. If it is proper and linked, then we can use the same argument as above with the little tweak that $D$ may contain $b$ (not $a$) since $S$ is not reflexive (but contains $(a,b)$). 

Finally, consider the subdirect symmetric 
relation
$R = \Sg_{\alg A^3}\{(a,a,b),(a,b,a),(b,a,a)\}.$
Since the projection $\proj_{12}(R)$ of $R$ onto the first two coordinates contains $(a,a), (a,b)$ and $\Sg(a,b)=A$, we see that $\proj_{12}(R)$ has a left center that contains $a$. The center cannot be proper since otherwise we get a nontrivial 2-absorbing subuniverse or the center of $\proj_{12}(R)$ is a Taylor center of $\alg A$ and we apply~Proposition~\ref{prop:center_absorbs} to get a contradiction. Therefore $\proj_{12}(R) = A^2$. If $(a,a,a) \in R$, then the same argument as in Lemma~\ref{lem:baby_existence_of_ternary_absorption} shows that $(a,b)$ is a majority edge, which gives us a proper 3-absorbing subuniverse containing $a$ (and also one containing $b$), a contradiction.
The binary relation $T(x,y) \equiv R(a,x,y)$ is thus subdirect (as the projection of $R$ onto any two coordinates is full) but not full (as $(a,a) \not \in T$). Observe that $T+S$ is reflexive but cannot be full (as otherwise $T= (T+S)+S$ is full as well), therefore it is the equality relation and hence $T$ is a graph of a bijection.
It follows that the strongly projective subuniverse $B$ from Lemma~\ref{lem:symmetric_ternary_relation_and _strong_projectivity} is neither empty (as $\alg A$ is not abelian) nor equal to $A$ (as $a \not \in B$). Theorem~\ref{thm:bin_abs} shows that $B$ 2-absorbs $\alg A$, which is a contradiction.
\end{proof}

While we do not have an example of a minimal Taylor algebra in which some pair (or its reverse) is not an edge, we do not have strong support for the nonexistence of such an example either. However, we conjecture the following. 

\begin{conjecture}\label{recursive-structure} If $\algA$ is a minimal Taylor algebra which is generated by two elements $a,b \in A$ such that neither $(a,b)$ nor $(b,a)$ is an edge, then there are proper 3-absorbing subuniverses $C, D \abs_3 \algA$ such that $a \in C$ and $b\in D$.
\end{conjecture}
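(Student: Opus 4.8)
The plan is to argue by induction on $|A|$, reducing to the case of a simple algebra and then invoking the machinery of the previous subsection. First I would record what the hypothesis ``neither $(a,b)$ nor $(b,a)$ is an edge'' buys us. Since $\alg A=\Sg_{\alg A}(a,b)$, any proper congruence $\theta$ of $\alg A$ with $\alg A/\theta$ abelian would witness that $(a,b)$ is an abelian edge; hence $\alg A$ has no nontrivial abelian quotient, and in particular (taking $\theta$ the equality congruence, and using Theorem~\ref{thm:FTA}) $\alg A$ is not abelian. The same remark, together with Post's classification, forces $|A|\ge 3$: on a two-element domain every minimal Taylor algebra is a semilattice, the majority algebra, or the affine algebra of $\zee 2$, and in each of these $(a,b)$ or $(b,a)$ is an edge witnessed by the equality congruence.

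Next, the reduction to the simple case. If $\alg A$ is not simple, fix a maximal congruence $\alpha$. If $a/\alpha=b/\alpha$ then $\alg A/\alpha=\Sg_{\alg A/\alpha}(a/\alpha)$ is one-element, hence $\alpha=A^2$, contradicting maximality; so $a/\alpha\neq b/\alpha$, and $\alg A/\alpha=\Sg_{\alg A/\alpha}(a/\alpha,b/\alpha)$ is a minimal Taylor algebra by Proposition~\ref{prop:MinimalVariety}. Neither $(a/\alpha,b/\alpha)$ nor $(b/\alpha,a/\alpha)$ is an edge of $\alg A/\alpha$: an edge witnessed by $\bar\theta$ lifts, through the $\alpha$-saturated congruence $\theta$ of $\alg A$ projecting onto $\bar\theta$, to an edge of $\alg A$ of the same type, because a term acting as the join, majority, or Mal'cev operation on the two $\bar\theta$-blocks acts the same way on the two $\theta$-blocks, and abelianness of $\alg A/\theta=(\alg A/\alpha)/\bar\theta$ passes to $\Sg_{\alg A}(a,b)/\theta=\alg A/\theta$. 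The induction hypothesis then yields proper $\bar C,\bar D\abs_3 \alg A/\alpha$ with $a/\alpha\in\bar C$ and $b/\alpha\in\bar D$, and their full preimages $C,D$ in $\alg A$ are proper (since $\bar C,\bar D$ are), are $3$-absorbing (the term witnessing $\bar C\abs_3 \alg A/\alpha$ witnesses $C\abs_3\alg A$), and contain $a$ and $b$ respectively. This disposes of the non-simple case.

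It remains to treat a simple $\alg A$. Here the abelian-quotient alternative in Theorem~\ref{thm:simple_non-edge_has_ternary_absorption} is vacuous, so that theorem gives (say) $a\in C$ for some proper $C\abs_3\alg A$, and $b\notin C$ since $C\cup\{b\}\supseteq\Sg_{\alg A}(a,b)=A$. The goal is a proper $D\abs_3\alg A$ with $b\in D$. I would attack this via Proposition~\ref{prop:inside_abs}(2) applied to $C\abs_3\alg A$ and $b\in A\setminus C$: there is $c\in C$ so that either $(b,c)$ is a minimal semilattice edge with $\Sg_{\alg A}(b,c)=\{b,c\}$ and top $c$, or $(b,c)$ is a majority edge whose witnessing congruence $\theta$ on $\alg E:=\Sg_{\alg A}(b,c)$ has $E\cap C=c/\theta$; in the latter case $E\setminus C=b/\theta$ is, by Theorem~\ref{thm:minimaledgesquotients}(b), a proper $3$-absorbing subuniverse of $\alg E$ containing $b$. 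In the majority branch the plan is to push $E\setminus C$ up to $\alg A$ by a two-sided analogue of Lemma~\ref{lem:producing-three-abs}: with a common cyclic witness $t$ given by Theorem~\ref{thm:tomajclone} (so $t\sim_C\maj_p$, and $t$ handles the majority quotient), one glues the ``$b$-side'' $E\setminus C$ above $C$ into a subuniverse that $3$-absorbs $\alg A$ and avoids $a$; an auxiliary induction on the size of the two-generated subalgebras involved, restarting with Theorem~\ref{thm:simple_non-edge_has_ternary_absorption} (which needs no ``no edge'' hypothesis) or with the edge-structure results whenever a two-generated subalgebra does carry an edge among its generators, should close this branch.

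The main obstacle is the semilattice branch, and more generally the inherent asymmetry of $3$-absorption. When $(b,c)$ is a minimal semilattice edge into $C$, the absorbing direction points \emph{toward} $C$ (it is $\{c\}$, not $\{b\}$, that $2$-absorbs $\{b,c\}$), so no $3$-absorbing subuniverse containing $b$ is visible locally; in fact $b$ lies in no proper $3$-absorbing subuniverse of $\{b,c\}$ itself. Overcoming this seems to require a genuinely symmetric argument that controls $a$ and $b$ at once, for instance a strengthening of the center/strong-projectivity extraction used in the proof of Theorem~\ref{thm:simple_non-edge_has_ternary_absorption} showing that the symmetric relation $\Sg_{\alg A^3}\{(a,a,b),(a,b,a),(b,a,a)\}$ already forces, through Proposition~\ref{prop:ImpliesAbelian} and the relational description of strong projectivity, proper $3$-absorbing subuniverses straddling each generator; or else a finer description of the local structure of $\alg A$ around $b$ when $(a,b)$ is not an edge. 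This is precisely the extra content the conjecture isolates, and the reason only the ``at least one generator'' statement is currently available.
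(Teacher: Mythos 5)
There is nothing in the paper to compare your attempt against: the statement you are trying to prove is stated there as Conjecture~\ref{recursive-structure}, immediately after the remark that only the weaker ``at least one generator'' result, Theorem~\ref{thm:simple_non-edge_has_ternary_absorption}, is currently available. So any complete proof would be new mathematics, and your proposal is not one — as you yourself concede in the final paragraph.

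Concretely, the parts you do carry out only reproduce what is already known. The reduction to the simple case (lifting edges and $3$-absorbing subuniverses through a maximal congruence) is sound and is the same routine used in the paper's proofs of Lemma~\ref{lem:weaklyconnected} and Theorem~\ref{thm:simple_non-edge_has_ternary_absorption}, and in the simple case your invocation of Theorem~\ref{thm:simple_non-edge_has_ternary_absorption} gives exactly one proper $3$-absorbing subuniverse $C$ containing one generator — the published statement, not the conjecture. The genuine gap is the second subuniverse $D$ containing the other generator $b$: your plan via Proposition~\ref{prop:inside_abs}(2) collapses in the semilattice branch, where, as you note, the absorption points \emph{into} $C$ and $b$ lies in no proper $3$-absorbing subuniverse even of $\{b,c\}$, so no local argument of this kind can produce $D$; and even your majority branch rests on an unproved ``two-sided analogue of Lemma~\ref{lem:producing-three-abs}'' whose auxiliary induction is only gestured at. Since the whole content of the conjecture beyond Theorem~\ref{thm:simple_non-edge_has_ternary_absorption} is precisely this missing symmetric step, the proposal should be regarded as a (correct) restatement of the known partial result together with an honest description of the open problem, not as a proof.
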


Theorem~\ref{thm:single_operation_generates} follows from the following more general theorem. 

\begin{theorem} \label{thm:real_single_operation_generates}
Let $\algA$ be a minimal Taylor algebra and $f_1, ..., f_n$ any collection of term operations of $\algA$ such that for each edge $(a,b)$ with witnessing maximal congruence $\theta$ of $\Sg_{\alg A}(a,b)$, at least one~$f_i$ acts nontrivially on $\Sg_{\alg A}(a,b)/\theta$ (that is, does not act as a projection). 
Then $f_1, ..., f_n$ generate the clone of $\algA$.
\end{theorem}

\begin{proof} 
Let $\alg G$ be the reduct of $\algA$ with basic operations $f_1, ..., f_n$.
By Proposition~\ref{prop:taylor-without-powers} it is enough to show that for any $a,b \in G$, there is no two-element quotient $\Sg_{\alg G}(a,b)/\alpha$ such that each $f_i$ acts as a projection. Suppose for contradiction that there was such a pair $a,b$ and congruence $\alpha$ on $\Sg_{\alg G}(a,b)$, and choose $a,b$ such that $\Sg_{\alg G}(a,b)$ is inclusion minimal. Note that $a/\alpha$, $b/\alpha$, and $\Sg_{\alg G}(a,b)$ might not be subuniverses of $\algA$.

By Theorem~\ref{thm:simple_non-edge_has_ternary_absorption}, we see that either $\Sg_\algA(a,b)$ has a nontrivial  abelian quotient and then $\Sg_\algA\{a,b\}/\theta$ can be chosen of prime order by taking a maximal $\theta$ (by Theorem~\ref{thm:minimaledgesquotients}), or one of $a,b$ is contained in a proper 3-absorbing subuniverse of $\Sg_\algA(a,b)$.

If there is a congruence $\theta$ such that $\Sg_\algA(a,b)/\theta$ is affine of prime order, then by assumption some $f_i$ acts nontrivially on $\Sg_\algA(a,b)/\theta$, so there is some ternary term operation $p \in \Clo_3(\alg G)$ such that the restriction of $p$ to $\Sg_\algA(a,b)/\theta$ is Mal'cev (by the fact that the affine Mal'cev algebra of $\zee{p}$ has minimal clone which follows e.g. from the description of term operations of this algebra). Since $f_i$ acts as a projection on $\Sg_{\alg G}(a,b)/\alpha$, $p$ also acts as a projection on $\Sg_{\alg G}(a,b)/\alpha$. Suppose without loss of generality that $p$ does not act like third projection on $\Sg_{\alg G}(a,b)/\alpha$. Then we have
\[
p(a,a,b) \in a/\alpha \cap b/\theta.
\]
But then $\Sg_{\alg G}(p(a,a,b),b)/\alpha = \Sg_{\alg G}(a,b)/\alpha$ and
\[
\Sg_{\alg G}(p(a,a,b),b) \subseteq \Sg_{\alg G}(a,b) \cap b/\theta,
\]
contradicting the minimality of $\Sg_{\alg G}(a,b)$ (as the latter set does not contain $a$ because $\theta$ separates $a$ and $b$).

Now suppose that $a$ is contained in a proper  3-absorbing subalgebra $\alg{C}$ of $\Sg_\algA(a,b)$. We show that this absorption is witnessed by an operation in $\Clo(\alg G)$. 

If $C$ is a 2-absorbing subuniverse of $\Sg_\algA(a,b)$, then there is some s-edge going into $C$ (by Proposition~\ref{prop:inside_abs}), and then some $f_i$ acts nontrivially on this edge, so there is a binary  term operation $s$ of $\alg G$ which acts  as a semilattice operation on this edge. Since both coordinates of $s$ are essential and $C$ is a strongly projective subuniverse of $\Sg_\algA (a,b)$ by Theorem~\ref{thm:bin_abs}, this operation witnesses the 2-absorption $C \abs_2 \Sg_\algA (a,b)$.

If $C$ is not a 2-absorbing subuniverse of $\Sg_\algA(a,b)$, then $C$ is not stable under majority edges by Lemma~\ref{lem:absorption_implies_stable} and  Theorem~\ref{thm:bin_abs_stable}. Let $(c,d)$ with witnessing congruence $\beta$ be a majority edge such that $c \in C$ and $d/\beta \cap C = 
\emptyset$. Some $f_i$ acts nontrivially on $\Sg_{\algA}(c,d)/\beta$, so there is a ternary term operation $m$ of $\alg G$ that acts as the majority operation on $\Sg_{\algA}(c,d)/\beta$. We claim that $m$ witnesses $C \abs_3 \Sg_\algA(a,b)$.
Indeed, $m \sssC g$ for some $g \in \Clo(\{0,1\},\maj)$ by Theorem~\ref{thm:center_abs}, but $g$ cannot be a projection: e.g., if $g$ is the first projection, then $m(c,d,d) \in C$ (because $m$ is $\sssC$-related to the first projection) and $m(c,d,d) \in d/\beta$ (because $m$ acts as the majority operation on the majority edge), a contradiction. Therefore $g=\maj$ and $m \sssC \maj$ implies the claim.

In both cases, we see that there is some ternary operation $t \in \Clo(\alg G)$ which witnesses the 3-absorption $C \abs_3 \Sg_\algA(a,b)$, and we may suppose without loss of generality that $t$ acts like the first projection on $\Sg_{\alg G}(a,b)/\alpha$. Then we have
\[
t(b,a,a) \in b/\alpha \cap C.
\]
But then $\Sg_{\alg G}(a,t(b,a,a))/\alpha = \Sg_{\alg G}(a,b)/\alpha$ and
\[
\Sg_{\alg G}(a,t(b,a,a)) \subseteq \Sg_{\alg G}(a,b) \cap C,
\]
contradicting the minimality of $\Sg_{\alg G}(a,b)$ again, and concluding the proof.
\end{proof}

\SingleOperationGenerates*

\begin{proof} \hypertarget{ProofThm524}{}
   This is a consequence of Theorem~\ref{thm:real_single_operation_generates}.
\end{proof}

\subsection{Examples from Section~\ref{sec:mtalgebras}} \label{subsec:app_examples}

\ExAbsorptionDoesntImplyThreeAbsorption*

\begin{proof} \hypertarget{VerEx511}{}
The algebra is minimal Taylor because $m$ generates a minimal clone (see \cite{CsakanyMinimalClones}).
The set $C = \{0,1\}$ is an absorbing subuniverse of $\alg A$ as witnessed by the 4-ary operation $$m(m(m(x_1,x_2,x_3),x_2,x_4),x_3,x_4).$$ The set $C$ is not a center of $\alg A$ since for any potentially witnessing relation $R \leq_{sd} \alg A \times \alg B$
the subuniverse $D = 2+R \leq \alg B$ satisfies $m(D,B,B) \subseteq D$ (as $m(2,1,0)=2$) and $m(B,D,D) \subseteq D$ (as $m(0,2,2)=2$), so $m(x,y,y)$ witnesses that $D$ is a 2-absorbing subuniverse of $\alg B$.
\end{proof}

\ExMajorityAintSimple*

\begin{proof} \hypertarget{VerEx516}{}
That $\algA$ is Taylor follows from the fact that $g$ is symmetric. To show that $\algA$ is \emph{minimal} Taylor, we will show that every Taylor reduct $\algA'$ of $\algA$ has a ternary cyclic term $f$, and that every ternary cyclic term $f$ of $\algA$ generates the same clone as $g$.

Consider any Taylor reduct $\algA'$ of $\algA$, and note that the congruence $\alpha$ is also a congruence on $\algA'$. Each congruence class of $\alpha$ is a Taylor reduct of an abelian algebra of size $2$, so $\algA'$ has a ternary term $p$ which acts as the minority operation on the congruence classes of $\alpha$, and $\algA'/\alpha$ is a Taylor reduct of a majority algebra, so $\algA'$ has a ternary term $m$ which acts as the majority operation on $\algA'/\alpha$. Cyclically composing these terms, we get a ternary cyclic term $f(x,y,z) = p(m(x,y,z),m(y,z,x),m(z,x,y))$. Thus every Taylor reduct of $\algA$ has a ternary cyclic term $f$.

To constrain the set of possible ternary cyclic terms $f$ of $\algA$, we will use the fact that every relation of $\algA$ which is preserved by $g$ must also be preserved by $f$. To this end, we need to find interesting relations on $\algA$.

Note that the cyclic permutation $(0\;1\;2\;3)$ is an automorphism of $\algA$, so the same is true for any reduct of $\algA$ (equivalently, the graph of the permutation $(0\;1\;2\;3)$ is a binary relation on $\algA$ which is preserved by $g$). 
One can easily verify that the binary relation $\alg{S} = \Sg_{\algA^2}\{(0,1),(1,0)\}$ is equal to $\{(x,y) \mid x = y \pm 1 \pmod{4}\}$, and that there is a congruence $\theta$ on $\alg{S}$ with congruence classes $\{(x,y) \mid x = y+1 \pmod{4}\}$ and $\{(x,y) \mid x = y-1 \pmod{4}\}$. Additionally, the quotient $\alg{S}/\theta$ is isomorphic to the $2$-element minority algebra.

By the above, the operation $f$ must be compatible with the automorphism $(0\;1\;2\;3)$, the binary relation $\alg{S}$, and the congruences $\alpha$ and $\theta$. Additionally, $f$ must act as the minority operation on $\{0,2\}$ and on $\alg{S}/\theta$, and $f$ must act as the majority operation on $\algA'/\alpha$.

The above constraints on $f$ allow us to determine every value of $f$ in terms of the value of $f(0,0,1)$. To see this, note that since $f$ acts as the minority operation on $\alg{S}/\theta$, we can determine the value of $f(a\pm 1, b\pm 1, c\pm 1)$ from the value of $f(a,b,c)$. For instance, we have $f(1,1,0) = f(0,0,1) - 1 \pmod{4}$ and $f(0,2,1) = f(1,1,0) - 1 \pmod{4} = f(0,0,1) - 2 \pmod{4}$. This, together with the fact that $f$ is cyclic, allows us to determine the value of $f(a,b,c)$ for every triple $a,b,c$ not all of the same parity from the value of $f(0,0,1)$. If $a,b,c$ all have the same parity, then $f$ acts as the minority operation on $\{a,b,c\}$. Since $f$ acts as majority on $\algA'/\alpha$, we have $f(0,0,1) \in \{0,2\}$, so there are only two possibilities for $f$. If $f(0,0,1) = 0$, then $f$ is equal to $g$. Otherwise, if $f(0,0,1) = 2$, then we have $g(x,y,z) = f(f(x,x,f(x,y,z)),f(y,y,f(x,y,z)),f(z,z,f(x,y,z)))$, so $f$ and $g$ generate the same clone. 
\end{proof}

\ExAbelianAintEasy*

\begin{proof} \hypertarget{VerEx517}{}
The algebra $\alg{S} = \Sg_{\algA^2}\{(a,b),(b,a)\}$ has a congruence $\psi$ such that $\alg{S}/\psi$ is isomorphic to $(\mathbb{Z}/4, x-y+z)$.
Explicitly, the congruence classes of $\psi$ are $\{(a,b),(c,d)\}$, $\{(b,a),(d,c)\}$, $\{(a,d),(c,b)\}$,$\{(b,c),(d,a)\}$, and in this order they correspond to the elements $0,1,2,3$ of $\mathbb{Z}/4$. Abusing notation, we will identify the congruence classes of $\psi$ with the elements of $\mathbb{Z}/4$ in the remainder of the proof.

To prove that this example is minimal Taylor, first note that every pair of elements forms an abelian edge, so the same must be true in any Taylor reduct. Thus by Theorem \ref{thm:sm-free}, any Taylor reduct $\algA'$ of $\algA$ must have a Mal'cev term $q(x,y,z)$. The restriction of $q$ to any two-element affine subalgebra or quotient must be the minority operation. This fixes the restriction of $q$ to the sets $\{a,c\}$ and $\{b,d\}$, as well as the restriction of $q$ to the quotient $\algA/\theta$, and the fact that $\mathbb{Z}/4$ has only one Mal'cev term forces $q$ to act on $\alg{S}/\psi = \mathbb{Z}/4$ as $x-y+z$.

Similarly to the previous example, the above constraints on the term $q$ imply that $q$ is completely determined by the restriction of $q$ to the set $\{a,b\}$. For instance, we have $(q(a,b,d)$, $q(b,a,a))/\psi = q(0,1,3) = 2 = (a,d)/\psi$, so from $q(b,a,a) = b$ we can conclude that $q(a,b,d) = c$. Since $q$ is Mal'cev, the only undetermined values of $q$ with all inputs from $\{a,b\}$ are the values of $q(a,b,a)$ and $q(b,a,b)$, which determine each other. Thus $q$ is completely determined by the value of $q(a,b,a)$, which is either $b$ or $d$. If $q(a,b,a) = b$, then $q$ is equal to $p$. Otherwise, if $q(a,b,a) = d$, then we have $p(x,y,z) = q(x,q(x,q(x,z,y),q(y,z,y)),z)$, so $p$ and $q$ generate the same clone. 
\end{proof}

\ExStabilityCantBeSimplified*

\begin{proof} \hypertarget{VerEx520}{}
First we will show that every Taylor reduct of $\algA$ is term-equivalent to $\algA$. Since $\algA$ has no majority edges and no $\mathbb{Z}/2$ edges, and since every pair of elements forms an edge, the same must be true in any Taylor reduct, so any Taylor reduct must have a binary commutative term~$f$ by Theorem \ref{thm:m2-free}.

Since $\mathbb{Z}/3$ has only one idempotent binary commutative term operation, the restriction of $f$ to $\{0,1,2\}$ is given by $f(x,y) = 2x+2y \pmod{3}$. Similarly, the restriction of $f$ to $\{0,*\}$ is given by $f(0,*) = f(*,0) = *$. Finally, because the values of $f$ on $\algA/\theta$ are known, we have $f(1,*)/\theta = f(1,0)/\theta = 2/\theta$, so $f(1,*) = 2$, and similarly $f(2,*) = 1$. Thus $f$ is the same as the operation $\cdot$ displayed above.

To see that $\{*\}$ is not absorbing, note that by Theorem \ref{thm:center_abs} and Theorem \ref{thm:m-free}, if $\{*\}$ was absorbing, then $\{*\}$ would be 2-absorbing, and then by Theorem \ref{thm:bin_abs}, the binary operation $\cdot$ would witness the absorption. However, we have $1\cdot * = 2 \not\in\{*\}$, so $\{*\}$ is not 2-absorbing.
\end{proof}

\section{Proofs for Section~\ref{sec:omitting_types}: Omitting types}\label{sec:proofsomittingtypes}

This section  contains proofs of the theorems stated in Section~\ref{sec:omitting_types}.

\ThmAFree*

\begin{proof} \hypertarget{ProofThm61}{}
Combining the results of \cite{Barto14:local,Bulatov16:restricted} and \cite{Bulatov04:graph}, $\algA$ has bounded width if and only if no subalgebra of $\alg A$ has a nontrivial abelian quotient. The proof of Theorem 2.8 in~\cite{Kozik15:maltsev} shows that this is also equivalent to having wnu term operations of every arity $n \geq 3$. Thus, (ii), (iii), and (iv) are equivalent.
To show that (ii) implies (i), it suffices to observe that if $\algA$ contains an abelian edge $(a,b)$ and congruence $\theta$ witnesses that, then the algebra $\Sg_\algA(a,b)/\theta$ is abelian. 
Finally, to show that (i) implies (ii), assume that there is a subalgebra $\algB$ of $\algA$ and a congruence $\theta$ on $\algB$ such that $\algB/\theta$ is abelian. Then for any $a,b$ from different equivalence classes of $\theta$, the pair $(a,b)$ is an abelian edge.
\end{proof}

\ThmSFree*

\begin{proof} \hypertarget{ProofThm62}{}
First, to show that (i) implies (ii) let $B \abs_2 \algA' \leq \alg A$. If $B$ is a nonempty proper subset of~$A'$, then Proposition~\ref{prop:inside_abs} gives us a minimal semilattice edge in $\alg A'$ and thus in $\alg A$, a contradiction.

By Theorem 2.12. of~\cite{Berman10:varieties} the existence of a 3-edge term operation is equivalent to a so-called 3-cube operation and then (ii) implies (iii) follows from Theorem 4.5 in~\cite{kazda21:cube-terms} (see also \cite{kearnes17:cube-term-blockers}) which provides a 3-cube operation whenever $\Clo(\algA)$ is generated by a single ternary operation (which is the case by Theorem~\ref{thm:single_operation_generates}) and
no subalgebra of $\algA$ has a nontrivial projective subuniverse (which is the case by Proposition~\ref{prop:projective-implies-binabs}).

From~\cite{Berman10:varieties} it follows that (iii) implies (iv) -- it is proved there that the existence of an edge (or cube) term operation is equivalent to having few subpowers.

Finally, we show that (iv) implies (i). If $\algA$ contains a semilattice edge, then by Theorem~\ref{thm:minimaledgesquotients}(a) it also has a subalgebra term equivalent to a 2-element semilattice. It is known from \cite{Berman10:varieties,Idziak10:few} that a semilattice does not have few subpowers.
\end{proof}

\ThmMFree*

\begin{proof} \hypertarget{ProofThm63}{}
To see that (i) implies (ii) recall that 3-absorbing subuniverses and centers are the same by Theorem~\ref{thm:center_abs}, that 3-absorbing subuniverses are stable under abelian and semilattice edges by Theorem~\ref{thm:abs_stable}, and that 
2-absorbing subuniverses are exactly those stable under all the edges by Theorem~\ref{thm:bin_abs_stable}.
 By item (4) in Proposition~\ref{prop:bin_abs_intersect}, (ii) implies (ii'). The fact that every majority edge defines two disjoint 3-absorbing subuniverses shows that (ii') implies (i).
\end{proof}

Before proving Theorem~\ref{thm:m2-free}, we first verify one of the implications.

\begin{proposition} \label{prop:commutative}
If $\alg A$ is a minimal Taylor algebra without majority edges and $\zee{2}$-edges, then $\alg A$ has a commutative binary term operation.
\end{proposition}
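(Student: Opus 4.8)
The plan is to produce the commutative binary term operation directly from a common ternary witness for all edges, combined with the structural information about what edges look like in a minimal Taylor algebra without majority or $\zee2$-edges. First I would invoke Theorem~\ref{thm:unifiededges} (or Proposition~\ref{prop:cyclicInEZAlgebras} together with Theorem~\ref{thm:cyclic}) to obtain a ternary term operation $f$ of $\alg A$ — coming from a cyclic operation $t$ of some odd prime arity $p > |A|$ by the symmetric identification $f(x,y,z)=t(x,\dots,x,y,\dots,y,z,\dots,z)$ with roughly equal blocks — which acts as $x\vee y\vee z$ on every semilattice-edge quotient and as $x-y+z$ on every abelian-edge quotient. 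The natural candidate for the commutative operation is then $g(x,y):=f(x,y,x)$ (equivalently some careful identification $t(x,\dots,x,y,\dots,y,x,\dots,x)$). By construction $g$ is idempotent; the task is to show $g(x,y)=g(y,x)$ as an operation of $\alg A$.

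The key device for proving the commutativity identity is minimality: the relation $E=\{(a,b)\in A^2 : g(a,b)=g(b,a)\}$ is a reflexive subuniverse of $\alg A^2$ (it is the equalizer of two term operations, hence compatible with $\alg A$, and contains the diagonal since $g$ is idempotent), and we want to show $E=A^2$. Arguing by induction on $|A|$ and factoring through a maximal congruence reduces us to the simple case; in the simple case a reflexive subuniverse of $\alg A^2$ is either the full relation, the equality, or (after applying $E\mapsto E{+}E$ if necessary) one sees it is non-proper unless it is the graph of an automorphism — but the diagonal already forces the automorphism to be the identity, so $E$ is the equality relation, which would mean $g(a,b)=g(b,a)\Rightarrow a=b$. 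So the real content is ruling out that pathology: I would show that for any two distinct $a,b$ with $\alg E=\Sg_{\alg A}(a,b)$ we have $g(a,b)=g(b,a)$ by passing to a maximal congruence $\theta$ of $\alg E$ and using Theorem~\ref{thm:simple_non-edge_has_ternary_absorption} and Theorem~\ref{thm:minimaledgesquotients}. Since there are no majority edges, $\alg E/\theta$ is never the two-element majority algebra; since there are no $\zee2$-edges, $\alg E/\theta$ is never the affine Mal'cev algebra of $\zee2$. So $\alg E/\theta$, when $(a,b)$ is an edge, is either a two-element semilattice (where $f$ is $x\vee y\vee z$, hence $g(\bar a,\bar b)=\bar a\vee\bar b\vee\bar a=\bar a\vee\bar b=g(\bar b,\bar a)$, so commutativity holds modulo $\theta$), or an affine Mal'cev algebra of a group of odd order or order divisible by a prime $\ge 3$ — there $f$ is $x-y+z$ so $g(x,y)=x-y+x=2x-y$; this is \emph{not} symmetric in general, so I need instead to use that $t$ is cyclic of odd arity and that in an affine module of odd-prime-power components a suitable identification of a cyclic term gives $\tfrac{x+y}{2}$-type commutative operation. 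Concretely one replaces the naive $f(x,y,x)$ by a more carefully chosen identification: pick $k,l,m$ with $k+l,l+m,k+m\ge p/2$ and $k\equiv m$, $2k+l=n$, and additionally $n^{-1}k\equiv n^{-1}m$ so that the induced operation on any affine quotient is $ak x + a l y + a m x$ which, since $ak=am$ and $2ak+al=1$, equals $\alpha x+\alpha y+(1-2\alpha)\cdot$... — I would instead just force the affine quotient value by requiring $ak=am=\alpha$ with $al=1-2\alpha$, arranged to be symmetric in $x,y$, which is possible whenever the group order is odd or is a product of primes $\ge3$, precisely the no-$\zee2$-edge hypothesis.

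Having established that $g$ (or the refined symmetric identification of $t$) satisfies $g(a,b)=g(b,a)$ modulo every maximal congruence $\theta$ of $\alg E=\Sg_{\alg A}(a,b)$ whenever $(a,b)$ is an edge, and having reduced via Theorem~\ref{thm:simple_non-edge_has_ternary_absorption} the remaining ``non-edge'' generated pairs to the situation where some generator lies in a proper $3$-absorbing subuniverse $C$ (which, by absence of majority edges and Theorem~\ref{thm:bin_abs_stable}, is in fact $2$-absorbing, so strongly projective by Theorem~\ref{thm:bin_abs}), I would run the same inductive minimality argument: if $g(a,b)\ne g(b,a)$ for a minimal generating pair, push one of the two values into $C$ or into a prime affine quotient and contradict minimality of $\Sg_{\alg A}(a,b)$, exactly as in the proof of Theorem~\ref{thm:real_single_operation_generates}. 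The main obstacle I anticipate is the bookkeeping in the affine case — choosing the exponents $k,l,m$ (equivalently the identification pattern of the cyclic $t$) so that the induced operation on \emph{every} admissible affine quotient is simultaneously commutative and still behaves as a join on semilattice quotients and as a ternary absorption/strong-projectivity witness — together with verifying that ``no $\zee2$-edge'' is exactly the arithmetic condition that makes a commutative choice possible (in $\zee2$, $2x=0$, so $ak x+aly+amx$ collapses and one is forced to $x+y+z$-type behaviour which is not the issue, but rather $\alpha=\alpha$, $1-2\alpha=1$, i.e.\ $\alpha=0$, forcing $g$ to be a projection — this is the genuine local obstruction the hypothesis removes). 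Once that arithmetic lemma is in place, the rest is the now-standard minimal-Taylor inductive descent.
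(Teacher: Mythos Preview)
Your plan has a genuine gap in the inductive descent. You try to fix a single binary term $g$ built from a cyclic $t$ and prove the identity $g(x,y)=g(y,x)$ by taking a minimal pair $(a,b)$ with $g(a,b)\neq g(b,a)$ and pushing into a proper $3$-absorbing (hence, by $\types{m}$-freeness, $2$-absorbing and strongly projective) subuniverse $C$ or into an affine block. The problem is that this does not transfer the failure of commutativity. Concretely: if $a\in C$ then strong projectivity gives $a':=g(a,b)\in C$ and $b':=g(b,a)\in C$, so $\Sg(a',b')\subsetneq\Sg(a,b)$; by minimality you get $g(a',b')=g(b',a')$ --- but that says nothing about $g(a,b)$ versus $g(b,a)$. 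This is exactly where the analogy with the proof of Theorem~\ref{thm:real_single_operation_generates} breaks: there the ``badness'' is a congruence $\alpha$ on which the generators act as projections, and that badness restricts verbatim to any subalgebra meeting both $\alpha$-blocks; by contrast the predicate ``$g(a,b)\neq g(b,a)$'' is attached to the pair, not to a congruence, and does not descend. Your earlier equalizer argument has the same defect in disguise: the claim that in a simple algebra a reflexive subuniverse of $\alg A^2$ is full, equality, or a bijection graph is false (any proper tolerance gives a counterexample).

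The paper avoids this by \emph{not} fixing a single $g$. It uses the standard reduction (Lemma~4.4 of~\cite{Barto12:cyclic}): it suffices that for each pair $a,b$ the symmetric relation $S=\Sg_{\alg A^2}((a,b),(b,a))$ meets the diagonal. One then argues by minimality on $S$ (not on $\Sg(a,b)$). If $\Sg(a,b)$ has a $3$-absorbing subuniverse $B$, then either $S\cap B^2\neq\emptyset$ (and one recurses inside $B$, producing a strictly smaller symmetric $S'$ still avoiding the diagonal), or $B$ and $B+S$ are disjoint $3$-absorbing subuniverses, which by Proposition~\ref{prop:3_abs_intersect} forces a two-element majority quotient --- contradiction. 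If there is no $3$-absorbing subuniverse, then by Lemma~\ref{lem:baby_existence_of_ternary_absorption} the simple two-generated $\alg A$ is abelian, i.e.\ the affine Mal'cev algebra of $\zee p$, and $S$ is the graph of an involution; the absence of $\zee 2$-edges gives $p$ odd, so the involution has a fixed point. Your arithmetic intuition about why $p\neq 2$ matters is correct, but it is used at this last step rather than in the construction of a specific term.
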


\begin{proof}
By Lemma 4.4 in~\cite{Barto12:cyclic} it is enough to show that  for any $a,b\in\algA$ there is a term operation $f$ such that $f(a,b)=f(b,a)$, or in other words that the binary symmetric subpower $\Sg_{\alg{A^2}}((a,b),(b,a))$ of $\alg A$ contains a tuple of the form $(c,c)$. Observe that, conversely, any symmetric nonempty binary subpower of an algebra with a commutative binary term operation intersects the diagonal. 

Let $\alg A$ be a minimal counterexample to the proposition with respect to $|A|$ and choose $a,b \in A$ such that $S = \Sg\{(a,b),(b,a)\}$ does not intersect the diagonal.
Clearly $\Sg(a,b) = A$ by minimality.
If $\algA$ has a proper congruence $\theta$, then there is some $c \in \algA$ such that $S \cap (c/\theta)^2 \ne \emptyset$ (since $\alg A/\theta$ is not a counterexample), but then $S \cap (c/\theta)^2 \leq c/\theta^2$ is a binary nonempty symmetric subpower of $c/\theta$ that avoids the diagonal, a contradiction to the minimality of $\alg A$.

Suppose now that $\algA$ has a proper  3-absorbing subuniverse $B$. If $B \cap (B+S) \ne \emptyset$, then there exist $d,e$ such that $(d,e) \in S \cap B^2$ (take any $e \in B \cap (B+S)$ and then $d \in B$ such that $(d,e) \in S$), so since $d,e$ generate a proper subalgebra of $\algA$ we see by minimality that $\Sg((d,e),(e,d)) \le \alg{S}$ contains a diagonal element, a contradiction.
On the other hand, if $B \cap (B+S) = \emptyset$, then by item (2) in Proposition~\ref{prop:3_abs_intersect}  $E=B \cup (B+S)$ must be a subuniverse of $\algA$ with congruence $\theta$ corresponding to the partition $B, B+S$, such that $\alg E/\theta$ is a two element majority algebra. Any pair in different equivalence classes is then a majority edge, a contradiction.

It remains to deal with the case that $\alg A$ is simple and $\alg A$ has no nontrivial 3-absorbing subuniverse. It then follows
from Proposition~\ref{lem:baby_existence_of_ternary_absorption} that $\alg A$ is abelian, so $\alg A$ is term equivalent to an affine Mal'cev algebra of a group isomorphic to $\zee{p}$. By the absence of $\zee{2}$-edges, $p$ is odd. 
Moreover,  
Corollary~\ref{cor:newAT} implies that $S$ is the graph of a bijection $A \to A$ -- the graph of an automorphism of $\alg A$. Since $S$ is generated by $(a,b)$ and $(b,a)$, the automorphism has order $2$. But $|A|=p$ is odd, so $\alpha(c)=c$ for some $c$, therefore $S$ contains $(c,c)$, a contradiction. 
\end{proof}

\ThmMTwoFree*

\begin{proof} \hypertarget{ProofThm64}{}
By Proposition~\ref{prop:commutative}, (i) implies (iii), and
(iii) implies (iii') by minimality of the Taylor algebra.

That (iii') implies (i) follows from the fact that the two element majority algebra and the two element affine Mal'cev algebra both have no nontrivial binary operations.
\end{proof}

\ThmSMFree*

\begin{proof} \hypertarget{ProofThm65}{}
To prove that (i) implies (ii) assume for contradiction that $B$ is a nontrivial absorbing subuniverse of $\alg A' \leq \alg A$. By Corollary~\ref{cor:connected} there exist $b \in B$ and $a \in A'\setminus B$ such that $(a,b)$ or $(b,a)$ is a minimal edge. Since $\alg A'$ is $\types{sm}$-free, we get that $(b,a)$ (and $(a,b)$) is necessarily a minimal abelian edge; let  $\theta$ be a witnessing congruence of $\Sg(b,a)$. By Theorem~\ref{thm:abs_stable}, the set $B$ is stable under abelian edges, therefore each $\theta$-class intersects $B$, in particular there exists $a'$ in $B \cap a/\theta$. But then $\Sg(b,a') \subseteq B$ is strictly contained in $\Sg(b,a)$, contradicting the minimality of the edge.

Condition (ii) is the property \emph{HAF} from \cite{Barto15:malcev}. By Theorem~1.4 from the same paper $\algA$ has a Mal'cev term operation, so (iii) holds.

Finally, if $\algA$ has a semilattice or a majority edge $(a,b)$ witnessed by a congruence $\theta$, then the algebra $\Sg_\algA(a,b)/\theta$ (and thus $\alg A$) cannot have a Mal'cev term operation by Theorem~\ref{thm:minimaledgesquotients}. Therefore (iii) implies (i).
\end{proof}

\ThmASFree*

\begin{proof} \hypertarget{ProofThm66}{}
Let us show that (i) implies (iii'). Since $\algA$ has no abelian and semilattice edges, by Theorem~\ref{thm:abs_stable} every 1-element subset of $\algA$ is an absorbing subuniverse. By Theorem~\ref{thm:center_abs} every such subset is also 3-absorbing. Finally, by Theorem~\ref{thm:unifiededges} there exists a ternary operation $f$ witnessing all 3-absorptions. Since every singleton is a 3-absorbing subuniverse, $f$ is a majority operation.

Clearly (iii') implies (iii).

Since both a two-element semilattice and 
an affine Mal'cev algebra do not have a
near unanimity term operation, 
by Theorem~\ref{thm:minimaledgesquotients} (iii) implies (i).
\end{proof}

\section{Conclusion}
\label{sec:conclusion}

We have introduced the concept of minimal Taylor algebras
and used it to significantly unify,  simplify, and extend the three main algebraic approaches to the CSP -- via absorption, via four types of subalgebras, and via edges. We believe that the theory started in this paper will help in attacking further open problems in computational complexity of CSP-related problems and Universal Algebra. There are, however, many directions which call for further exploration.

First, several technical questions naturally arise from the presented results:
 Do every two elements of a minimal Taylor algebra form an edge?
 How can we characterize sets stable under affine and semilattice edges in a global way? Is it possible to characterize (3-)absorption in terms of edges? Does stability under other edge-types correspond to a global property? Is every minimal bounded width algebra a minimal Taylor algebra?
 Are the equivalent characterizations in Theorem~\ref{thm:m-free} equivalent to ``every subalgebra has a unique minimal absorbing (rather than 3-absorbing) subuniverse''?   

Second, an interesting question arises in connection to the enumeration project discussed in Subsection~\ref{subsec:follow-up}. Are all minimal Taylor algebras finitely related? Here an algebra is \emph{finitely related} if its relational clone of invariant relations is generated by finitely many relations. The question is already open for 3-element algebras, a positive answer would e.g. give us a concrete list of hardest tractable CSPs in terms of relations.

Third, both CSP dichotomy proofs~\cite{Bulatov17:dichotomy,Zhuk20:dichotomy} require and develop more advanced Commutator Theory~\cite{freese:CMcommutator,kearnes:shape} concepts and results, while in this paper we have merely used some fundamental facts about the basic concept, the abelian algebra. Is it possible to develop our theory in this direction as well, potentially providing sufficient tools for the dichotomy result? Also, is there a natural concept that would replace thin edges in Bulatov's approach?

Fourth, which of the facts presented in the paper have their counterpart for nonminimal Taylor algebras or even general finite idempotent algebras? Here we would like to mention Ross Willard's work (unpublished) that provides a generalization for some of the advanced facts in Zhuk's approach. 

Finally, there is yet another, older, and highly developed theory of finite algebras, the tame congruence theory started in~\cite{hobby88:tct}.
What are the connections to the theory initiated in this paper?

\section*{Acknowledgement}

The authors thank the referees for the many very good comments that helped to improve the proofs.

\printbibliography

\end{document}